\documentclass[11pt]{article}
\usepackage[T1]{fontenc}
\usepackage{graphicx}
\usepackage{fullpage}
\usepackage{xcolor}
\usepackage{amsmath,amssymb,amsfonts,amsthm,enumerate}
\usepackage{nicefrac}
\usepackage{hyperref,cleveref}
\usepackage{authblk}
\usepackage{times}
\usepackage{xspace}
\usepackage{bbm}
\usepackage[margin=1in]{geometry}
\usepackage[procnumbered, linesnumbered,algoruled,noend]{algorithm2e}
\usepackage{multirow}
\usepackage{amsmath, amsthm, amsfonts, mathdots, thm-restate, mathrsfs, mathtools}
\usepackage{enumitem}

\def \bfx {\mathbf{x}}
\newcommand*{\email}[1]{\href{mailto:#1}{\nolinkurl{#1}} }

\newcommand{\cM}{\mathcal{M}}
\newcommand{\cI}{\mathcal{I}}
\newcommand{\cB}{\mathcal{B}}
\newcommand{\cD}{\mathcal{D}}
\newcommand{\cP}{\mathcal{P}}
\newcommand{\tU}{\tilde{U}}
\newcommand{\indicator}{\mathbf{1}}
\newcommand{\mab}{{Multi-Arm Bandits}\xspace}
\newcommand{\sw}{{Submodular Welfare}\xspace}
\newcommand{\bem}{{base exchange map}\xspace}

\newcommand{\ignore}[1]{}

\newcommand{\eps}{\varepsilon}
\crefname{conjecture}{conjecture}{conjectures}
\Crefname{conjecture}{Conjecture}{Conjectures}
\usepackage[caption=false]{subfig}
\newcommand{\ceil}[1]{ {\left\lceil #1 \right\rceil}}
\DeclareMathOperator*{\argmax}{arg\,max}

\newtheorem{theorem}{Theorem}[section]
\newtheorem{lemma}[theorem]{Lemma}
\newtheorem{corollary}[theorem]{Corollary}

\newtheorem{definition}[theorem]{Definition}
\newtheorem{claim}[theorem]{Claim}
\newtheorem{observation}[theorem]{Observation}

\newcommand{\tI}{\tilde{I}}

\newcommand{\marsum}{\textnormal{\texttt{MarRatio}}}

\newcommand{\app}{\textnormal{\texttt{approx}}}

\newcommand{\simpleF}{\textnormal{\texttt{Partition Swap $F$}}\xspace}
\newcommand{\genF}{\textnormal{\texttt{Generalized Partition Swap $F$}}}

\newcommand{\genf}{\textnormal{\texttt{Generalized Partition Swap $f$}}}

\newcommand{\BanditSwap}{\textnormal{\texttt{SimpleBanditSwap}}}

\newcommand{\ApproxKnapsack}{\textnormal{\texttt{ApproxPack}}\xspace}

\newcommand{\checkpoints}{\textnormal{\texttt{CP}}}
\newcommand{\swap}{\textnormal{\texttt{swap}}}
\newcommand{\abs}[1]{{\left| #1\right|}}
\newcommand{\E}{\mathbb{E}}
\newcommand{\cF}{\mathcal{F}}

\newcommand{\OPT}{O}

\newcommand{\simpFprob}{\textnormal{Partition-$F$}\xspace}
\newcommand{\simpfprob}{\textnormal{Partition-$f$}\xspace}
\newcommand{\partFprob}{\textnormal{General-$F$}}
\newcommand{\partfprob}{\textnormal{General-$f$}}

\newcommand{\gplong}{{Greedy Swap Poisson Process}\xspace}
\newcommand{\gp}{{GS-Poisson}\xspace}
\newcommand{\gaplong}{{Generalized Assignment Problem}\xspace}
\newcommand{\gap}{{GAP}\xspace}
\newcommand{\saplong}{{Separable Assignment Problem}\xspace}
\newcommand{\sap}{{SAP}\xspace}
\newcommand{\sm}{{Sub-Mat}\xspace}
\newcommand{\bx}{\mathbf{x}}
\newcommand{\by}{\mathbf{y}}

\def \F{\mathcal{F}}
\def \A{\mathcal{A}}

\begin{document}

\title{A Poisson Process for Submodular Maximization}
\author{Amit Ganz Rozenman}
\affil{
  {Technion},{Haifa},{Israel}.
\email{amitganz3@gmail.com}}

\author{Ariel Kulik}
\affil{
  {Ben-Gurion University of the Negev},
  {Beer-Sheva},
  {Israel}.
\email{kulik@bgu.ac.il}
}
\author{Roy Schwartz}
\affil{%
  {Technion},
  {Haifa},
{Israel}.
\email{schwartz@cs.technion.ac.il}
}
\author{Mohit Singh}
\affil{%
  {Georgia Institute of Technology},
  {Atlanta},
  {USA}.
\email{mohit.singh@isye.gatech.edu}
}
\maketitle

\begin{abstract}
We study the problem of maximizing a monotone submodular function subject to a matroid independence constraint. 
For more than a decade, a rich body of work has studied this problem.
Initially, a tight approximation of $ (1-\nicefrac{1}{e})$ was given using the continuous greedy algorithm [Calinescu-Chekuri-Pal-Vondr{\'a}k STOC`2008] and later non-oblivious local search techniques were able to match this tight approximation guarantee [Filmus-Ward FOCS`2012] and [Buchbinder-Feldman FOCS`2024].

We propose a new and remarkably simple approach to this problem that is based on a stochastic Poisson process.
Our approach matches the tight $ (1-\nicefrac{1}{e})$ approximation guarantee and it differs from the known two techniques since it does not require discretization or rounding while performing very few single element swaps.
We also present applications of our approach and obtain fast algorithms for submodular welfare maximization, and for the general and separable assignment problems.

\end{abstract}

\section{Introduction}\label{sec:intro}
Submodular maximization has been a central topic in combinatorial optimization for several decades and its study dates back to the late $70$'s \cite{NW78,nemhauser1978analysis,fisher1978analysis}.
The appeal of submodular functions lies in their diminishing returns  property: a set function $ f\colon 2^U\rightarrow \mathbb{R}_+$ over a universe $U$ is submodular if it satisfies that $ f(S+i)-f(S)\geq f(T+i) -f(T)$ for every $S\subseteq T\subseteq U$ and $i\in U\setminus T$.\footnote{We use $S+i$ and $S-i$ to denote $S\cup \{ i\}$ and $S\setminus \{i\}$, respectively, $\forall S\subseteq U$ and $\forall i\in U$.}
Submodular functions naturally arise in many disciplines, including rank in linear algebra, graph cuts in combinatorics, and entropy in probability theory.
Moreover, submodular functions play a major role in many real world applications such as sensor placement and data summarization~\cite{KLGVF08,KSG08,LB10,LB11,mirza16}, influence maximization in social networks~\cite{kempe2003maximizing,HMS08,MR10} and feature selection in machine learning~\cite{LWKSB13,KEDNG17,BHZ22} (see also, e.g., a more comprehensive survey \cite{B13})

In this work we consider the problem of maximizing a monotone\footnote{A set function $f\colon2^U\rightarrow \mathbb{R}_+$ is monotone if $ f(S)\leq f(T)$ for every $S\subseteq T\subseteq U $.} submodular function $f\colon 2^U\rightarrow \mathbb{R}_+$ subject to a matroid $ \cM=(U,\cI)$ independence constraint (denote this problem by \sm).
A notable running example is that of a partition matroid, where the universe $U$ is partitioned into $(U_1,\ldots,U_k)$ and the matroid independence constraint reduces to finding a set $S\subseteq U$ that contains at most a single element from each $U_j$ in the partition.
This special case is already sufficiently rich in structure, as it captures well-known problems such as the \gaplong~\cite{fleischer2006tight,feige2006approximation,CCPV11} and \sw~\cite{DS06,vondrak2008optimal,CCPV11,FNS11,V13}.

The main source of progress for the \sm problem has been the introduction of continuous methods via the multilinear extension~\cite{vondrak2008optimal,CCPV11}.
The multilinear extension $F\colon [0,1]^U\rightarrow \mathbb{R}_+$ is defined as:
$F(\bfx)\triangleq \sum _{S\subseteq U}f(S)\cdot \Pi _{i\in S}x_i \cdot \Pi _{i\notin S}(1-x_i) $.
It carries an intuitive probabilistic interpretation of $ F(\bfx)=\mathbb{E}_{R\sim \bfx}[f(R)]$, where $ R\sim \bfx$ denotes a random subset in which every element $i\in U$ is independently chosen to $R$ with a probability of $x_i$.
In these continuous methods, one approximately solves a fractional relaxation of the problem whose objective is the multilinear extension, that itself is NP-hard, and then converts the fractional solution to an integral one via rounding.
Besides obtaining a tight $(1-\nicefrac{1}{e})$-approximation for the \sm problem~\cite{CCPV11,NW78}, these continuous approaches have been successfully extended to accommodate various constraints, including multiple knapsacks, multiple matroids, and even more complex independence systems, e.g., ~\cite{chekuri2011submodular,kulik2013approximations,BV14,FKS21}.

Unfortunately, the continuous approach described above has several drawbacks.
First, the implementation of continuous methods as an algorithm involves discretizing a continuous time process.
Thus, in order to bound the error caused by this discretization one is required to perform sufficiently small steps, resulting usually in a large number of iterations.
Second, these methods require numerous evaluations of the gradient of the multilinear extension $F$.
Typically, every such gradient evaluation requires a large number of evaluations of the original function $f$, which can be computationally expensive (especially for large-scale problems).
Third, and of no less importance, the continuous viewpoint (multilinear extension) and the discrete viewpoint (rounding) are applied in sequence and in isolation.
While this can be achieved with no loss when considering the \sm problem, it can be time consuming.

Alongside greedy algorithms, local search methods have also been developed,  as early as~\cite{fisher1978analysis}, as an alternative way to approach constrained submodular maximization.
More recently, local search methods have been able to achieve some of the results first obtained by the continuous greedy based approach, e.g., an asymptotically tight approximation for \sm~\cite{filmus2014monotone,BF24deter}.
The improved local search methods work on a modified objective that approximates the true objective $f$. 
The main benefit of local search approaches is that they typically work directly with the function $f$ and therefore do not require gradient computation and rounding.
The challenge of these approaches is that the running times usually have large dependence on $\nicefrac{1}{\varepsilon}$~\cite{filmus2014monotone,BF24deter} to obtain approximation guarantees that are within $\varepsilon$ of the best possible.

\paragraph{Our Hybrid Approach.}
We present a hybrid approach that aims to combine the strengths of both continuous and discrete approaches. Our algorithm maintains a feasible base of the matroid at every iteration and updates this solution through a sequence of elementary exchanges, replacing one element at a time while preserving feasibility.
What sets our approach apart from local search methods is the way it implements these updates.
The timing of each potential update is governed by a non-homogeneous Poisson process.
Meanwhile, the decision of which elements to exchange is informed by dynamically computed weights based on the gradient of the multilinear extension evaluated at an appropriate time scaling of the current base.

This approach gives a new tight algorithm for \sm as well as additional applications including fast algorithms for the \gaplong and \sw.

\subsection{Our Results}\label{sec:results}
We present a remarkably simple stochastic process based on a non-homogeneous Poisson process, which we denote by \gplong (\gp), that yields a tight approximation for the \sm problem. 
The stochastic process is so simple that we can state it now.
To simplify presentation, we denote by $ \nabla _i F(\bfx)$ the coordinate that corresponds to element $i\in U$ of the gradient of the multilinear extension at point $ \bfx\in [0,1]^U$ and by
$ \indicator _A\in \{ 0,1\}^U$ the indicator vector of $ A\subseteq U$. 
The internal state of \gp consists of a single base $A \in \cB$ that is initialized to be some arbitrary base ($\cB$ denotes the collection of all bases of $\cM$).
Starting from time $t=\varepsilon$, a non-homogeneous Poisson process with rate $\lambda(t) = \nicefrac{k}{t}$, where $k$ is the rank of $\cM$, is executed.
If the next event of the Poisson process occurs at time $t$:
\begin{enumerate}
    \item A base $Z(t)\in \cB$ that maximizes $ \sum _{i\in Z(t)}\nabla _i F(t\indicator _{A})$ is computed.\label{GP-First-Step}  
    \item A bijection $h : A \to Z(t)$ is constructed such that $A-i+h(i)\in \cB$ 
    for every $i \in A$.
    \item A single, uniformly random, element $i \in A$ is selected and swapped with $h(i)$:  $A\leftarrow A-i+h(i)$. 
\end{enumerate}
The above simple steps are repeated every time an event occurs, stopping at time $t=1$.

There are three key features that set \gp apart from prior approaches:
\begin{enumerate}[label=(\roman*)]
    \item \textbf{No discretization:} Time is not discretized since the Poisson process is simulated directly by sampling the time of the next event.\label{gp-discretization}
    \item \textbf{No rounding:} The algorithm operates directly on bases; no rounding step is required.\label{gp-rounding}
    \item \textbf{Few updates:} The expected number of single element swaps is only $k \ln(1/\varepsilon)$.\label{gp-steps}
\end{enumerate}
We note that \ref{gp-discretization} and \ref{gp-rounding} above set \gp apart from the continuous greedy approach, whereas \ref{gp-steps} sets it apart from the local search approach.

The following theorem states the tight approximation guarantee of \gp.
\begin{theorem}\label{thrm:matroid}
Given a matroid $ \cM=(U,\cI)$, a monotone submodular function $ f\colon 2^U\rightarrow \mathbb{R}_+$, and a starting time $ 0<\varepsilon \leq 1$, \gp finds $ A\in \cI$ satisfying $ \mathbb{E}[f(A)]\geq (1-\varepsilon)(1-\nicefrac{1}{e})f(O)$.  
Here $O=\argmax \{ f(S):S\in \cI\}$ is an optimal base.
\end{theorem}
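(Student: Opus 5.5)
We track the potential $\Phi(t)=\mathbb{E}[F(t\indicator_{A_t})]$, where $A_t$ is the base held by \gp at time $t$. We aim for the differential inequality $\frac{d}{dt}\Phi(t)\geq f(O)-\Phi(t)$ on $[\varepsilon,1]$. Integrating with the integrating factor $e^{t}$ gives
\[
\Phi(1)\geq \bigl(1-e^{-(1-\varepsilon)}\bigr)f(O)+e^{-(1-\varepsilon)}\Phi(\varepsilon)\geq \bigl(1-e^{-(1-\varepsilon)}\bigr)f(O).
\]
It then remains to check the elementary inequality $1-e^{-(1-\varepsilon)}\geq (1-\varepsilon)(1-\nicefrac{1}{e})$. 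This holds by concavity of $s\mapsto 1-e^{-s}$ on $[0,1]$, which agrees with the chord $s(1-\nicefrac1e)$ at $s=0$ and $s=1$. Since $\Phi(1)=\mathbb{E}[F(\indicator_{A_1})]=\mathbb{E}[f(A_1)]$ and $A_1$ is a base, this proves the theorem.

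To derive the differential inequality, I would split the infinitesimal change of $\Phi$ over $[t,t+dt]$ into two parts.

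The \emph{drift} part comes from $t$ increasing with $A$ fixed. It equals $\sum_{i\in A}\nabla_iF(t\indicator_A)\,dt$.

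The \emph{jump} part comes from an event, which occurs with probability $\frac{k}{t}dt$. Conditioned on an event, the point changes from $t\indicator_A$ to $t\indicator_{A-i+h(i)}$ for a uniformly random $i\in A$, so the expected jump is $\frac1k\sum_{i\in A}\bigl[F(t\indicator_{A-i+h(i)})-F(t\indicator_A)\bigr]$. Because $F$ is multilinear, moving coordinate $i$ from $t$ to $0$ and coordinate $h(i)$ from $0$ to $t$ is exactly linear in each coordinate. However, the two simultaneous changes interact through a mixed second derivative, and that derivative is nonpositive by submodularity. This gives
\[
F(t\indicator_{A-i+h(i)})-F(t\indicator_A)\geq t\,\nabla_{h(i)}F(t\indicator_{A-i})-t\,\nabla_iF(t\indicator_A).
\]
I would then use submodularity and the multilinear formulas to compare gradients at $t\indicator_{A-i}$ and at $t\indicator_A$. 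Since $\nabla_{h(i)}F$ is coordinatewise decreasing, $\nabla_{h(i)}F(t\indicator_{A-i})\geq\nabla_{h(i)}F(t\indicator_A)$. Multiplying the jump by the rate $\frac kt$, the factors $t$ and $k$ cancel, and the expected jump contribution per unit time is at least $\sum_{j\in Z(t)}\nabla_jF(t\indicator_A)-\sum_{i\in A}\nabla_iF(t\indicator_A)$. The second sum cancels the drift term exactly, which explains the choice of rate $k/t$. So
\[
\frac{d}{dt}\Phi\geq \mathbb{E}\Bigl[\sum_{j\in Z(t)}\nabla_jF(t\indicator_A)\Bigr]\geq \mathbb{E}\Bigl[\sum_{j\in O}\nabla_jF(t\indicator_A)\Bigr]\geq f(O)-\Phi(t).
\]
The middle inequality uses that $Z(t)$ maximizes over bases. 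The last is the standard continuous-greedy bound: $\sum_{j\in O}\nabla_jF(\bx)\geq F(\bx\vee\indicator_O)-F(\bx)\geq f(O)-F(\bx)$, by concavity of $F$ along nonnegative directions and monotonicity.

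The main obstacle is making the informal ``$\frac{d}{dt}\Phi$'' rigorous, that is, justifying differentiability of the expectation for a jump process. I would avoid differentiability altogether. Write $\Phi(t+\delta)-\Phi(t)$ as the drift over $[t,t+\delta]$, plus one jump with probability $\frac{k}{t}\delta+o(\delta)$, plus $o(\delta)$ for two or more jumps, using boundedness of $f$. This yields a lower Dini derivative bound, and the comparison argument for $e^{t}(\Phi(t)-f(O))$ goes through with Dini derivatives.

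A smaller point needs checking: the exchange bijection $h$ may have $h(i)=i$ for $i\in A\cap Z(t)$. In that case the jump is zero and the desired inequality reads $0\geq t\nabla_iF(t\indicator_{A-i})-t\nabla_iF(t\indicator_A)$. This holds with equality, because $\nabla_iF$ does not depend on $x_i$.
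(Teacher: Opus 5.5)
Your proposal is correct and is essentially the paper's proof of Theorem~\ref{thm:average_poisson} applied to the recipe. Your $\Phi$ is the paper's $Q$, and your drift-plus-jump computation is Lemmas~\ref{lem:prob-derivative} and~\ref{lem:value}; note that $F(t\indicator_{A-i+j})-F(t\indicator_A)=t\nabla_jF(t\indicator_{A-i})-t\nabla_iF(t\indicator_A)$ is an exact identity by multilinearity, not a mixed-derivative inequality, and the only inequality comes from gradient monotonicity. The cancellation via $\lambda(t)=k/t$ and the passage to $O$ are Lemma~\ref{lem:value-average-swap}, and your Dini-derivative comparison plays the role of Claim~\ref{claim:differentialEQ}.

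Two small points to tighten. In the $\delta$-increment the event occurs at some $\tau\in(t,t+\delta]$ and uses $Z(\tau)$, so lower-bound that jump by the $O$-based expression at time $\tau$, which is continuous in $\tau$, before letting $\delta\to 0^+$. Also state that the same decomposition makes $\Phi$ Lipschitz, which the Dini comparison needs. Done this way, you do not even need the right-continuity assumption the paper places on the swap procedure.
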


We make two important observations.
The first observation is the close connection of \gp to the continuous greedy algorithm, rather than local search methods such as~\cite{filmus2014monotone,BF24deter}.
One approach to modify the continuous greedy algorithm while maintaining only an integral solution is to round on the fly: only maintain a (randomized) integral solution obtained by rounding the current fractional solution.
Unfortunately, this approach does not work directly.
\gp is in fact rounding on the fly, but of an appropriately chosen scaling of the fractional solution maintained by the continuous greedy algorithm.
Thus, \gp can also be interpreted as rounding on the fly of a time-scaled version of the continuous greedy algorithm. 
From this point of view, the continuous greedy naturally becomes a Frank-Wolfe style continuous local search algorithm. We discuss this connection in Section~\ref{sec:techniques} in further detail.

The second observation is that \gp does not require the best base $Z(t)$ as defined above in step~\eqref{GP-First-Step} in order to output a good approximation.
Instead, a carefully chosen
pair of elements $i\in A$ and $j\in U$ to be swapped is enough. In particular, the construction of the whole base $Z(t)$ can be avoided. 
We refer to the procedure which finds the pair $i$ and $j$ as a swap procedure. 
This simple observation will be the main driving force behind our additional applications.

\subsubsection{Applications}
Our applications include fast algorithms, based on \gp, for partition and generalized partition matroids, as well as the \gaplong and \saplong.
In what follows, when given a submodular function, the running time of the algorithm is measured by the number of function evaluations it performs, as this is the standard measure for these types of problems.
We assume oracle access to either $f$ or $F$ and consider results for both.

\paragraph{Partition Matroid.}
When $\cM=(U,\cI)$ is a partition matroid there exists a partition $(U_1,\ldots, U_k)$ of $U$ and a set $S\in\cI$ if and only if $|S\cap U_j|\leq 1$ for each $1\leq j\leq k$.
Thus, a solution $S\subseteq U$ to \sm when $\cM$ is a partition matroid is feasible if and only if $ S$ contains at most a single element from each part $U_j$. 

The case of a partition matroid is of special interest, as it captures the classic problem of \sw.
In the \sw problem, we are given $k$ players, a set of $m$ items $I$ and a monotone non-negative submodular utility function $f_j:2^{I}\rightarrow \mathbb{R}_+$ for each player $1\leq j\leq k$.
The goal is to find disjoint subsets $S_1,\ldots, S_k\subseteq I$ that maximize the sum of the utilities: $\sum_{j=1}^k f_j(S_j)$.
Here, $S_j$ denotes the items assigned to player $j$.
A simple approximation preserving reduction (see, e.g., \cite{CCPV11}) shows that \sw can be reduced to \sm with a partition matroid of rank $m$ and a universe of size $mk$.

The following two theorems summarize the case when $\cM$ is a partition matroid, the first when the algorithm can query the multilinear extension $F$ and the second when the algorithm can query the original set function $f$.
\begin{theorem}\label{thrm:PartitionF}
There exists an algorithm that given a partition matroid $\cM=(U,\cI)$, a monotone submodular function $f\colon 2^U\rightarrow \mathbb{R}_+$, and $ 0< \varepsilon \leq 1$, performs in expectation $ O(n\ln{(\nicefrac{1}{\varepsilon})})$ evaluations of $F$ and finds $A\in \cI$ satisfying: $ \mathbb{E}[f(A)]\geq (1-\varepsilon)(1-\nicefrac{1}{e}) f(O)$.
Here $n$ is the size of the universe $U$ and $ O=\argmax \{ f(S)\colon S\in \cI\}$ is an optimal base.
\end{theorem}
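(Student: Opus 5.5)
We build on Theorem~\ref{thrm:matroid} and specialize \gp to a partition matroid $(U_1,\ldots,U_k)$, where every step of the process becomes cheap to implement with oracle access to $F$. Assume without loss of generality that each $U_j$ is nonempty, so bases are exactly the sets containing one element from each part. For a partition matroid, a base $Z(t)$ maximizing $\sum_{i\in Z(t)}\nabla_i F(t\indicator_A)$ is obtained by choosing, in each part $U_j$, an element $z_j\in U_j$ maximizing $\nabla_{z_j}F(t\indicator_A)$. The bijection $h$ maps the unique element $a_j\in A\cap U_j$ to $z_j$. Then $A-a_j+z_j$ is again a base, since it still has exactly one element in each part. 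Step 3 of \gp picks a uniformly random $a_j$, that is, a uniformly random part $j$, and replaces $a_j$ by $z_j$.

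The key observation is that the swap only ever uses $z_j$ for the single sampled part $j$. We may therefore reorder the randomness: first sample $j$ uniformly from $\{1,\ldots,k\}$, and only then compute $z_j=\argmax_{i\in U_j}\nabla_i F(t\indicator_A)$. The resulting process has exactly the same distribution of swaps, and hence the same distribution over trajectories of $A$, as \gp. Theorem~\ref{thrm:matroid} therefore applies verbatim and gives $\mathbb{E}[f(A)]\geq(1-\varepsilon)(1-\nicefrac{1}{e})f(O)$. Strictly speaking, the output is a base, so it is in $\cI$.

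It remains to count evaluations of $F$. Since $F$ is multilinear, $\nabla_i F(\bfx)=F(\bfx\vee \indicator_{\{i\}})-F(\bfx\wedge\indicator_{U-i})$. Each partial derivative therefore costs two evaluations of $F$, and the term $F(\bfx\wedge\indicator_{U-i})$ equals $F(t\indicator_A)$ whenever $i\notin A$. Computing all $\nabla_i F(t\indicator_A)$ for $i\in U_j$ thus costs $O(|U_j|)$ evaluations. Conditioned on the sampled part, the cost of an event is $O(|U_j|)$, and since $j$ is uniform its expected cost is $O(\tfrac1k\sum_j|U_j|)=O(n/k)$.

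The number of events $N$ of a Poisson process with rate $k/t$ on $[\varepsilon,1]$ is Poisson with mean $\int_\varepsilon^1 \tfrac kt\,dt=k\ln(\nicefrac1\varepsilon)$. The sampled parts are independent of $N$ and of the past. Wald's identity, or equivalently linearity over events, then bounds the expected total cost by $\mathbb{E}[N]\cdot O(n/k)=O(n\ln(\nicefrac1\varepsilon))$. The initial arbitrary base takes no evaluations.

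The main subtlety is justifying the reordering of the randomness, namely that Theorem~\ref{thrm:matroid} only depends on the law of the swap performed, so that computing just the sampled coordinate block is legitimate. Ties in the $\argmax$ can be broken arbitrarily, as in \gp. A second point is that the bound must hold for arbitrary part sizes, including highly unbalanced ones. The expected cost per event is $O(n/k)$ only because $j$ is chosen uniformly, independently of the part sizes; a worst-case bound per event, charging the largest $|U_j|$ every time, would not suffice.
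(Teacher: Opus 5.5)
Your proposal is correct and is essentially the paper's proof. The paper packages your ``sample a uniform part $j$, return $a_j$ and the maximizer of $\nabla_i F(t\indicator_A)$ over $U_j$'' as an above-average swap procedure (Algorithm~\ref{alg:simpleF}), applies Theorem~\ref{thm:average_poisson}, and bounds the cost by $k\ln(1/\varepsilon)\cdot\mathbb{E}[2|U_r|]=O(n\ln(1/\varepsilon))$; you instead reach the same conclusion by a distributional-equivalence reduction to Theorem~\ref{thrm:matroid}, and your tie-breaking remark is only a minor technicality, since the paper's analysis needs the swap probabilities to be right continuous in $t$, which it secures by an appropriate choice of the $\argmax$ rather than fully arbitrary tie-breaking.
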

\begin{theorem}\label{thrm:partition}
There exists an algorithm that given a partition matroid $ \cM=(U,\cI)$, a monotone submodular function $ f\colon 2^U\rightarrow \mathbb{R}_+$, and $ 0<\varepsilon<\nicefrac{1}{4}$, performs in expectation $O(n \log^2(1/\varepsilon)/\varepsilon^2)$ evaluations of $f$ and finds $ A\in \cI$ satisfying $ \mathbb{E}[f(A)]\geq (1-\varepsilon)(1-\nicefrac{1}{e})f(O)$.
Here $n$ is the size of the universe $U$ and $ O=\argmax \{ f(S):S\in \cI\}$ is an optimal base.
\end{theorem}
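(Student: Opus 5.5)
Theorem~\ref{thrm:partition} will follow by running the partition-matroid version of \gp underlying Theorem~\ref{thrm:PartitionF} and replacing each evaluation of $F$ with a sampling estimate from evaluations of $f$. The accounting is what makes this plausible. By feature \ref{gp-steps}, \gp triggers $k\ln(1/\varepsilon)$ events in expectation. At an event at time $t$ we pick a uniformly random part $U_j$, with current element $a_j\in A\cap U_j$, and need to compare the marginals $\nabla_i F(t\indicator_A)$ for $i\in U_j$. For a partition matroid $Z(t)$ decomposes part by part, and the bijection $h$ simply maps $a_j$ to the best element of $U_j$. So only the part that is actually swapped needs its gradients, at cost $|U_j|$ estimates. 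Averaging over the uniformly random part, an event costs about $n/k$ estimates in expectation, so the total is $O(n\ln(1/\varepsilon))$ gradient estimates. This is the same count as in Theorem~\ref{thrm:PartitionF}.

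Each estimate $\nabla_i F(t\indicator_A)=\E_{R\sim t\indicator_A}[f(R+i)-f(R-i)]$ is formed by averaging $m=O(\log(1/\varepsilon)/\varepsilon^2)$ samples, using shared samples $R$ across $i\in U_j$ so the comparison is coherent. This gives the claimed $O(n\log^2(1/\varepsilon)/\varepsilon^2)$ evaluations.

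For the approximation I will redo the analysis behind Theorem~\ref{thrm:matroid} with approximate swaps. That analysis shows that $\Phi(t)=\E[F(t\indicator_{A(t)})]$, or a suitable potential, satisfies a differential inequality of the form $\frac{d}{dt}\Phi \ge f(O)-\Phi$ up to scaling. It uses the fact that the chosen $h(a_j)$ maximizes $\nabla F$ in its part, so that summing over parts gives at least $\sum_{o\in O}\nabla_o F\ge f(O)-F$. With estimated gradients, the chosen element has true gradient at least the best minus twice the estimation error. I therefore need an \emph{additive} error of $\varepsilon f(O)/k$ per part on average. An absolute-error Hoeffding bound is too weak for this, since marginals are bounded only by $f(O)$ per element.

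The main obstacle is getting relative rather than absolute concentration. I would use that $f(R+i)-f(R-i)\le f(\{i\})$ and compare against $\sum_{o\in O}$ marginals. The most promising route is a multiplicative Chernoff-type bound in which the error for the chosen pair is $\varepsilon\cdot(\nabla_{h}F+\nabla_{a_j}F)+\varepsilon\cdot(\text{small})$, with failure probability $\varepsilon$ at cost $\log(1/\varepsilon)/\varepsilon^2$ samples. A relative loss of this kind only rescales the drift by $(1-O(\varepsilon))$. The failure events lose at most an $O(\varepsilon)$ fraction, because of monotonicity and because the process restarts from any base (the drift analysis is memoryless). Integrating the perturbed inequality from $\varepsilon$ to $1$ yields $\E[F(\indicator_A)]=\E[f(A)]\ge(1-O(\varepsilon))(1-1/e)f(O)$. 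Rescaling $\varepsilon$ by a constant, which is why $\varepsilon<1/4$ is assumed, finishes the proof.
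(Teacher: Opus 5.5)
There is a genuine gap, and it is at the step you flag as the main obstacle. A sample $f(R+i)-f(R-i)$ lies in $[0,\,f(\{i\})-f(\emptyset)]$, but its mean $\nabla_iF(t\indicator_A)$ can be arbitrarily smaller than this range. For example, if $i$ is redundant with $c$ elements of $A$, the sample equals the full range with probability $(1-t)^c$ and $0$ otherwise.

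A $(1\pm\varepsilon)$-relative estimate of such a variable needs on the order of $(\text{range}/\text{mean})/\varepsilon^2$ samples, so no multiplicative Chernoff bound gives it with $O(\log(1/\varepsilon)/\varepsilon^2)$ samples. What concentration actually gives is an additive term proportional to the range. Your ``small'' term is therefore $\Theta(\varepsilon\alpha_j)$ with $\alpha_j=\max_{i\in U_j}(f(\{i\})-f(\emptyset))$. Summed over parts, the swap is at best $(1-O(\varepsilon),\eta)$-approximate with $\eta=\Theta(\varepsilon\cdot\marsum(I))$. Since $\marsum(I)$ can be as large as $k$, the factor $(1-\eta/\beta)$ in Theorem~\ref{thm:approximate_poission} becomes vacuous.

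Concretely, take the following instance ($m$ is your number of samples):
\begin{itemize}
\item parts $U_j=\{g_j,b_j\}$, where the $g_j$ cover disjoint sets of weight $w$ and all $b_j$ cover one common set of weight $W=1.5mw$;
\item $t=1/2$;
\item $A$ contains $c$ of the $b_j$'s, with $2^{-c}=1/(2m)$, and contains $g_j$ in every other part.
\end{itemize}
In each part whose current element is $g_j$, the true gradient of $b_j$ is $W/(2m)=0.75w<w$. Yet its empirical mean is at least $W/m=1.5w>w$ with probability at least $1-e^{-1/2}\approx 0.39$. For $k\gg m$ this is an additive $\Omega(f(O))$ loss in the drift, i.e.\ $\eta=\Omega(1)$. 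It is not a rare failure event that monotonicity lets you absorb.

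The missing idea is to make $\sum_j\alpha_j=O(f(O))$ before running \gp. The paper does not seek relative concentration. It accepts the additive error $\delta\alpha_j$ per part (Lemma~\ref{lem:bandit_swap_general}, giving $\eta=\delta\,\marsum(I)$) and adds a preprocessing step (Lemma~\ref{lem:simpleprep}):
\begin{itemize}
\item run the residual random greedy;
\item estimate $f(O)$ within a constant factor from $O(\log(1/\varepsilon))$ repetitions of it;
\item return the first checkpoint prefix $P_i$ with $V_i\le T$.
\end{itemize}
This yields $\marsum(I_P)\le c\,f(O)/f(O_P)$ with probability $1-\delta$. The bound $\E[(1-\nicefrac{1}{e})f(O_P\cup P)+\frac{1}{e}f(P)]\ge(1-\nicefrac{1}{e}-\delta)f(O)$ shows that committing to $P$ costs only $\delta f(O)$. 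Running \gp on the residual instance then loses only $\delta_2\,\marsum(I_P)\,f_P(O_P)\le c\,\delta_2 f(O)$ additively.

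A secondary issue concerns your query count. Taking the argmax of empirical means over the $|U_j|$ elements of a part requires a union bound; shared samples $R$ do not remove this in general. That means $\Theta(\log(|U_j|/\varepsilon)/\varepsilon^2)$ samples per element, which puts a $\log n$ factor into the query count. The paper avoids this with the PAC best-arm algorithm of \cite{EMM06} (Theorem~\ref{thrm:mab}), which needs only $O(|U_j|\log(1/\delta)/\delta^2)$ samples in total for expected gap $\delta$.
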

Let us compare the results of the above theorems with previously known tight $ (1-\varepsilon)(1-\nicefrac{1}{e})$ approximations.
When querying $F$, the best known algorithm was given (implicitly) by~\cite{BV14} and it requires $ O(n/\varepsilon)$ evaluations of $F$, while Theorem~\ref{thrm:PartitionF} requires only $ O(n\ln{(\nicefrac{1}{\varepsilon})})$ evaluations in expectation.
When considering $f$, the best known algorithm was given by~\cite{EN19} and it requires $ O(n\log^2{(n/\varepsilon)}/\varepsilon^5)$ evaluations of $f$ (see also~\cite{KT24,BF24deter} and the discussion in Section \ref{sec:additional} for additional incomparable bounds), where Theorem~\ref{thrm:partition} requires only $O(n \log^2(1/\varepsilon)/\varepsilon^2) $ evaluations in expectation.
It should be noted that Theorem~\ref{thrm:partition} provides the first linear-time algorithm for the problem.

It is worth noting that in order to prove Theorem~\ref{thrm:partition} we establish an interesting connection with the classic \mab problem~\cite{EMM06}.
This enables us to { obtain the first linear time algorithm for the problem.}

\paragraph{The General Assignment and Separable Assignment Problems.}
An instance of the \gaplong (\gap) consists of $m$ bins and $n$ items.
Each item $j$ and bin $i$ have two non-negative quantities associated with them: a value $v_{ij}$ and a size $s_{ij}$.
We aim to assign items to bins such that the total size of items in each bin is at most one, and the total value of all items is maximized.
Formally, the goal is to choose disjoint sets $S_1,\ldots,S_m$ of items satisfying $ \sum _{j\in S_i}s_{ij}\leq 1$ for every $ i=1,\ldots,m$, while maximizing: $\sum _{i=1}^m \sum _{j\in S_i}v_{ij}$.
We prove the following theorem.

\begin{theorem}\label{thm:gap_main}
For every $ \varepsilon >0$ there exists a $(1-\varepsilon)(1-\nicefrac{1}{e})$-approximation algorithm for the \gaplong with $n$ items and $m$ bins that runs in time $ \tilde{O}((\nicefrac{n}{\varepsilon}+\nicefrac{1}{\varepsilon^2})\cdot m)$.
\end{theorem}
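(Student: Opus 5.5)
We would prove Theorem~\ref{thm:gap_main} through the standard reduction of \gap to \sm over a partition matroid, and then run a specialized form of \gp whose swap procedure is an approximate knapsack computation. For each bin $i$ let $\mathcal{F}_i$ be the family of item sets that fit in bin $i$. The universe is $U=\{(i,S): S\in\mathcal{F}_i\}$, partitioned into $U_1,\ldots,U_m$ by bin. The objective is the coverage-type function $f(\{(i,S_i)\}_i)=\sum_j \max_{i: j\in S_i} v_{ij}$, which is monotone and submodular. Any base of this partition matroid picks one configuration per bin. Resolving overlaps by keeping each item in the bin where it is most valuable yields a \gap solution of value $f$, so an $\alpha$-approximation for this \sm instance gives an $\alpha$-approximation for \gap.

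The universe is exponential, so we never write $Z(t)$ down explicitly. Following the second observation after Theorem~\ref{thrm:matroid}, we run \gp with rate $m/t$ on the partition matroid. When an event occurs at time $t$, we choose a uniformly random bin $i$ (the current element $i\in A$) and replace its configuration with the configuration $S\in\mathcal{F}_i$ that (approximately) maximizes $\nabla_{(i,S)}F(t\indicator_A)$. For the partition matroid this is exactly the bijection step, since $h$ maps the element of $A$ in $U_i$ to the element of $Z(t)$ in $U_i$.

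The key computation is the gradient. Here $t\indicator_A$ includes each current configuration $(i',S_{i'})$ independently with probability $t$. Hence $\nabla_{(i,S)}F(t\indicator_A)=\sum_{j\in S} w_j$, where $w_j=\E[(v_{ij}-M_j)^+]$ and $M_j$ is the maximum of $v_{i'j}$ over the sampled configurations containing $j$. Each item lies in at most one configuration per bin, so $w_j$ can be computed exactly by sorting the bins whose current configuration contains $j$ and using the product of $(1-t)$ factors. This takes $\tilde O(m)$ time per item, and maintaining these quantities incrementally should make it cheaper.

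The best response is therefore a knapsack problem with profits $w_j$ and sizes $s_{ij}$. We solve it with an FPTAS (the \ApproxKnapsack routine) to accuracy $(1-\varepsilon)$ in time $\tilde O(n+1/\varepsilon^2)$.

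Next we must check that the analysis of Theorem~\ref{thrm:matroid} tolerates two relaxations. First, each swap only needs a single random coordinate rather than the full $Z(t)$. Second, it tolerates a multiplicative $(1-\varepsilon)$ error in the linear objective. The drift inequality underlying the proof, $\frac{d}{dt}\E[\Phi]\ge \sum_{o\in O}\nabla_o F-\ldots$, is linear in the chosen base's gradient sum. An approximate maximizer therefore loses only a $(1-\varepsilon)$ factor, giving $(1-O(\varepsilon))(1-1/e)$ overall after rescaling $\varepsilon$.

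For the running time, the expected number of events is $m\ln(1/\varepsilon)$. Each event costs one knapsack FPTAS call, $\tilde O(n+1/\varepsilon^2)$, plus updating the weights $w_j$ for the affected items. This totals $\tilde O((n+1/\varepsilon^2)m)$, within the claimed $\tilde O((n/\varepsilon+1/\varepsilon^2)m)$. The slack of $1/\varepsilon$ on the $n$ term suggests that the authors may round values into $O(1/\varepsilon)$ classes or evaluate $F$ approximately, and we would use that slack if exact computation of $w_j$ proves too expensive.

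The main obstacle is the analysis. We need to establish rigorously that \gp remains $(1-1/e)$-competitive with approximate swaps that are computed per coordinate. Specifically, the expected improvement at each event must be at least $\frac1m$ times the approximate best-response gain, even though the best response is recomputed only for the sampled bin. We would first try to show that choosing a random bin and computing its best configuration matches, in expectation, the contribution of a uniformly random element of the bijection $h$. That would let the proof of Theorem~\ref{thrm:matroid} go through verbatim with an extra $(1-\varepsilon)$ factor.
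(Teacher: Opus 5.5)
Your high-level plan matches the paper's. You reduce to a partition matroid over bin configurations, run \gp with a swap that picks a uniformly random bin and calls the knapsack FPTAS on weights $w_j$, and pay for $m\ln(1/\varepsilon)$ events. The approximation part is not really an obstacle. The paper's $(\beta,\eta)$-approximate swap and Theorem~\ref{thm:approximate_poission} already cover a single random coordinate together with a multiplicative error. Verifying the condition takes two lines because $O$ has exactly one element per bin, as in Lemma~\ref{lem:simpleF_average}.

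One small correction there. The identity $\nabla_{(i,S)}F(t\indicator_A)=\sum_{j\in S}w_j$ fails for the current configuration $S=S_i$, where the gradient is $\frac{1}{1-t}\sum_{j\in S_i}w_j$. The paper evaluates $(i,S_i)$ separately and keeps the better of it and the knapsack output. Alternatively, you can note that Lemma~\ref{lem:submodular-standard} only needs the marginals $F(t\indicator_A\vee\indicator_{\{o\}})-F(t\indicator_A)$, and these do equal $\sum_{j\in O_i}w_j$.

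The genuine gap is the running time. You charge each event only the FPTAS call plus ``updating the weights $w_j$ for the affected items''. However, no item is unaffected. The weight $w_j=\E[(v_{ij}-M_j)^+]$ depends on the event time $t$, which is new at every event, and on the sampled bin $i$ through $v_{ij}$. So every weight must be recomputed at every event.

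Exact evaluation, as you describe it, costs $\Theta(c_j)$, where $c_j$ is the number of bins whose current configuration contains $j$. The reason is that $w_j=v_{ij}(1-t)^{r-1}-t\sum_{\ell\ge r}(1-t)^{\ell-1}u_\ell$ over the sorted values $u_\ell$. This is a polynomial in $1-t$ evaluated at a fresh point each time. Because items are replicated across configurations, $\sum_j c_j=\sum_{i'}|S_{i'}|$ can be $\Theta(nm)$. That gives $\tilde O(nm)$ per event and $\tilde O(nm^2)$ overall. Your claimed $\tilde O((n+1/\varepsilon^2)m)$, which would even beat the theorem, is therefore unsupported.

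The missing step is the one you mention only as a fallback, and it is exactly what the paper does:
\begin{enumerate}
\item Round every $v_{kj}$ to a power of $(1+\varepsilon)$ and discard values that are negligible compared to $\max_{k,j}v_{kj}\le\mathrm{OPT}$. This leaves $O(\varepsilon^{-1}\log(nm/\varepsilon))$ value classes.
\item For each item $j$ and class $c$, maintain the integer count of bins of that class whose current configuration contains $j$. This costs $O(n)$ per swap.
\item Compute each $w_j$ in one pass over the classes. Bins within a class are interchangeable, so a class contributes the factor $(1-t)^{\text{count}}$.
\end{enumerate}
This gives $\tilde O(n/\varepsilon)$ per event, which is precisely where the $n/\varepsilon$ term in the theorem comes from. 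You also need to check that the rounding loses only a $(1-O(\varepsilon))$ factor in the objective.
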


Our approach in proving the above is considering a more general problem, the \saplong (\sap).
An instance of \sap consists of $n$ items and $m$ bins.
Each bin $i$ has an associated collection of feasible sets $\F_i\subseteq 2^{[n]}$ which is down-closed: $A \in \F_i$ and  $B\subseteq A$ imply that $B \in \F_i$.
Each item $j$ and bin $i$ have a value $v_{ij}$.
The goal is to choose disjoint feasible sets $S_i \in \F_i$ of items while maximizing: $\sum_{i=1}^m \sum_{j\in S_i}v_{ij}$.
Observe that \sap captures \gap when each $\F_i$ is the collection of all subsets of items whose size with respect to bin $i$ is at most one: $\F_i = \{ S\subseteq [n]\colon \sum _{j\in S}s_{ij}\leq 1\}$.

We assume that there is an efficient algorithm \ApproxKnapsack that for any bin $i$, given non-negative weights $w_j$ on items for each $j\in [n]$, returns in time $p(n,\alpha)$ an $\alpha$-approximation to the maximum weight set $T\in \F_i$ where the weight of a set $T$ is defined as: $w(T)=\sum_{j\in T} w_j$.
For \gap the above problem is exactly the knapsack problem with $n$ items and there exists an efficient $(1-\varepsilon)$-approximation that runs in time $\tilde{O}(n+\nicefrac{1}{\varepsilon^2})$~\cite{chen2024nearly}.

We reduce \sap to \sm with a partition matroid and an exponentially sized universe (refer to, e.g.,~\cite{CCPV11}, for the details of this approximation preserving reduction).
Despite the exponential size of the universe we are able to use \gp to prove the following theorem.

\begin{theorem}\label{thm:sap_main}
    Given an $\alpha$-approximation algorithm \ApproxKnapsack for finding the maximum weight set in each $\cF_i$ that runs in time $p(n,\alpha)$, for every $ \varepsilon >0$ there exists a $(1-\varepsilon)(1-e^{-\alpha})$-approximation algorithm for the \saplong with $n$ items and $m$ bins that runs in time $\tilde{O}\left(\left(\nicefrac{n}{\varepsilon} +p(n,\alpha)\right)\cdot m\right)$. 
\end{theorem}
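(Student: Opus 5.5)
We reduce \sap to \sm over a partition matroid in the standard way. The universe is $U=\{(i,S): i\in[m], S\in\F_i\}$, with parts $U_i=\{(i,S):S\in\F_i\}$. The objective is the coverage-type function
\[
f(A)=\sum_{j\in[n]}\max_{(i,S)\in A,\, j\in S} v_{ij},
\]
which is monotone and submodular, and any feasible solution of \sap of value $V$ gives a base of value $V$ and vice versa (after removing duplicate items). We then run \gp on this instance, with two modifications: we never write $Z(t)$ down explicitly, and we replace exact maximization by \ApproxKnapsack.

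\textbf{Step 1: the swap procedure.} For a partition matroid the bijection $h$ is forced: $h$ maps the current element in part $U_i$ to $Z(t)\cap U_i$. So at an event we sample a uniformly random bin $i$ and only need a (near) maximizer of $\nabla_{(i,S)}F(t\indicator_A)$ over $S\in\F_i$. For our $f$, with $\bx=t\indicator_A$, the gradient coordinate for $(i,S)$ is a sum over $j\in S$ of terms
\[
w_j=\E\Big[\big(v_{ij}-M_j\big)^+\cdot(\text{indep.\ factor})\Big],
\]
where $M_j$ is the max value of $j$ among the other randomly included sets. The exact formula requires dividing by $1-x_{(i,S)}$, but the linearity in $S$ holds up to a factor that is identical for all $S$. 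Hence the coordinate is (up to that common factor) additive over the items of $S$.

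Concretely, item $j$ is covered by the sets of $A$ in the other bins, each present independently with probability $t$. If we sort the at most $m$ values $v_{i'j}$ of the bins currently covering $j$, then $w_j=\E[(v_{ij}-\max)^+]$ can be computed in $O(m)$ time, or in $O(\log m)$ with maintained prefix products per item. Thus one event costs $O(nm)$ for the weights, or less with incremental updates, plus one call to \ApproxKnapsack on bin $i$ with weights $w$, i.e.\ $p(n,\alpha)$.

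\textbf{Step 2: approximation with an $\alpha$-approximate oracle.} We rerun the analysis of Theorem~\ref{thrm:matroid} with the step-\eqref{GP-First-Step} inequality weakened. There, the drift of $\E[f(A)]$, or of the relevant potential $F(t\indicator_A)$, was lower bounded using $\sum_{i}\nabla_{h(i)}F \ge \sum_{o\in O}\nabla_o F\ge f(O)-F(\cdot)$. With per-part $\alpha$-approximation we instead get $\alpha\sum_{o\in O}\nabla_oF$. The differential inequality then becomes $\frac{d}{dt}\Phi\ge \alpha f(O)-\alpha\Phi$ in the scaled-time sense, which integrates to $(1-e^{-\alpha})$, and starting at $t=\varepsilon$ costs the $(1-\varepsilon)$ factor as before. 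This step is the main obstacle. We must check that the proof of Theorem~\ref{thrm:matroid} only uses the chosen base through this linear inequality, and that the per-bin (random part) choice, rather than a whole base, suffices in expectation. Since each part is chosen with probability $1/m$ and the Poisson rate is $m/t$, the expected gain equals the full-base expression, so it should go through.

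\textbf{Step 3: running time.} The expected number of events is $m\ln(1/\varepsilon)$. Each costs $p(n,\alpha)$ plus the weight computation. To reach $\tilde O(n/\varepsilon\cdot m)$ in total, rather than $O(nm)$ per event, we batch: with rate $m/t$, bin $i$ is refreshed about $\ln(1/\varepsilon)$ times. We maintain per item the sorted list of covering bins, so that weights for bin $i$ cost $\tilde O(n)$ per event, and updating after a swap costs $\tilde O(n)$. Any extra $1/\varepsilon$ loss comes from rounding $t$ to a grid of granularity $\varepsilon$, so that the products $(1-t)^r$ can be precomputed. This yields total time $\tilde O((n/\varepsilon+p(n,\alpha))m)$.
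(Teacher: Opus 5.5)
\paragraph{Main gap: the running-time step.}
Your reduction, the random-bin swap using \ApproxKnapsack{} on additive item weights, and Step~2 all follow the paper. Step~2 needs no re-derivation: it is exactly Theorem~\ref{thm:approximate_poission} (via Lemma~\ref{lem:value-approximate-swap}) with $\beta=\alpha$ and $\eta=0$. The genuine gap is Step~3, which has to deliver the claimed running time.

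The weight of item $j$ depends on $t$ through the ranks of all other bins currently covering $j$. If their values are $u_1\ge\cdots\ge u_c$ and exactly $p$ of them are at least $v_{ij}$, then $\E[(v_{ij}-M_j)^+]=(1-t)^p\bigl(v_{ij}-\sum_{\ell>p}u_\ell\,t(1-t)^{\ell-p-1}\bigr)$, a polynomial in $1-t$ of degree up to $m$. Since $t$ changes at every event, sorted lists or ``maintained prefix products'' do not let you evaluate this in polylogarithmic time. It costs $\Theta(c)$ per item, hence $O(nm)$ per event, which is only the naive bound.

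Rounding $t$ to a grid does not repair this as stated, for two reasons.
\begin{enumerate}
\item The swap procedure would then optimize $\nabla F(t'\indicator_A)$ with $t'\neq t$. Lemma~\ref{lem:value} and Theorem~\ref{thm:approximate_poission} require the guarantee at the exact event time, so you would owe a discretization-error analysis, which is precisely what \gp{} is designed to avoid.
\item Precomputing the powers $(1-t)^r$ does not handle the rank shifts caused by inserting or deleting a covering bin. You would need an augmented search tree per item and per grid point, so each swap costs $\tilde O(nG)$ for $G$ grid points rather than $\tilde O(n)$.
\end{enumerate}

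The missing idea is to round the \emph{values} rather than time. Round every $v_{kj}$ to a power of $1+\varepsilon$ and discard negligible ones, so only $\tilde O(1/\varepsilon)$ distinct values remain. For each item $j$ and value class $c$, store the number $I'_{cj}$ of bins of that class currently covering $j$. No covering bin of class $c$ is present with probability $(1-t)^{I'_{cj}}$, so each $w_j$ is computed exactly at the true $t$ in one pass over $\tilde O(1/\varepsilon)$ counts. A swap updates the counts in $O(n)$ time. The only loss is a $1-O(\varepsilon)$ factor from changing the instance, and the Poisson process itself is untouched. With $m\ln(1/\varepsilon)$ expected events this gives $\tilde O\bigl((n/\varepsilon+p(n,\alpha))m\bigr)$.

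\paragraph{Smaller point: the gradient at the currently held set.}
In Step~1, the gradient is not additive over $S$ up to a factor common to all $S$.
\begin{itemize}
\item For $T\neq S_i$ you have $x_{(i,T)}=0$ and $\nabla_{(i,T)}F(t\indicator_A)=\sum_{j\in T}w_j$.
\item The currently held set has $x_{(i,S_i)}=t$ and $\nabla_{(i,S_i)}F(t\indicator_A)=\frac{1}{1-t}\sum_{j\in S_i}w_j$.
\end{itemize}
So an $\alpha$-approximate maximizer of $\sum_{j\in T}w_j$ need not be $\alpha$-approximate for the gradient. When the optimum's set for bin $i$ is $S_i$, the $(\alpha,0)$-approximate inequality as defined can fail by a factor $1-t$.

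The paper handles this by computing $\nabla_{(i,S_i)}F$ separately and returning the better of $S_i$ and the output of \ApproxKnapsack. Alternatively, you can note that $\sum_{j\in T}w_j=F(t\indicator_A\vee\indicator_{\{(i,T)\}})-F(t\indicator_A)$ for every $T$, including $S_i$. The proof of Lemma~\ref{lem:submodular-standard} only uses these marginals, so maximizing $w$ alone also works. In that case you must redo Lemma~\ref{lem:value-approximate-swap} with marginals instead of citing it.
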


First observe that Theorem~\ref{thm:gap_main} follows  by using the $\tilde{O}(n+\nicefrac{1}{\varepsilon^2})$-running-time $(1-\varepsilon)$-approximation algorithm of~\cite{chen2024nearly} for the knapsack problem as \ApproxKnapsack in Theorem~\ref{thm:sap_main}.
Second, let us compare the results of Theorems~\ref{thm:gap_main} and~\ref{thm:sap_main} with the previously best known algorithms.
For \gap, approximations of $(1-\nicefrac{1}{e})$ and $(1-\nicefrac{1}{e}+\delta)$ (where $ \delta \geq 10^{-120}$) were given by~\cite{fleischer2006tight} and~\cite{feige2006approximation}, respectively.
For \sap, an approximation of $ (1-\nicefrac{1}{e})\alpha$ was given by~\cite{fleischer2006tight}.
All the above mentioned algorithms are based on solving an exponential sized linear program via the ellipsoid algorithm, and thus are subsequently slow.
For \sap, an improved approximation of $(1-\varepsilon)(1-e^{-\alpha})$, for every $ \varepsilon >0$, due to~\cite{CCPV11}, reduced the problem to \sm with a partition matroid and an exponentially sized universe.
The running time of this algorithm is a large polynomial in $m$, $n$, $\nicefrac{1}{\varepsilon}$ along with a dependence on $p(n,\alpha)$.
\gap admits, for every $\varepsilon >0$, a fast $(\nicefrac{1}{(2+\varepsilon)})$-approximation in a running time of $ \tilde{O}((n+\nicefrac{1}{\varepsilon^2})\cdot m)$~\cite{CKR2006}.

There are two things we note.
First, in order to prove Theorem~\ref{thm:sap_main} we present a suitable swap procedure that determines which elements are exchanged in \gp once an event in the Poisson process occurs.
Second, the running times of Theorems~\ref{thm:gap_main} and~\ref{thm:sap_main} do not assume any oracle access but count the total number of steps taken by the algorithm, since the input is given explicitly.

\paragraph{Generalized Partition Matroid.}
When $\cM=(U,\cI)$ is a generalized partition matroid there exist a partition $U_1,\ldots, U_k$ of $U$ and bounds $\ell_1,\ldots,\ell_k\in \mathbb{N} $.
A set $S\subseteq U$ is in  $\cI$ if and only if $|S\cap U_j|\leq \ell_j$ for each $1\leq j\leq k$.
Thus, a solution $S\subseteq U$ to \sm when $\cM$ is a generalized partition matroid is feasible if and only if $ S$ contains at most $\ell_j$ elements from each part $U_j$.

The following two theorems generalize our results for partition matroids to generalized partition matroids with a slight overhead to  the running. Similarly, we provide separate theorems for the problem with an oracle to $f$ and $F$. 

\begin{theorem}\label{thrm:GeneralizedPartitionF}
There exists an algorithm that given a generalized partition matroid $\cM=(U,\cI)$, a monotone submodular function $f\colon 2^U\rightarrow \mathbb{R}_+$, and $ 0< \varepsilon \leq 1$, performs  $ O(n\ln^2{(\nicefrac{1}{\varepsilon})})$ evaluations of $F$, in expectation, and finds $A\in \cI$ satisfying: $ \mathbb{E}[f(A)]\geq (1-\varepsilon)(1-\nicefrac{1}{e}) f(O)$.
Here $n$ is the size of the universe $U$ and $ O=\argmax \{ f(S)\colon S\in \cI\}$ is an optimal base.
\end{theorem}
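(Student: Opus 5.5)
We obtain Theorem~\ref{thrm:GeneralizedPartitionF} by running \gp on the generalized partition matroid, replacing the computation of the full base $Z(t)$ with a cheap swap procedure, and we bound both the approximation ratio and the number of $F$-evaluations. The rank is $k'=\sum_j \ell_j$, so the process has rate $k'/t$ on $[\varepsilon,1]$ and makes $k'\ln(1/\varepsilon)$ events in expectation. Theorem~\ref{thrm:matroid} tells us what the analysis needs. At an event at time $t$, a uniformly random $i\in A$ must be exchanged with $h(i)$, and in expectation over $i$ the gain $\nabla_{h(i)}F(t\indicator_A)-\nabla_i F(t\indicator_A)$ must be at least $\frac{1}{k'}\bigl(\sum_{o\in O}\nabla_o F-\sum_{a\in A}\nabla_a F\bigr)$. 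Here everything is evaluated at $t\indicator_A$, and $O$ is exchanged with $A$ via a bijection preserving independence. For a generalized partition matroid such a bijection can be taken part by part: $A\cap U_j$ and $O\cap U_j$ both have size $\ell_j$ (we may assume $A$ and $O$ are bases), and any bijection between them works.

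The swap procedure is as follows. Pick $i\in A$ uniformly at random, and let $j$ be its part. Replace $i$ by $\argmax_{u\in U_j\setminus A}\nabla_u F(t\indicator_A)$, or keep $i$ if this is not an improvement. Averaging over $i$, the expected gain is at least $\frac{1}{k'}\sum_j\sum_{a\in A\cap U_j}(\max_u \nabla_u F-\nabla_a F)$. This dominates $\frac{1}{k'}\sum_j\bigl(\sum_{o\in O\cap U_j}\nabla_o F-\sum_{a\in A\cap U_j}\nabla_a F\bigr)$. To check this, note that elements of $O\cap U_j\cap A$ cancel against themselves. Each remaining $o\in O\cap U_j$ is paired with a distinct $a\in A\cap U_j\setminus O$, and $\nabla_o F\le\max_u\nabla_u F$. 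Terms with a negative difference are dropped because we swap only on improvement. Hence the same potential/ODE argument as in the proof of Theorem~\ref{thrm:matroid} yields $(1-\varepsilon)(1-1/e)$.

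The main obstacle is the cost. Computing $\max_{u\in U_j}\nabla_u F(t\indicator_A)$ naively costs $O(|U_j|)$ evaluations of $F$ per event, which can total $\sum_j \ell_j\ln(1/\varepsilon)|U_j|\gg n\ln(1/\varepsilon)$. The remedy is to make the expected number of events in part $j$ roughly proportional to the number of evaluations we can afford per event. The first thing we try is to maintain, for each part $j$, cached gradient values that are refreshed only at checkpoint times. Because the gradients at $t\indicator_A$ change smoothly in $t$ except when $A$ changes, we refresh $U_j$'s cache only at geometrically spaced times $t=\varepsilon(1+\delta)^r$ and after each swap inside $U_j$. This gives $O(\ln(1/\varepsilon))$ geometric levels. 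Within a level, the part-$j$ cache is recomputed $O(\ln(1/\varepsilon))$ times in expectation. The source of this bound is that the swaps touching part $j$ at rate $\ell_j/t$ move only single elements, and a recomputation is triggered only when the current top gradient is invalidated, which we bound by a halving/doubling argument on $\ell_j$. Together these give $O(|U_j|\ln^2(1/\varepsilon))$ evaluations per part and $O(n\ln^2(1/\varepsilon))$ in total. The approximation loss from using a stale maximizer is handled by monotonicity of $\nabla F$ along $t$ (submodularity gives $\nabla_u F(t\indicator_A)$ nonincreasing in $t$). A stale value is therefore an overestimate, and we lose only a $(1+\delta)$ factor, absorbed into $\varepsilon$. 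We expect this counting step, which bounds cache refreshes per part by $O(\ln^2(1/\varepsilon))$ rather than by $\ell_j\ln(1/\varepsilon)$, to be the delicate part. If it fails, we fall back to processing each part as $\ell_j$ copies with a shared lazy heap, so that swaps within a part reuse one $O(|U_j|)$ scan per checkpoint.
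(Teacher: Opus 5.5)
Your approximation argument is sound: replacing a uniformly random $i\in A$ (in part $U_j$) by $\argmax_{u\in U_j\setminus A}\nabla_u F(t\indicator_A)$, only when this improves, is an above-average swap, and Theorem~\ref{thm:average_poisson} then gives the ratio. The gap is the query bound, which is the real content of the theorem, and your caching scheme does not establish it, for three reasons.

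\textbf{Cross-part dependence.} $\nabla_u F(t\indicator_A)$ for $u\in U_j$ depends on all of $A$, not only on $A\cap U_j$. A swap in any other part therefore changes every cached value in $U_j$, and refreshing only after swaps inside $U_j$ leaves the cache stale by an uncontrolled amount. A swap also removes an element from $A$, which can \emph{increase} gradients. So stale values are not even upper bounds, and this also breaks your lazy-heap fallback, which relies on exactly that invariant.

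\textbf{No $(1+\delta)$ control in $t$.} Even with $A$ fixed, monotonicity in $t$ only says stale values overestimate; it gives no multiplicative control. For $f(S)=\min\{|S|,1\}$ and $u\notin A$ one has $\nabla_u F(t\indicator_A)=(1-t)^{|A|}$, which drops by a factor $(1-\delta)^{|A|}$ between $t=1/2$ and $t=(1+\delta)/2$. Different elements decay at different rates, so nothing bounds the current gradient of a stale maximizer against the current maximum.

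\textbf{Inconsistent counting.} There are $\Theta(\ln(1/\varepsilon)/\delta)$ geometric levels, so having $O(\ln(1/\varepsilon))$ levels forces $\delta$ to be constant, and then the claimed $(1+\delta)$ loss cannot be absorbed into $\varepsilon$. Moreover, your own rule of refreshing after every swap in $U_j$ can already cost $\ell_j\ln(1/\varepsilon)$ refreshes of $|U_j|$ evaluations each. That is exactly the naive $\sum_j \ell_j|U_j|\ln(1/\varepsilon)$ bound, and the ``halving/doubling argument'' meant to beat it is never given.

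The missing idea is that you never need the exact maximum. Theorem~\ref{thm:approximate_poission} only requires the incoming element to beat, in expectation, $(1-\delta)$ times the \emph{average} gradient over $O\cap U_j$, and stochastic-greedy subsampling achieves this. Pick part $j$ with probability $\ell_j/r$, where $r=\sum_j\ell_j$ is your $k'$. Sample $X$ of size $\min\{\frac{|U_j|}{\ell_j}\ln(1/\delta),|U_j|\}$, take the best element of $X$, and swap out a uniformly random element of $A\cap U_j$. Lemma~\ref{lem:sampling} gives $\E[\max_{e\in X} w_e]\ge (1-\delta)\frac{1}{\ell_j}\sum_{o\in O\cap U_j} w_o$, so the procedure is $(1-\delta,0)$-approximate. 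Its expected cost per event is $\sum_j \frac{\ell_j}{r}\cdot\frac{|U_j|}{\ell_j}\ln(1/\delta)=\frac{n}{r}\ln(1/\delta)$. With $r\ln(1/\varepsilon)$ events and $\delta=\varepsilon/2$ this totals $O(n\ln^2(1/\varepsilon))$, with no caching at all. One extra device is needed, because your cancellation of the $O\cap A$ terms works only with the exact maximum. The paper runs \gp on a reduced instance with $r+1$ copies of every element and samples $X$ only among copies not in $A$. This makes every swap valid and guarantees that some optimal solution lies entirely among the candidates. Each copy-gradient still costs $O(1)$ queries to $F$, via $x_i=1-\prod_s(1-y_{(i,s)})$.
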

\begin{theorem}\label{thrm:GeneralizedPartition}
There exists an algorithm that given a generalized partition matroid $ \cM=(U,\cI)$, a monotone submodular function $ f\colon 2^U\rightarrow \mathbb{R}_+$, and $ 0<\varepsilon<\nicefrac{1}{4}$, performs {$O(n \log n \log^5(1/\varepsilon)/\varepsilon^2)$} evaluations of $f$, in expectation, and finds $ A\in \cI$ satisfying $ \mathbb{E}[f(A)]\geq (1-\varepsilon)(1-\nicefrac{1}{e})f(O)$.
Here $n$ is the size of the universe $U$ and $ O=\argmax \{ f(S):S\in \cI\}$ is an optimal base.
\end{theorem}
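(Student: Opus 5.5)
The proof of Theorem~\ref{thrm:GeneralizedPartition} will combine the generalized-partition algorithm behind Theorem~\ref{thrm:GeneralizedPartitionF} with the $f$-oracle machinery behind Theorem~\ref{thrm:partition}. The plan has three parts. First, run \gp on the generalized partition matroid, but replace the exact base $Z(t)$ by a swap procedure. Second, prove a robust version of the analysis of Theorem~\ref{thrm:matroid}: it suffices that each event performs a swap whose expected gain in $\sum_{i}\nabla_i F(t\indicator_A)$ is, up to an additive $\varepsilon$-fraction error, at least $\frac{1}{k}\bigl(\sum_{j\in Z(t)}\nabla_j F(t\indicator_A)-\sum_{i\in A}\nabla_i F(t\indicator_A)\bigr)$. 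Third, implement this swap procedure with few evaluations of $f$ only.

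For the robust analysis, I would follow the potential argument for \gp. Let $\Phi(t)=\E[F(t\indicator_A)]$, where $A$ is the current base at time $t$. Differentiating in $t$ gives a drift term $\sum_{i\in A}\nabla_i F(t\indicator_A)$. The Poisson events, at rate $k/t$, contribute $\frac{k}{t}\cdot\E[\text{gain of one swap}]$. If a swap is only approximately greedy, with error $\delta\cdot f(O)$, the error integrates against $dt/t$ over $[\varepsilon,1]$ to $O(\delta\ln(1/\varepsilon))f(O)$. I would therefore set $\delta=\varepsilon/\ln(1/\varepsilon)$. This requirement is the source of the polylogarithmic factors.

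For the generalized partition matroid, a uniformly random $i\in A$ lies in part $U_j$ with probability $\ell_j/k$. Its exchange partner $h(i)$ may be any element of $U_j$ not already in $A$, since removing $i$ frees one slot in $U_j$. I would fix a bijection between $A\cap U_j$ and $Z(t)\cap U_j$ in which the elements of $A\cap Z(t)$ are matched to themselves. Within each part the best completion is then the top $\ell_j$ elements by gradient. The swap procedure will be: pick a random $i\in A$ with part $U_j$. Find, approximately, the top-$\ell_j$ gradient elements of $U_j$ outside $A-i$. Choose $h(i)$ uniformly among those not in $A$, or apply a rank-matching rule, and swap only if the estimated gain is positive.

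The main obstacle is estimating $\nabla_j F(t\indicator_A)=\E_{R\sim t\indicator_A}[f(R+j)-f(R-j)]$ using only $f$. Each sample costs $O(1)$ evaluations of $f$, and the goal is a total of $O(n\log n\,\mathrm{polylog}(1/\varepsilon)/\varepsilon^2)$ rather than a factor $|U_j|/\varepsilon^2$ at every event. I would reuse the multi-armed-bandit connection from Theorem~\ref{thrm:partition}. Treat the elements of $U_j$ as arms and run a best-arm (top-$\ell_j$) identification routine with additive accuracy $\delta f(O)/k$, in the style of \cite{EMM06}. Its sample cost is $O(|U_j|\log(1/\delta)/\delta^2)$ per part, but only in an amortized sense: there are $\ell_j\ln(1/\varepsilon)$ expected events in part $j$, and those calls must share work. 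To achieve this, I would maintain per-part estimates that are refreshed only at geometric checkpoints in time, where $t$ grows by a factor of $1+\delta$. Between checkpoints the gradients change little, since $t\indicator_A$ moves by small amounts. This gives $O(\ln(1/\varepsilon)/\delta)$ checkpoints, and each costs $O(n\log(1/\delta)/\delta)$ samples once the accuracy is scaled to $f(O)$ rather than to $f(O)/k$. The extra $\log n$ comes from a union bound over the selections within parts. Putting the pieces together should yield the stated bound of $O(n\log n\log^5(1/\varepsilon)/\varepsilon^2)$. The delicate point I expect is showing that stale estimates within a checkpoint interval still satisfy the robust swap inequality in expectation.
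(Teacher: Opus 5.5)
Your outer framework (\gp\ with an approximate swap procedure and a robust drift argument) matches the paper; your robust-analysis step is essentially Theorem~\ref{thm:approximate_poission}. The genuine gap is in how the bandit error is controlled.

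The routine of \cite{EMM06} has additive error proportional to the range of the samples $f(R+j)-f(R-j)$. The only a priori bound on that range in part $U_q$ is $\alpha_q=\max_{i\in U_q}(f(\{i\})-f(\emptyset))$, which can be as large as $f(O)$. Already for $\ell_q\equiv 1$, averaging over the random part gives only a $(1,\eta)$-approximate swap with $\eta f(O)=\delta\sum_q\alpha_q=\delta\,\marsum(I)\,f(O)$. Moreover $\marsum(I)$ can be $\Theta(k)$, for example for a coverage function in which every part contains an element covering the same heavy set. Reaching your target accuracy $\delta f(O)/k$ would then cost on the order of $k^2/\delta^2$ samples per arm. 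Your sentence ``once the accuracy is scaled to $f(O)$ rather than to $f(O)/k$'' is exactly the unjustified step: amortizing over time does not change the accuracy each estimate needs.

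The missing idea is the preprocessing of Lemma~\ref{lem:generalized_prep}. A residual random greedy \cite{BFNS14} is stopped at the first index where the residual sum of singleton marginals $V_i$ drops below a threshold $T=\Theta(f(O))$; that index is found via a constant-factor estimate of $f(O)$ and binary search. This returns $P$ with $\marsum(I_P)\le c\,f(O)/f(O_P)$ with probability $1-\delta$, while $\E[(1-\nicefrac{1}{e})f(O_P\cup P)+\frac{1}{e}f(P)]\ge(1-\nicefrac{1}{e}-\delta)f(O)$. \gp\ is then run on the residual instance, where the additive loss $\delta_2\marsum(I_P)f_P(O_P)\le c\,\delta_2 f(O)$ is harmless. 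This phase is also where the $\log n$ comes from (harmonic sums in the residual greedy, the estimate of $f(O)$, and the binary search), not a union bound.

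The amortization step you flag as delicate is not merely delicate; its premise is false. While $t$ grows by a factor $1+\delta$, about $k\delta$ swaps occur, each moving two coordinates of $t\indicator_A$ by $t$. A single insertion can lower $\nabla_jF(t\indicator_A)$ by $\Theta(t\alpha_j)$ for the heavy element of every other part (insert one of the redundant heavy elements above). So nothing bounds the staleness error by $\delta f(O)/k$ per swap. The count does not close either: the bandit routine costs $O(n\log(1/\delta)/\delta^2)$ samples per refresh, not $O(n\log(1/\delta)/\delta)$. With $\Theta(\ln(1/\varepsilon)/\delta)$ refreshes this gives $n\,\mathrm{polylog}(1/\varepsilon)/\varepsilon^3$.

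The paper instead makes each event cheap on its own (Algorithm~\ref{alg:genf}):
\begin{itemize}
\item choose part $q$ with probability $\ell_q/r$ ($r$ the rank);
\item subsample $X\subseteq U_q$ of size $\frac{|U_q|}{\ell_q}\ln\frac{1}{\delta}$, stochastic-greedy style \cite{MBKV15}, so that Lemma~\ref{lem:sampling} gives $\E[\max_{j\in X}w_j]\ge\frac{1-\delta}{\ell_q}\sum_{o\in O\cap U_q}w_o$ for $w_j=\nabla_jF(t\indicator_A)$;
\item run the bandit only on $X$, with accuracy $\delta/\ln\frac{1}{\delta}$ relative to $\alpha_X$;
\item drop a uniformly random element of $A\cap U_q$.
\end{itemize}
Working with $r+1$ copies of each item guarantees the candidate is never already in $A$, so no top-$\ell_q$ identification or exchange matching is needed. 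This yields a $(1-\delta,\delta\,\marsum(I))$-approximate swap with $O(\frac{n}{r}\delta^{-2}\mathrm{polylog}\frac{1}{\delta})$ expected queries per event, and $r\ln(1/\varepsilon)$ events finish the count.

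Incidentally, your $\ln(1/\varepsilon)$ loss from integrating against $dt/t$ is an artifact. A swap changes $F(t\indicator_A)$ by at least $t(\nabla_j-\nabla_i)$, so at rate $\lambda(t)=k/t$ a per-swap gradient error of $\eta f(O)/k$ costs only $\eta f(O)$ per unit time (Lemma~\ref{lem:value-approximate-swap}).
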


\subsection{Our Techniques}\label{sec:techniques}
\paragraph{\gp and the Continuous Greedy Algorithm.}
In first glance \gp seems closely related to the local search approach.
However, its origin lies with the continuous greedy algorithm~\cite{CCPV11}.

The continuous time process that defines the continuous greedy algorithm corresponds to a continuous trajectory $ \{ \bx(t)\} _{0\leq t\leq 1}$ where:
$(1)$ $\bx(0)\leftarrow \mathbf{0}$ is initialized to $\mathbf{0}$; and
$(2)$ $\frac{\partial \bx(t)}{\partial t}$ is set to $ \indicator _{Z(t)}$ where $Z(t)$ is a base that maximizes the linear objective induced by $\nabla F(\bx(t))$,  i.e., $$Z(t)\triangleq \argmax \left\{ \sum _{i\in Z}\nabla _i F(\bx(t))\colon Z\in \cB\right\} .$$
Clearly, $ \bx(t)\in t\cP_{\cB}$ where $ \cP_{\cB}$ is the base polytope of the matroid $ \cM$, i.e., $\bx(t)$ is $t$ times a convex combination of bases.
Thus, $\bx(1)$ is a fractional base that can be rounded without any loss by, e.g., pipage rounding~\cite{AS04,CCPV11}.
At the heart of the analysis of this continuous process is the following claim:
$$ \frac{\partial F(\bx(t))}{\partial t}\geq f(O)-F(\bx(t)),$$
where $O=\argmax \{ f(S)\colon S\in \cI\}$ is an optimal base.
This yields that $ F(\bx(t))\geq (1-e^{-t})f(O)$, and in particular $ F(\bx(1))\geq (1-\nicefrac{1}{e})f(O)$.

When discretizing the above continuous process to obtain an algorithm, the trajectory discretizes to $ \bx(0)={\mathbf{0}}$ and $ \bx(t+\delta) = \bx(\delta)+\delta \indicator_{Z(t)}$, where $Z(t)$ is as before and $\delta$ is the discretized step size.
Similarly to the above, it can be proved that $ F(\bx(t+\delta))-F(\bx(t))\gtrsim\delta (f(O)-F(\bx(t)))$, resulting in an approximation of $ 1-\nicefrac{1}{e}-O(\delta\cdot \text{poly}(n))$.

Intuitively, \gp is the limit of the continuous greedy algorithm that rounds $\bx(t)/t$ on the fly while the step size $\delta$ approaches $0$.
It is crucial to note that the rounding is of $\bx(t)$ scaled by $t$, and not of $ \bx(t)$.
{One can analyze the evolution of $ \bx(t)/t$, which can be proved to behave like a Frank-Wolfe style continuous local search algorithm (see Section~\ref{app:FW}).
Thus, an alternative way to understand the origin of \gp is that it is a Frank-Wolfe style algorithm with rounding on the fly and taking the limit as the step size approaches $0$.}
The focus now shifts to explaining what it means to round $ \bx(t)/t$ on the fly, and taking the limit of the algorithm as the step size $\delta$ approaches $0$.

First, consider rounding on the fly $ \bx(t)/t$.
Rather than maintaining $\bx(t)$, \gp maintains a random base $ A(t)$ such that $ \mathbb{E}[t\indicator _{A(t)}]=\bx(t)$.
Since we do not have $ \bx(t)$, but only the base $A(t)$, when computing $ Z(t)$, $ \nabla F(t\indicator _{A(t)})$ is used instead of $ \nabla F(\bx(t))$.
Hence, $ Z(t)\triangleq \argmax \{ \sum _{i\in Z}\nabla _i F(t\indicator _{A(t)})\colon Z\in \cB\}$.
As in continuous greedy we have that the trajectory at time $(t+\delta)$ is: $t\indicator _{A(t)}+\delta \indicator _{Z(t)}$. Additionally, from the guarantee as in the continuous greedy algorithm, we have:
\begin{align*}
&F\left(t\indicator _{A(t)}+\delta \indicator _{Z(t)}\right)-F\left(t\indicator _{A(t)}\right)  \gtrsim  \delta \left(f(O)-F\left(t\indicator_{A(t)}\right)\right). 
\end{align*}
Observe that $t\indicator _{A(t)}+\delta \indicator _{Z(t)}\in (t+\delta )\cP_{\cM}$ as desired but it is not a single base scaled by $ (t+\delta)$ as our approach demands.
Thus, we round the fractional base $(t\indicator _{A(t)}+\delta \indicator _{Z(t)})/(t+\delta)$ using (randomized) pipage rounding and obtain a base $A(t+\delta)$ satisfying:
\begin{align}
   \mathbb{E}\left[F\left(\left(t+\delta\right) \indicator_{A(t+\delta)}\right)\right]\geq F\left(t\indicator _{A(t)}+\delta \indicator _{Z(t)}\right), \label{our-approach1} 
\end{align}
where the expectation is over the randomness of the pipage rounding procedure.

Therefore, the discretized trajectory at time $ (t+\delta)$ is set to: $ (t+\delta)\indicator _{A(t+\delta)}$.
Since~\eqref{our-approach1} holds, rounding on the fly of the discretized trajectory scaled by the time gives:
\begin{align*}
&\mathbb{E}\left[F\left(\left(t+\delta\right)\indicator_{A(t+\delta)}\right)\right]-F\left(t\indicator _{A(t)}\right)  \gtrsim \delta \left(f(O)-F\left(t\indicator_{A(t)}\right)\right). \label{our-approach2}
\end{align*}

Second, consider taking the limit as the step size $\delta$ approaches $0$.
Rounding on the fly as described above provides an approximation of $ (1-\nicefrac{1}{e}-O(\delta\cdot \text{poly}(n)))$, hence it also suffers from a loss introduced by discretization of the continuous process.
Previous works (\cite{CCPV11}) as well as~\cite{BV14} and subsequent follow up works) aim to take the step size $\delta$ as large as possible to reduce the number of iterations, while bounding the loss $\delta$ incurs in the approximation.
We adopt the opposite approach and consider the limit as $\delta$ approaches $0$.

In the limit as $\delta$ approaches $0$, to the first order, applying the random variant of pipage rounding on $ (t\indicator _{A(t)}+\delta \indicator_{Z(t)})/(t+\delta)$ boils down to one of the following two:
$(1)$ setting $A(t+\delta)$ to be $A(t)$ w.p. $~\approx1-\delta k/t$; and
$(2)$ swapping a single uniform random element $i\in A(t)$ with $h(i)$ w.p. $\approx \delta k/t$.
The crucial, yet trivial, observation is that if the former happens, i.e., $A(t+\delta)=A(t)$, then from an algorithmic perspective nothing needs to be done!
Since the random variant of pipage rounding is oblivious to the objective, all that is needed is to understand the time of the next swap in this continuous process as $ \delta$ approaches $0$.
Calculating these limits, one can show that this is exactly given by a Poisson process with rate $ \lambda(t)=\nicefrac{k}{t}$ (as in \gp) that we analyze directly.

\paragraph{The Swap Procedure.}
Following the above, we observe that we do not need to compute $Z(t)$ when executing \gp.
It is enough, once an event in the Poisson process occurs, to find a suitable swap of a single element in $A(t)$.
This simple yet important observation is the driving force behind our applications.
Refer to Section~\ref{sec:poisson} for more details. 

{
We establish an interesting connection between the swap procedure and  the classic \mab problem~\cite{EMM06}: given a collection of random variables accessible by samples alone, find the one whose expectation is the highest.
The error introduced by the algorithm of~\cite{EMM06} depends on the support of the distribution, and in order to bound it we use a preprocessing algorithm that ensures that the effective support is sufficiently small.
}

\subsection{Related Work}\label{sec:additional}

Algorithms for submodular maximization have been studied as early as the late $70$'s, with the seminal work of~\cite{nemhauser1978analysis} who proved that greedy achieves an approximation of $(1-\nicefrac{1}{e})$ for monotone submodular maximization subject to a cardinality constraint, i.e., a uniform matroid independence constraint.
A matching hardness of $ (1-\nicefrac{1}{e})$ in the oracle model was given by~\cite{NW78}, i.e., any algorithm that achieves an approximation better than $ (1-\nicefrac{1}{e})$ is required to perform exponentially many queries to $f$.
Later, Feige~\cite{F98} proved that assuming $ \text{P}\neq \text{NP}$ the bound of $ (1-\nicefrac{1}{e})$ is also tight for the special case of a coverage function.
When considering a general matroid independence constraint, i.e., the \sm problem, it was shown by~\cite{fisher1978analysis} that both the discrete local search and the greedy algorithms achieve an approximation of $\nicefrac{1}{2}$.
This problem was subsequently settled with the work of~\cite{CCPV11} who presented a tight approximation of $(1-\nicefrac{1}{e})$ by introducing the continuous greedy algorithm and the use of the multilinear extension for submodular maximization.
Later algorithms that are based on non-oblivious local search~\cite{filmus2014monotone,BF24deter} are able to match the above tight $(1-\nicefrac{1}{e})$ approximation for a general matroid.
The continuous greedy approach has been successfully extended to accommodate various constraints, including multiple knapsacks, multiple matroids, and even more complex independence systems, e.g., ~\cite{chekuri2011submodular,kulik2013approximations,BV14,FKS21}.

When considering general submodular functions, which are not necessarily monotone, an approximation of $\nicefrac{1}{2}$ was given by~\cite{BFNS15} for the unconstrained case by introducing the double greedy algorithm.
A matching $(\nicefrac{1}{2})$-hardness for the unconstrained case was given by~\cite{feige2011maximizing}.
When considering a general matroid independence constraint, a long sequence of works~\cite{LMNS10,V13,gharan2011submodular,FNS11,EN16,BF19,BF24DR} presents ever improving guarantees, culminating in the current best-known approximation of $\approx 0.401 $ \cite{BF24DR}.
All these works are based on continuous greedy and local search approaches.
A hardness of $ \approx 0.478$ for a partition matroid was given by \cite{gharan2011submodular}, which was subsequently extended to a cardinality constraint \cite{Q23}.
Additional constraints such as a single knapsack, multiple knapsack constraints, and multiple matroid constraints, were also considered when maximizing non-monotone submodular functions~\cite{kulik2013approximations,LMNS10,LSV10}.

It is worth mentioning that some very fast algorithms for submodular maximization with a provable tight approximation guarantee are implemented in practice~\cite{CTI15,FD15,JXLWZ20,KLHK17,MKY16,PBK24,RAGT14,SSPB14}.
Two notable examples include stochastic greedy (a.k.a. sample greedy) for \sm with a uniform matroid~\cite{MBKV15,BFS17} and double greedy for unconstrained non-monotone submodular maximization~\cite{BFNS15}.

Starting with the work of~\cite{BV14} on the \sm problem, a significant body of work studies fast algorithms with a provable tight approximation guarantee~\cite{BFS17,KT24,EN19, EN19knapsack,henzinger2023faster} (where many of these results build on~\cite{BV14}).
In Table \ref{tab:example} we summarize the fastest known deterministic and randomized algorithms that are known to achieve a tight approximation of $ (1-\varepsilon)(1-\nicefrac{1}{e})$ for the \sm problem, depending on the type of matroid: uniform, partition, and general. 
{One should note that for a general matroid and in particular a partition matroid, each of the randomized algorithms does not dominate the other for all $n$, $k$, and $\varepsilon$, e.g., $O(n\log^2(n/\varepsilon)/\varepsilon^5) $\cite{EN19}, $ O((k^{3/2}+\sqrt{k} n \log^2{(n/\varepsilon)})/\varepsilon^{2.5})$~\cite{KT24}, and $ O(n\log{k}\cdot 2^{O(1/\varepsilon^4)})$~\cite{BF24ext}.
It is worth mentioning that our use and analysis of the residual random greedy algorithm of~\cite{BFNS14} for a faster preprocessing is inspired by~\cite{BFS17,EN19}, who also use the same algorithm.}

The \sw problem with $k$ players, as well other several closely related combinatorial auctions problems, exhibit a rich history \cite{CG10,F06,FV10,KLMM08,LLN01}.
An asymptotically tight approximation of $ (1-\nicefrac{1}{e})$ was given by \cite{CCPV11}, improving the previous known approximation of $ k/(2k-1)$ \cite{DS06}.
A hardness of $ (1-(1-\nicefrac{1}{k})^k)$  is known \cite{V13}, and a matching tight $ (1-(1-\nicefrac{1}{k})^k)$-approximation, for any $k$, was given by \cite{FNS11} (thus improving the tight asymptotic approximation of~\cite{CCPV11}).
\subsection{Organization}
Section~\ref{sec:prelim} contains some necessary preliminaries.
Section~\ref{sec:poisson} presents the complete analysis of \gplong, where Section~\ref{sec:Applications} focuses on the applications.
Finally, Section~\ref{app:FW} discusses the relation between \gplong, the continuous greedy algorithm and the Frank-Wolfe algorithm.

\begin{table*}[t]
    \centering
        \caption{Number of values queries used by the state of art algorithms}
    \label{tab:example}
    \begin{tabular}{|c|c|c|}
        \hline
         &  Deterministic & Random \\
        \hline
        Uniform & $O\left(n/\eps \cdot \log\left( n/\eps\right) \right)$ \cite{BV14} & $O(n\log( 1/\varepsilon))$ \cite{BFS17,MBKV15} \\
        \hline
        Partition & -  & $O\left(n \log^2\left(n/\eps \right) /\varepsilon^5\right) $ \cite{EN19}\\
        \hline
        \multirow{2}{*}{General} & $O(n^2 \cdot 2^{O(1/\varepsilon^4)})$ \cite{BF24ext} & $O((
        k^{3/2} + \sqrt{k} n   \log^2(n/\eps) )/\eps^{2.5})$ \cite{KT24} \\ 
        & $\tilde{O}(nk \cdot 2^{O(1/\varepsilon)})$ \cite{BF24deter} & $O(n\log{k} \cdot 2^{O(1/\varepsilon^4)})$ \cite{BF24ext}
        \\
        \hline
    \end{tabular}

\end{table*}

\section{Preliminaries}\label{sec:prelim}

A matroid $ \cM=(U,\cI)$ consists of a ground set $U$, a collection $ \cI\subseteq 2^U$ of independent sets satisfying: $(1)$ $\emptyset \in \cI$; $(2)$ if $S\in \cI$ and $T\subseteq S$, then $T\in \cI$; and $(3)$ if $S,T\in \cI$ and $|S|>|T|$, then there exists $i\in S\setminus T$ such that $ T+i\in \cI$. 
For a matroid $ \cM=(U,\cI)$ of rank $k$ we denote by $ \cB$ the collection of its bases: $ \cB\triangleq \{ S\in \cI:r(S)=k\}$.
Here $r\colon 2^U\rightarrow \mathbb{N}$ is the rank function of the matroid $\cM$: $ r(S)\triangleq \max \{ |T|:T\subseteq S,T\in \cI\}$, and $k\triangleq r(U)$ is the rank of $\cM$.
A classic result on matroids \cite{brualdi1969comments} states the existence of base exchange map for any two bases as defined below.  
\begin{definition}[Base Exchange Map]\label{def:h}
Given a matroid $ \cM=(U,\cI)$ and $ B_1,B_2\in \cB$, a bijection $ h\colon B_1\rightarrow B_2$ is a \bem if it satisfies: $(1)$ $ B_1-i+h(i) \in \cB$, $ \forall i\in B_1\setminus B_2$; and $(2)$ $ h(i)=i$, $\forall i\in B_1\cap B_2$.   
\end{definition}

Let $ \{ N(t)\}_{t\geq \varepsilon}$ be a non-homogeneous Poisson process with rate function $ \lambda(t)$.
For convenience of presentation, we assume that the process starts at time $\varepsilon >0$ (for some given $\varepsilon$) and not $0$.
Thus, $ N(\varepsilon)=0$ and $N(t)$ denotes the number of events that occurred in the time interval $ [\varepsilon,t]$.
More generally, for every time interval $I$ we denote by $ N_I$ the number of events that occurred in interval $I$, e.g., if $ I=(t,t+\delta]$ then $ N_{(t,t+\delta]}=N(t+\delta)-N(t)$.

Our analysis requires the following properties of a Poisson process (see, e.g., \cite{ross2014introduction}, Chapter 5):
\begin{enumerate}
\item \label{Poisson-independ} If $ I$ and $J$ are disjoint intervals, then $ N_I$ and $ N_J$ are independent.
\item \label{Poisson-events} For every $ t\geq \varepsilon$: $\lim _{\delta \rightarrow 0^+}\frac{1}{\delta}\Pr[N_{(t,t+\delta]}=1]=\lambda(t)$ and $\lim _{\delta \rightarrow 0^+}\frac{1}{\delta}\Pr[N_{(t,t+\delta]}\geq 2]=0$.
\end{enumerate}

We require the following three known properties of the multilinear extension (see, e.g., \cite{CCPV11}):
\begin{enumerate}
\item If $f$ is monotone then $F$ is also monotone:  $F(\mathbf{x})\leq F(\mathbf{y})$ for every $\mathbf{x}\leq \mathbf{y}$ and $\mathbf{x},\mathbf{y}\in [0,1]^U$. \label{multilinear-monotone}
\item If $f$ is submodular, then for every $\mathbf{x},\mathbf{y}\in [0,1]^U$, where $ \mathbf{x}\leq \mathbf{y}$, and every $ i\in U$:
\[ F(\mathbf{x}\vee \indicator _{\{ i\} })-F(\mathbf{x}) \geq F(\mathbf{y} \vee \indicator _{\{ i\} })-F(\mathbf{y}) .\]
A straightforward corollary of the above is that for every $\mathbf{x}\in [0,1]^U$ and $S\subseteq U$:
 \[ \sum _{i\in S} \left( F(\mathbf{x}\vee \indicator _{\{ i\} })-F(\mathbf{x})\right) \geq F(\mathbf{x}\vee \indicator _S) -F(\mathbf{x}).\] \label{multilinear-submodular}
\item For every $ S\subseteq U$, $ 0\leq t\leq 1$, and $i\in U$:
\begin{align*} \nabla _iF(t\indicator _S) & = F(t\indicator _{S\setminus\{ i\} } + \indicator _{\{ i\} }) - F(t\indicator _{S\setminus\{ i\} }) \\ &=\frac{1}{t} \cdot \left( F(t\indicator _{S\setminus \{ i\}} + t\indicator _{\{ i\} })-F(t\indicator _{S\setminus\{ i\} })\right).\end{align*}
Additionally for any $\mathbf{0}\leq \mathbf{x}\leq \mathbf{y}\leq \mathbf{1}$ and $i\in U$, we have: $ \nabla_i F(\mathbf{x})\geq \nabla_i F(\mathbf{y})$.
\label{multilinear-gradient}
\end{enumerate}

\section{The \gplong}\label{sec:poisson}
Recall that in \Cref{sec:results} a description of \gplong (\gp) was given.
First, the internal state is initialized to be an arbitrary base $A\in \cB$.
Second, a non-homogeneous Poisson process with rate $ \nicefrac{k}{t}$ is executed starting at time $\varepsilon$ and terminating at time $1$.
Third, if the next event of the Poisson process occurs at time $t$ then:
$(1)$ a base $Z(t)\in \cB$ maximizing $ \langle \indicator_{Z(t)},\nabla F(t\indicator_{A})\rangle$ is found;
$(2)$ a bijection $ h\colon A\rightarrow Z(t)$ that satisfies $ A-i+h(i)\in \cB$, for every $i\in A $, is computed; and
$(3)$ a uniform random $i\in A$ is chosen and $ A\leftarrow A-i+h(i)$ ($i$ is swapped with $h(i)$). 

Although the above follows a strict recipe, it will be useful to present a more generic version of \gp that uses a swap procedure.
The swap procedure is a (possibly) randomized algorithm that is given a base $A\in \cB$ and a time $ \varepsilon\leq t\leq 1$, and returns a pair $ (i,j)$ of elements $i\in A$ and $j\in U$ such that $A-i+j\in \cB$. 

\gp is now formally defined as follows.
First, the rate function is set to be $ \lambda(t)\triangleq\nicefrac{k}{t}$.
Second, for every $ t\geq \varepsilon$, $ \tau(t)$ denotes the random variable that equals the time of the next event after time $t$, i.e., $ \tau(t)\triangleq \min \{ r:N(r)>N(t)\}$.
Moreover, $ \tau _i$ denotes the random variable that equals the time of the $i$\textsuperscript{th} event, i.e., $ \tau _0 = \varepsilon$ and for every $i\geq 1$ we have $ \tau_i = \tau (\tau _{i-1})$.
We note that for every $ t\geq \varepsilon$ the density of $\tau(t)$ is explicitly known, and thus one can computationally sample $\tau(t)$ and simulate the Poisson process. 

Algorithm \ref{alg:Poisson} summarizes the formal description of \gp.
It receives as an input a matroid $ \cM=(U,\cI)$, a starting time $\eps\in (0,1)$, and a swap procedure $\swap$. 
Algorithm \ref{alg:Poisson} is fairly simple as it simulates a Poisson process with rate $\lambda(t)$, and at each event of the Poisson process it performs a swap between two items according to the swap procedure.

\begin{algorithm}[t]
\caption{\gplong $(\cM,\varepsilon,\swap)$}
\SetKwInOut{Input}{input}
\SetKwInOut{Output}{output}

\SetAlgoNlRelativeSize{0}
\label{alg:Poisson}
 	
 	$ t\leftarrow \varepsilon$ and set $A$ to be an arbitrary base of $\cM$.

    sample $\tau(t)$ the time of the next event after $t$ and $ t\leftarrow \tau(t)$.

    \While{t<1}{

    $(i, j)\leftarrow \swap(t,A)$. \label{poisson:item_sample}

    $A\leftarrow A-i+j$. \label{poisson:item_swap}

    sample $\tau(t)$ the time of the next event after $t$ and $ t\leftarrow \tau(t)$.
    
    }
	
 	\Return $A$.
 	
\end{algorithm}

In order to analyze Algorithm~\ref{alg:Poisson}, additional properties of the swap procedure that relate to the value of the inner state $A$ are required.
A swap procedure is {\em above-average} if for every $ A\in \cB$ and $\varepsilon \leq t\leq 1$:
\ignore{
\begin{align}
   \mathbb{E}_{(i,j)\sim \swap(t,A)} [\nabla _j F(t\indicator _A) - \nabla _i F(t\indicator _A) ] \geq \frac{1}{k}\left( \sum _{o\in O}\nabla _o F(t\indicator _A) - \sum _{a\in A}\nabla _a F(t\indicator _A) \right),\label{eq:average}
\end{align}}

\begin{equation}
\begin{aligned}
& \mathbb{E}_{(i,j)\sim \swap(t,A)}
\big[
\nabla_j F(t\indicator_A)
-
\nabla_i F(t\indicator_A)
\big]
\ge \frac{1}{k}
\Big(
\sum_{o\in O} \nabla_o F(t\indicator_A) 
-\sum_{a\in A} \nabla_a F(t\indicator_A)
\Big) ,
\end{aligned}
\label{eq:average}
\end{equation}
where $ O=\argmax \{f(S)\colon S\in \cB\}$ is an optimal solution and $(i,j)\sim \swap(t,A)$ denotes that the random pair $(i,j)$ is the random output of $\swap(t,A)$ when executed with base $A$ and time $t$.
{It should be noted that the expectation in~\eqref{eq:average} is taken over the randomness of the swap procedure and is independent of the Poisson process.}
For convenience of presentation, let $ p_{(i,j)}(t,A)$ denote the probability that the swap procedure returns the pair $ (i,j)$ for a given base $A$ and time $t$.
Hence, the following holds:
\begin{align}
&\mathbb{E}_{(i,j)\sim \swap(t,A)} [\nabla _j F(t\indicator _A) - \nabla _i F(t\indicator _A) ]=  \nonumber\sum _{i\in A,j\in U}p_{(i,j)}(t,A)\cdot \left( \nabla_j F(t\indicator _A) - \nabla _i F(t\indicator _A)\right). \nonumber
\end{align}

The above-average property \eqref{eq:average} has a simple intuitive interpretation.
Recalling the strict recipe mentioned above, if one chooses $Z(t)$ to be an optimal solution $O$, then the right hand side of~\eqref{eq:average} is the expected gain if one swaps a uniform random element $i\in A$ with $h(i)\in O$.
The left hand side of~\eqref{eq:average} equals the expected gain of swapping the pair $ (i,j)$ returned by the swap procedure.
Thus, \eqref{eq:average} implies that in expectation the swap procedure is as good as a uniform random swap with an optimal solution.

Before presenting the theorems summarizing the analysis of \gp, we mention that since the Poisson process is time-continuous it is required that the swap procedure is mathematically well-behaved.
Formally, a swap procedure is {\em right continuous} if $p_{(i,j)}(t,A)$, as a function of $t$ defined on $ [\varepsilon,1]$, is a right continuous function in $t$, for every $A\in \cB$ and every $i\in A, j\in U$.
For simplicity of presentation, the term swap procedure will be used henceforth to refer only to right continuous swap procedures.

The following theorem states that given a right continuous swap procedure that is above-average, Algorithm~\ref{alg:Poisson} returns a solution achieving a tight approximation.
\begin{theorem}
\label{thm:average_poisson}
Given a matroid $ \cM=(U,\cI)$ of rank $k$, a monotone submodular function $f\colon 2^U \rightarrow \mathbb{R}_+$, a starting time $ 0<\varepsilon \leq 1$, and a right continuous swap procedure that is above-average, Algorithm~\ref{alg:Poisson} finds $A\in \cB$ satisfying $ \mathbb{E}[f(A)]\geq (1-\varepsilon)(1-\nicefrac{1}{e})\cdot f(O)$ using in expectation $ k\ln{(1/\varepsilon)}$ swap procedure calls.
Here $O=\argmax\{f(S)\colon S\in \cI\}$ is an optimal base. 
\end{theorem}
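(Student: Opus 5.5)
The plan is to track the potential $g(t)\triangleq \mathbb{E}[F(t\indicator_{A(t)})]$, where $A(t)$ is the base held by Algorithm~\ref{alg:Poisson} at time $t$. I will derive the differential inequality $\frac{d^+}{dt} g(t)\ge f(O)-g(t)$ for $t\in[\varepsilon,1)$ and integrate it. Since $g(\varepsilon)\ge 0$, this gives $g(1)\ge (1-e^{-(1-\varepsilon)})f(O)\ge (1-\varepsilon)(1-\nicefrac1e)f(O)$; the last inequality holds because $1-e^{-x}$ is concave with value $0$ at $0$. Finally $\mathbb{E}[f(A(1))]=F(\indicator_{A(1)})\ge g(1)$ in expectation. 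The count of swap calls is immediate: it equals the number of events in $(\varepsilon,1)$, whose expectation is $\int_\varepsilon^1 \frac{k}{t}\,dt=k\ln(1/\varepsilon)$.

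\textbf{Conditional increment.} Fix $t$ and condition on $A(t)=A$. I split $F((t+\delta)\indicator_{A(t+\delta)})-F(t\indicator_A)$ according to the number of events in $(t,t+\delta]$.

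If there are no events, which happens with probability $1-\frac{k\delta}{t}+o(\delta)$, the change is $F((t+\delta)\indicator_A)-F(t\indicator_A)=\delta\sum_{a\in A}\nabla_aF(t\indicator_A)+O(\delta^2)$, by the chain rule and the smoothness of the multilinear polynomial $F$.

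If there is exactly one event, which happens with probability $\frac{k\delta}{t}+o(\delta)$, the event time is $s\in(t,t+\delta]$ and a pair $(i,j)$ is drawn from $\swap(s,A)$. By right continuity, $p_{(i,j)}(s,A)\to p_{(i,j)}(t,A)$ as $\delta\to0$. The change is then $F(t\indicator_{A-i+j})-F(t\indicator_A)+O(\delta)$. By property~\ref{multilinear-gradient} and multilinearity, $F(t\indicator_{A-i+j})-F(t\indicator_A)=t\nabla_jF(t\indicator_{A-i})-t\nabla_iF(t\indicator_A)$ when $j\notin A$, and this is at least $t(\nabla_jF(t\indicator_A)-\nabla_iF(t\indicator_A))$ by monotonicity of the gradient (property~\ref{multilinear-gradient}). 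If $j=i$ the change is $0$, which is also fine.

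Two or more events occur with probability $o(\delta)$, and the change is bounded by $f(U)$. All these estimates are uniform over the finitely many bases.

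\textbf{Derivative bound.} Dividing by $\delta$ and letting $\delta\to0^+$, the one-event term contributes $\frac{k}{t}\cdot t=k$ times the expected gradient gain of the swap. The above-average property~\eqref{eq:average} gives
\[\tfrac{d^+}{dt}\,\mathbb{E}[F((t+\delta)\indicator_{A(t+\delta)})\mid A(t)=A]\ \ge\ \sum_{a\in A}\nabla_aF(t\indicator_A)+\sum_{o\in O}\nabla_oF(t\indicator_A)-\sum_{a\in A}\nabla_aF(t\indicator_A).\]
The two sums over $A$ cancel. This cancellation is exactly why the rate $k/t$ is chosen.

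For the remaining term, $\nabla_oF(t\indicator_A)\ge F(t\indicator_A\vee\indicator_{\{o\}})-F(t\indicator_A)$, because $F$ is linear in coordinate $o$ with slope $\nabla_oF$ and we move that coordinate from $t\indicator_A(o)\ge0$ up to $1$. By property~\ref{multilinear-submodular} and monotonicity, $\sum_{o\in O}\nabla_oF(t\indicator_A)\ge F(t\indicator_A\vee\indicator_O)-F(t\indicator_A)\ge f(O)-F(t\indicator_A)$. Taking expectation over $A(t)$ yields $\frac{d^+}{dt}g\ge f(O)-g$.

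\textbf{Main obstacle.} The delicate step is rigor for the continuous-time argument. I must justify exchanging the limit with the expectation over $A(t)$, and integrating an inequality that only bounds the right derivative. I would first show $g$ is continuous: the increments are $O(\delta)$ in expectation, because $F$ is bounded and the event probability is $O(\delta)$. Then I would apply a standard comparison lemma, namely that a continuous function whose right Dini derivative satisfies $D^+h\ge0$ is nondecreasing. I apply it to $h(t)=e^{t}(g(t)-f(O))$. The independence of increments (Poisson property~\ref{Poisson-independ}) ensures that, conditioned on $A(t)$, the future events in $(t,t+\delta]$ are distributed as described.
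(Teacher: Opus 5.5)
Your proposal is correct and follows essentially the same route as the paper. You track $\mathbb{E}[F(t\indicator_{A(t)})]$ and bound its right derivative by $f(O)-\mathbb{E}[F(t\indicator_{A(t)})]$, splitting on the number of Poisson events in $(t,t+\delta]$ and using $F(t\indicator_{A-i+j})-F(t\indicator_A)\ge t(\nabla_jF(t\indicator_A)-\nabla_iF(t\indicator_A))$, the above-average property with the cancellation from $\lambda(t)=k/t$, and the submodularity bound on $\sum_{o\in O}\nabla_oF(t\indicator_A)$; the only cosmetic difference is that you integrate by applying a standard Dini-derivative monotonicity lemma to $e^{t}(g(t)-f(O))$, whereas the paper proves its own sup-and-perturbation comparison claim, which amounts to the same thing.
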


While above-average swap procedures exist, in some of the applications considered in this work such procedures are either impossible to achieve or incur significant overhead in running time.
In order to overcome these obstacles, a relaxed notion of above-average is considered: given $\beta \in (0,1]$ and $\eta\geq 0$ a swap procedure is called $(\beta,\eta)$-approximate if for every $A\in \cB$ and $ \varepsilon\leq t\leq 1$:
\begin{align}
\begin{aligned}&\E _{(i,j)\sim \swap(t,A)}[ \nabla_j F(t\indicator_A) -\nabla_i F(t \indicator_A)] \geq \frac{\beta}{k} \sum_{o \in O} \nabla_o F(t \indicator_A)  - \frac{1}{k}\sum_{a\in A} \nabla_a F(t \indicator_A)  - \frac{\eta}{k} f(O).  
\end{aligned}\label{eq:approximate_swap}
\end{align}
Here, as in \eqref{eq:average}, $ O=\argmax \{ f(S)\colon S\in \cB\}$ is an optimal solution and $(i,j)\sim \swap(t,A)$ denotes that the random pair $(i,j)$ is the random output of $\swap(t,A)$ when executed with base $A$ and time $t$. 
Note that if $\beta=1$ and $\eta=0$ then $(\beta,\eta)$-approximate~\eqref{eq:approximate_swap} coincides with above-average~\eqref{eq:average}.

The following theorem captures the loss in the approximation given a right continuous swap procedure that is not above-average but is only $(\beta,\eta)$-approximate.
\begin{theorem}
\label{thm:approximate_poission}
Given a matroid $ \cM=(U,\cI)$ of rank $k$, a monotone submodular function $ f\colon 2^U\rightarrow \mathbb{R}_+$, a starting time $0<\varepsilon \leq 1$, and a right continuous swap procedure that is $ (\beta,\eta)$-approximate, Algorithm~\ref{alg:Poisson} finds $A\in \cB$ satisfying $\mathbb{E}[f(A)]\geq (1-\varepsilon)(1-\nicefrac{\eta}{\beta})(1-e^{-\beta})\cdot f(O)$ using in expectation $k\ln{(1/\varepsilon)}$ swap procedure calls.
Here $ O=\argmax \{ f(S)\colon S\in \cI\}$ is an optimal base.
\end{theorem}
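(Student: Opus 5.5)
We track $g(t)\triangleq \mathbb{E}[F(t\indicator_{A(t)})]$, where $A(t)$ is the base held by Algorithm~\ref{alg:Poisson} at time $t$, and aim for the differential inequality
\[
\frac{d^+}{dt}\, g(t) \;\geq\; \beta f(O) - g(t) - \eta f(O)
\]
(right derivative) on $[\varepsilon,1]$. With this in hand, Grönwall gives $\frac{d}{dt}\big(e^{t}g(t)\big)\ge e^t(\beta-\eta)f(O)$. Integrating from $\varepsilon$ with $g(\varepsilon)\ge 0$ yields $g(1)\ge (\beta-\eta)(1-e^{-(1-\varepsilon)})f(O)$.

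It then remains to show $(1-e^{-(1-\varepsilon)})\ge(1-\varepsilon)(1-e^{-1})$, which holds by concavity of $x\mapsto 1-e^{-x}$. We also need to reconcile $\beta-\eta$ with the claimed $(1-\nicefrac{\eta}{\beta})(1-e^{-\beta})$. To be safe, I would instead derive the sharper inequality $\frac{d^+}{dt}g \ge \beta f(O) - \beta g - \eta f(O)$ (see below), which integrates to $g(1)\ge(1-\nicefrac{\eta}{\beta})(1-e^{-\beta(1-\varepsilon)})f(O)\ge(1-\varepsilon)(1-\nicefrac{\eta}{\beta})(1-e^{-\beta})f(O)$. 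Finally $g(1)=\mathbb{E}[f(A(1))]$. The expected number of swap calls is $\int_\varepsilon^1 \nicefrac{k}{t}\,dt=k\ln(1/\varepsilon)$.

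To compute the derivative, condition on $A(t)=A$ and look at $(t,t+\delta]$. By Poisson property~\ref{Poisson-events} there is no event with probability $1-\delta k/t+o(\delta)$, and then the change is $F((t+\delta)\indicator_A)-F(t\indicator_A)=\delta\sum_{a\in A}\nabla_aF(t\indicator_A)+o(\delta)$. With probability $\delta k/t+o(\delta)$ there is exactly one event, and then the value jumps from $F(t\indicator_A)$ to $F(t\indicator_{A-i+j})$ up to $o(1)$. Right continuity of $p_{(i,j)}$ lets the swap distribution at the event time be replaced by $p_{(i,j)}(t,A)$. By multilinearity and property~\ref{multilinear-gradient}, $F(t\indicator_{A-i+j})-F(t\indicator_A)\ge t\nabla_jF(t\indicator_A)-t\nabla_iF(t\indicator_A)-t^2(\text{cross term})$. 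More precisely, $F(t\indicator_{A-i+j}) = F(t\indicator_{A-i})+t\nabla_jF(t\indicator_{A-i})\ge F(t\indicator_A)-t\nabla_iF(t\indicator_A)+t\nabla_jF(t\indicator_A)$, using $\nabla_j F(t\indicator_{A-i})\ge \nabla_jF(t\indicator_A)$ (for $j\notin A$). Hence the jump is at least $t(\nabla_j-\nabla_i)$ with no cross-term loss.

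Multiplying by the rate $k/t$ and using \eqref{eq:approximate_swap}, the event contribution per unit time is at least $\beta\sum_{o\in O}\nabla_oF-\sum_{a\in A}\nabla_aF-\eta f(O)$. The drift term adds $\sum_{a\in A}\nabla_aF$, which exactly cancels the negative term; this cancellation is the point of the $t$-scaling. What remains is $\beta\sum_o\nabla_oF(t\indicator_A)-\eta f(O)$. By property~\ref{multilinear-submodular} and monotonicity, $\sum_o\nabla_oF(\mathbf{x})\ge\sum_o(F(\mathbf{x}\vee\indicator_o)-F(\mathbf{x}))\ge f(O)-F(\mathbf{x})$, which gives $\beta(f(O)-F)-\eta f(O)$. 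Taking expectations over $A(t)$ gives the sharper ODE.

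The main obstacle is making the limit rigorous. We must justify exchanging $\delta\to0$ with the expectation, which works because there are finitely many bases and values are bounded by $f(U)$. We must also handle the $\ge2$-event probability, which is $o(\delta)$, and use the independence of increments (property~\ref{Poisson-independ}) so that $A(t)$ is independent of $N_{(t,t+\delta]}$. Finally, the comparison argument needs only a right-derivative inequality for a right-continuous function; here $g$ is in fact continuous, since jumps occur with probability $O(\delta)$.
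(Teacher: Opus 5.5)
Your proposal is correct and follows the paper's route: you condition on $A(t)$, split $(t,t+\delta]$ into $0$, $1$, and $\geq 2$ events, and bound the jump by $t(\nabla_j F-\nabla_i F)$ through $A-i$ exactly as the paper does. You then use $t\lambda(t)=k$ to cancel the drift $\sum_{a\in A}\nabla_a F(t\indicator_A)$ and integrate $\partial_+ g\ge(\beta-\eta)f(O)-\beta g$; the paper packages the first part as derivatives of transition probabilities and integrates with a supremum-based comparison claim instead of an integrating factor, which is only a cosmetic difference. One statement needs tightening: the integration step needs $g$ to be genuinely continuous, since a merely right-continuous function with nonnegative right derivative can still drop at a jump, but you do assert continuity and justify it correctly.
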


\paragraph{\textbf{Analysis}.}
In order to analyze Algorithm \ref{alg:Poisson} and prove Theorems~\ref{thm:average_poisson} and \ref{thm:approximate_poission}, we denote by $\{A(t)\}_{\varepsilon \leq t\leq 1}$ the random bases that indicate the inner state $A$ of the algorithm at time $t$.
Formally, the algorithm maintains only a single base as an inner state that changes at times that correspond to events of the Poisson process. 
However, this naturally extends to all times $ \varepsilon \leq t \leq 1$ as follows:
$(1)$ we set $ A(\varepsilon)$ to be the arbitrary initial base chosen by Algorithm \ref{alg:Poisson}; and
$(2)$ for every $ \varepsilon \leq t \leq1$, where $\tau_{i-1}\leq t<\tau_i $ for some $i$, we set $A(t)\triangleq A(\tau_{i-1})$ and set $ A(\tau _i)$ according to the result of the random swap operation Algorithm \ref{alg:Poisson} performs at time $ \tau _i$ (step~\eqref{poisson:item_swap} in Algorithm \ref{alg:Poisson}).

Following the discussion in Section~\ref{sec:techniques}, that relates \gp to the continuous greedy algorithm with randomized rounding on the fly, our analysis tracks the expected value of $A(t)$ scaled by the time $t$.
Formally, define $ Q\colon [\varepsilon,1]\rightarrow \mathbb{R}_+$ to be the function that given $ \varepsilon \leq r\leq 1$ equals the expected value of the output of \gp if it terminates at time $r$:
\begin{align}
Q(r)\triangleq \mathbb{E}[F(r\indicator _{A(r)})] .\label{def:Q}
\end{align}
In order to provide a lower bound on the rate in which $Q(r)$ gains value, it will be convenient to condition on the inner state of \gp at time $t$, where $\varepsilon \leq t\leq r$.
To this end, define for every base $S\in \cB$ and time $\varepsilon \leq t\leq1$ the function $ V_{S,t}\colon [t,1]\rightarrow \mathbb{R}_+$ as follows:
\begin{align}
   V_{S,t}(r)\triangleq \mathbb{E}\left[ F(r\indicator _{A(r)})|A(t)=S\right]. \label{value}
\end{align}
Clearly, for a given fixed $r$ the definition of conditioned expectation implies that for every $ \varepsilon \leq t \leq r$:
\begin{align*}
Q(r) &= \sum _{S\in \cB}\Pr[A(t)=S] \cdot \mathbb{E}[f(r\indicator _{A(r)})|A(t)=S]\\
&= \sum _{S\in \cB}\Pr[A(t)=S]\cdot V_{S,t}(r),
\end{align*}
thus relating \eqref{def:Q} and~\eqref{value}.

We emphasize that from this point on, all derivatives are right derivatives, where the right derivative of a function $g(r)$ with respect to $r$ is denoted by $ \frac{\partial _+ g(r)}{\partial r}$.

The two main lemmas, one for each type of swap procedure, that are used to prove Theorems~\ref{thm:average_poisson} and \ref{thm:approximate_poission} by lower bounding the rate at which $V_{S,t}(r)$ and $Q(t)$ increase, are now presented.

\begin{lemma}\label{lem:value-average-swap}
For every $ \varepsilon \leq t <1$, $S\in \cB$, and a right continuous above-average swap procedure: $ \frac{\partial _+V_{S,t}(r)}{\partial r}\big|_{r=t} \geq f(O)-F(t\indicator_S)$ and subsequently, $ \frac{\partial _+Q(r)}{\partial r}\big|_{r=t} \geq f(O)-Q(t)$.
Here, $ O=\argmax \{ f(S):S\in \cI\}$ is an optimal base.
\end{lemma}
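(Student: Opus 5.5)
We compute the right derivative of $V_{S,t}$ at $r=t$ directly from the definition. Fix $S\in\cB$ and $t\in[\varepsilon,1)$, and condition on $A(t)=S$. For small $\delta>0$ we split on the number of Poisson events in $(t,t+\delta]$, which is independent of the past by Property~\ref{Poisson-independ}:
\begin{align*}
V_{S,t}(t+\delta)={}&\Pr[N_{(t,t+\delta]}=0]\,F((t+\delta)\indicator_S)\\
&+\Pr[N_{(t,t+\delta]}=1]\,\mathbb{E}\left[F((t+\delta)\indicator_{A(t+\delta)})\mid A(t)=S,\,N_{(t,t+\delta]}=1\right]\\
&+\Pr[N_{(t,t+\delta]}\ge 2]\cdot O(f(U)).
\end{align*}
The last term is $o(\delta)$ by Property~\ref{Poisson-events}, since $F$ is bounded by $f(U)$. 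On the event of exactly one event at time $s\in(t,t+\delta]$, the new base is $S-i+j$ with probability $p_{(i,j)}(s,S)$. By right continuity of the swap procedure, $p_{(i,j)}(s,S)\to p_{(i,j)}(t,S)$ as $\delta\to 0$. Hence the conditional expectation converges to $\sum_{i,j}p_{(i,j)}(t,S)\,F(t\indicator_{S-i+j})$. Since $F$ is a multilinear polynomial, $t\mapsto F(t\indicator_S)$ is differentiable. Subtracting $F(t\indicator_S)$, dividing by $\delta$, and using $\Pr[N=1]/\delta\to k/t$, we get
\[
\frac{\partial_+V_{S,t}(r)}{\partial r}\Big|_{r=t}=\frac{d}{dr}F(r\indicator_S)\Big|_{r=t}+\frac{k}{t}\sum_{i\in S,j\in U}p_{(i,j)}(t,S)\bigl(F(t\indicator_{S-i+j})-F(t\indicator_S)\bigr).
\]

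Next we evaluate the two terms. The first is $\sum_{a\in S}\nabla_aF(t\indicator_S)$ by the chain rule. For the second, Property~\ref{multilinear-gradient} and multilinearity give, for $j\notin S$,
\[
F(t\indicator_{S-i+j})-F(t\indicator_S)=F(t\indicator_{S-i}+t\indicator_{\{j\}})-F(t\indicator_{S-i})-\bigl(F(t\indicator_S)-F(t\indicator_{S-i})\bigr).
\]
The subtracted term equals $t\nabla_iF(t\indicator_S)$. The first difference equals $t\nabla_jF(t\indicator_{S-i})\ge t\nabla_jF(t\indicator_S)$ by the gradient monotonicity in Property~\ref{multilinear-gradient}. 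Hence each swap gains at least $t(\nabla_jF(t\indicator_S)-\nabla_iF(t\indicator_S))$. The case $j=i$ gives $0$ trivially, and a swap with $j\in S-i$ is infeasible since $S-i+j$ must be a base. Multiplying by $k/t$ and applying the above-average inequality~\eqref{eq:average}, the second term is at least $\sum_{o\in O}\nabla_oF(t\indicator_S)-\sum_{a\in S}\nabla_aF(t\indicator_S)$. The $\sum_{a\in S}$ terms cancel against the first term, so
\[
\frac{\partial_+V_{S,t}(r)}{\partial r}\Big|_{r=t}\ge\sum_{o\in O}\nabla_oF(t\indicator_S).
\]
Property~\ref{multilinear-gradient} gives $\nabla_oF(t\indicator_S)\ge F(t\indicator_S\vee\indicator_{\{o\}})-F(t\indicator_S)$. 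Property~\ref{multilinear-submodular} then bounds the sum by $F(t\indicator_S\vee\indicator_O)-F(t\indicator_S)$, and monotonicity bounds this by $f(O)-F(t\indicator_S)$. This proves the first claim.

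For $Q$, we write $Q(t+\delta)=\sum_S\Pr[A(t)=S]\,V_{S,t}(t+\delta)$, a finite sum over $\cB$. The right derivative passes through the sum, giving $\frac{\partial_+Q(r)}{\partial r}\big|_{r=t}\ge\sum_S\Pr[A(t)=S]\bigl(f(O)-F(t\indicator_S)\bigr)=f(O)-Q(t)$.

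The main obstacle is justifying the limit in the one-event term rigorously. The event time $s$ is random within $(t,t+\delta]$, and the swap probabilities depend on $s$. Right continuity gives uniform closeness of $p_{(i,j)}(s,S)$ to $p_{(i,j)}(t,S)$ for $s$ in a small right neighborhood, and there are finitely many pairs $(i,j)$. Continuity of $F$ handles the factor $(t+\delta)$ versus $t$ and the bounded error terms. I would make this explicit with an $\epsilon$–$\delta$ bound.
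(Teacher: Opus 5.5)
Your proof is correct and follows essentially the same route as the paper. Your direct expansion of $V_{S,t}(t+\delta)$ by conditioning on $N_{(t,t+\delta]}\in\{0,1,\geq 2\}$ is just Claim~\ref{claim:poisson} and Lemma~\ref{lem:prob-derivative} merged with the product-rule computation of Lemma~\ref{lem:value}, and the rest is the same swap-gain inequality, the above-average condition with $t\lambda(t)/k=1$, and the bound of Lemma~\ref{lem:submodular-standard}. The only cosmetic difference is that you obtain $\nabla_oF(t\indicator_S)\ge F(t\indicator_S\vee\indicator_{\{o\}})-F(t\indicator_S)$ from Property~\ref{multilinear-gradient} together with monotonicity (Property~\ref{multilinear-monotone}, which you should cite for the case $o\in S$), rather than via the paper's $\frac{1}{1-t}$ factor.
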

\begin{lemma}\label{lem:value-approximate-swap}
For every $ \varepsilon \leq t <1$, $S\in \cB$, and a right continuous $(\beta,\eta)$-approximate swap procedure:
$\frac{\partial _+V_{S,t}(r)}{\partial r}\big| _{r=t}\geq (\beta-\eta)f(O)-\beta F(t\indicator _S)$ and subsequently $ \frac{\partial _+Q(r)}{\partial r}\big|_{r=t} \geq (\beta-\eta)f(O)-\beta Q(t)$.
Here, $ O=\argmax \{ f(S):S\in \cI\}$ is an optimal base.
\end{lemma}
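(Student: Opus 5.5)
We prove Lemma~\ref{lem:value-approximate-swap} by computing the right derivative of $V_{S,t}(r)$ at $r=t$ directly from the Poisson-process properties. Lemma~\ref{lem:value-average-swap} is then the special case $\beta=1,\eta=0$.

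Fix $S\in\cB$ and $t$, and condition on $A(t)=S$. For small $\delta>0$, split according to $N_{(t,t+\delta]}$, which is independent of the past by property~\ref{Poisson-independ}.
\begin{enumerate}
\item If no event occurs, then $A(t+\delta)=S$ and the value is $F((t+\delta)\indicator_S)$.
\item If exactly one event occurs, at some time $s\in(t,t+\delta]$, then $A(t+\delta)=S-i+j$ with probability $p_{(i,j)}(s,S)$.
\item If two or more events occur, the probability is $o(\delta)$ by property~\ref{Poisson-events}, and $F\le f(U)$ is bounded, so this case contributes $o(\delta)$.
\end{enumerate}
Right continuity of $p_{(i,j)}(\cdot,S)$ gives $p_{(i,j)}(s,S)=p_{(i,j)}(t,S)+o(1)$. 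Using $\Pr[N=1]=\lambda(t)\delta+o(\delta)$ with $\lambda(t)=k/t$, we get
\[
V_{S,t}(t+\delta)=F((t+\delta)\indicator_S)+\frac{k\delta}{t}\sum_{i,j}p_{(i,j)}(t,S)\Big(F((t+\delta)\indicator_{S-i+j})-F((t+\delta)\indicator_S)\Big)+o(\delta).
\]
Continuity of $F$ lets us replace $t+\delta$ by $t$ inside the bracket at cost $o(\delta)$. Subtracting $F(t\indicator_S)$, dividing by $\delta$ and letting $\delta\to0$ gives
\[
\frac{\partial_+V_{S,t}(r)}{\partial r}\Big|_{r=t}=\sum_{a\in S}\nabla_aF(t\indicator_S)+\frac{k}{t}\,\E_{(i,j)}\big[F(t\indicator_{S-i+j})-F(t\indicator_S)\big].
\]
The first term comes from the chain rule.

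Next we lower bound the swap gain by gradients. By multilinearity and property~\ref{multilinear-gradient},
\[
F(t\indicator_{S-i+j})-F(t\indicator_S)=t\nabla_jF(t\indicator_{S-i+j})-t\nabla_iF(t\indicator_S).
\]
Since $t\indicator_{S-i+j}\le t\indicator_S\vee\indicator_{\{j\}}$, and $\nabla_j$ does not depend on coordinate $j$, the monotonicity of gradients in property~\ref{multilinear-gradient} gives $\nabla_jF(t\indicator_{S-i+j})\ge\nabla_jF(t\indicator_{S-i})\ge\nabla_jF(t\indicator_S)$. Hence the gain is at least $t(\nabla_jF(t\indicator_S)-\nabla_iF(t\indicator_S))$. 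The case $j\in S$, $j\ne i$ cannot occur, because $S-i+j$ must be a base. If $i=j$ the gain is zero, consistent with the bound.

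Plugging in the $(\beta,\eta)$-approximate property~\eqref{eq:approximate_swap}, the factors $k/t$ and $t$ cancel and the $-\frac1k\sum_{a\in S}\nabla_a$ term exactly cancels the drift term $\sum_{a\in S}\nabla_a$. This leaves
\[
\frac{\partial_+V_{S,t}(r)}{\partial r}\Big|_{r=t}\ge\beta\sum_{o\in O}\nabla_oF(t\indicator_S)-\eta f(O).
\]
Then $\nabla_oF(t\indicator_S)\ge F(t\indicator_S\vee\indicator_{\{o\}})-F(t\indicator_S)$ by the first formula in property~\ref{multilinear-gradient}, since $F$ is linear in coordinate $o$ and $t\le1$. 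Submodularity (property~\ref{multilinear-submodular}) and monotonicity then give
\[
\sum_{o\in O}\nabla_oF(t\indicator_S)\ge F(t\indicator_S\vee\indicator_O)-F(t\indicator_S)\ge f(O)-F(t\indicator_S).
\]
This yields $(\beta-\eta)f(O)-\beta F(t\indicator_S)$.

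For $Q$, write $Q(r)=\sum_S\Pr[A(t)=S]\,V_{S,t}(r)$. The sum over $\cB$ is finite, so right derivatives commute with it, and $V_{S,t}(t)=F(t\indicator_S)$. Averaging gives the bound with $Q(t)$.

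The main obstacle is justifying the first-order expansion rigorously. We need uniform $o(\delta)$ control of the multi-event case, and to handle that the swap probabilities are evaluated at the random event time $s$ rather than at $t$. Right continuity together with boundedness ($0\le F\le f(U)$) handles both.
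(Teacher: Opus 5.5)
Your proof is correct and follows essentially the same route as the paper. The paper also splits on $N_{(t,t+\delta]}\in\{0,1,\geq 2\}$, packaged as Lemma~\ref{lem:prob-derivative} and the product-rule computation in Lemma~\ref{lem:value}; it bounds each swap gain by $t(\nabla_jF(t\indicator_S)-\nabla_iF(t\indicator_S))$, uses $t\lambda(t)/k=1$, and finishes with Lemma~\ref{lem:submodular-standard} and averaging over $S$ — you simply inline the transition-rate computation by expanding $V_{S,t}(t+\delta)$ directly.
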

It should be noted that the lower bound in the above two lemmas is common in constrained submodular maximization; see, e.g., \cite{CCPV11}.
However, a crucial difference is that we show the rate of increase in objective $V_{S,t}(r)$ for \emph{every} intermediate state given by time $t$ and set $S$.

We start with providing the derivatives of the conditioned transition probabilities of the inner state of \gp, allowing us to prove Lemmas~\ref{lem:value-average-swap} and \ref{lem:value-approximate-swap} (and subsequently Theorems~\ref{thm:average_poisson} and \ref{thm:approximate_poission}).

\paragraph{Derivatives of Conditioned Transition Probabilities.}
The following lemma is the only part of the analysis in which the properties of the Poisson process are used.

\begin{lemma}\label{lem:prob-derivative}
For every $ \varepsilon \leq t<1$, $ S,T\in \cB$, and a right continuous swap procedure:
\begin{align*}
\begin{aligned}
&\frac{\partial_+ \Pr [A(r)=T|A(t)=S]}{\partial r}\bigg|_{r=t} = \\
&\begin{cases}
   \lambda(t) \cdot p_{(i,j)}(t,S)  & \text{ if  } \exists i\in S,j\in U\setminus S \text{ and } T=S-i+j\\
    -\lambda(t)(1-\sum\limits_{i\in S}p_{(i,i)}(t,S)) & T=S \\
    0 & \text{otherwise}
\end{cases}
\end{aligned} 
\end{align*}
\end{lemma}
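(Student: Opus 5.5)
We compute the right derivative at $r=t$ directly from its definition, conditioning on the number of Poisson events in the short window $(t,t+\delta]$. Fix $S,T\in\cB$ and $\delta>0$ small. Since $\Pr[A(t)=T\mid A(t)=S]=\indicator[T=S]$, the goal is to evaluate
\[
\lim_{\delta\to 0^+}\frac{1}{\delta}\Bigl(\Pr[A(t+\delta)=T\mid A(t)=S]-\indicator[T=S]\Bigr).
\]
Decompose according to $N_{(t,t+\delta]}\in\{0,1,\geq 2\}$. By property~\ref{Poisson-independ}, the count of events in $(t,t+\delta]$ is independent of the history up to time $t$, and hence of the event $\{A(t)=S\}$. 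Also, the swap procedure's randomness is independent of the Poisson process. So
\[
\Pr[N_{(t,t+\delta]}=k'\mid A(t)=S]=\Pr[N_{(t,t+\delta]}=k'].
\]
By property~\ref{Poisson-events}, these probabilities are $1-\lambda(t)\delta+o(\delta)$, $\lambda(t)\delta+o(\delta)$ and $o(\delta)$, respectively. The term with $\geq 2$ events contributes at most $o(\delta)$ and vanishes in the limit.

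\textbf{No event.} Then $A(t+\delta)=S$, contributing $\indicator[T=S]\bigl(1-\lambda(t)\delta+o(\delta)\bigr)$.

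\textbf{Exactly one event.} The event occurs at some random time $s\in(t,t+\delta]$, and $A(t+\delta)=S-i+j$ with probability $p_{(i,j)}(s,S)$. This is where right continuity enters. Averaging over the conditional law of $s$, the transition probability equals
\[
\E\bigl[p_{(i,j)}(s,S)\bigr]=p_{(i,j)}(t,S)+o(1)\quad\text{as }\delta\to0^+,
\]
because $s\in(t,t+\delta]$ and $p_{(i,j)}(\cdot,S)$ is right continuous at $t$. Thus this case contributes
\[
\lambda(t)\delta\sum_{(i,j):\,S-i+j=T}p_{(i,j)}(t,S)+o(\delta).
\]
This is the only delicate step: we must justify replacing the event time by $t$. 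Since all probabilities are bounded by $1$, bounding the error by $\sup_{s\in(t,t+\delta]}\abs{p_{(i,j)}(s,S)-p_{(i,j)}(t,S)}$ suffices.

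\textbf{Assembling the cases.}
\begin{itemize}
\item If $T=S-i+j$ with $j\notin S$, the pair $(i,j)$ is uniquely determined by $T$, since $\{i\}=S\setminus T$ and $\{j\}=T\setminus S$. The no-event term vanishes because $T\neq S$, so the derivative is $\lambda(t)p_{(i,j)}(t,S)$.
\item If $T=S$, the single-event term counts only the trivial swaps $(i,i)$; a swap with $i\ne j$, $j\notin S$, changes the set. This gives
\[
\frac{1}{\delta}\Bigl(-\lambda(t)\delta+\lambda(t)\delta\sum_{i\in S}p_{(i,i)}(t,S)+o(\delta)\Bigr),
\]
which tends to $-\lambda(t)\bigl(1-\sum_{i\in S}p_{(i,i)}(t,S)\bigr)$.
\item Otherwise, $T$ is reachable only with two or more events, so the derivative is $0$.
\end{itemize}

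Pairs $(i,j)$ with $j\in S\setminus\{i\}$ would give $S-i+j$, a non-base. Such pairs never occur, because the swap procedure always outputs $A-i+j\in\cB$.
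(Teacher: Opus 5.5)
Your proof is correct and follows the paper's argument. Like the paper, you condition on whether $N_{(t,t+\delta]}$ is $0$, $1$, or at least $2$, use the independence of $A(t)$ from the future Poisson increments together with property~\ref{Poisson-events}, invoke right continuity of $p_{(i,j)}(\cdot,S)$ in the single-event term, and then split into the three cases; your explicit averaging over the conditional event time $s\in(t,t+\delta]$ with the error bounded by $\sup_{s\in(t,t+\delta]}\abs{p_{(i,j)}(s,S)-p_{(i,j)}(t,S)}$ justifies slightly more carefully the step the paper states in one line.
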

\begin{proof}
In order to prove the lemma, we utilize the following claim that precisely provides the derivatives of the conditioned transition probabilities. 
For simplicity of presentation, we denote the random variable $ N_{(t,t+\delta]}$ by $N$, and denote by $1_{S=T}$ the indicator function for the case that $S=T$, i.e., $ 1_{S=T}=1$ if $S=T$ and $0$ otherwise.
\begin{claim}\label{claim:poisson}
For every $ \varepsilon \leq t<1$ and $ S,T\in \cB$:
\begin{align*}
\begin{aligned}
&\frac{\partial_+ \Pr [A(r)=T|A(t)=S]}{\partial r}\bigg|_{r=t} =\\
&\lambda(t) \cdot \left( \lim_{\delta\to 0^+}\Pr[A(t+\delta)=T|A(t)=S,N=1] - 1_{T=S}\right).
\end{aligned}
\end{align*}
\end{claim}
\begin{proof}
First, we note that the definition of a right derivative gives that:
\begin{align}
&\frac{\partial_+ \Pr [A(r)=T|A(t)=S]}{\partial r}\bigg|_{r=t}  = \nonumber\\&\lim_{\delta\rightarrow 0^+}\frac{1}{\delta} \Big( \Pr\left[A(t+\delta)=T|A(t)=S\right] - \Pr[A(t)=T|A(t)=S]\Big) =\nonumber \\
&  \lim _{\delta \rightarrow 0^+}\frac{1}{\delta}\Big( \Pr\left[A(t+\delta)=T|A(t)=S\right] - 1_{S=T}\Big)
\label{eq1-dervidef}
\end{align}
where \eqref{eq1-dervidef} follows from the observation that $\Pr[A(t)=T|A(t)=S]$ equals $1$ if $S=T$ and $0$ otherwise.

The law of total probability applied to $ \Pr[A(t+\delta)=T|A(t)=S]$ and conditioned on the number of events in the interval $ (t,t+\delta]$ being either $0$, $1$, or at least $2$, yields that:
\begin{align}
\Pr&[A(t+\delta)=T|A(t)=S]  \nonumber =\\ & \Pr[N=0|A(t)=S]\cdot \Pr[A(t+\delta)=T|A(t)=S,N=0] + \nonumber \\
& \Pr[N=1|A(t)=S]\cdot \Pr[A(t+\delta)=T|A(t)=S,N=1] + \nonumber \\
& \Pr[N\geq 2|A(t)=S]\cdot \Pr[A(t+\delta)=T|A(t)=S,N\geq 2] . \label{eq2-totalprob}
\end{align}
It is important to note that the random variable $A(t)$ depends only on all the random choices in the time interval $ [\varepsilon,t]$, i.e., the randomness of the Poisson process and the randomness of elements selection (step \ref{poisson:item_sample} in Algorithm \ref{alg:Poisson}) in the time interval $ [\varepsilon,t]$.
In contrast, the random variable $ N$ depends only on the randomness of the Poisson process in the time interval $ (t,t+\delta]$.
Since these two time intervals are disjoint, property \ref{Poisson-independ} of a Poisson process implies that $A(t)$ and $ N$ are independent random variables.
Hence, we can conclude from \eqref{eq2-totalprob} above that:
\begin{align}
(\Pr&[A(t+\delta)=T|A(t)=S]-1_{T=S}) =\nonumber\\ &
\Pr[N=0]\cdot \left(\Pr[A(t+\delta)=T|A(t)=S,N=0]-1_{T=S} \right) + \nonumber \\
& \Pr[N=1]\cdot \left(\Pr[A(t+\delta)=T|A(t)=S,N=1]-1_{T=S} \right) + \nonumber \\
& \Pr[N\geq 2]\cdot \left(\Pr[A(t+\delta)=T|A(t)=S,N\geq 2]-1_{T=S} \right)    =\nonumber  \\
& \Pr[N=1]\cdot \left(\Pr[A(t+\delta)=T|A(t)=S,N=1]-1_{T=S} \right) + \nonumber \\
& \Pr[N\geq 2]\cdot \left(\Pr[A(t+\delta)=T|A(t)=S,N\geq 2]-1_{T=S} \right) . \label{eq3new}
\end{align}
Equality \eqref{eq3new} follows from the simple observation that $ \Pr[A(t+\delta)=T|A(t)=S,N=0]$ equals $1$ if $T=S$ and $0$ otherwise. 

Plugging \eqref{eq3new} into \eqref{eq1-dervidef} gives:
\begin{align}
&\frac{\partial_+ \Pr [A(r)=T|A(t)=S]}{\partial r}\bigg|_{r=t}\nonumber =\\ & \lim _{\delta \rightarrow 0^+} \frac{1}{\delta}\Pr [N=1]\left( \Pr[A(t+\delta)=T|A(t)=S,N=1]-1_{T=S}\right) + \nonumber \\
& \lim _{\delta \rightarrow 0^+} \frac{1}{\delta}\Pr[N\geq 2] \left( \Pr[A(t+\delta)=T|A(t)=S,N\geq 2] - 1_{T=S}\right)  =\nonumber \\
 & \lambda(t) \left( \lim_{\delta\to 0^+}\Pr[A(t+\delta)=T|A(t)=S,N=1] - 1_{T=S}\right). \label{eq4new}
\end{align}

Equality \eqref{eq4new} follows from property \ref{Poisson-events} of a Poisson process, i.e., $\lim _{\delta \rightarrow 0^+}\frac{1}{\delta}\Pr[N_{(t,t+\delta]}=1]=\lambda(t)$ and $\lim _{\delta \rightarrow 0^+}\frac{1}{\delta}\Pr[N_{(t,t+\delta]}\geq 2]=0$.
\end{proof}

First, assume that there exists an element $ i\in S$ and $j\in U\setminus S$ such that $T=S-i+j$.
Therefore, in this case Claim \ref{claim:poisson} implies that:
\begin{align}
\begin{aligned}
&\frac{\partial _+\Pr [A(r)=T|A(t)=S]}{\partial r}\bigg|_{r=t} =\\& 
\lambda(t) \cdot \lim _{\delta \rightarrow 0^+}\Pr[A(t+\delta)=T|A(t)=S,N=1].
\end{aligned}\label{eq5-singleevent}
\end{align}
The right continuity of  $p_{(i,j)}(t,S)$ as a function of $t$ implies that:
\begin{align}
  \lim_{\delta\rightarrow 0^+}\Pr[A(t+\delta)=T|A(t)=S,N=1]=p_{(i,j)}(t,S),  \nonumber 
\end{align}
thus giving the first case.

Second, assume that $T=S$.
In this case, Claim \ref{claim:poisson} implies that:
\begin{align}
\begin{aligned}
&\frac{\partial _+\Pr [A(r)=T|A(t)=S]}{\partial r}\bigg|_{r=t} =\\ & \lambda(t)\cdot \left( \lim _{\delta \rightarrow 0^+}\Pr[A(t+\delta)=S|A(t)=S,N=1] - 1\right) .
\end{aligned}\label{eq6-new}
\end{align}
The right continuity of $ p_{(i,j)}(t,S)$ as a function on $t$ implies that:
\begin{align}
  \lim_{\delta\rightarrow 0^+}\Pr[A(t+\delta)=S|A(t)=S,N=1] = \sum _{i\in S}p_{(i,i)}(t,S),  \nonumber
\end{align}
thus giving the second case.

Finally, assume that we are not in any of the first two cases.
As before, recalling that $T\neq S$, Claim \ref{claim:poisson} implies that:
\begin{align}
\begin{aligned}
&\frac{\partial _+\Pr [A(r)=T|A(t)=S]}{\partial r}\bigg|_{r=t} =\\& 
\lambda(t) \cdot \lim _{\delta \rightarrow 0^+}\Pr[A(t+\delta)=T|A(t)=S,N=1]. 
\end{aligned}\label{eq7new}
\end{align}
Since $T$ differs from $S$ by at least two elements, it must be that $ \Pr[A(t+\delta)=T|A(t)=S,N=1]=0$ for every $ \delta >0$.
Taking the limit as $\delta\rightarrow 0^+$, the last case of the lemma is concluded.
\end{proof}

The following lemma lower bounds the rate in which the conditioned value of \gp increases.
\begin{lemma}\label{lem:value}
For every time $ \varepsilon \leq t<1$, base $S\in \cB$, and right continuous swap procedure:
\begin{align}
\begin{aligned}
  & \frac{\partial _+ V_{S,t}(r)}{\partial r} \bigg| _{r=t} \geq\\
  & \sum _{i\in S}\nabla _i F(t\indicator _S) + t\lambda(t) \cdot \mathbb{E}_{(i,j)\sim \swap(t,S)}\left[\nabla _jF(t\indicator _S) - \nabla _i F(t\indicator _S)\right] .\nonumber
   \end{aligned}
\end{align}
\end{lemma}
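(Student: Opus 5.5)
Write $V_{S,t}(r)=\sum_{T\in\cB}\Pr[A(r)=T\mid A(t)=S]\,F(r\indicator_T)$. This is a finite sum of products of two functions of $r$, so the product rule applies at $r=t$:
\[
\frac{\partial_+ V_{S,t}(r)}{\partial r}\Big|_{r=t}=\sum_{T}\Pr[A(t)=T\mid A(t)=S]\,\frac{\partial F(r\indicator_T)}{\partial r}\Big|_{r=t}+\sum_T F(t\indicator_T)\,\frac{\partial_+\Pr[A(r)=T\mid A(t)=S]}{\partial r}\Big|_{r=t}.
\]
The product rule for right derivatives needs both factors right-differentiable at $t$, and we evaluate the probability factor at $r=t$. $F(r\indicator_T)$ is a polynomial in $r$. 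Lemma~\ref{lem:prob-derivative} supplies the right derivatives of the transition probabilities, and right continuity of the probabilities follows from their right differentiability.

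\textbf{First sum.} Since $\Pr[A(t)=T\mid A(t)=S]=1_{T=S}$, only $T=S$ survives. By the chain rule,
\[
\frac{\partial F(r\indicator_S)}{\partial r}\Big|_{r=t}=\sum_{i\in S}\nabla_iF(t\indicator_S),
\]
which is the first term of the claimed bound.

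\textbf{Second sum.} Plug in Lemma~\ref{lem:prob-derivative}. It gives
\[
\lambda(t)\Big[\sum_{i\in S,\,j\in U\setminus S}p_{(i,j)}(t,S)\,F(t\indicator_{S-i+j})-\Big(1-\sum_{i\in S}p_{(i,i)}(t,S)\Big)F(t\indicator_S)\Big].
\]
Pairs $(i,j)$ with $j\in S\setminus\{i\}$ are not valid swaps, so they have probability zero. Hence $1-\sum_i p_{(i,i)}=\sum_{i\in S,\,j\notin S}p_{(i,j)}$, and the expression becomes
\[
\lambda(t)\sum_{i\in S,\,j\notin S}p_{(i,j)}\big(F(t\indicator_{S-i+j})-F(t\indicator_S)\big).
\]
For $i=j$ the corresponding term $\nabla_jF-\nabla_iF$ is $0$, so the sum can be extended to include those pairs at no cost.

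\textbf{Key step.} It remains to show, for $i\in S$ and $j\notin S$,
\[
F(t\indicator_{S-i+j})-F(t\indicator_S)\ge t\big(\nabla_jF(t\indicator_S)-\nabla_iF(t\indicator_S)\big).
\]
Let $\by=t\indicator_{S-i}$ and split the difference as
\[
\big[F(\by+t\indicator_j)-F(\by)\big]-\big[F(\by+t\indicator_i)-F(\by)\big].
\]
The subtracted bracket equals exactly $t\nabla_iF(t\indicator_S)$ by property~\ref{multilinear-gradient}, using $S\setminus\{i\}=S-i$. For the first bracket, $F$ is multilinear, so it equals $t\nabla_jF(\by)$. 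Since $\by\le t\indicator_S$, gradient monotonicity (property~\ref{multilinear-gradient}) gives $\nabla_jF(\by)\ge\nabla_jF(t\indicator_S)$.

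Combining, the second sum is at least $t\lambda(t)\,\mathbb{E}_{(i,j)\sim\swap(t,S)}\big[\nabla_jF(t\indicator_S)-\nabla_iF(t\indicator_S)\big]$, which proves the lemma. The main obstacle is this key step, namely getting the exact cancellation of the $i$-term and a one-sided inequality for the $j$-term. The rest is bookkeeping: handling the $p_{(i,i)}$ terms and the justification of the product rule.
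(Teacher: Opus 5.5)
Your proposal is correct and follows the paper's proof essentially step for step. Both arguments apply the product rule to $V_{S,t}(r)=\sum_{T}F(r\indicator_T)\Pr[A(r)=T\mid A(t)=S]$, then use Lemma~\ref{lem:prob-derivative} with the identity $1-\sum_{i\in S}p_{(i,i)}(t,S)=\sum_{i\in S,j\notin S}p_{(i,j)}(t,S)$ to reduce the second sum, and finish with the split through $t\indicator_{S-i}$, giving $t\nabla_jF(t\indicator_{S-i})-t\nabla_iF(t\indicator_S)$, followed by gradient monotonicity. You justify the right-derivative product rule explicitly and explain why the pairs with $j=i$ (zero contribution) and $j\in S\setminus\{i\}$ (zero probability) can be dropped, which is a little more careful than the paper's write-up, but the argument is the same.
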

\begin{proof}
By the definitions of $ V_{S,t}(r)$~\eqref{value} and of conditioned expectation:
\begin{align} V_{S,t}(r)=\sum _{T\in \cB}F(r\indicator _T)\Pr[A(r)=T|A(t)=S] .\nonumber
\end{align}
Thus, using the chain rule provides:
\begin{align}
&\frac{\partial_+ V_{S,t}(r)}{\partial r}\bigg|_{r=t}
= \nonumber\\
&\sum_{T\in \cB}
\frac{\partial_+\!\left(
F(r\indicator_T)\Pr[A(r)=T\,|\,A(t)=S]
\right)}{\partial r}\bigg|_{r=t} =
\label{eq8-Vdef}
\\
& \sum_{T\in \cB}
\Bigg(
\frac{\partial_+ F(r\indicator_T)}{\partial r}\bigg|_{r=t}
\cdot
\Pr[A(t)=T\,|\,A(t)=S] + \nonumber
\\
&\qquad\quad
+F(t\indicator_T)\cdot
\frac{\partial_+ \Pr[A(r)=T\,|\,A(t)=S]}{\partial r}\bigg|_{r=t}
\Bigg).
\nonumber
\end{align}

Note that $ \Pr[A(t)=T|A(t)=S]=1$ only when $T=S$ and $0$ otherwise.
Moreover, for every $T\subseteq U$ the following holds: $ \frac{\partial _+ F(r\indicator _T)}{\partial r}\big| _{r=t}=\sum _{i\in T}\nabla _i F(t\indicator _T)$.
Therefore, 
\eqref{eq8-Vdef} implies that: 
\begin{align}
&\frac{\partial _+V_{S,t}(r)}{\partial r}\bigg| _{r=t} =\nonumber\\
&  \sum _{i\in S} \nabla _{i}F(t\indicator _S) + \sum _{T\in \cB}  F(t\indicator _T)\cdot \frac{\partial_+ \Pr[A(r)=T|A(t)=S]}{\partial r}\bigg| _{r=t} =\nonumber \\
& \begin{aligned} \sum _{i\in S} \nabla _{i}F(t\indicator _S) - F(t\indicator _S)\cdot \lambda(t)\left( 1-\sum _{i\in S}p_{(i,i)}(t,S)\right)+\\   +\sum _{i\in S, j\in U\setminus S}F(t\indicator_{S-i+j})\cdot {\lambda(t)\cdot p_{(i,j)}(t,S)} =\end{aligned}\label{eq10-probderiv} \\
&  \sum _{i\in S} \nabla _{i}F(t\indicator _S) +\lambda(t)\sum _{i\in S, j\in U\setminus S}{p_{ij}(t,S)} \cdot \left(F(t\indicator_{S-i+j})-F(t\indicator_S)\right) \label{eq10-probderiv2}.
\end{align}
Equality \eqref{eq10-probderiv} follows from Lemma \ref{lem:prob-derivative}.
Equality~\eqref{eq10-probderiv2} follows from the observation that if $i,j\in S$ and $i\neq j$ then $ p_{(i,j)}(t,S)=0$ since $S-i+j\in \cB$.
Hence, $ 1-\sum _{i\in S}p_{(i,i)}(t,S)=\sum _{i\in S,j\in U\setminus S}p_{(i,j)}(t,S)$.

Let us now focus on the second sum of \eqref{eq10-probderiv2}. From property~\eqref{multilinear-gradient} of multilinear extension, we have:
\begin{align}
    F(t\indicator_{S-i+j})-F(t\indicator_S) &= F(t\indicator_{S-i+j})- F(t\indicator_{S-i}) +F(t\indicator_{S-i}) -F(t\indicator_S) \nonumber\\
    &=t\cdot \nabla_jF(t\indicator_{S-i})-t\cdot \nabla_iF(t\indicator_S) \nonumber\\
    &\geq t\cdot \nabla_jF(t\indicator_{S})-t\cdot \nabla_iF(t\indicator_S) .\label{NEW-probderiv2}
\end{align}
Thus, using~\eqref{NEW-probderiv2} the second sum in~\eqref{eq10-probderiv2} simplifies as follows: 
\begin{align}
 &\lambda(t)\sum _{i\in S, j\in U\setminus S}{p_{ij}(t,S)} \left(F(t\indicator_{S-i+j})-F(t\indicator_S)\right) \geq\nonumber\\
 &  t \lambda(t)  \sum _{i\in S, j\in U\setminus S}p_{ij}(t,S)\left(\nabla_jF(t\indicator_{S})-\nabla_iF(t\indicator_S)  \right) . \nonumber
\end{align}
Plugging the above lower bound in~\eqref{eq10-probderiv2} concludes the proof.
\end{proof}

An additional lemma that is required for both above-average and $(\beta,\eta)$-approximate swap procedures is the following.

\begin{restatable}{lemma}{submodularstandard}\label{lem:submodular-standard}
For every $S,T\subseteq U$ and $ 0\leq t\leq 1$: $ \sum _{i\in T}\nabla _i F(t\indicator _S) \geq f(T) - F(t\indicator _S)$. 
\end{restatable}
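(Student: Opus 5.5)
The plan is to combine the gradient identity in property~\eqref{multilinear-gradient} with the submodularity corollary in property~\eqref{multilinear-submodular}, and finish with monotonicity of $F$. Let $\bx = t\indicator_S$.

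\emph{Step 1: bound each gradient coordinate by a marginal at $\bx$.} For $i\in T$ I will show
\[
\nabla_i F(\bx) \ \ge\ F(\bx\vee\indicator_{\{i\}})-F(\bx).
\]
Since $F$ is multilinear, $\nabla_i F(\bx) = F(\bx\vee\indicator_{\{i\}}) - F(\bx\wedge \indicator_{U\setminus\{i\}})$, where the second point is $\bx$ with coordinate $i$ set to $0$. This is exactly the first equality of property~\eqref{multilinear-gradient}, because $(t\indicator_S)\vee\indicator_{\{i\}} = t\indicator_{S\setminus\{i\}}+\indicator_{\{i\}}$. By monotonicity (property~\eqref{multilinear-monotone}), $F(t\indicator_{S\setminus\{i\}})\le F(t\indicator_S)$, and the bound follows. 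For $i\notin S$ the bound is an equality, so nothing is lost there.

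\emph{Step 2: sum over $T$ and apply submodularity.} Summing Step~1 over $i\in T$ and using the corollary in property~\eqref{multilinear-submodular} gives
\[
\sum_{i\in T}\nabla_i F(\bx)\ \ge\ \sum_{i\in T}\bigl(F(\bx\vee\indicator_{\{i\}})-F(\bx)\bigr)\ \ge\ F(\bx\vee\indicator_T)-F(\bx).
\]

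\emph{Step 3: drop $\bx$.} Since $\bx\vee\indicator_T\ge\indicator_T$, monotonicity gives $F(\bx\vee\indicator_T)\ge F(\indicator_T)=f(T)$, which yields the claimed inequality.

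No step is a real obstacle. The only point needing care is Step~1 for $i\in S$: the gradient is a marginal measured against $t\indicator_{S\setminus\{i\}}$ rather than against $\bx$, so monotonicity is needed to compare the two base points. The degenerate case $t=0$ is also covered, since the first form of property~\eqref{multilinear-gradient} involves no division by $t$.
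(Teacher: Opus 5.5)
Your proof is correct and follows the same route as the paper: bound each gradient coordinate below by the marginal $F(t\indicator_S\vee\indicator_{\{i\}})-F(t\indicator_S)$, then apply the submodularity corollary and monotonicity. The only difference is at $i\in T\cap S$, where the paper writes $\nabla_i F(t\indicator_S)=\frac{1}{1-t}\bigl(F(t\indicator_S\vee\indicator_{\{i\}})-F(t\indicator_S)\bigr)$ and uses $\frac{1}{1-t}\ge 1$; your monotonicity comparison against $t\indicator_{S\setminus\{i\}}$ avoids that division and so also handles $t=1$ cleanly.
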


\paragraph{Above-Average Swap.}
Using Lemmas~\ref{lem:value} and~\ref{lem:submodular-standard}, Lemma~\ref{lem:value-average-swap} can be now proved.
\begin{proof}[Proof of Lemma~\ref{lem:value-average-swap}]
Lemma~\ref{lem:value} and the definition of above-average~\eqref{eq:average} give:
\begin{align}
\begin{aligned}
&\frac{\partial _+ V_{S,t}(r)}{\partial r} \bigg| _{r=t} \geq\\
& \sum _{i\in S}\nabla _i F(t\indicator _S) + \frac{t\lambda(t)}{k} \cdot \left( \sum _{o\in O}\nabla _o F(t\indicator _S) - \sum _{i\in S}\nabla _i F(t\indicator _S)\right) .\end{aligned}\label{eq:above-average-1}
\end{align}
Recalling that $ \lambda(t)=\nicefrac{k}{t}$ implies that $ t\lambda(t)/k=1$ in~\eqref{eq:above-average-1}, thus:
\begin{align}
\frac{\partial _+ V_{S,t}(r)}{\partial r} \bigg| _{r=t}&\geq \sum _{i\in S}\nabla _i F(t\indicator _S) + \left( \sum _{o\in O}\nabla _o F(t\indicator _S) - \sum _{i\in S}\nabla _i F(t\indicator _S)\right) \nonumber\\
&= \sum _{o\in O}\nabla _o F(t\indicator _S).\label{eq:aove-average-2}
\end{align}
Focusing on the right hand side of~\eqref{eq:aove-average-2}, Lemma~\ref{lem:submodular-standard} implies that: $ \sum _{o\in O}\nabla _o F(t\indicator _S)\geq f(O) - F(t\indicator _S)$.
This proves the first part of the lemma.

Focusing on the second part of the lemma:
\begin{align}
 \frac{\partial _+Q(r)}{\partial r}\bigg| _{r=t} &= \sum _{S\in \cB} \Pr[A(t)=S]\cdot \frac{\partial_+ V_{S,t}(r)}{\partial r}\bigg| _{r=t} \nonumber\\ & \geq \sum _{S\in \cB}\Pr[A(t)=S]\cdot \left( f(O) - F(t\indicator _{S})\right) \label{inequality1} \\
 & = f(O) - \mathbb{E}[F(t\indicator _{A(t)})] = f(O) - Q(t). \nonumber
\end{align}
Inequality~\eqref{inequality1} follows from the first part of the lemma. 
\end{proof}

We are now ready to prove Theorem~\ref{thm:average_poisson}.
{We note that the fact that Lemma~\ref{lem:value-average-swap} lower bounds only the right derivative of $Q$ requires some care in the proof of the theorem.}
\begin{proof}[Proof of Theorem \ref{thm:average_poisson}]
We use the following claim, a slight variant of Theorem 1.2.1 ~\cite{lakshmikantham1969differential}.

\begin{restatable}{claim}{differntialEq}
\label{claim:differentialEQ}
Let $ q\colon [\varepsilon,1]\rightarrow \mathbb{R}$ be a function satisfying:
$(1)$ $q(t)$ is continuous for every $t\in (\varepsilon,1)$;
$(2)$ $q(t)$ is left continuous for $t=1$;
$(3)$ $q(t)$ is right continuous for $ t=\varepsilon$;
$(4)$ $q(\varepsilon)\geq 0$; and
$(5)$ for every $ t\in [\varepsilon,1)$ the right derivative of $q(t)$ is defined and satisfies: $ \frac{\partial _+ q(r)}{\partial r}\big|_{r=t} \geq f(O)-q(t)$.
Then: $ q(1)\geq \left( 1-e^{-(1-\varepsilon)}\right)f(O)$.
\end{restatable}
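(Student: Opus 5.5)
We prove Claim~\ref{claim:differentialEQ} by a comparison argument against the exact solution of the ODE $y'=f(O)-y$. Define the auxiliary function
\[
g(t)\triangleq e^{t}\bigl(q(t)-f(O)\bigr)+f(O)e^{\varepsilon},\qquad t\in[\varepsilon,1].
\]
The plan is to show that $g$ is nondecreasing on $[\varepsilon,1]$. Given this, $g(1)\geq g(\varepsilon)=e^{\varepsilon}q(\varepsilon)\geq 0$, and therefore $q(1)\geq f(O)-f(O)e^{\varepsilon-1}=(1-e^{-(1-\varepsilon)})f(O)$, which is the claim.

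For monotonicity, first compute the right derivative of $g$. At every $t\in[\varepsilon,1)$ the product rule for right derivatives together with hypothesis~(5) gives
\[
\frac{\partial_+ g(r)}{\partial r}\Big|_{r=t}=e^{t}\Bigl(q(t)-f(O)+\frac{\partial_+ q(r)}{\partial r}\Big|_{r=t}\Bigr)\geq 0 .
\]
The right derivative of $e^t$ exists and $q$ has a right derivative, so the product rule applies. Moreover $g$ has the same continuity properties as $q$, namely (1)--(3).

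The main obstacle is that a nonnegative \emph{right} derivative alone does not immediately yield monotonicity; we need a one-sided mean value argument that uses continuity. We handle it by first perturbing. Fix $\delta>0$ and let $g_\delta(t)=g(t)+\delta t$, whose right derivative is at least $\delta>0$ everywhere on $[\varepsilon,1)$. Suppose $\varepsilon\leq a<b\leq 1$ with $g_\delta(b)<g_\delta(a)$, and let $s=\sup\{t\in[a,b]: g_\delta(t)\geq g_\delta(a)\}$. We treat the endpoints separately.
\begin{itemize}
\item By continuity (right continuity at $\varepsilon$, continuity inside, left continuity at $1$), $g_\delta(s)\geq g_\delta(a)$, hence $s<b$.
\item Since the right derivative at $s$ is positive, $g_\delta(s+h)>g_\delta(s)\geq g_\delta(a)$ for all small $h>0$. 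This contradicts the maximality of $s$.
\end{itemize}
Hence $g_\delta$ is nondecreasing. Letting $\delta\to 0$ shows that $g$ is nondecreasing. The continuity step at $s$ is exactly where hypotheses (1)--(3) are needed: left continuity at $1$ covers the case $s=1$, and right continuity at $\varepsilon$ covers $s=a=\varepsilon$.
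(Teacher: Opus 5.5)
Your proof is correct, and it organizes the comparison differently from the paper's. The paper first reduces to a strict version of (5) via $q_\eta(t)=q(t)+\eta(t-\varepsilon)$. It then runs a barrier argument directly against the explicit solution: it sets $s=\sup\{t: q(t)\ge (1-e^{-(t-\varepsilon)})f(O)\}$ and uses continuity on both sides of $s$ to show that $q(s)$ equals the barrier exactly. If $s<1$, the strict derivative bound at $s$ then gives a contradiction. You instead absorb the linear term with the integrating factor $e^{t}$, so (5) becomes $\partial_+ g\ge 0$. You then only need the generic fact that a continuous function with nonnegative right derivative is nondecreasing, which you prove with the same perturb-then-supremum idea.

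What your route buys:
\begin{itemize}
\item The supremum argument is decoupled from the specific ODE.
\item You never need the ``equality at $s$'' step; you only use $g_\delta(s)\ge g_\delta(a)$.
\item You get $q(t)\ge(1-e^{-(t-\varepsilon)})f(O)$ for every $t$, not just $t=1$.
\item The same template handles Claim~\ref{claim:differentialEQ-approximate} by using $e^{\beta t}$ as the integrating factor.
\end{itemize}
The paper's route avoids the one-sided product rule and works with $q$ directly, at the cost of redoing the barrier argument for each ODE.

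One small correction to your closing remark. When $s=a=\varepsilon$, no continuity is needed at all, since $a$ itself lies in the set. In fact hypothesis (3) is never used by your argument. The only continuity you need is left continuity at $s$ when $s>a$, which (1) or (2) supplies. The step $g_\delta(s+h)>g_\delta(s)$ for small $h>0$ follows from the positive right derivative alone. Moreover, right continuity on $[\varepsilon,1)$ is already implied by a finite right derivative in (5).
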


We note that $Q$ satisfies all the requirements of Claim \ref{claim:differentialEQ}:
$(1)$ requirements $(1)$, $(2)$, and $(3)$ (the continuity requirements) follow directly from the definition of $Q$\eqref{def:Q};
$(2)$ requirement $(4)$ is true since $Q$ is non-negative by definition; and
$(3)$ requirement $(5)$ follows from Lemma \ref{lem:value-average-swap}.
Thus, Claim \ref{claim:differentialEQ} gives:
\[ Q(1) \geq \left( 1-e^{-(1-\varepsilon)}\right)f(O)\geq (1-\varepsilon)(1-\nicefrac{1}{e})f(O).\]
The expected value of the output of \gp equals $Q(1)$.

The expected number of swap calls equals the expected number of events in the Poisson process:
\begin{align}
  \int _{t=\varepsilon}^1 \lambda (t)dt = \int _{t=\varepsilon}^1 \frac{k}{t}dt = k\cdot \ln{t}\bigg| _{t=\varepsilon}^1 = k\ln{(1/\varepsilon)} .  \nonumber
\end{align}
This concludes the proof.
\end{proof}

\paragraph{$(\beta,\eta)$-Approximate Swap.}
Using Lemmas~\ref{lem:value} and~\ref{lem:submodular-standard}, Lemma~\ref{lem:value-approximate-swap} can be now proved.
\begin{proof}[Proof of Lemma~\ref{lem:value-approximate-swap}]
Lemma~\ref{lem:value} and the definition of $(\beta,\eta)$-approximate~\eqref{eq:approximate_swap} give:
\begin{align}
\begin{aligned}
&\frac{\partial _+ V_{S,t}(r)}{\partial r} \bigg| _{r=t} \geq\\
& \sum _{i\in S}\nabla _i F(t\indicator _S) + \frac{t\lambda(t)}{k} \cdot \left( \beta\sum _{o\in O}\nabla _o F(t\indicator _S) - \sum _{i\in S}\nabla _i F(t\indicator _S)-\eta f(O)\right) .
\end{aligned}\label{eq:approximate-1}
\end{align}
Recalling that $ \lambda(t)=\nicefrac{k}{t}$ implies that $ t\lambda(t)/k=1$ in~\eqref{eq:approximate-1}, thus:
\begin{align}
\frac{\partial _+ V_{S,t}(r)}{\partial r} \bigg| _{r=t}\geq \beta \sum _{o\in O}\nabla _o F(t\indicator _S) - \eta f(O).\label{eq:approximate-2}
\end{align}
Lemma~\ref{lem:submodular-standard} implies that: $ \sum _{o\in O}F(t\indicator _S)\geq f(O)-F(t\indicator_S)$.
Plugging the latter into the right hand side of~\eqref{eq:approximate-2} gives the first part of the lemma.

Focusing on the second part of the lemma:
\begin{align}
\frac{\partial _+Q(r)}{\partial r}\bigg| _{r=t} &= \sum _{S\in \cB} \Pr[A(t)=S]\cdot \frac{\partial_+ V_{S,t}(r)}{\partial r}\bigg| _{r=t}\nonumber \\
& \geq \sum _{S\in \cB}\Pr[A(t)=S]\cdot \left( (\beta-\eta)f(O) - \beta F(t\indicator _{S})\right) \label{inequality2} \\
& = (\beta-\eta)f(O) - \beta \mathbb{E}[F(t\indicator _{A(t)})] = (\beta-\eta)f(O) - \beta Q(t). \nonumber
\end{align}
Inequality~\eqref{inequality2} follows from the first part of the lemma.
\end{proof}

We are now ready to prove Theorem~\ref{thm:approximate_poission}.
{We note that similarly to the proof of Theorem~\ref{thm:average_poisson} the fact that Lemma~\ref{lem:value-approximate-swap} lower bounds only the right derivative of $Q$ requires some care in the proof of the theorem.}
\begin{proof}[Proof of Theorem \ref{thm:approximate_poission}]
We use the following claim, a slight variant of Theorem 1.2.1 ~\cite{lakshmikantham1969differential}).
\begin{restatable}{claim}{differentEQapp}\label{claim:differentialEQ-approximate}
Let $ q\colon [\varepsilon,1]\rightarrow \mathbb{R}$ be a function satisfying:
$(1)$ $q(t)$ is continuous for every $t\in (\varepsilon,1)$;
$(2)$ $q(t)$ is left continuous for $t=1$;
$(3)$ $q(t)$ is right continuous for $ t=\varepsilon$;
$(4)$ $q(\varepsilon)\geq 0$; and
$(5)$ there exist $\beta\in (0,1] $ and $ \eta \geq 0$ such that for every $ t\in [\varepsilon,1)$ the right derivative of $q(t)$ is defined and satisfies: $ \frac{\partial _+ q(r)}{\partial r}\big|_{r=t} \geq (\beta-\eta)f(O)-\beta q(t)$.
Then: $ q(1)\geq \left( 1-\eta/\beta\right)\left( 1-e^{-\beta (1-\varepsilon)}\right)f(O)$.
\end{restatable}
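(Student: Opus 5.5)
The plan is to compare $q$ with the solution of the linear ODE $y'=(\beta-\eta)f(O)-\beta y$, $y(\varepsilon)=0$, which is
\[
y(t)=\Big(1-\tfrac{\eta}{\beta}\Big)f(O)\Big(1-e^{-\beta(t-\varepsilon)}\Big).
\]
To avoid needing the two-sided derivatives required by the standard comparison theorem, I will pass to an integrating-factor function. Set $c\triangleq (1-\eta/\beta)f(O)$ and define
\[
g(t)\triangleq e^{\beta t}\big(q(t)-c\big),\qquad t\in[\varepsilon,1].
\]
Then $g$ inherits the continuity properties $(1)$–$(3)$ of $q$. On $[\varepsilon,1)$ its right derivative exists, and by the product rule (valid for right derivatives, since $e^{\beta t}$ is differentiable) together with requirement $(5)$,
\[
\frac{\partial_+ g(r)}{\partial r}\Big|_{r=t}
= e^{\beta t}\Big(\frac{\partial_+ q(r)}{\partial r}\Big|_{r=t}+\beta q(t)-\beta c\Big)
\ge e^{\beta t}\big((\beta-\eta)f(O)-\beta c\big)=0,
\]
because $\beta c=(\beta-\eta)f(O)$.

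The key step, and the one needing care, is the real-analysis fact that a function on $[\varepsilon,1]$ that is continuous and has a nonnegative right derivative at every point of $[\varepsilon,1)$ is nondecreasing. I would prove it by the standard perturbation argument. Fix $\delta>0$ and let $g_\delta(t)=g(t)+\delta(t-\varepsilon)$, so that $\partial_+ g_\delta\ge\delta>0$ everywhere on $[\varepsilon,1)$. Given $\varepsilon\le a<b\le 1$, let $s=\sup\{t\in[a,b]: g_\delta(u)\ge g_\delta(a)\ \forall u\in[a,t]\}$. Continuity of $g_\delta$ (right continuity at $\varepsilon$, left continuity at $1$, two-sided continuity inside) gives $g_\delta(s)\ge g_\delta(a)$. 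If $s<b$, the strictly positive right derivative at $s$ gives $g_\delta(u)>g_\delta(s)\ge g_\delta(a)$ on some interval $(s,s+\rho)$, contradicting the choice of $s$. Hence $s=b$, so $g_\delta(b)\ge g_\delta(a)$. Letting $\delta\to0$ shows $g$ is nondecreasing. The only subtlety is using exactly the stated one-sided continuity at the endpoints, and that is precisely what hypotheses $(1)$–$(3)$ supply.

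To conclude, monotonicity gives $g(1)\ge g(\varepsilon)$, that is,
\[
e^{\beta}\big(q(1)-c\big)\ge e^{\beta\varepsilon}\big(q(\varepsilon)-c\big)\ge -e^{\beta\varepsilon}c,
\]
where the last inequality uses $q(\varepsilon)\ge 0$. Rearranging,
\[
q(1)\ge c\Big(1-e^{-\beta(1-\varepsilon)}\Big)=\Big(1-\tfrac{\eta}{\beta}\Big)\Big(1-e^{-\beta(1-\varepsilon)}\Big)f(O),
\]
which is the claimed bound. The argument does not use the sign of $c$, so it covers the case $\eta>\beta$ as well, where the bound is trivial anyway.
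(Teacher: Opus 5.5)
Your proof is correct, and it takes a somewhat different route from the paper's.

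\textbf{The paper's route.} The paper first proves the claim with a strict inequality in $(5)$, by direct comparison with the explicit solution. It sets $s$ to be the supremum of the times $t$ with $q(t)\ge (1-\eta/\beta)(1-e^{-\beta(t-\varepsilon)})f(O)$ and uses continuity to get equality at $s$. It then shows that if $s<1$, the right derivative of $q$ at $s$ cannot exceed that of the comparison curve, which contradicts the strict hypothesis. Finally it removes strictness by perturbing $q$ to $q(t)+\eta'(t-\varepsilon)$.

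\textbf{How your route relates.} Your integrating factor runs the same mechanism in other coordinates: the paper's set is exactly $\{t : g(t)\ge -c\,e^{\beta\varepsilon}\}$. The change of variables does buy some simplicity.
\begin{itemize}
\item The differential inequality collapses to $\partial_+ g\ge 0$, so the perturbation to a strict inequality is immediate ($\partial_+ g_\delta\ge\delta$).
\item The sup argument becomes a standard, reusable monotonicity lemma. It gives the bound $q(t)\ge(1-\eta/\beta)(1-e^{-\beta(t-\varepsilon)})f(O)$ at every $t$, not just $t=1$.
\item It covers Claim~\ref{claim:differentialEQ} verbatim with $\beta=1$, $\eta=0$.
\end{itemize}

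\textbf{A slip your formulation avoids.} In the paper's route, the perturbation step must be redone against the weighted inequality, i.e.\ one needs
$(\beta-\eta)f(O)-\beta q(t)+\eta'=(\beta-\eta)f(O)-\beta q_{\eta'}(t)+\eta'(1+\beta(t-\varepsilon))$.
As written, the paper's reduction instead reuses the unweighted bound $f(O)-q(t)$ from the exact case. It also uses the symbol $\eta$ both for the claim's parameter and for the perturbation. The argument goes through once corrected.

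\textbf{What the paper's route buys.} It works with the target bound directly and needs no change of variables.

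A minor remark on your proof: right continuity at $\varepsilon$ is never actually used. Your sup argument only needs left continuity at points $s>a$, and existence of a finite right derivative already forces right continuity.
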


We note that $Q$ satisfies all the requirements of Claim~\ref{claim:differentialEQ-approximate}:
$(1)$ requirements $(1)$, $(2)$, and $(3)$ (the continuity requirements) follow directly from the definition of $Q$\eqref{def:Q};
$(2)$ requirement $(4)$ is true since $Q$ is non-negative by definition; and
$(3)$ requirement $(5)$ follows from Lemma \ref{lem:value-approximate-swap}.
Thus, Claim \ref{claim:differentialEQ-approximate} gives:
\begin{align*}
 Q(1) &\geq \left( 1-\nicefrac{\eta}{\beta}\right) \left( 1-e^{-\beta (1-\varepsilon)}\right)f(O)\\
 &\geq (1-\varepsilon) (1-\nicefrac{\eta}{\beta})(1-e^{-\beta})f(O).
 \end{align*}
The expected value of the output of \gp equals $Q(1)$.

The expected number of swap calls equals the expected number of events in the Poisson process:
\begin{align}
  \int _{t=\varepsilon}^1 \lambda (t)dt = \int _{t=\varepsilon}^1 \frac{k}{t}dt = k\cdot \ln{t}\bigg| _{t=\varepsilon}^1 = k\ln{(1/\varepsilon)} .  \nonumber
\end{align}
This concludes the proof.
\end{proof}

\section{Applications}\label{sec:Applications}
\subsection{Submodular Welfare and Partition Matroids}\label{sec:Applications2}


\label{sec:simple}

We consider the \sw problem mentioned in Section~\ref{sec:intro}, and more generally the problem of maximizing a monotone submodular function subject to a partition matroid independence constraint.
We prove that for these problems, \gp admits a fast implementation in two settings.
The first is when the algorithm has value oracle access to the multilinear extension $F$ and the second is when it has value oracle access to the original set function $f$.

Formally, we are given a monotone submodular function $f\colon U\rightarrow \mathbb{R}_+$ along with a value oracle to its multilinear extension $F$ or the original set function $f$ (depending on the setting), and a partition matroid $\cM=(U,\cI)$ with partition $ U_1,\ldots,U_k$ of $U$.
The goal is to find $S\subseteq U$, where $ |S\cap U_j|\leq 1$ for every $ j\in [k]$, while maximizing $ f(S)$.
We denote the former setting of the problem (where oracle to $F$ is given) by \simpFprob, and the latter setting of the problem (where oracle to $f$ is given) by \simpfprob. 

\subsubsection{Oracle for $F$}
\label{sec:simple_F}
In order to utilize \gp for \simpFprob we provide a fairly straightforward above-average swap procedure: select a part $U_r$ uniformly at random, find an item $i^*\in U_r$ that maximizes $\nabla_{i^*}F(t\indicator _A)$, and return $i^*$ together with the single item in $A\cap U_r$.
This is summarized in Algorithm~\ref{alg:simpleF}.

\begin{algorithm}[t]
\caption{\simpleF $(t, A)$}
\SetKwInOut{Input}{input}
\SetKwInOut{Output}{output}
\SetAlgoNlRelativeSize{0}
\label{alg:simpleF}

     Denote $A= \{a_1,\dots, a_k\}$ where $a_j \in U_j$ for every $j\in [k]$. 

    Let $ r\sim Unif(1,\ldots,k)$.


           $i^*\leftarrow \argmax \{\nabla_i F(t \indicator_{A})\colon i\in U_r\}$.\label{simpleF:max}

       Return  $(a_r,i^*)$.

\end{algorithm}

\begin{lemma}
\label{lem:simpleF_average}
Algorithm~\ref{alg:simpleF} is a right continuous above-average swap procedure.
\end{lemma}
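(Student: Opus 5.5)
Two things need to be checked for Algorithm~\ref{alg:simpleF}: that it is a valid swap procedure (the returned pair gives a base), that it is right continuous in $t$, and that it satisfies the above-average inequality~\eqref{eq:average}.

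\textbf{Validity.} Write $A=\{a_1,\dots,a_k\}$ with $a_j\in U_j$, so $A$ is a base of the partition matroid. For every $r$ and every $i^*\in U_r$, the set $A-a_r+i^*$ again has exactly one element in each part, hence lies in $\cB$. If $i^*=a_r$, the swap is the trivial pair $(a_r,a_r)$.

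\textbf{Above-average.} Since $r$ is uniform over $[k]$, the left hand side of~\eqref{eq:average} equals
\[ \frac{1}{k}\sum_{r=1}^k\Big(\max_{j\in U_r}\nabla_jF(t\indicator_A)-\nabla_{a_r}F(t\indicator_A)\Big). \]
We may assume the optimal base $O$ is a base with one element $o_r\in U_r$ in each part. This uses monotonicity, so $O$ can be extended to a base without decreasing $f$; the statement defines $O$ as an optimal base anyway. For each $r$ we have $\max_{j\in U_r}\nabla_jF(t\indicator_A)\ge \nabla_{o_r}F(t\indicator_A)$. Summing over $r$ gives
\[ \frac{1}{k}\Big(\sum_{o\in O}\nabla_oF(t\indicator_A)-\sum_{a\in A}\nabla_aF(t\indicator_A)\Big), \]
which is exactly~\eqref{eq:average}.

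\textbf{Right continuity.} This is the only delicate point. The probability $p_{(a_r,j)}(t,A)$ equals $\frac1k$ times the indicator that $j$ is the argmax in $U_r$, and the argmax may switch as $t$ varies. To control this, fix a deterministic tie-breaking rule, for instance the smallest index among maximizers. Each $\nabla_jF(t\indicator_A)$ is a polynomial in $t$, because $F$ is multilinear. For each pair $j,j'\in U_r$ the difference $\nabla_jF(t\indicator_A)-\nabla_{j'}F(t\indicator_A)$ is therefore a polynomial, so it is either identically zero or has finitely many roots. Consequently, for every $t_0\in[\varepsilon,1)$ there is $\delta>0$ such that on $(t_0,t_0+\delta)$ the ordering of all these values, including ties, is constant.

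It remains to compare the open interval with the point $t_0$ itself. Along $(t_0,t_0+\delta)$ the identity of the maximizer with tie-breaking is fixed, and its value dominates at $t_0$ by continuity. The concern is that at $t_0$ an extra tie could occur, so the argmax at $t_0$ might differ from the argmax on the interval. To fix this, refine the tie-breaking: choose the argmax at $t$ by comparing the polynomials lexicographically on their germ to the right of $t$, namely by value, then derivative, and so on. With this rule the selection at $t_0$ coincides with the selection on $(t_0,t_0+\delta)$, so every $p_{(i,j)}(\cdot,A)$ is piecewise constant and right continuous. I expect this tie-breaking step to be the main obstacle; the above-average inequality itself is immediate.
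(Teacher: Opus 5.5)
Your proposal is correct and follows the paper's proof. Validity holds because $a_r$ and $i^*$ lie in the same part $U_r$, and the above-average inequality follows by comparing the maximizer in each $U_r$ with $o_r$ and averaging over the uniform choice of $r$. For right continuity the paper only remarks that it can be achieved by an appropriate choice of $\argmax$. Your rule makes that remark precise: compare the polynomials $t\mapsto\nabla_jF(t\indicator_A)$ lexicographically by value and then successive derivatives at $t$, with a fixed index tie-break for identical polynomials.
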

\begin{proof}

First, we observe that Algorithm~\ref{alg:simpleF} always returns a pair of elements $ (a_r,i^*)$, where $a_r\in A$ and $i^*\in U$, satisfying: $A-a_r+i^*\in \cB$, as both belong to the part $U_r$.
Therefore, it always returns a valid swap.

Second, we prove that Algorithm~\ref{alg:simpleF} is above-average.
Let $ O=\{ o_1,\ldots,o_k\}\in \cI$ be an optimal base, such that $ o_j\in U_j$ for every $ j\in [k]$. That is $f(O)=\max \{ f(S)\colon S\in \cI\}$.
Moreover, for every $j\in [k]$ let $i_j\triangleq \argmax \{ \nabla_{i} F(t \indicator_{A})\colon i\in U_j\}$ be the element that Algorithm~\ref{alg:simpleF} returns if $r=j$.
Clearly, by definition $\nabla_{i_j}F(t \indicator_{A}) \geq\nabla_{o_j}F(t \indicator_{A})$ for every $j\in [k]$.
Hence,
\begin{align}
\E _{r\sim Unif(1,\ldots,k)}\left[ \nabla_{i^*} F(t \indicator_{A} ) - \nabla_{a_r} F(t \indicator_{A})\right]\,&=  \frac{1}{k}  \sum_{j=1}^{k} \left( \nabla_{i_j} F(t\indicator_{A} ) -\nabla_{a_j} F(t \indicator_{A} )\right)\nonumber \\
&\geq  \frac{1}{k} \cdot \sum_{j=1}^{k} \left( \nabla_{o_j} F(t \indicator_{A} ) -\nabla_{a_j} F(t \indicator_{A} )\right) .\nonumber
\end{align}
Thus, Algorithm~\ref{alg:simpleF} is above-average. 

Finally, we note that the output of Algorithm~\ref{alg:simpleF} can be easily made right continuous by an appropriate choice of $\argmax$ in Line~\ref{simpleF:max}.

 \end{proof}

Once we have a swap procedure, we can proceed to the proof of Theorem~\ref{thrm:PartitionF}. 
\begin{proof}[Proof of Theorem~\ref{thrm:PartitionF}]
By Theorem~\ref{thm:average_poisson} it holds that \gp together with Algorithm~\ref{alg:simpleF} returns  a solution to the problem whose expected value is at least  $(1-\eps)(1-\nicefrac{1}{e})f(O)$ while performing in expectation $k \ln(\nicefrac{1}{\eps})$ calls to Algorithm~\ref{alg:simpleF}.
As every call to Algorithm~\ref{alg:simpleF} requires $ 2|U_r|$ evaluations of $F$, we can conclude that the expected running time is: $ k\ln{(\nicefrac{1}{\varepsilon})}\cdot \mathbb{E}_{r\sim Unif(1,\ldots,k)}\left[2|U_r|\right] = O(n\ln{(\nicefrac{1}{\varepsilon})})$ (since $ \mathbb{E}_{r\sim Unif(1,\ldots,k)}[|U_r|] = \nicefrac{n}{k}$).
\end{proof}

\subsubsection{Oracle for $f$}
\label{sec:partition_orc_to_f}



We now turn to the case where our oracle access is to  $f$, rather than $F$. 
In this setting, both $F(x)$ and its derivatives must be estimated via repeated sampling, which can substantially increase the running time. 
To control this overhead, we introduce a modified version of Algorithm~\ref{alg:simpleF} that incorporates a \mab algorithm, together with a pre-processing step that guaranties an approximation ratio arbitrarily close to $(1 - \nicefrac{1}{e})$ within our objective running time.

In order to present our results we need some additional technical definitions. 
We use $I=(f,U, (U_1,\ldots,U_k))$ to denote a $\simpfprob$ instance. The matroid of the instance is the simple partition matroid $(U, \cI)$ where $\cI =\{S\subseteq U\,|\, \forall j\in [k]: \abs{S\cap U_j}\leq 1\}$. The {\em  marginals to opt ratio} of the instance $I$ is 
\begin{equation}
\marsum(I)\,=\,\frac{\max_{S\in \cI} \sum_{i\in S} (f(\{i\})-f(\emptyset))}{f(O)}  \,=\, \frac{1}{f(O)}\cdot\sum_{j=1}^{k}\max_{i\in U_j} (f(\{i\})-f(\emptyset)), \label{eq:ratio}
\end{equation}
where $O=\argmax\{ f(O) \,|\,O\in \cI\}$ is an optimal solution. 
That is, $\marsum(I)$ is the ratio between 
the maximum sum of marginals an independent set can attain and the optimum. 

The marginals to opt ratio of the instance governs the quality of the swap procedure that can be achieved within an expected running time of 
$O\!\left(\frac{n}{k}\cdot \frac{1}{\delta^2} \ln\!\frac{1}{\delta}\right)$.
\begin{lemma}
\label{lem:bandit_swap_general}
There is $(1,\eta)$-approximate swap algorithm for \simpfprob, where $\eta = \delta\cdot \marsum(I)$,  that uses $O\left(\frac{n}{k}\cdot \frac{1}{\delta^2}\cdot \ln\left(\frac{1}{\delta}\right)\right)$ oracle queries in expectation.
\end{lemma}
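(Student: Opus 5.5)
The plan is to mimic Algorithm~\ref{alg:simpleF} and replace the exact $\argmax$ by a best-arm identification routine in the style of~\cite{EMM06}. The swap procedure picks $r\sim Unif(1,\ldots,k)$, lets $a_r$ be the element of $A\cap U_r$, and treats every $i\in U_r$ as an arm. For arm $i$ we sample $R\sim t\indicator_{A-a_r}$ and return the reward $X_i\triangleq f(R+i)-f(R)$. By property~\ref{multilinear-gradient} of the multilinear extension, $\E[X_i]=\nabla_iF(t\indicator_{A-a_r})$. Each sample costs two oracle queries. The procedure returns $(a_r,\hat\imath)$, where $\hat\imath$ is the arm output by the bandit algorithm.

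The first step is to compare against the exact procedure. By the argument in Lemma~\ref{lem:simpleF_average}, it suffices that for each $r$,
\[ \E[\nabla_{\hat\imath}F(t\indicator_A)] \geq \nabla_{o_r}F(t\indicator_A) - \tfrac{\eta_r}{1}, \]
where $\sum_r \eta_r\le \delta\cdot\marsum(I) f(O)$. Averaging over $r$ then produces the $-\tfrac{\eta}{k}f(O)$ term in~\eqref{eq:approximate_swap} with $\beta=1$.

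The bandit routine works with $\nabla F(t\indicator_{A-a_r})$ rather than $\nabla F(t\indicator_A)$. These coincide for $i\neq a_r$, because in a partition matroid an arm $i\in U_r\setminus\{a_r\}$ lies outside $A$, so $A$ and $A-a_r$ differ only in the coordinate $a_r$. For $i=a_r$ the two values differ, but by property~\ref{multilinear-gradient} we have $\nabla_{a_r}F(t\indicator_A)=\nabla_{a_r}F(t\indicator_{A-a_r})$ as well. So all arms are measured consistently with the gradient at $t\indicator_A$.

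The second step is the bandit guarantee. The rewards satisfy $X_i\in[0,m_r]$ with $m_r\triangleq\max_{i\in U_r}(f(\{i\})-f(\emptyset))$, by submodularity and monotonicity. Median elimination~\cite{EMM06}, run on $|U_r|$ arms with accuracy $\delta m_r/2$ and confidence parameter $\delta/2$, uses $O(|U_r|\delta^{-2}\ln(1/\delta))$ samples. With probability at least $1-\delta/2$ it returns an arm within $\delta m_r/2$ of the best. On the failure event the loss is still at most $m_r$, so the expected loss is at most $\delta m_r$. Summing over $r$ gives $\sum_r\delta m_r=\delta\,\marsum(I)f(O)$ by~\eqref{eq:ratio}, which is the required $\eta$. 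The expected query count is $\E_r[O(|U_r|\delta^{-2}\ln(1/\delta))]=O(\frac nk\delta^{-2}\ln\frac1\delta)$.

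The main obstacle I expect is the range normalization. The bandit error scales with the support size, so we must know $m_r$, or at least a bound on it. I would compute $f(\{i\})$ and $f(\emptyset)$ for all $i\in U_r$, which takes $|U_r|+1$ extra queries and is absorbed in the bound, and then rescale the rewards to $[0,1]$.

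Right continuity also needs care. The output distribution depends on $t$ only through the sampling probabilities of $R$, which are polynomial in $t$. To get right continuity I would fix the number of samples independently of $t$ and break ties in a fixed order. Alternatively, I would argue that the analysis only requires the output probabilities to be measurable and right continuous, which holds because they are finite compositions of polynomials in $t$ with deterministic comparisons; if needed, the comparisons can be smoothed by a continuous randomized tie-breaking rule.
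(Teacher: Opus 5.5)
\textbf{Gap: your arms estimate the gradient at the wrong point.} You draw $R\sim t\indicator_{A-a_r}$, so arm $i$ has mean $\nabla_iF(t\indicator_{A-a_r})$. You then assert that this equals $\nabla_iF(t\indicator_A)$ for $i\neq a_r$, because the two points differ only in coordinate $a_r$. That reasoning is backwards. Since $\nabla_iF(\bx)$ does not depend on $x_i$, differing only in coordinate $a_r$ makes the two gradients agree for the single arm $i=a_r$ (the identity you also state). For every other arm, $\nabla_iF$ does depend on $x_{a_r}$. By property~\ref{multilinear-gradient}, $\nabla_iF(t\indicator_{A-a_r})\ge\nabla_iF(t\indicator_A)$, and the difference is large when $i$ is redundant with $a_r$: at $t\indicator_A$ the marginal of such an $i$ is damped because $a_r$ is present with probability $t$. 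Your bandit therefore favours near-copies of $a_r$, and the procedure is genuinely not $(1,\eta)$-approximate in the sense of \eqref{eq:approximate_swap}, not merely unproved.

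For a concrete counterexample, take a weighted coverage function with points $x,z$ of weight $1$, $y$ of weight $0.9$, and $w$ of weight $0.1$. Let $U_1=\{a_1,i_1,o_1\}$ cover $\{x\},\{x,w\},\{y\}$, and let $U_2=\{a_2,o_2\}$ cover $\emptyset,\{x,z\}$. The unique optimum is $O=\{o_1,o_2\}$ with $f(O)=2.9$. Take $A=\{a_1,a_2\}$.
\begin{itemize}
\item For $r=1$ your arm means are $1,1.1,0.9$, so for small $\delta$ you return $i_1$ with probability $1-O(\delta)$. This contributes $\nabla_{i_1}F(t\indicator_A)-\nabla_{a_1}F(t\indicator_A)=0.1-t$, whereas the right-hand side charges $\nabla_{o_1}F(t\indicator_A)-\nabla_{a_1}F(t\indicator_A)=-0.1$.
\item Part $2$ is handled exactly.
\end{itemize}
So the left-hand side of \eqref{eq:approximate_swap} falls short of the right-hand side by $\tfrac12(t-0.2)-O(\delta)$.

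\textbf{The repair.} Use the paper's sampling distribution: draw $R\sim t\indicator_A$ and use the reward $f(R\cup\{i\})-f(R\setminus\{i\})$. Its mean is exactly $\nabla_iF(t\indicator_A)$ for every $i\in U_r$ (for $i\neq a_r$ it is simply $f(R+i)-f(R)$, since $i\notin A$), and it still lies in $[0,m_r]$. With that change, the rest of your plan is the paper's proof: rescaling by $m_r$ (the paper's $\alpha_r$), the bandit bound giving expected loss at most $\delta m_r$ against $o_r$, averaging over $r$ to obtain $\eta=\delta\cdot\marsum(I)$, and the query count. Your median-elimination step with confidence $\delta/2$ just rederives the in-expectation form of Theorem~\ref{thrm:mab}.

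Your original quantity $\nabla_jF(t\indicator_{A-i})$ is exactly what appears in the identity $F(t\indicator_{S-i+j})-F(t\indicator_S)=t\nabla_jF(t\indicator_{S-i})-t\nabla_iF(t\indicator_S)$ in the proof of Lemma~\ref{lem:value}. If you skipped inequality \eqref{NEW-probderiv2} and weakened the definition accordingly, your original procedure would still yield Corollary~\ref{lem:BanditSwapPoisson}. That changes the framework, however, rather than proving the lemma as stated.
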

We give the proof of Lemma~\ref{lem:bandit_swap_general} towards the end of this section. 
We can use the algorithm from Lemma~\ref{lem:bandit_swap_general} as the swap algorithm in \gp. By Theorem~\ref{thm:approximate_poission} this leads to the following result. 

\begin{corollary}
\label{lem:BanditSwapPoisson}
There exists an algorithm for monotone submodular maximization with a  partition matroid which uses $O\left( \frac{n}{\delta^2} \cdot \ln\left(\frac{1}{\delta}\right)\right)$ value queries for $f$ in expectation and returns a solution $S$ such that 
$$
\E[f(S)]\,\geq\,\left(1-\delta\right) \cdot \left(1-\delta \cdot \marsum(I)\right)\cdot (1-\nicefrac{1}{e}) \cdot f(O),
$$
where $I$ is the instance and $O$ is an optimal solution. 
\end{corollary}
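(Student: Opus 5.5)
The plan is to run \gp (Algorithm~\ref{alg:Poisson}) with starting time $\varepsilon=\delta$, and to use the swap procedure of Lemma~\ref{lem:bandit_swap_general} with the same parameter $\delta$. That procedure is $(1,\eta)$-approximate with $\eta=\delta\cdot\marsum(I)$.

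Before invoking the general theorem, I will check that this swap procedure is right continuous, since Theorem~\ref{thm:approximate_poission} is stated only for right continuous procedures. If Lemma~\ref{lem:bandit_swap_general} does not already guarantee this, I would argue as in Lemma~\ref{lem:simpleF_average}. Its output distribution depends on $t$ only through the samples drawn from $t\indicator_A$, that is, through sampling each element of $A$ with probability $t$. The probabilities of these sampling outcomes are polynomials in $t$. Tie-breaking can be fixed consistently, for example by index order. With that choice, $p_{(i,j)}(t,A)$ is a finite combination of piecewise-polynomial functions, and it can be made right continuous. This is the only step needing care, and I expect it to be routine.

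Next I apply Theorem~\ref{thm:approximate_poission} with $\beta=1$, $\eta=\delta\,\marsum(I)$ and $\varepsilon=\delta$. It gives
\[
\E[f(S)]\ \geq\ (1-\delta)\bigl(1-\delta\,\marsum(I)\bigr)\bigl(1-e^{-1}\bigr)f(O).
\]
This is exactly the claimed bound. Note that if $\eta>1$ the bound is vacuous, because $f\ge 0$, so there is no issue in that case.

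For the query count, Theorem~\ref{thm:approximate_poission} gives an expected $k\ln(1/\delta)$ swap calls, and each call uses an expected $O\bigl(\frac{n}{k}\cdot\frac{1}{\delta^2}\ln\frac{1}{\delta}\bigr)$ queries. The number of calls is determined by the Poisson process, which is independent of the internal randomness of the swap procedure. So by Wald's identity (conditioning on the number of events), the expected total number of queries is the product of the two expectations. This product is $O\bigl(\frac{n}{\delta^2}\ln^2\frac{1}{\delta}\bigr)$.

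There is an extra $\ln(1/\delta)$ factor compared with the corollary's stated bound of $O\bigl(\frac{n}{\delta^2}\ln\frac{1}{\delta}\bigr)$. To close this gap I would try a starting time $\varepsilon$ that is only a constant, say $\varepsilon=\delta/c$ with the loss absorbed differently. More likely, I would use the fact that the per-call cost in Lemma~\ref{lem:bandit_swap_general} can be taken with accuracy $\delta$ while the number of calls is $k\ln(1/\varepsilon)$ with $\varepsilon$ chosen as a constant multiple of $\delta$. I would then recheck whether the lemma's cost bound already amortizes the logarithm. Reconciling this logarithmic factor is the main point I expect to need care; the approximation guarantee itself follows directly.
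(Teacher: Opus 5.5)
Your route is exactly the paper's. The paper's whole justification is to plug the swap procedure of Lemma~\ref{lem:bandit_swap_general} into \gp and invoke Theorem~\ref{thm:approximate_poission} with $\beta=1$, $\eta=\delta\cdot\marsum(I)$ and starting time $\varepsilon=\delta$. Your derivation of the approximation bound from this is correct. So is your expected-query argument: the number of events is independent of the swap randomness, and each call's expected cost is bounded uniformly over $(t,A)$. Your right-continuity check, which the paper leaves implicit, is also fine. In fact $p_{(i,j)}(t,A)$ is a polynomial in $t$ here: the bandit draws boundedly many samples from finitely supported distributions whose probabilities are polynomial in $t$, and $\alpha_r$ does not depend on $t$.

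On the query count, your computation is right and the discrepancy you found is genuine, but it lies in the statement, not in your argument. Multiplying $k\ln(1/\delta)$ expected calls by $O\bigl(\frac{n}{k}\cdot\frac{1}{\delta^2}\ln\frac{1}{\delta}\bigr)$ per call gives $O\bigl(\frac{n}{\delta^2}\ln^2\frac{1}{\delta}\bigr)$, and that is all the paper's one-line derivation yields as well. The paper itself uses the squared logarithm downstream. Theorem~\ref{thrm:partition}, which is proved from this corollary, is stated with $\log^2(1/\varepsilon)$. In the generalized setting, Corollary~\ref{cor:GenBanditSwap} multiplies the factors ($\ln^4\cdot\ln=\ln^5$). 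So you should prove the bound with $\ln^2(1/\delta)$ rather than try to recover a single logarithm by the means you list, because those remedies fail:
\begin{itemize}
\item A constant starting time only gives the factor $1-e^{-(1-\varepsilon)}$ in Theorem~\ref{thm:approximate_poission}, which is a constant factor below $1-\nicefrac{1}{e}$, so the $(1-\delta)$ factor is lost.
\item A starting time $\delta/c$ still gives $k\ln(c/\delta)=\Theta(k\ln(1/\delta))$ calls.
\item There is nothing to amortize in Lemma~\ref{lem:bandit_swap_general}. Its bound is a worst-case per-call bound inherited from the $\log(1/\delta)/\delta^2$ samples per arm of Theorem~\ref{thrm:mab}.
\end{itemize}
A genuine single-log bound would need an ingredient that is not in the paper. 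For example, let the bandit accuracy depend on the event time, $\delta'(t)=\Theta(\delta/\sqrt{t})$. Then the expected cost $\int_\delta^1\frac{k}{t}\cdot\frac{n}{k}\cdot\frac{t}{\delta^2}\ln\frac{1}{\delta}\,dt$ is $O\bigl(\frac{n}{\delta^2}\ln\frac{1}{\delta}\bigr)$, while the accumulated additive loss $\marsum(I)f(O)\int_\delta^1\delta'(t)\,dt$ remains $O(\delta\,\marsum(I))f(O)$. This would also require time-dependent versions of Theorem~\ref{thm:approximate_poission} and Claim~\ref{claim:differentialEQ-approximate}.
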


Our goal is to invoke Corollary~\ref{lem:BanditSwapPoisson} to achieve a $(1-\eps)(1-\nicefrac{1}{e})$-approximation for the problem. 
For this  to work, the instance must have a sufficiently small marginals to opt ratio, $\marsum(I)$. 
We achieve this condition through a pre-processing stage which can be implemented in linear time. 

Given a $\simpfprob$ instance $I=(f,U, (U_1,\ldots, U_k))$ and an independent set $P\in \cI$ we define the {\em residual instance of $I$ and $P$}  as $I_P=(f_P, U_P, (U_j)_{j\in J_P})$  where $J_P=\{j\in [k]\,|\, U_j\cap P=\emptyset\}$, $U_P=\bigcup_{j\in J_P} U_j$ and $f_P:2^{U_P} \rightarrow \mathbb{R}_{\geq 0}$ is defined by $f_P(S)= f(P\cup S) -f(P)$. 
Conceptually, the residual instance represents the subproblem that remains when $P$ is enforced as part of the solution.
In particular, the universe of $I_P$ consists of all parts of the partition which do not intersect with $P$. It can be easily observed that if $S$ is a solution for $I_P$, then $S\cup P$ is a solution for $I$ with value $f(S\cup P)=f(S)+f_P(S)$. The pre-processing returns a set $S$ for which the marginals to opt ratio is small with high probability. 

\begin{lemma}[pre-processing]\label{lem:simpleprep}
There exists a randomized algorithm, which given a \simpfprob\ instance $I=(f,U, (U_1,\ldots, U_k))$ and error parameter $0<\delta<\nicefrac{1}{2}$, performs in expectation $ O(\nicefrac{n}{\delta})$ value queries and returns $ P\in \cI$ satisfying:
\begin{enumerate}
    \item $ \Pr\left[\marsum(I_P) \leq c\cdot \frac{f(O)}{f(O_P)}\right]\geq 1-\delta$,  
    \item $\mathbb{E}\left[(1-\nicefrac{1}{e})\cdot f(O_P\cup P) + \frac{1}{e}\cdot f(P) \right]\geq (1-\nicefrac{1}{e}-\delta)\cdot f(O)$. 
\end{enumerate}
Here, $ O$ is an optimal solution for $I$, and $O_P$ is an optimal solution for the residual instance $I_P$ of $I$ and $P$. Furthermore, $c>1$ is an absolute constant. 
\end{lemma}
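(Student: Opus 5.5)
The plan is to build $P$ with a randomized greedy procedure of the residual random greedy type of~\cite{BFNS14}, in the sampling-accelerated form used by~\cite{BFS17,EN19}. The greedy is stopped at a random number of steps, and that stopping rule is what produces both guarantees.

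\textbf{The procedure.} We run $L=\Theta(\nicefrac{1}{\delta})$ rounds, growing $P$ one part at a time. In round $\ell$ we sample a set $R_\ell$ of $\Theta(\nicefrac{n}{k})$ elements (in expectation) from the parts not yet touched by $P$, with the sample size chosen so that the total query count over all rounds is $O(\nicefrac{n}{\delta})$. We then let $M_\ell\in\cI_P$ be a set containing one element per residual part that maximizes the sum of residual marginals $\sum_{i\in M} f_P(\{i\})$. To keep the query budget, $M_\ell$ may be computed only approximately over the sample, or the sum of marginals may only be estimated. We add a uniformly random element of $M_\ell$ to $P$. Finally we choose a stopping round $\ell^*\in\{0,\dots,L\}$ at random, possibly depending on observed quantities, and return $P$ as it stands at that round.

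\textbf{Property 1: small ratio at the stopping time.} The potential is $\Phi_\ell=\sum_{j\in J_P}\max_{i\in U_j} f_P(\{i\})=\marsum(I_P)\,f(O_P)$. If $\Phi_\ell> c\, f(O)$ at some round, then adding a random element of $M_\ell$ increases $f(P)$ in expectation by at least $\Phi_\ell/|J_P|\ge \Phi_\ell/k$, which is a large gain. Since $f(P)\le f(O)$ always, rounds with large $\Phi$ cannot be frequent. By a Markov/averaging argument, at most a $\delta$ fraction of the rounds (in expectation) can have $\Phi_\ell> c f(O)$. Picking $\ell^*$ uniformly then gives property 1 with probability $1-\delta$.

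I expect the main obstacle to be the normalization. A single step raises $f(P)$ by only about $\Phi/k$, so the ``budget'' argument requires $\Theta(k)$ rounds rather than $\nicefrac{1}{\delta}$. To fix this, I would instead measure progress by fractional steps, adding in each round roughly a $\delta$ fraction of the remaining parts. Alternatively, I would bound the total gain across rounds via the standard residual random greedy recursion, $\E[f(P_{\ell+1})-f(P_\ell)]\ge \frac{1}{k}\bigl(f(O)-\ldots\bigr)$.

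\textbf{Property 2: value guarantee.} Here I would use the residual random greedy analysis. With $O_{P_\ell}$ the residual optimum, the recursion gives $\E[f(P_{\ell+1})]-\E[f(P_\ell)]\ge \frac{1}{k}\bigl(\E[f(O_{P_\ell}\cup P_\ell)]-\E[f(P_\ell)]\bigr)$. I would also show that $\E[f(O_{P_\ell}\cup P_\ell)]$ decreases by at most about $\frac{1}{k}\E[f(O_{P_\ell}\cup P_\ell)]$ per step, by an exchange argument that discards the optimum's element in the chosen part. The combination $(1-\nicefrac{1}{e})f(O_P\cup P)+\frac{1}{e}f(P)$ is the potential from continuous greedy, which is nondecreasing up to these error terms. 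It starts at $(1-\nicefrac{1}{e})f(O)$, so it loses only $\delta f(O)$ overall from sampling and truncation, and this remains true at a random stopping time.
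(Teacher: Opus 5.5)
\textbf{There is a genuine gap, and it is the stopping rule.} A stopping round chosen uniformly at random cannot give Property~1 with an absolute constant $c$ and failure probability $\delta$.

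Your own averaging argument shows why. A round with $\Phi_\ell>c\,f(O)$ gains at least $c\,f(O)/k$ in expectation, so the expected number of such rounds is at most $k/c$. Since there are at most $k+1$ single-element rounds, this bounds the failure probability only by about $1/c$, a constant that does not shrink with $\delta$. With $L=\Theta(1/\delta)\ll k$ rounds it bounds nothing at all.

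This is not an artifact of the analysis. Take $f(S)=\min\{|S|,m\}$ with $m=\lfloor k/(2c)\rfloor$. Then $f(O)=m$ and $\Phi=k-|P|>c\,f(O)$ for every $|P|<m$. So a uniform round in $\{0,\dots,k\}$ violates Property~1 with probability about $1/(2c)$, and once $k\gg c/\delta$ every round $\ell\le L$ violates it.

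\textbf{Property~2 is not secured by a random stopping time either.} It holds trivially for $P=\emptyset$, so its whole content is to hold together with Property~1. The monotonicity you invoke is not supported by your bounds. Write $W=f(O_P\cup P)$, $g=f(P)$, and $m$ for the number of residual parts. Your per-step bounds are a gain of $g$ of at least $(W-g)/m$ and a loss of $W$ of at most $W/m$. They give only that $(1-\nicefrac1e)W+\frac1e g$ changes in expectation by at least $\frac1m\bigl((\nicefrac2e-1)W-\nicefrac{g}{e}\bigr)$. This is negative, because that potential puts more weight on $W$ than on $g$.

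\textbf{The missing idea is a first-passage stopping rule tied to $f(O)$.} Let $V_i$ denote your $\Phi$ after $i$ steps. The expected gain of step $i$ is exactly $V_{i-1}/(k-i+1)$. The expected loss of $f(O_i)$, the value of the part of $O$ lying in untouched parts, is at most $f(O)/(k-i+1)$. Hence $(e-1)f(O_i)+f(P_i)$ has nonnegative drift as long as $V_{i-1}\ge(e-1)f(O)$.

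The paper therefore runs the residual random greedy to completion. This costs $O(n)$ queries, since each step queries only the chosen part; your $M_\ell$, by contrast, needs all residual parts every round. It then returns $P_{I_T}$ with $I_T=\min\{i:V_i\le T\}$ for a threshold $(e-1)f(O)\le T\le c\,f(O)$. Property~1 then holds by definition. Property~2 follows by optional stopping, since every step taken before $I_T$ has $V_{i-1}>T$.

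Two further ingredients are needed, and both are absent from your plan:
\begin{enumerate}
\item \emph{A constant-factor estimate of $f(O)$ to set $T$.} The paper takes the best of $\log_{7/6}(8/\delta)$ independent greedy runs. Each has $\E[f(P_k)]\ge f(O)/4$, so by Markov the estimate fails with probability at most $\delta/8$.
\item \emph{A cheap way to locate $I_T$.} $V_i$ is evaluated only at the checkpoints $\lceil k(1-(1-\nicefrac{\delta}{4})^\ell)\rceil$. Each evaluation costs about $2(n/k)(k-i)$ in expectation, and these costs sum to $O(n/\delta)$. Overshooting $I_T$ to the next checkpoint $s(I_T)$ loses at most $\delta f(O)$ in expectation, because $\sum_{i'=I_T+1}^{s(I_T)}\frac{1}{k-i'+1}\le\delta$.
\end{enumerate}

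Your ``fractional steps'' match exactly this checkpoint geometry. In the paper, however, they are used to find the threshold crossing, not to randomize the stopping time.
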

We give the proof of Lemma~\ref{lem:simpleprep} in Appendix~\ref{sec:preprocessing}. Theorem~\ref{thrm:partition} is derived from applying the algorithm in Corollary~\ref{lem:BanditSwapPoisson} in conjunction with the pre-processing method outlined in Lemma~\ref{lem:simpleprep}.

\begin{proof}[Proof of Theorem~\ref{thrm:partition}]

The algorithm we define is the following:
\begin{enumerate}
 \item Execute the pre-processing algorithm  of Lemma~\ref{lem:simpleprep} on the original instance $I$ with the error parameter $\delta_1=\eps/8$  to find $P$. 
\item Use \gp  together with the swap algorithm from Lemma~\ref{lem:bandit_swap_general} (Corollary~\ref{lem:BanditSwapPoisson})  on $I_P$, the residual instance  of $I$ and $P$, with the error parameter $\delta_2 =\nicefrac{\varepsilon}{(8c)} $
to obtain $A$.
\item  Return $A\cup P$.
\end{enumerate}

First, by the definition of residual instances,  it is easy to note that $ A\cup P\in \cI$.  
Second, 
following Lemmas \ref{lem:BanditSwapPoisson} and \ref{lem:simpleprep} the number of oracle queries used by the algorithm in expectation is at most $O(n\log^2{(1/\varepsilon)}/\varepsilon^2)$.

Third and finally, let us lower bound the expected value of the output: $ \mathbb{E}[f(A\cup P)]$.
For every $R\in \cI$ let $O_R$ be an optimal solution for $I_R$,  the residual instance of $I$ and $R$. 
Also, let $\cP$ be the set of all $P\in \cI$ such that $\marsum(I_P) \leq c\cdot \frac{f(O)}{f(O_P)}$. 
By Corollary~\ref{lem:BanditSwapPoisson} , for every $R\in \cP$ it holds that,
$$
\begin{aligned}
 \E[ f_P(A) \,|\, P=R] \,&\geq \left(1-\delta_2\right) \cdot \left(1-\delta_2\cdot \marsum(I_R)\right)\cdot (1-\nicefrac{1}{e}) \cdot f_R(O_R)\\
&\geq \, \left(1-\frac{\eps}{8\cdot c}\right) \cdot  \left(1-\frac{\eps }{8\cdot c} \cdot \frac{c\cdot f(O)}{f_R(O_R)}\right)\cdot (1-\nicefrac{1}{e}) \cdot f_R(O_R)\\
&\geq \left(1-\frac{\eps}{8}\right)\cdot \left(1-\nicefrac{1}{e}\right)\cdot f_R(O_R)-\frac{\eps}{8}\cdot f(O). 
\end{aligned}
$$
The second inequality holds as $R\in \cP$ and $\delta_2= \frac{\eps}{8c}$. Therefore,
\begin{equation}
\label{eq:simple_exp}
\begin{aligned}
\E[f_P(A)]  \, &=\, \sum_{R\in \cI} \Pr[P=R]\cdot  \E[ f_P(A) \,|\, P=R]\\
&\geq \, \sum_{R\in \cP} \Pr[P=R]\cdot  \E[ f_P(A) \,|\, P=R] \\
&\geq\, \sum_{R\in \cP} \Pr[P=R]\cdot\left( \left(1-\frac{\eps}{8}\right)\cdot \left(1-\nicefrac{1}{e}\right)\cdot f_R(O_R)-\frac{\eps}{8}\cdot f(O)\right)\\
&=\, \left( 1-\frac{\eps}{8}\right) \cdot \E\left[ (1-\nicefrac{1}{e}) \cdot f_P(O_P) \right]  - \left(1-\frac{\eps}{8}\right) \left(1-\frac{1}{e}\right)\cdot \sum_{R\in \cI\setminus \cP} \Pr[P=R] \cdot f_R(O_R)-\frac{\eps}{8}\cdot f(O).
\end{aligned}
\end{equation}
Furthermore, 
\begin{equation}
\label{eq:simple_bad}
\sum_{R\in \cI\setminus \cP} \Pr[P=R] \cdot f_R(O_R) \leq f(O) \cdot \Pr[P\notin \cP] \leq f(O)\cdot \delta_1 = \frac{\eps}{8}\cdot f(O), 
\end{equation}
where the last inequality follows from Lemma~\ref{lem:simpleprep}. 
Plugging \eqref{eq:simple_bad} into \eqref{eq:simple_exp} we attain 
\begin{equation}
\label{eq:simp_expA}
\E[f_P(A)] \geq  \left( 1-\frac{\eps}{8}\right) \cdot \E\left[ (1-\nicefrac{1}{e}) \cdot f_P(O_P) \right]  -   \frac{\eps}{ 8} \cdot f(O) -   \frac{\eps}{ 8} \cdot f(O).
\end{equation}

Using \eqref{eq:simp_expA} and Lemma~\ref{lem:simpleprep}
 we have
 $$
 \begin{aligned}
 \E[f(A\cup P)] \,&= \,\E[f(P)+f_P(A)]\\
 &=\,\E\left[ f(P) + \left(1-\frac{\eps}{8}\right) (1-\nicefrac{1}{e})\cdot f_P(O_P)\right] -\frac{\eps}{4}\cdot f(O)\\
 &\geq\left( 1-\frac{\eps}{8}\right)\cdot  \E\left[ \frac{1}{e}\cdot f(P) + \left(1-\frac{1}{e}\right)\cdot f(P\cup O_P) \right] -\frac{\eps}{4}\cdot f(O)\\
 &\geq \left( 1-\frac{\eps}{4}\right) \cdot \left(1-\frac{1}{e} -\delta_1\right) \cdot f(O) -\frac{\eps}{4}\cdot f(O) \\
 &\geq (1-\nicefrac{1}{e} -\eps)\cdot f(O),
 \end{aligned}
 $$
 which concludes the proof. 
\end{proof}

\subsubsection*{Proof of Lemma~\ref{lem:bandit_swap_general}}

In order to present the swap algorithm for simple partition matroids and oracle for $f$ we first need to define the well-known \mab problem (see, e.g.,~\cite{EMM06}) where we focus on  identifying best arm.

\begin{definition}
\label{def:mab}
In the \mab problem we are given $n$ distributions $ \{ \cD_i\} _{i=1}^n$ with support $ [0,1]$.
Denote by $ \mu _i\triangleq \mathbb{E}_{Z\sim \cD _i} [Z]$ for every $ 1\leq i\leq n$.
The goal is to find $i$ for which $ \mu _i$ is maximized.
\end{definition}
It is important to note that the distributions $ \{ \cD_i\}_{i=1}^n$ are not explicitly given to the algorithm.
The only method by which the algorithm can interact with a distribution $\cD_i$ is to obtain samples from it.
All samples are independent, regardless of whether they are obtained from the same distribution or from different distributions.

We utilize the following result  of \cite{EMM06} for \mab in the PAC model.
\begin{theorem}[\cite{EMM06}]\label{thrm:mab}
There exists a randomized algorithm that given an instance of the \mab problem and $\delta >0$, samples at most $O(n\log{(1/\delta)}/\delta^2)$ times from the distributions $ \{ \cD_i\} _{i=1}^n$ and returns an $ i$ satisfying: $ \mathbb{E}[\mu _i]\geq \mu _{i^*} - \delta$.
Here, $ i^*=\argmax _{1\leq i\leq n}\mu _i$.
\end{theorem}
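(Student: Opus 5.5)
Theorem~\ref{thrm:mab} is quoted from~\cite{EMM06}, so I would not prove it from scratch. The plan is to reduce it to the high-probability $(\epsilon,\gamma)$-PAC guarantee of the \emph{median elimination} algorithm and then convert that into the in-expectation form stated above.

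\textbf{Conversion step.} Suppose an algorithm returns $i$ with $\Pr[\mu_i \ge \mu_{i^*}-\delta/2]\ge 1-\delta/2$. Since all means lie in $[0,1]$, on the bad event we lose at most $\mu_{i^*}\le 1$. Hence
\[
\mathbb{E}[\mu_i]\ge (1-\delta/2)(\mu_{i^*}-\delta/2)\ge \mu_{i^*}-\delta .
\]
So it suffices to run the PAC algorithm with accuracy $\delta/2$ and confidence $\delta/2$.

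\textbf{Median elimination with accuracy $\epsilon$ and confidence $\gamma$.}
\begin{enumerate}
\item Keep a set $S_1=[n]$ of surviving arms, and set $\epsilon_1=\epsilon/4$ and $\gamma_1=\gamma/2$.
\item In round $\ell$, sample every arm of $S_\ell$ exactly $\frac{4}{\epsilon_\ell^2}\ln\frac{3}{\gamma_\ell}$ times, and compute the empirical means.
\item Discard the arms whose empirical mean is below the median.
\item Set $\epsilon_{\ell+1}=\frac34\epsilon_\ell$ and $\gamma_{\ell+1}=\gamma_\ell/2$.
\item Stop when one arm remains, and return it.
\end{enumerate}

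\textbf{Key lemma.} Conditioned on the history, with probability at least $1-\gamma_\ell$ the best mean in $S_{\ell+1}$ is at least the best mean in $S_\ell$ minus $\epsilon_\ell$. Let $b$ be the best arm of $S_\ell$.
\begin{itemize}
\item By Hoeffding, $b$'s empirical mean falls below $\mu_b-\epsilon_\ell/2$ with probability at most $\gamma_\ell/3$.
\item Suppose this does not happen. Any arm with mean below $\mu_b-\epsilon_\ell$ then exceeds $b$'s empirical mean only if its own estimate overshoots by $\epsilon_\ell/2$. By Hoeffding this has probability at most $\gamma_\ell/3$.
\item So the expected number of such "bad overtakers" is at most $|S_\ell|\gamma_\ell/3$.
\item By Markov's inequality, at least $|S_\ell|/2$ bad arms overtake $b$ with probability at most $2\gamma_\ell/3$.
\item Unless that happens, some arm that is either $b$ or $\epsilon_\ell$-close to $b$ survives the median cut.
\end{itemize}
A union bound over rounds gives total failure probability $\sum_\ell\gamma_\ell\le\gamma$. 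The total accuracy loss is $\sum_\ell \epsilon_\ell\le \frac{\epsilon}{4}\cdot 4=\epsilon$.

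\textbf{Sample count.} Round $\ell$ costs $\frac{n}{2^{\ell-1}}\cdot\frac{4}{\epsilon_\ell^2}\ln\frac{3}{\gamma_\ell}$ samples. Passing from one round to the next, the factor $\frac{n}{2^{\ell-1}}\cdot\frac{1}{\epsilon_\ell^2}$ changes by $\frac12\cdot\frac{16}{9}=\frac89<1$. Meanwhile $\ln(3/\gamma_\ell)$ grows only additively, by $\ln 2$ per round. The sum is therefore dominated by a geometric series and equals $O\!\left(\frac{n}{\epsilon^2}\ln\frac1\gamma\right)$. Substituting $\epsilon=\gamma=\delta/2$ gives $O(n\log(1/\delta)/\delta^2)$.

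\textbf{Main obstacle.} The delicate step is the key lemma. The best arm itself may be eliminated, so the argument must track "some near-best arm survives" rather than "the best arm survives". The Markov step has to be applied conditionally on $b$'s estimate being good. The rest is routine summation.
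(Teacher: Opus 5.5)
Your proposal is correct and consistent with the paper. The paper gives no proof of its own and simply imports the result from~\cite{EMM06}, whose guarantee is an $(\epsilon,\delta)$-PAC one, so reconstructing median elimination and adding your PAC-to-expectation conversion is exactly the right route. One small repair: in each round, eliminate exactly $\lfloor |S_\ell|/2\rfloor$ arms of lowest empirical mean, breaking ties arbitrarily. Under the literal rule ``discard arms below the median,'' ties can prevent any elimination and void the sample bound; your overtaking argument still goes through with $\hat p_j\ge\hat p_b$.
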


In our swap algorithm we use the fact that 
$$\nabla_i F(t\cdot \indicator_A)\,=\,F(t\cdot \indicator_{A+i}) - F(t\cdot \indicator_{A-i})  \,=\,\E_{R\sim t\cdot \indicator_A}\left[f(R\cup\{i\}) -f(R\setminus\{i\})\right],$$ 
and furthermore, we can sample the values out of the distribution of  $f(R\cup\{i\}) -f(R\setminus\{i\}) $ by sampling a random set $R$ and initiating two value queries for $f$.
This allows us to use the \mab\ algorithm  of Theorem~\ref{thrm:mab} to find $i\in U_r$ which  (approximately)  maximizes $\nabla_i F(t\cdot \indicator_A)$.
For every $j\in [k]$ let $\alpha_j= \max_{i\in U_j} \left( f(\{i\})-f(\emptyset)\right)$. 
In order to meet the requirements of the \mab\ algorithm we need to scale the values of $f(R\cup\{i\}) -f(R\setminus\{i\}) $ by a factor of $\frac{1}{\alpha_j}$ to attain samples within the range $[0,1]$.
We give the pseudocode of the swap procedure in Algorithm~\ref{alg:BanditSwap}.
\begin{algorithm}[t]
\caption{\BanditSwap $(t, A)$}
\SetKwInOut{Input}{input}
\SetKwInOut{Output}{output}
\SetKwInOut{Configuration}{Config}
\Configuration{Accuracy parameter $\delta>0$ }

\SetAlgoNlRelativeSize{0}
\label{alg:BanditSwap}

     Let $A= \{a_1,\dots, a_k\}$ where $a_j \in U_j$ for all $j\in [k]$. 

      Select $r\in [k]$ uniformly at random.

    Let $ \cD_i$ be the distribution of $ \left(f(R\cup \{ i\})-f(R\setminus \{ i\})\right)/\alpha_r$ where $ R\sim t\indicator_A$, $ \forall i\in U_r$ and $\alpha_r = \max_{i\in U_r} \left(f(\{i\})-f(\emptyset)\right)$.\label{BanditSwap:distribution}

    let $i^*$ be the output of the algorithm of Theorem \ref{thrm:mab} on $ \{ \cD_i\}_{i\in U_r}$ and $ \delta$. \label{BanditSwap:mab}
    
    Return  $a_r$ and $i^*$.    
 	
\end{algorithm}

\begin{proof}[Proof of Lemma~\ref{lem:bandit_swap_general}]
We prove that Algorithm~\ref{alg:BanditSwap} satisfies the conditions of the lemma.

We first observe that Algorithm~\ref{alg:BanditSwap} is indeed a swap algorithm as both $i^*\in U_r$ and $a_r \in U_r$. This implies that $A-a_r +i^*$ remains a basis of the matroid.  

The number of initiated queries to $f$ can be upper bounded by a simple argument. Conditioned on $r=j$, by Theorem~\ref{thrm:mab}, the number of queries to the distributions $\cD_i$ initiated in Line~\ref{BanditSwap:mab} is $O\left(\abs{U_j} \cdot \frac{1}{\delta} \cdot \ln\left(\frac{1}{\delta}\right)\right)$. 
Each query to  one of the distributions $\cD_i$ can be implemented by sampling $R\sim t\cdot \indicator_A$ and two queries for $f$. 
Also, the value $\alpha_j$ can be computed once using $O(\abs{U_j})$ queries to $f$. 
Overall,  the  number of queries to $f$, conditioned on $r=j$, is  $O\left(\abs{U_j} \cdot \frac{1}{\delta} \cdot \ln\left(\frac{1}{\delta}\right)\right)$. 
Since $r$ is uniformly random this implies that the (unconditioned) expectation of the number of queries to $f$ is $O\left(\frac{n}{k} \cdot \frac{1}{\delta} \cdot \ln\left(\frac{1}{\delta}\right)\right)$.

We are left to show the algorithm is indeed a $(1,\eta)$-approximate swap algorithm for the instance. 
Let $O=\{o_1,\ldots, o_k\}$ be an optimal solution for the instance such that $o_j\in U_j$ for every $j\in[k]$. 
We first condition on $r=j$ for some $j\in [k]$, and define $\mu_i = \E_{Z\sim \cD_i}[Z] = \frac{\nabla_i F(t\cdot \indicator_A)}{\alpha_j}$ for every $i\in U_j$. By Theorem~\ref{thrm:mab} we have
$$
\frac{1}{\alpha_j}\cdot \E[\nabla_{i^*} F(t\cdot \indicator_A)\,|\,r=j] = \E[\mu_{i^*} \,|\,r=j] \geq \max_{i\in U_j} \mu_i - \delta  \geq \mu_{o_j} -\delta.
$$
Therefore, 
$$
\E[\nabla_{i^*} F(t\cdot \indicator_A)\,|\,r=j]   \,\geq \,\alpha_j\cdot \mu_{o_j} -\delta\cdot \alpha_j \,=\, \nabla_{o_j} F(t\cdot\indicator_A) -\delta \cdot\alpha_j. 
$$
Since $r\in [k]$ is uniformly random this implies that 
$$
\E[\nabla_{i^*} F(t\cdot \indicator_A)] \,=\,\frac{1}{k}\cdot \sum_{j=1}^{k} \E[\nabla_{i^*} F(t\cdot \indicator_A)\,|\,r=j] \,\geq \, \frac{1}{k} \cdot \left(\sum_{j=1}^{k }\nabla_{o_j} F(t\cdot\indicator_A) -{\delta} \cdot \sum_{j=1}^{k}\alpha_j\right).
$$

Similarly, since $r\in[k]$ is uniformly random,  we have $\E\left[\nabla_{a_r} F(t\cdot \indicator_A) \right] =\frac{1}{k}\cdot \sum_{j=1}^{k}  \nabla_{a_j} F(t\cdot \indicator_A) $.
Therefore,
$$
\begin{aligned}
 \E[\nabla_{i^*} F(t\cdot \indicator_A)-\nabla_{a_r} F(t\cdot \indicator_A) ]\,&\geq\, \frac{1}{k} \left(\cdot \sum_{j=1}^{k }\nabla_{o_j} F(t\cdot\indicator_A) -\delta \cdot \sum_{j=1}^{k}\alpha_j-\sum_{j=1}^{k}  \nabla_{a_j} F(t\cdot \indicator_A) \right)
    \\
& =\frac{1}{k}\cdot  \left( \sum_{j=1}^{k }\nabla_{o_j} F(t\cdot\indicator_A)   -\sum_{j=1}^{k}  \nabla_{a_j} F(t\cdot \indicator_A)\right)  -\frac{\eta}{k}\cdot f(O),
\end{aligned}$$ 
where the last equality follows from the definition of $\eta$. Overall, we showed that Algorithm~\ref{alg:BanditSwap} is a $(1,\eta)$-approximate swap algorithm.
\end{proof}

\subsection{The Generalized Assignment and Separable Assignment Problems}\label{sec:GAPSAP}
We consider an application of our techniques to both the \gaplong (\gap) and the \saplong (\sap) and prove Theorem~\ref{thm:gap_main} and Theorem~\ref{thm:sap_main}.
{In order to prove the theorems, we follow the treatment in~\cite{CCPV11} and consider the more general \saplong, whose algorithm is based on reducing the problem to maximizing a monotone submodular function subject to a partition matroid independence constraint.}
The main challenge is that the ground set of the submodular function and the matroid is exponential in size.
Nonetheless, we use \gp for the problem.
The main ingredient in the implementation is to show that the there is a suitable swap procedure that can be implemented fast. 

\paragraph{\saplong.}
An instance of \sap consists of $n$ items and $m$ bins.
Each bin $i$ has an associated collection of feasible sets $\F_i\subseteq 2^{[n]}$ which is down-closed: $A \in \F_i$ and  $B\subseteq A$ imply that $B \in \F_i$.
Each item $j$ and bin $i$ have a value $v_{ij}$. 
The goal is to choose disjoint feasible sets $S_i \in \F_i$ of items while maximizing: $\sum_{i=1}^m \sum_{j\in S_i}v_{ij}$.
Observe that \sap captures \gap when each $\F_i$ is the collection of all subsets of items whose size with respect to bin $i$ is at most one: $\F_i = \{ S\subseteq [n]\colon \sum _{j\in S}s_{ij}\leq 1\}$.

We assume there is an efficient algorithm \ApproxKnapsack that for any bin $i$, given non-negative weights $w_j$ on items for each $j\in [n]$, returns in time $p(n,\alpha)$ an $\alpha$-approximation to the maximum weight set $T\in \F_i$ where the weight of a set $T$ is defined as: $w(T)=\sum_{j\in T} w_j$.
For \gap the above problem is exactly the knapsack problem with $n$ items and there exists an efficient $(1-\varepsilon)$-approximation that runs in time $\tilde{O}(n+\nicefrac{1}{\varepsilon^2})$~\cite{chen2024nearly}. 

We reduce \sap to maximizing a monotone submodular function subject to a partition matroid independence constraint (as in~\cite{CCPV11}).
This is achieved by utilizing \gp and presenting an $(\alpha,0)$-approximate swap procedure that can be implemented fast. 

\def \X {\mathcal{X}}
\paragraph{Reduction to Monotone Submodular Maximization subject to a Partition Matroid.}
Set the ground set $\X=\{(i,S):i\in [m], S\in \F_i\}$.
We define a function $f\colon 2^{\X}\rightarrow \mathbb{R}_+$ as follows: 
$$ f(\mathcal{S})\triangleq  \sum_{i=1}^m \max \{v_{ij}\colon (i,S)\in \mathcal{S}, j\in S\} ~~~~~\forall \mathcal{S}\subseteq \X.$$

Here $\mathcal{S}\subseteq \X$ denotes a collection of valid assignments of items to each bin, where each item can be assigned to multiple bins.
The function $f$ on such an assignment picks the value of the item with respect to the \emph{best} bin that it is assigned to.
It is easy to verify that $f$ is monotone and submodular.
The reduction is completed by maximizing this function subject to a partition matroid independence constraint.
More precisely, the matroid is $\cM=(\X,\cI)$ where $\mathcal{S}\in \cI$ if for every $i$ it contains at most a single $(i,S_i)$ that corresponds to a valid assignment of items to bin $i$.
Thus, $\cM$ is a partition matroid where the partition $ \X_1,\ldots,\X_m$ of $\X$ is defined as $ \X_i = \{ (i,S)\in \X\}$.

Given an instance of \sap, we use the above reduction and apply \gp.
While the ground set is exponential in size, observe that the rank of the matroid $\cM$ is only $m$ (the number of bins). Thus, the expected total number of iterations taken by \gp is $m\ln{(\nicefrac{1}{\epsilon})}$.
We now show how to implement an $(\alpha,0)$-approximate swap procedure fast using \ApproxKnapsack. 

\paragraph{The Swap Procedure.}
Let $\mathcal{A}\subseteq \mathcal{X}$ denote a basis of the matroid, i.e, $\mathcal{A}=\{(i,S_i):i\in [m]\}$ where each bin is assigned a single set $S_i\in \F_i$ of items (although each item can be assigned to multiple bins).
Given a time $t$ and a basis $\mathcal{A}$, the swap procedure picks a bin $i$ (or equivalently $({i},S_{i})$) uniformly at random and finds $(i,T_i)$ such that $T_i$ is an $(\alpha-\varepsilon)$-approximate optimum solution to the following:
\begin{equation}\label{eqn:sep_swap}
    \max \{ \nabla F(t\A)^\top {\bf{1}}_{(i,T)}\colon T\in \F_i\}.
\end{equation}
Now the swap procedure returns the pair of elements $(i,S_i)$ and $(i,T_i)$ ($(i,S_i)$ is to be swapped with $(i,T_i)$).
Since $(i,S_i)$ is a uniform random element in the base $\mathcal{A}$ and $(i,T_i)$ is an $(\alpha-\epsilon)$-approximation to~\eqref{eqn:sep_swap}, it is easy to verify that the swap procedure is indeed $(\alpha-\varepsilon,0)$-approximate. 

We now show how to solve the problem $\eqref{eqn:sep_swap}$ using the given \ApproxKnapsack algorithm.
For any $T\in \F_i\setminus \{S_i\}$, we claim that:
$$  \nabla F(t\A)^\top {\bf{1}}_{(i,T)}= \sum_{j\in T} w_j$$
for weights $w\colon [n]\rightarrow \mathbb{R}_+$ we define shortly.
Given these weights, we use  \ApproxKnapsack  to compute $\max\{w(T)\colon T\in \F_i\}$.
We also compute $ \nabla F(t\A)^\top {\bf{1}}_{(i,S_i)}$ separately and return the better of the two solutions as $T_i$.  



Now we define the weights.
Set $I_{kj}=1$ for any bin $k$ and item $j$ if $j\in S_k$ {and $0$ otherwise}.
For every $j\in [n]$, let $\sigma_j\in S_m$ denote the permutation such that $v_{\sigma_j(1),j}\geq v_{\sigma_j(2),j}\geq \ldots \geq v_{\sigma_j(m),m}$ and let $\sigma_j(l_j)={i}$. Then for any $T\in \F_i\setminus \{S_i\}$, a simple calculation shows that:
\begin{align*}
    \nabla F(t\A)^\top {\bf{1}}_{(i,T)}&= F(t\A+{\bf{1}}_{(i,T)}) - F(t\A)\\
   & =\sum_{j\in T} \left(I_{i,j} v_{i,j} \left(1 -t\right)^{\sum_{i'<l_j} I_{\sigma_j(i'),j}} -\sum_{k=l_j}^m
    v_{\sigma_j(k),j}t I_{\sigma_j(k),j} \left(1 -t\right)^{\sum_{i'<k} I_{\sigma_j(i'),j}} \right) .
\end{align*}

Thus, we define $w_j$ as the {$j$\textsuperscript{{th}}} summand in the equation above. For the special case of $(i,S_i)$ observe that: 
\begin{align*}
    \nabla F(t\A)^\top {\bf{1}}_{(i,S_i)}&= \frac{1}{1-t}\left(F(t\A+{\bf{1}}_{(i,S_i)}) - F(t\A)\right)\\
   & =\frac{1}{1-t} \sum_{j\in S_i} w_j ,
\end{align*}
and thus $ \nabla F(t\A)^\top {\bf{1}}_{(i,S_i)}$ can also be computed efficiently after computing the weights. 
A naive implementation to compute the weights is to first pre-compute the permutations $\sigma_j$ for each $j$ in time $O(mn\log m)$ before the beginning of the iterations. The algorithm also maintains arrays storing the binary values $I_{kj}$ for each $k\in [m]$ and $j\in [n]$. In every iteration, for job $j\in [n]$, we make one pass through the permutation to compute $w_j$ in linear time $O(m)$. Updating the values $I$ only takes $O(n)$ time. Thus the total time spent per iteration is $O(mn)$. This gives a total running time of $\tilde{O}\left(\frac{m}{\varepsilon} \left(p(n,\varepsilon)+ mn\right)\right)$.

We now show how to reduce the per iteration time complexity to $\tilde{O}(\frac{n}{\varepsilon})$ using standard rounding of weights. As an initialization step of the algorithm, we round the values $v_{kj}$ to the nearest power of $(1+\epsilon)$. 
Ignoring all $v_{kj}\leq n\cdot m \cdot \max_{k,j} v_{kj}$, we obtain that the total number of distinct values of $v_{kj}$ is at most $O
\left(\log{(mn)}/\varepsilon\right)$. Thus, it is enough to store \emph{integer} values $I'_{cj}$, number of machines $k$ such that $v_{kj}=c$ and $j\in S_k$. The weights can be computed using these $\tilde{O}\left(\frac{n}{\varepsilon}\right)$ integer values by making one pass through these values. Putting everything together, we get the guarantee claimed in Theorem~\ref{thm:sap_main}.  

\subsection{Generalized Partition Matroid}

Finally, we consider generalizations of the submodular maximization problems under a partition matroid, discussed in Section~\ref{sec:simple}, to general partition matroids.
Formally, we are given a monotone submodular function $f\colon U\rightarrow \mathbb{R}_+$ along with a value oracle to its multilinear extension $F$ or the original set function $f$ (depending on the setting), and a generalized partition matroid $\cM=(U,\cI)$ with partition $ U_1,\ldots,U_k$ of $U$, where each part $U_j$ has an upper bound $\ell_j\in\mathbb{N}$ on the number of items which can be selected from this part.
The goal is to find $S\subseteq U$, where $ |S\cap U_j|\leq \ell_j$ for every $ j\in [k]$, while maximizing $ f(S)$.
We refer to the problem with an oracle to $F$ as \partFprob, and to the problem with an oracle to $f$ as \partfprob.

\subsubsection{Oracle for $F$}
\label{sec:generalizeF}

Let $(U,\cI)$, where $\cI = \{ S\subseteq U\,|\, \forall j\in [k]:~\abs{S\cap U_j}\leq \ell_j\}$ be the matroid of the instance, and  let $r=\sum_{j=1}^{k}\ell_j$ be the rank of the matroid. We also denote a \partFprob\ instance by the a tuple \((f, U, \{U_j\}_{j \in [k]}, \{\ell_j\}_{j \in [k]})\). 

Our goal is to provide a fast approximation algorithm for \partFprob\ based on the \gp, by introducing a fast swap algorithm for \partFprob\ instances.
The swap algorithm we  present builds upon the same core ideas as \simpleF\ (Algorithm~\ref{alg:simpleF}).
However, in order to maintain a near-linear number of queries we  also incorporate ideas taken from the Stochastic-Greedy of \cite{MBKV15} (the same algorithm appears under the name ``Random Sampling Algorithm'' in \cite{BFS17}). 
Specifically, the swap algorithm randomly selects a part $j\in [k]$, and subsequently  samples a subset of items $X\subseteq U_j$ of size $\frac{\abs{U_j}}{\ell_j}\cdot \log \left(\frac{1}{\delta}\right)$. The swap algorithm then returns a random item from $A\cap U_j$ and the item $i^*\in X$ which maximizes $\nabla_{i^*} F(t\cdot \indicator_A)$. As we show, this (nearly) suffices for an approximate swap algorithm.

The approach depicted above faces difficulties in case the item $i^*$ is already in $A$.
We overcome these difficulties by running \gp on a reduced instance which has multiple copies of each item. 
We then sample $X$ from a set of copies of items in $U_j$ which do not intersect with $A$. Conceptually, this is similar to maintaining a multiset $A$, however, this requires to carefully define the 
multilinear extension as in Observations~\ref{obv:G_by_F}.

\paragraph{The Reduced Instance}
Let $I=(f,U, \{U_j\}_{j\in [k]}, \{\ell_j\}_{j\in [k]})$  be a \partFprob\ instance, and let $r$ be the rank of the matroid of the instance. 
The {\em reduced} instance of $I$ is the \partFprob\ instance $I'=(g,\tU, \{\tU_j\}_{j\in [k]}, \{\ell_j\}_{j\in [k]})$ where $\tU= U\times [ r+1]$, $\tU_j=U_j\times [r+1]$, and $g:2^{\tU}\rightarrow \mathbb{R}_{\geq 0}$ is defined by 
$$
g(Q) = f\left(\{i\in U\,\middle|\, (\{i\}\times [r+1])\cap Q\neq \emptyset\}\right)
$$
for every $Q\subseteq \tU$.
For every $i\in U$, 
we refer to the items $(i,1),(i,2),\ldots, (i,r+1)\in \tU$ as the {\em copies} of $i$. 
In particular, $g(Q)=f(S)$, where $S\subseteq U$ is the set of items which have a copy (or more) in $Q$.  
It can be easily verified that $g$ is also monotone non-negative submodular function.

Let $F$ and $G$ be the multilinear extensions of $f$ and $g$ respectively.
The next observation shows that 
 $G(y)$ can be evaluated using a single oracle query for $F$. 
\begin{observation}
    \label{obv:G_by_F}
Let $y\in [0,1]^{\tU}$, and define $x\in [0,1]^U$ by $x_i= 1- \prod_{s=1}^{r+1}(1-y_{(i,s)})$ for every $i\in U$. Then,
$G(y) = F(x)$.
\end{observation}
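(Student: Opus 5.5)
The plan is to compare the two random sets directly. I will write $G(y)=\E_{Q\sim y}[g(Q)]$, where $Q\subseteq \tU$ contains each copy $(i,s)$ independently with probability $y_{(i,s)}$. By the definition of $g$, we have $g(Q)=f(\pi(Q))$, where
\[
\pi(Q)\triangleq\{i\in U \mid (\{i\}\times[r+1])\cap Q\neq\emptyset\}
\]
is the projection of $Q$ onto $U$. It follows that $G(y)=\E[f(\pi(Q))]$. The task therefore reduces to identifying the distribution of the random set $\pi(Q)\subseteq U$.

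First, for a fixed $i\in U$, the event $i\in\pi(Q)$ is the event that at least one copy $(i,1),\ldots,(i,r+1)$ lies in $Q$. These copies are included independently, so
\[
\Pr[i\in\pi(Q)]=1-\prod_{s=1}^{r+1}(1-y_{(i,s)})=x_i .
\]
Second, for distinct $i,i'\in U$, the events $i\in\pi(Q)$ and $i'\in\pi(Q)$ depend on the disjoint families of coordinates $\{i\}\times[r+1]$ and $\{i'\}\times[r+1]$. Since all coordinates of $Q$ are independent, the indicators $\{\indicator[i\in\pi(Q)]\}_{i\in U}$ are mutually independent, because functions of disjoint blocks of independent variables are independent. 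Together these two facts say that $\pi(Q)$ has exactly the distribution $R\sim x$.

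Finally, $G(y)=\E[f(\pi(Q))]=\E_{R\sim x}[f(R)]=F(x)$, using the probabilistic interpretation of the multilinear extension recalled in the introduction. If one prefers to avoid the language of distributions, the same argument can be written as an explicit sum: group the terms $\sum_{Q}g(Q)\Pr[Q]$ by the value $S=\pi(Q)$ and factor the resulting probability over the blocks. The block for $i\in S$ contributes $x_i$ and the block for $i\notin S$ contributes $\prod_s(1-y_{(i,s)})=1-x_i$.

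I do not expect any step to be a real obstacle. The only point that needs care is stating the independence across different items $i$ explicitly, since it rests on the copies of distinct items occupying disjoint coordinate blocks.
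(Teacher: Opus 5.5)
Your proof is correct: the projection $\pi(Q)$ includes each item $i$ independently with probability $x_i$, so $G(y)=\E[f(\pi(Q))]=F(x)$. The paper states this observation without a proof, and your argument is the standard justification it implicitly relies on.
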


Let $O=\{o_1,\ldots, o_r\}$ be an optimal solution for the original instance $I$. It can be easily observed if we pick a copy of each item in $O$ into $O'$ we can obtain an optimal solution for $I'$. That is, for every set of indices $i_1,\ldots,i_r\in [r+1]$ it holds that $O'=\{ (o_1,i_1),\ldots, (o_r,i_r)\}$ is an optimal solution for $I'$. Furthermore, the optimal solution values for both $I$ and $I'$ are the same.  We use $n=\abs{U}$ to denote the universe size of the original instance $I$.

\begin{algorithm}[t]
\caption{\genF $(t, A)$}
\SetKwInOut{Input}{input}
\SetKwInOut{Output}{output}
\SetAlgoNlRelativeSize{0}
\label{alg:paritionF}
\SetKwInOut{Configuration}{Config}
\Configuration{Accuracy parameter $\delta>0$ }

      For every $i\in U$ select a copy $(i,s) \in (\{i\}\times [r+1])\setminus A$ and add $(i,s)$ to a set $C$. \label{genF:candidates}

      Select a random part $j\in [k]$ such that $\Pr(j=q)\,=\, \frac{\ell_q}{r}$ for all $q\in [k]$. 

      Sample a subset $X\subseteq C\cap \tU_j$ of size $\min\left\{\frac{\abs{U_j}}{\ell_j}\cdot \ln \left(\frac{1}{\delta}\right) , \abs{U_j} \right\}$ uniformly at random. \label{genF:X}

      For every item $(i,s)\in X$ compute $w_{(i,s)} = \nabla_{(i,s)} G(t\cdot \indicator_{A}) = G(t\cdot \indicator_{A} + \indicator_{\{(i,s)\}}) -G(t\cdot \indicator_{A})$.\label{genF:compute}

       $(i^*,s^*)\leftarrow \argmax_{(i,s)\in X} w_{(i,s)} $.\label{genF:argmax}

       Sample an item $(a_i,a_s)$ from $A\cap \tU_j$ uniformly at random.

       Return  $(a_i,a_s)$ and $(i^*,s^*)$.

\end{algorithm}

\paragraph{The Swap Algorithm} 
In order to attain an approximate solution for $I$ we run  \gp on the reduced instance $I'$.
The pseudocode of our swap algorithm for $I'$ is given 
in Algorithm~\ref{alg:paritionF}. 
The algorithm utilizes an accuracy parameter $\delta>0$ that controls the trade-off between the quality of the algorithm and its running time. 
The algorithm first arbitrary selects a copy of each item in $U$ which is not in $A$.
Let $C\subseteq \tU$ be the set of selected copies. 
Subsequently, the algorithm picks a part $j\in[k]$ at random, and then randomly selets set $X\subseteq C\cap \tU_j$ of items from that part of size $\frac{\abs{U_j}}{\ell_j}\cdot \ln \left(\frac{1}{\delta}\right)$. 
Finally, the algorithm picks a random item from $(a_i,a_s)\in A\cap \tU_j$, and returns $(a_i,a_s)$ together with the item $(i^*,s^*)$ in $X$ which maximizes $\nabla_{(i^*,s^*)} G(t\cdot \indicator_A)$.

We first show  that the algorithm is indeed a swap procedure, and upper bound its running time.  We then extend this result to prove that it is an approximate swap procedure.
\begin{lemma}
\label{lem:genF_basics}
Algorithm~\ref{alg:paritionF} is a swap algorithm which uses in expectation $O\left(\frac{n}{r}\cdot \ln\left(\frac{1}{\delta}\right)\right)$ oracle queries for~$F$.
\end{lemma}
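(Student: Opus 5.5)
The proof has two parts: checking that Algorithm~\ref{alg:paritionF} always returns a valid swap for the reduced instance $I'$, and bounding its expected number of $F$-queries.

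\textbf{Validity.} Fix a base $A$ of the matroid of $I'$, so $\abs{A\cap \tU_q}=\ell_q$ for every $q\in[k]$.

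First, line~\ref{genF:candidates} is well defined. Each $i\in U$ has $r+1$ copies, while $\abs{A}=r$. So at least one copy $(i,s)\notin A$ always exists, and $C$ contains exactly one copy of every $i\in U$. Hence $\abs{C\cap\tU_j}=\abs{U_j}$, and the set $X$ in line~\ref{genF:X} can be sampled because its size is at most $\abs{U_j}$.

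Second, the part $j$ is chosen with probability $\ell_j/r$, which is zero when $\ell_j=0$. So whenever the algorithm reaches the sampling step from $A\cap\tU_j$, that set is nonempty (it has size $\ell_j\ge 1$).

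Third, the returned pair $(a_i,a_s)\in A\cap \tU_j$ and $(i^*,s^*)\in X\subseteq \tU_j\setminus A$ lie in the same part $\tU_j$. Therefore $A-(a_i,a_s)+(i^*,s^*)$ still has exactly $\ell_q$ elements in every part $\tU_q$, so it is a base. This is the same argument used for Algorithm~\ref{alg:simpleF}. Right continuity in $t$ is obtained as there, by a fixed tie-breaking rule in the $\argmax$ of line~\ref{genF:argmax}.

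\textbf{Query bound.} The only oracle calls occur in line~\ref{genF:compute}. Each weight $w_{(i,s)}=G(t\indicator_A+\indicator_{\{(i,s)\}})-G(t\indicator_A)$ needs two evaluations of $G$. By Observation~\ref{obv:G_by_F}, each evaluation of $G$ is one query to $F$, at the point $x$ computed coordinatewise from $y$ with no oracle calls. Moreover, $G(t\indicator_A)$ is shared by all items of $X$, so line~\ref{genF:compute} costs $\abs{X}+1$ queries.

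Conditioned on $j=q$, we have $\abs{X}\le \frac{\abs{U_q}}{\ell_q}\ln(1/\delta)$. Taking the expectation over $j$,
\[
\E\left[\abs{X}+1\right]\le 1+\sum_{q:\,\ell_q>0}\frac{\ell_q}{r}\cdot\frac{\abs{U_q}}{\ell_q}\ln\left(\frac{1}{\delta}\right)\le 1+\frac{n}{r}\ln\left(\frac{1}{\delta}\right).
\]
Since $r\le n$, and assuming $\delta\le 1/e$ (otherwise we absorb the additive constant into the $O(\cdot)$), the additive $1$ is dominated. This gives $O\left(\frac{n}{r}\ln\left(\frac1\delta\right)\right)$ expected queries.

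The only step needing care is this expectation: the $\ell_q/r$ sampling weights must cancel the $1/\ell_q$ factor in the size of $X$, and they do.
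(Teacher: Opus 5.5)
Your proof is correct and is essentially the paper's argument. Validity holds because both returned items lie in $\tU_j$ and $(i^*,s^*)\in C$ lies outside $A$; the query bound comes from Observation~\ref{obv:G_by_F} plus the cancellation $\frac{\ell_q}{r}\cdot\frac{\abs{U_q}}{\ell_q}=\frac{\abs{U_q}}{r}$. Your extra checks only make the argument more careful: a copy outside $A$ always exists since $\abs{A}=r$, parts with $\ell_q=0$ are handled, and the evaluation of $G(t\indicator_A)$ is shared.

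One minor caveat concerns your right-continuity aside, which the paper does not address here. Breaking ties by a fixed order on elements does not by itself make $p_{(i,j)}(t,A)$ right continuous, because two weights (polynomials in $t$) can tie at some $t_0$ and separate immediately afterwards. Ties should instead be broken by which weight is larger on a right neighborhood of $t$.
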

\begin{proof}
The algorithm returns two items $(a_i,a_s)$ and $(i^*,s^*)$ which belong to the same part $\tU_j$. Also, by the selection of $C$ in Line~\ref{genF:candidates} it also holds that   $(i^*,s^*)\notin A$. Therefore $A-(a_i,s_i)+ (i^*,s^*)$ is also a basis of the matroid of the instance $I'$. 

Conditioned on $j=q$ for some $q\in[k]$, the number of queries to $F$ initiated by the algorithm is $O\left( \frac{\abs{U_q}}{\ell_q}\cdot \ln \left(\frac{1}{\delta}\right)\right)$. This is true since the value of $G(x)$ can be evaluated using a single query to $F$ (Observation~\ref{obv:G_by_F}), and due to the selected size of the set $X$ in Line~\ref{genF:X}. Thus, the number of oracle queries initiated by the algorithm in expectation is
$
O\left(\E\left[ \frac{\abs{U_j}}{\ell_j}\right] \cdot  \ln \left(\frac{1}{\delta}\right)\right)$. Additionally,
$$
\E\left[ \frac{\abs{U_j}}{\ell_j}\right] \,=\,\sum_{q=1}^{k} \Pr[j=q] \cdot \frac{\abs{U_q}}{\ell_q} \,=\,\sum_{q=1}^{k} \frac{\ell_q}{r} \cdot  \frac{\abs{U_q}}{\ell_q} \,=\,\frac{n}{r}.  
$$
Therefore, the number of oracle queries the algorithm uses, in expectation, is $O\left(\frac{n}{r}\cdot \ln \left(\frac{1}{\delta}\right)\right)$. 
\end{proof}

Next, we show that Algorithm~\ref{alg:paritionF} is a $(1-\delta,1)$-approximate swap algorithm. To do so, we utilize the following auxiliary lemma, which was implicitly proven in \cite[Lemma 2]{MBKV15} (and also in \cite{BFS17}).
\begin{lemma}
\label{lem:sampling}
Let $E$ be a set of $n$ elements, where each elements has a non-negative weight $w_e\geq 0$. Additionally, let $M\subseteq E$ be a subset of $m\in \mathbb{N}$ elements form $E$ and $\delta>0$. Finally, let $X\subseteq E$ be a uniformly random set of $E$ of size $\min\left\{\ceil{\frac{n}{m}\cdot \ln\left(\frac{1}{\delta}\right)},\,n \right\}$. Then,
$$
\E\left[\max_{e\in X} w_e\right] \geq (1-\delta)\cdot \frac{\sum_{e\in M} w_e}{m}. 
$$
\end{lemma}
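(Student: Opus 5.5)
We prove Lemma~\ref{lem:sampling} with the standard stochastic-greedy argument. First dispose of the trivial case: if $|X|=n$ then $X=E$ and $\max_{e\in X}w_e\geq \max_{e\in M}w_e\geq \frac{1}{m}\sum_{e\in M}w_e$, so we may assume $s\triangleq|X|=\ceil{\frac{n}{m}\ln(\frac1\delta)}<n$. We may also assume $m\geq 1$.

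Sort the elements of $M$ by weight, $w_{e_1}\geq w_{e_2}\geq\dots\geq w_{e_m}$. For $\ell=1,\dots,m$ let $B_\ell$ be the event $X\cap\{e_1,\dots,e_\ell\}\neq\emptyset$. If $B_\ell$ holds then $\max_{e\in X}w_e\geq w_{e_\ell}$. Set $w_{e_{m+1}}\triangleq 0$. By non-negativity and Abel summation,
\[
\E\Big[\max_{e\in X}w_e\Big]\;\geq\;\sum_{\ell=1}^m \Pr[B_\ell]\,(w_{e_\ell}-w_{e_{\ell+1}}),
\]
since the event that the maximum is at least $w_{e_\ell}$ contains $B_\ell$, and the right-hand side equals $\E$ of a layer-cake lower bound. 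The target quantity also telescopes:
\[
\frac1m\sum_{e\in M}w_e=\sum_{\ell=1}^m\frac{\ell}{m}(w_{e_\ell}-w_{e_{\ell+1}}).
\]
Since all differences are non-negative, it suffices to show $\Pr[B_\ell]\geq(1-\delta)\frac{\ell}{m}$ for every $\ell\le m$.

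This per-level bound is the main step. We have $1-\Pr[B_\ell]=\binom{n-\ell}{s}/\binom{n}{s}\leq (1-\ell/n)^s\leq e^{-s\ell/n}$. With $s\geq \frac nm\ln\frac1\delta$ this gives $\Pr[B_\ell]\geq 1-e^{-(\ell/m)\ln(1/\delta)}$. Write $x=\ell/m\in[0,1]$. The function $x\mapsto 1-e^{-x\ln(1/\delta)}$ is concave and equals $0$ at $x=0$ and $1-\delta$ at $x=1$. By concavity it is at least $(1-\delta)x$ on $[0,1]$, which is exactly the needed bound.

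Combining the per-level bound with the two telescoped sums gives $\E[\max_{e\in X}w_e]\geq(1-\delta)\frac1m\sum_{e\in M}w_e$. The only delicate points are the hypergeometric bound for sampling without replacement and the concavity step. If $\delta\ge1$ the claim is trivial, so we assume $\delta<1$ for the logarithm to be positive.
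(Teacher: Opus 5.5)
Your proof is correct, but it takes a different route from the one the paper relies on. The paper gives no proof of its own and defers to \cite{MBKV15} (Lemma~2).

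That argument bounds only the single event $X\cap M\neq\emptyset$, showing $\Pr[X\cap M\neq\emptyset]\geq 1-\delta$ by the same hypergeometric estimate you use at $\ell=m$. It then uses exchangeability: conditioned on this event, a uniformly random element of $X\cap M$ is uniformly distributed over $M$. Since $\max_{e\in X}w_e$ dominates that element's weight, $\E[\max_{e\in X}w_e]\geq(1-\delta)\cdot\frac{1}{m}\sum_{e\in M}w_e$ follows at once. The concavity step in \cite{MBKV15} is needed only because there the sample size is calibrated to $k\geq|A^*\setminus A|$. When the sample size is tuned to $|M|=m$ exactly, as in this lemma, concavity is unnecessary.

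Your layer-cake decomposition over prefixes of the sorted $M$ replaces that symmetry argument with per-prefix hitting bounds plus concavity. It is slightly longer in this setting, but it never needs the exchangeability claim. It also proves the stronger intermediate inequality $\E[\max_{e\in X}w_e]\geq\sum_{\ell=1}^{m}(1-\delta^{\ell/m})(w_{e_\ell}-w_{e_{\ell+1}})$, which strictly beats the stated bound whenever the weights on $M$ are not all equal.
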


\begin{lemma}
Algorithm~\ref{alg:paritionF} is a $(1-\delta,0)$-approximate swap algorithm.
\end{lemma}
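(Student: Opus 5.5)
We need to show that for every base $A$ of the reduced instance $I'$ and every time $t$,
\[
\E\big[\nabla_{(i^*,s^*)}G(t\indicator_A)-\nabla_{(a_i,a_s)}G(t\indicator_A)\big]\ \ge\ \frac{1-\delta}{r}\sum_{o\in O'}\nabla_o G(t\indicator_A)-\frac1r\sum_{a\in A}\nabla_a G(t\indicator_A).
\]
Here $O'$ is an optimal solution of $I'$; by the discussion above we may pick it to be any copy-choice of an optimal $O$ for $I$. Right continuity is handled as in Lemma~\ref{lem:simpleF_average}, by a consistent tie-breaking in the $\argmax$ of Line~\ref{genF:argmax} and a fixed choice of $C$. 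The plan is to condition on the part $j$, bound the two terms separately, and then average over $j$ using $\Pr[j=q]=\ell_q/r$.

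\textbf{The removed element.} Since $A$ is a base, $|A\cap\tU_q|=\ell_q$. The removed element is uniform in $A\cap\tU_q$, so
\[
\E\big[\nabla_{(a_i,a_s)}G \mid j=q\big]=\frac1{\ell_q}\sum_{a\in A\cap\tU_q}\nabla_a G.
\]
Averaging over $q$ gives exactly $\frac1r\sum_{a\in A}\nabla_a G(t\indicator_A)$, because $\frac{\ell_q}{r}\cdot\frac1{\ell_q}=\frac1r$.

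\textbf{The inserted element.} Condition on $j=q$ and apply Lemma~\ref{lem:sampling} with $E=C\cap\tU_q$, which has exactly $|U_q|$ elements (one copy per item), and weights $w_e=\nabla_e G(t\indicator_A)\ge 0$ (by monotonicity). For the set $M$ I take the copies in $C$ of the items of $O\cap U_q$. Choosing $O'$ to use exactly these copies is legitimate: $O'$ may be any copy-choice, and $C$ depends only on $A$, not on the randomness of the swap procedure. If $|O\cap U_q|<\ell_q$, pad $M$ up to $m=\ell_q$ elements with arbitrary elements of $E$ (if fewer are available, the sample size is $|U_q|$ and we take the max over all of $E$ anyway). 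Nonnegativity of the weights makes the padding only help. Lemma~\ref{lem:sampling} then gives
\[
\E\big[w_{(i^*,s^*)}\mid j=q\big]\ \ge\ \frac{1-\delta}{\ell_q}\sum_{o\in O'\cap\tU_q}\nabla_o G(t\indicator_A).
\]
Multiplying by $\ell_q/r$ and summing over $q$ yields $\frac{1-\delta}{r}\sum_{o\in O'}\nabla_o G(t\indicator_A)$. Subtracting the previous bound completes \eqref{eq:approximate_swap} with $\beta=1-\delta$, $\eta=0$.

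\textbf{Main obstacle.} The delicate point is the identification of $O'$ with copies lying in $C$. The above-average inequality is stated with respect to a fixed optimal solution, but the set $C$ varies with $A$. I would resolve this in one of two ways:
\begin{enumerate}[label=(\roman*)]
\item Observe that the right-hand side is unchanged if $O'$ is replaced by $O$ with the copies in $C$. The only requirement is that this equality of gradients holds.
\item Otherwise, note that any copy of an item has gradient at most that of a fresh copy not in $A$. A copy in $A$ has gradient $\nabla G$ at $t\indicator_{A}$ evaluated with itself at $t$, which by the multilinear formula equals the fresh-copy value scaled appropriately. Hence $\nabla_{(o,s)}G(t\indicator_A)\le\nabla_{(o,s')}G(t\indicator_A)$ for $(o,s')\in C$, by Observation~\ref{obv:G_by_F} and monotonicity of the gradient in property~\ref{multilinear-gradient}.
\end{enumerate}
This inequality lets me replace any fixed $O'$ by its $C$-copies at no loss. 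I would verify this gradient comparison carefully via Observation~\ref{obv:G_by_F}, since it is the only nonroutine step.
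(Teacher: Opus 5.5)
Your core argument is the paper's proof. You pick an optimal $O'$ of $I'$ contained in $C$ and condition on $j=q$. You then apply Lemma~\ref{lem:sampling} to $E=C\cap\tU_q$ with $M=O'\cap\tU_q$; this set has exactly $\ell_q$ elements because $O'$ is a base, so your padding is never needed. Finally, you use that the removed element is uniform over the $\ell_q$ elements of $A\cap\tU_q$ and average with weights $\ell_q/r$. The flaw is in your ``main obstacle'' paragraph. Both proposed resolutions are false, and the second is the step you said you would rely on.

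By Observation~\ref{obv:G_by_F} and the chain rule, with $y=t\indicator_A$ and $c_i=|A\cap(\{i\}\times[r+1])|$, we get $\nabla_{(i,s)}G(y)=\nabla_iF(x)\prod_{s'\neq s}(1-y_{(i,s')})$. This equals $(1-t)^{c_i-1}\nabla_iF(x)$ if $(i,s)\in A$, and $(1-t)^{c_i}\nabla_iF(x)$ if $(i,s)\notin A$. So a copy lying in $A$ has a \emph{larger} gradient than the fresh copy in $C$, by a factor $1/(1-t)$, which is the opposite of (ii). Property~\ref{multilinear-gradient} points the same way, since dropping a copy's own contribution lowers the evaluation point. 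For the same reason (i) fails, because the right-hand side does change when you trade copies in $A$ for copies in $C$. In fact, with $O'$ fixed independently of $A$ the inequality is false. Take all $\ell_q=1$, $f(S)=|S\cap O|$, and $A=O'$. Every $o\in O'$ has $\nabla_oG(t\indicator_A)=1$, the fresh copies of items of $O$ have gradient $1-t$, and all other elements have gradient $0$. So the left-hand side is at most $-t$ while the right-hand side is $-\delta$, and the inequality fails for every $t>\delta$.

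The correct resolution needs no gradient comparison. The $(\beta,\eta)$-approximate inequality only has to hold for \emph{some} optimal solution for each pair $(A,t)$. In the proof of Lemma~\ref{lem:value-approximate-swap}, the solution enters only through Lemma~\ref{lem:submodular-standard}, which holds for an arbitrary set, and through its value. Every copy-choice $O'$ satisfies $g(O')=f(O)$. Hence choosing $O'\subseteq C$ depending on $A$ is legitimate on its own, which is exactly what the paper does implicitly. You should drop (i) and (ii) rather than try to verify them.
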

\begin{proof}
Consider an execution of Algorithm~\ref{alg:paritionF}.
Let  $O'\subseteq C$ be an optimal solution for $I'$ which is fully contained in the set $C$ computed in Line~\ref{genF:candidates}. Also, for every $(i,s)\in C$ define $w_{(i,s)}= \nabla_{(i,s)} G(t\cdot \indicator_A)$, and note that the notation is consistent with Line~\ref{genF:compute}.

For every part  $q\in [k]$  let $O'_q= O'\cap \tU_q$ be the items of $O'$ from part $q$. Since $O'$ is a basis it follows that $\abs{O'_q} = \ell_q$. 
We first condition on $j=q$. 
By Lemma~\ref{lem:sampling} we have
\begin{equation*}
\begin{aligned}
\E\left[\nabla_{(i^*,s^*)} G(t\cdot \indicator_A)\,|\,j=q\right] \,&=\, \E\left[\max_{(i,s)\in X} w_{(i,s)}\,|\,j=q\right] \\
&\geq \, (1-\delta)\cdot \frac{\sum_{o\in O'_q} w_{o}}{\ell_q}\,\\
&=\, (1-\delta)\cdot \frac{\sum_{o\in O'_q} \nabla_{o} G(t\cdot \indicator_A)}{\ell_q}.
\end{aligned}
\end{equation*}
Therefore,
\begin{equation}
\label{eq:gen_samp_gain}
\begin{aligned}
\E\left[\nabla_{(i^*,s^*)} G(t\cdot \indicator_A)\right]  \,&=\, \sum_{q=1}^{k}  \Pr[j=q ]\cdot  \E\left[\nabla_{(i^*,s^*)} G(t\cdot \indicator_A)\,|\,j=q\right]\,\\
&\geq\, \sum_{q=1}^{k}  \frac{\ell_q}{r}\cdot (1-\delta)\cdot \frac{\sum_{o\in O'_q} \nabla_{o} G(t\cdot \indicator_A)}{\ell_q}\\
&\geq \frac{1-\delta}{r} \sum_{o\in O'}  \nabla_{o} G(t\cdot \indicator_A)
\end{aligned}
\end{equation}

Similarly,
\begin{equation}
\label{eq:gen_drop}
\begin{aligned}
\E\left[\nabla_{(a_i,a_s) } G(t\cdot \indicator_{A})\right] \,&=\,
\sum_{q=1}^{k} \Pr[ j=q] \cdot \E\left[\nabla_{(a_i,a_s) } G(t\cdot \indicator_{A}) \,|\,j=q \right] \,\\
&=\, \sum_{q=1}^{k} \frac{\ell_j}{r}\cdot\frac{1}{\ell_j} \cdot \sum_{(i,s)\in A\cap \tU_q} \nabla_{(i,s) } G(t\cdot \indicator_{A})\\
&=\, \frac{1}{r}\sum_{(i,s)\in A}\nabla_{(i,s) } G(t\cdot \indicator_{A}).
\end{aligned}
\end{equation}

By \eqref{eq:gen_samp_gain} and \eqref{eq:gen_drop} we have
$$
\E\left[\nabla_{(i^*,s^*)} G(t\cdot \indicator_A)-\nabla_{(a_i,a_s) } G(t\cdot \indicator_A)\right] \, \geq \, \frac{1-\delta}{r} \sum_{o\in O'}  \nabla_{o} G(t\cdot \indicator_A) - \frac{1}{r}\sum_{(i,s)\in A}\nabla_{(i,s) } G(t\cdot \indicator_{A}).
$$
Therefore, Algorithm~\ref{alg:paritionF} is a $(1-\delta,0)$-approximate swap algorithm.

\end{proof}

We are now ready to prove Theorem~\ref{thrm:GeneralizedPartitionF}.
\begin{proof}[Proof of Theorem~\ref{thrm:GeneralizedPartitionF}]
The algorithm works as follows.
\begin{enumerate}
\item Execute \gp\ on the reduced instance $I'$, \genF\ (Algorithm~\ref{alg:paritionF}) with $\delta =\eps/2$ as the swap algorithm, and error parameter $\eps/2$ (for \gp). Let $Q$ be the returned solution.
\item Return $S=\{i\in U\,|\, \exists s\in [r+1]:\, (i,s)\in Q\}$, the set of all items for which there is at least one copy in $Q$. 
\end{enumerate}
It can be easily observed that the algorithm returns a solution for the original \partFprob\ instance $I$. The execution of \gp\ uses $O\left( r\cdot \ln\left( \frac{2}{\eps}\right)\right)$  swap operations in expectation, and by Lemma~\ref{lem:genF_basics} each call for \genF\ requires $O\left(\frac{n}{r}\cdot \ln\left(\frac{2}{\eps}\right)\right)$ oracle queries to $F$. Thus, the overall number of oracle queries in expectation is $O\left( n\cdot \log^2\left( \frac{1}{\eps}\right)\right)$.

By Theorem~\ref{thm:approximate_poission} it holds that $$
\E[g(Q)]\geq \left(1-\frac{\eps}{2}\right)\cdot \left(1-e^{-\left(1-\frac{\eps}{2}\right)}\right)\cdot g(O')\,\geq\, (1-\eps)\cdot (1-e^{-1})\cdot f(O),$$ where $O'$  is an optimal solution for the reduced instance $I'$, and $O$ is an optimal solution for $I$. The second inequality holds $1-e^{-\left(1-\frac{\eps}{2}\right)} \geq \left(1-\frac{\eps}{2}\right)\cdot (1-e^{-1})$ and $f(O)=g(O')$. 
By the definition of $g$ we have $f(S)=g(Q)$. Therefore,
$$
\E[f(S)] \,=\,E[g(Q)]\,\geq  (1-\eps)\cdot (1-e^{-1})\cdot f(O),
$$
which completes the proof.

\end{proof}

\subsubsection{Oracle for $f$.}

Finally, we are left to prove Theorem~\ref{thrm:GeneralizedPartition} which gives our 
application for  \gp  with \partfprob.
The result integrates concepts from both Section~\ref{sec:partition_orc_to_f}  and Section~\ref{sec:generalizeF}.
Our algorithm for \partfprob runs \gp on the reduced instance (as defined in Section~\ref{sec:generalizeF}), and uses a variant of Algorithm~\ref{alg:paritionF} which replaces Lines~\ref{genF:compute} and~\ref{genF:argmax} with a call for a \mab algorithm (Definition~\ref{def:mab} and Algorithm~\ref{thrm:mab}). This modification is required as we are only given an oracle for $f$, hence we can only estimate the multilinear extension via sampling. The error incurred by the \mab algorithm depends on the marginals to opt ratio, as  defined in \eqref{eq:ratio}. Finally, we reduce this ratio via a preprocessing algorithm.

We follow the same notations for instances and reduced instance as in Section~\ref{sec:generalizeF}.  Let $I=(f,U,\{U_j\}_{j\in [k]}, \{\ell_j\}_{j\in [k]})$ be a \partfprob instance, and let $\cI=\{S\subseteq U\,|\,\forall j\in [k]:~\abs{S\cap U_j} \leq \ell_j\}$ be the independent sets of the instance's matroid. We use $r=\sum_{j=1}^{k}\ell_j$ to denote the rank of the matroid $(U,\cI)$, and $n=\abs{U}$ to denote the size of the universe. We define the marginals to opt ratio of $I$  by 
$$\marsum(I)\,=\,\frac{\max_{S\in \cI} \sum_{i\in S} (f(\{i\})-f(\emptyset))}{f(O)},
$$
where $O$ is an optimal solution for the instance. Note that this definition is consistent with \eqref{eq:ratio} for regular partition matroid. 

We design a swap algorithm for the reduced instance $I'$ of $I$ whose quality depends on $\marsum(I)$.
\begin{lemma}
\label{lem:swap_partfprob}
There is $(1-\delta,\eta)$-approximate swap algorithm for the reduced instance $I'$ of a \partfprob\ instance $I$, where $\eta = \delta\cdot \marsum(I)$,  that uses $O\left(\frac{n}{r}\cdot \frac{1}{\delta^2}\cdot \ln^4\left(\frac{1}{\delta}\right)\right)$ oracle queries to $f$ in expectation.
\end{lemma}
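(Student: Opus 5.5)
The plan combines Algorithm~\ref{alg:paritionF} with the bandit step of Algorithm~\ref{alg:BanditSwap}. As in Algorithm~\ref{alg:paritionF}, the swap procedure first builds the candidate set $C$ of copies not in $A$. It then picks a part $j$ with $\Pr[j=q]=\ell_q/r$ and samples $X\subseteq C\cap\tU_j$ of size $\min\{\frac{\abs{U_j}}{\ell_j}\ln(1/\delta),\abs{U_j}\}$. Lines~\ref{genF:compute}--\ref{genF:argmax} are replaced by a call to the \mab algorithm of Theorem~\ref{thrm:mab} on the distributions $\cD_{(i,s)}$, $(i,s)\in X$. Each $\cD_{(i,s)}$ is the law of $\big(g(R\cup\{(i,s)\})-g(R\setminus\{(i,s)\})\big)/\alpha_j$ with $R\sim t\indicator_A$ over $\tU$, and $\alpha_j=\max_{i\in U_j}(f(\{i\})-f(\emptyset))$. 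Sampling $R$ and evaluating $g$ costs two queries to $f$, by the definition of $g$. By submodularity and monotonicity every sample lies in $[0,1]$, and its mean is $\nabla_{(i,s)}G(t\indicator_A)/\alpha_j$. The procedure returns the bandit output $(i^*,s^*)$ together with a uniform element of $A\cap\tU_j$. Validity of the swap is exactly as in Lemma~\ref{lem:genF_basics}, since $(i^*,s^*)\in C\setminus A$ lies in the same part.

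\textbf{Quality.} I will condition on $j=q$ and write $w_e=\nabla_e G(t\indicator_A)$. Theorem~\ref{thrm:mab} gives $\E[w_{(i^*,s^*)}\mid X,j=q]\ge \max_{e\in X}w_e-\delta'\alpha_q$ for the bandit accuracy $\delta'$. Averaging over $X$ and applying Lemma~\ref{lem:sampling} with $M=O'_q$ (an optimal $O'\subseteq C$, as in the $F$-oracle proof) yields
\[
\E[w_{(i^*,s^*)}\mid j=q]\ge (1-\delta)\frac{1}{\ell_q}\sum_{o\in O'_q}w_o-\delta'\alpha_q .
\]
Weighting by $\ell_q/r$ gives $\frac{1-\delta}{r}\sum_{o\in O'}w_o-\frac{\delta'}{r}\sum_q\ell_q\alpha_q$. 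I will bound $\sum_q\ell_q\alpha_q$ by $\marsum(I)f(O)$. This sum is not literally that maximum: $\ell_q\alpha_q$ repeats the single best marginal of part $q$. So I will scale instead by $\alpha_{(i)}:=f(\{i\})-f(\emptyset)$ per arm, or equivalently restrict the bandit scale to the top $\ell_q$ singleton marginals of $U_q$, whose sum is at most $\marsum(I)f(O)$ restricted to part $q$. Handling this scaling is the main obstacle. My first attempt will group arms in $X$ into $O(\log(1/\delta))$ geometric classes by singleton marginal. Arms whose marginal is below $\delta\alpha_q/\ell_q$-ish contribute negligibly and are discarded. The bandit is run within each class at its own scale, and then once more over the class winners. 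Each run errs by $\delta$ times a scale that is charged to the top-$\ell_q$ marginals, so the total loss in part $q$ is $O(\delta)\sum_{\text{top }\ell_q}(f(\{i\})-f(\emptyset))$, which sums to $O(\delta)\marsum(I)f(O)$. Rescaling $\delta$ by constants then gives the $\eta$-term with $\eta=\delta\marsum(I)$. The drop term $\E[w_{(a_i,a_s)}]=\frac1r\sum_{e\in A}w_e$ is computed exactly as in \eqref{eq:gen_drop}. Combining the two terms gives \eqref{eq:approximate_swap} with $\beta=1-\delta$.

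\textbf{Query count.} Conditioned on $j=q$, the bandit runs over $\abs{X}=O(\frac{\abs{U_q}}{\ell_q}\ln\frac1\delta)$ arms. They use $O(\abs{X}\delta^{-2}\ln\frac1\delta)$ samples, inflated by the polylogarithmic number of classes and the precision adjustments, so I budget $\ln^3(1/\delta)$ for these. Computing singleton marginals needs $O(\abs{U_q})$ queries; I will precompute them once, or charge only over $X$. Taking expectation over $j$ as in Lemma~\ref{lem:genF_basics}, $\E[\abs{U_j}/\ell_j]=n/r$, which gives the claimed $O(\frac nr\delta^{-2}\ln^4\frac1\delta)$.
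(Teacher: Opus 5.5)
Your setup, the validity argument, the use of Lemma~\ref{lem:sampling} for the gain term, and the drop term all match the paper. You also correctly diagnose why the part-wide scale $\alpha_q$ fails. The repair, however, has a genuine gap.

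\textbf{What the repair must achieve.} After weighting by $\Pr[j=q]=\ell_q/r$, the bandit's error conditioned on $j=q$ must be at most $\delta$ times the \emph{average} of the $\ell_q$ largest singleton marginals in $U_q$. A bound of $O(\delta)$ times their \emph{sum} is already what the rejected scale $\alpha_q$ gives, since $\alpha_q$ is one of those marginals.

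\textbf{Why the geometric classes do not achieve it.} Whenever a high-marginal item lies in $X$, its class runs at a scale up to $\alpha_q$. So does the final comparison among class winners. Theorem~\ref{thrm:mab} then only guarantees an error of order $\delta\alpha_q$, which is precisely the bad term. Your claim that each run's scale is ``charged to the top-$\ell_q$ marginals'' is not supported by anything in the construction. It can only hold in expectation over the random sample $X$, and you never use the randomness of $X$ for this. Two smaller issues: cutting at $\delta\alpha_q/\ell_q$ leaves $\Theta(\log(\ell_q/\delta))$ classes, not $O(\log(1/\delta))$. And per-arm scales are not available in Theorem~\ref{thrm:mab}, which needs one common range.

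\textbf{The missing idea} is Claim~\ref{claim:alpha}. Normalize by $\alpha_X=\max_{(i,s)\in X}(f(\{i\})-f(\emptyset))$, the maximum over the \emph{sample} rather than the part; all samples still lie in $[0,1]$. Since $\abs{X}\le\frac{\abs{U_q}}{\ell_q}\ln(1/\delta)$, each item of $U_q$ lands in $X$ with probability at most $\ln(1/\delta)/\ell_q$. Write $M_1\ge M_2\ge\cdots$ for the singleton marginals in $U_q$, and use
\[
\max_{i\in X}M_i\le M_{\ell_q}+\sum_{i\le\ell_q,\, i\in X}(M_i-M_{\ell_q}).
\]
Taking expectations gives
\[
\E[\alpha_X\mid j=q]\le \frac{\ln(1/\delta)}{\ell_q}\sum_{i=1}^{\ell_q}M_i ,
\]
and hence $\sum_q\frac{\ell_q}{r}\E[\alpha_X\mid j=q]\le\frac{\ln(1/\delta)}{r}\marsum(I)f(O)$.

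Now run the bandit with accuracy $\delta'=\delta/\ln(1/\delta)$. This cancels the $\ln(1/\delta)$ and gives loss exactly $\frac{\eta}{r}f(O)$, with no classes or discarding. The same choice of $\delta'$ produces the $\ln^4(1/\delta)$ in the query bound: one factor from $\E\abs{X}=\frac{n}{r}\ln(1/\delta)$, two from $\delta'^{-2}$, and one from $\log(1/\delta')$. Your $\ln^3(1/\delta)$ budget for the bandit was not derived from anything. The class scheme could be rescued only by invoking this same inclusion-probability bound, at which point the classes are unnecessary.
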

The proof of Lemma~\ref{lem:swap_partfprob} is deferred to the end of this section. 
Theorem~\ref{thm:approximate_poission} can be used together with Lemma~\ref{lem:swap_partfprob} to attain an approximate solution for the reduced instance $I'$ of $I$.  By converting the solution back to a solution for $I$ we attain the following. 
\begin{corollary}
\label{cor:GenBanditSwap}
There exists and algorithm for monotone submodular maximization with a generalized partition matroid which uses $O\left( \frac{n}{\delta^2} \cdot \ln^5\left(\frac{1}{\delta}\right)\right)$ value queries for $f$ in expectation and returns a solution $S$ such that 
$$
\E[f(S)]\,\geq\,\left(1-\delta\right) \cdot \left(1-\frac{\delta \cdot \marsum(I)}{1-\delta}\right)\cdot (1-\nicefrac{1}{e}) \cdot f(O),
$$
where $I$ is the instance and $O$ is an optimal solution. 
\end{corollary}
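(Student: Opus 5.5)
The plan is to run \gp (Algorithm~\ref{alg:Poisson}) on the reduced instance $I'$ with starting time $\delta$, using the swap algorithm of Lemma~\ref{lem:swap_partfprob} with accuracy parameter $\delta$. Then we map the returned base $Q\subseteq \tU$ back to $S=\{i\in U\,|\,\exists s\in[r+1]:(i,s)\in Q\}$, exactly as in the proof of Theorem~\ref{thrm:GeneralizedPartitionF}.

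\textbf{Feasibility.} Since $Q$ contains at most $\ell_j$ copies from $\tU_j$, the set $S$ contains at most $\ell_j$ items of $U_j$, so $S\in\cI$. By the definition of $g$, $f(S)=g(Q)$.

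\textbf{Query count.} The rank of the matroid of $I'$ is again $r$. Hence, by Theorem~\ref{thm:approximate_poission}, \gp makes $r\ln(1/\delta)$ swap calls in expectation. By Lemma~\ref{lem:swap_partfprob}, each call uses $O(\frac{n}{r}\cdot\frac{1}{\delta^2}\ln^4(\frac1\delta))$ queries to $f$ in expectation. The number of calls depends only on the Poisson process, which is independent of the swap procedure's internal randomness. So by Wald's identity (or linearity of expectation, conditioning on the number of events) the total expected number of queries is $O(\frac{n}{\delta^2}\ln^5(\frac1\delta))$. Evaluating the map back from $Q$ to $S$ costs no queries.

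\textbf{Approximation.} Here Theorem~\ref{thm:approximate_poission} is applied with $\beta=1-\delta$ and $\eta=\delta\cdot\marsum(I)$. We need the swap procedure to be right continuous. For the $F$-oracle version this was handled by tie-breaking; here, since the procedure is randomized and its output distribution is piecewise constant or continuous in $t$, I would argue this the same way as for Algorithm~\ref{alg:BanditSwap}, where it was taken for granted. Theorem~\ref{thm:approximate_poission} then gives
\[\E[g(Q)]\ge (1-\delta)\Bigl(1-\frac{\delta\,\marsum(I)}{1-\delta}\Bigr)(1-e^{-(1-\delta)})\,g(O').\]
Here $g(O')=f(O)$ by the observation about optimal solutions of $I'$.

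The remaining mismatch is the factor $(1-e^{-(1-\delta)})$ versus the claimed $(1-\nicefrac1e)$. This is the main point to be careful with: the statement has only one $(1-\delta)$ factor, while Theorem~\ref{thm:approximate_poission} as written produces $(1-\varepsilon)(1-e^{-\beta})$. I would fix this by going back to Claim~\ref{claim:differentialEQ-approximate}, which yields $(1-\eta/\beta)(1-e^{-\beta(1-\varepsilon)})$ with $\varepsilon=\delta$ and $\beta=1-\delta$. Then I would bound $1-e^{-(1-\delta)^2}\ge(1-\delta)(1-e^{-1})$ by concavity of $x\mapsto 1-e^{-x}$ together with $(1-\delta)^2\ge 1-2\delta$. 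If that only gives a $(1-2\delta)$ factor, I would run \gp with parameter $\delta/2$, which affects the query bound only by constants. Combining with $\E[f(S)]=\E[g(Q)]$ gives the stated guarantee up to this constant rescaling of $\delta$.
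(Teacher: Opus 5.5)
Your proposal follows the paper's route: run \gp on the reduced instance $I'$ with the swap procedure of Lemma~\ref{lem:swap_partfprob}, invoke Theorem~\ref{thm:approximate_poission} with $\beta=1-\delta$ and $\eta=\delta\cdot\marsum(I)$, and map $Q$ back to $S$ with $f(S)=g(Q)$. The paper states exactly this in one line and does not track the constant you were worried about.

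One step needs correcting. The inequality $1-e^{-(1-\delta)^2}\ge(1-\delta)(1-e^{-1})$ is false for every $\delta\in(0,1)$. For small $\delta$ the left side is $\approx(1-e^{-1})-2\delta/e$, while the right side is $\approx(1-e^{-1})-(1-e^{-1})\delta$.

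So your fallback is not optional, and it should halve both parameters. Take swap accuracy $\delta/2$ and starting time $\delta/2$. Concavity then gives $1-e^{-(1-\delta/2)^2}\ge(1-\delta/2)^2(1-e^{-1})\ge(1-\delta)(1-e^{-1})$. Also $\frac{(\delta/2)\marsum(I)}{1-\delta/2}\le\frac{\delta\,\marsum(I)}{1-\delta}$. Together these give the stated bound; when the middle factor is negative the bound holds trivially, since $f\ge 0$. The query complexity is unchanged up to constants.
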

We use Corollary~\ref{cor:GenBanditSwap} on a {\em residual} instance  $I_P$ which is obtained by the pre-processing algorithm. The pre-processing algorithm gurantees, with high probability, that value $\delta\cdot \marsum(I_P)$ in Corollary~\ref{cor:GenBanditSwap} does not significantly harms the approximation ratio.

Given a $\partfprob$ instance $I=(f,U, (U_j)_{j\in [k]}, (\ell_j)_{j\in [k]})$ and an independent set $P\in \cI$ we define the {\em residual instance of $I$ and $P$}  as $I_P=(f_P, U_P, (U_{P,j})_{j\in J_P}, (\ell_{P,j})_{j\in J_P})$  where $J_P=\{j\in [k]\,|\, \abs{U_j\cap P}<\ell_j\}$, 
$U_{P,j}=U_j\setminus P$ for all $j\in J_P$, $\ell_{P,j} = \ell_{j}-\abs{P\cap U_j}$ for all $j\in [J_P]$,
$U_P=\bigcup_{j\in J_P} U_{P,j}$,  and $f_P:2^{U_P} \rightarrow \mathbb{R}_{\geq 0}$ is defined by $f_P(S)= f(P\cup S) -f(P)$. 
As in the case of residual instance with a regular partition matroid constraint, the residual instance represents the subproblem that remains when $P$ is enforced as part of the solution, and the definitions coincide if $\ell_j=1$ for all $j\in [k]$. 
It can be easily observed that if $S$ is a solution for $I_P$, then $S\cup P$ is a solution for $I$ with value $f(S\cup P)=f(S)+f_P(S)$.

The pre-processing algorithm for generalized partition matroid provides the same guarantees as the pre-processing for partition matroid (Lemma~\ref{lem:simpleprep}).
\begin{lemma}[pre-processing]\label{lem:generalized_prep}
There exists a randomized algorithm, which given a \partfprob\ instance $I=(f,U, (U_j)_{j\in[k]}, (\ell_j)_{j\in [k]})$ and error parameter $0<\delta<\nicefrac{1}{2}$, performs in expectation
$ O\left(n\cdot \log(n)\cdot \log(1/\delta)\right)$ value queries and returns $ P\in \cI$ satisfying:
\begin{enumerate}
    \item $ \Pr\left[\marsum(I_P) \leq c\cdot \frac{f(O)}{f(O_P)}\right]\geq 1-\delta$,  
    \item $\mathbb{E}\left[(1-\nicefrac{1}{e})\cdot f(O_P\cup P) + \frac{1}{e}\cdot f(P) \right]\geq (1-\nicefrac{1}{e}-\delta)\cdot f(O)$. 
\end{enumerate}
Here, $ O$ is an optimal solution for $I$, and $O_P$ is an optimal solution for the residual instance $I_P$ of $I$ and $P$. Furthermore, $c>1$ is an absolute constant. 
\end{lemma}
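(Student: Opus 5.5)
The plan is to use the residual random greedy algorithm of~\cite{BFNS14} as the pre-processing, run for a small number $s=\Theta(\log(1/\delta))$ of steps, and derive both guarantees from it. Following the discussion in Section~\ref{sec:additional} and the partition-matroid version (Lemma~\ref{lem:simpleprep}), at step $\ell$ we sample a uniformly random subset of $U$ of size about $\frac{n}{r}\ln(1/\delta)$. From it we pick an element maximizing the marginal $f(P+i)-f(P)$ among elements $i$ with $P+i\in\cI$. This is the stochastic-greedy variant, which Lemma~\ref{lem:sampling} controls. To reach $O(n\log n\log(1/\delta))$ queries, the process runs in $O(\log n)$ \emph{epochs}. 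In each epoch a fresh marginal is computed for all $n$ elements, and only elements whose marginal is large relative to the current residual optimum are added. Consequently $P$ has small size relative to $r$, and its value gain per unit is certified.

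For guarantee (2), I would follow the standard residual random greedy analysis. Let $P_\ell$ be the set after $\ell$ steps. For a fixed optimum $O$, maintain a random $O_\ell\subseteq O$ with $P_\ell\cup O_\ell$ a base, so that $O_\ell$ is feasible in $I_{P_\ell}$. When we add an element $i$ we remove a uniformly random matched element of $O_\ell$. Lemma~\ref{lem:sampling} then gives
\[
\E[f(P_{\ell+1})-f(P_\ell)]\ge (1-\delta)\tfrac{1}{r_\ell}\,\E[f(O_\ell\cup P_\ell)-f(P_\ell)],
\]
while submodularity gives a loss of at most $\frac1{r_\ell}(f(O_\ell\cup P_\ell)-f(P_\ell))$ in expectation for $f(O_\ell\cup P_\ell)$. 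The potential $\Phi=(1-1/e)f(O_P\cup P)+\frac1e f(P)$, with $O_P$ dominating $O_\ell$ since $f(O_P\cup P)\ge f(O_\ell\cup P)$, then decreases by at most $O(\delta)f(O)$ in total, because the number of steps is tiny relative to $r$. The weights $1-1/e$ and $1/e$ are chosen so that the gain in $f(P)$ pays for the loss in $f(O\cup P)$ at the early time scale, in the same way continuous greedy gains about $\frac1r(f(O)-f(P))$ while losing about $\frac1r$ of the residual.

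For guarantee (1), the goal is to show that after the pre-processing every element $i\in U_P$ has $f(P+i)-f(P)\le c\, f(O_P)/r_P$ on average over the best independent set. Equivalently, $\max_{S\in\cI_P}\sum_{i\in S}f_P(\{i\})\le c f(O)$. The argument is a stopping rule. We keep adding while some independent set of residual singletons has total marginal exceeding $c f(O_P)$. Each such step raises $f(P)$ by an $\Omega(1/r)$-fraction of roughly $c f(O_P)$, which is more than what the $f(O_P)$ part loses. The process therefore terminates after $O(\log(1/\delta))$ epochs of the potential, with a bad event of probability at most $\delta$ by Markov's inequality on the number of epochs or by a union bound.

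The main obstacle is checking the stopping condition cheaply. Computing $\max_{S\in\cI_P}\sum f_P(\{i\})$ is easy for a generalized partition matroid: take the top $\ell_{P,j}$ singletons in each part, at a cost of $n$ queries per check. The quantity $f(O_P)$, however, is unknown. I would replace it by the proxy $f(P)$ together with a running estimate: $f(O)\le\sum$ of top singletons, and greedy already approximates $f(O)$ within a constant. Carrying that constant into $c$ is the step I expect to require the most care. With $O(\log n)$ rechecks, because marginals fall geometrically from at most $n\cdot\max$ singleton, and $O(\log 1/\delta)$ sampling per step, the stated query bound follows.
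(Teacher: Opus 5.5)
Your proposal has a genuine gap. It rests on the claim that only $\Theta(\log(1/\delta))$ additions (or a number ``tiny relative to $r$'') are needed before $V(P):=\max_{S\in\cI_P}\sum_{i\in S}f_P(\{i\})$ falls to $c\,f(O)$, and this is false. Take $k=1$, $\ell_1=r$, $f(S)=\min\{|S|,L\}$ with $L=\lfloor r/(c+1)\rfloor$. Then $f(O)=L$, and every $P$ with $|P|=i<L$ has $V(P)=r-i>cL$, so $\Theta(r)$ additions are unavoidable. Stopping after $O(\log(1/\delta))$ steps therefore violates guarantee (1) with probability $1$, and no Markov or union-bound argument repairs this.

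Your argument for (2) fails with it, because it attributes the small loss to the small number of steps. With the two inequalities you wrote (gain $\ge (1-\delta)\Delta_\ell/r_\ell$ in $f(P)$, loss $\le \Delta_\ell/r_\ell$ in $f(O_\ell\cup P_\ell)$, where $\Delta_\ell=f(O_\ell\cup P_\ell)-f(P_\ell)$), the weights $\frac1e$ and $1-\frac1e$ only give a per-step change of at least $-(1-\frac2e+\frac{\delta}{e})\Delta_\ell/r_\ell$. Over $\Theta(r)$ steps this can add up to a constant fraction of $f(O)$.

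There is also a secondary problem. Stochastic greedy over all of $U$ does not make the discarded element of $O_\ell$ uniform, because the part of the chosen element depends on the marginals. The paper first draws the part $j$ with probability $\frac{\ell_j-|P_{i-1}\cap U_j|}{r-(i-1)}$ and samples only inside it. That is what yields the loss bound of Lemma~\ref{lem:residual_opt_loss}.

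The missing idea is that the stopping rule itself protects (2), however many steps are taken. You actually state the needed inequality in your paragraph on (1), but use it only to argue fast termination. Compare the greedy gain with $V_{i-1}$ rather than with $\Delta$. Lemma~\ref{lem:residual_P_gain} gives $\E[f(P_i)-f(P_{i-1})\mid\cF_{i-1}]\ge(1-\delta)\frac{V_{i-1}}{r-(i-1)}$. Before the first index $I_T$ with $V_{I_T}\le T$, we have $V_{i-1}>T\ge\frac{e-1}{1-\delta}f(O)$. So the expected gain in $f(P)$ is at least $e-1$ times the expected loss $\frac{f(O)}{r-(i-1)}$ of $f(O_{i-1})$. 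Hence $(1-\frac1e)f(O_i)+\frac1e f(P_i)$ does not decrease in expectation up to the stopping time $I_T$ (Lemma~\ref{lem:good_P}), while (1) holds with probability $1$ by the definition of $I_T$. For this to work, the threshold must be tied to the fixed quantity $f(O)$, estimated once to a constant factor in $O(n\log r)$ queries, not to the moving residual optimum.

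The query bound then comes from decoupling the greedy run from the choice of stopping index. Run residual random greedy for all $r$ steps. The per-part sample sizes $\frac{|U_q\setminus P|}{\ell_q-|P\cap U_q|}\log(1/\delta)$ form a harmonic sum, giving $O(n\log n\log(1/\delta))$ queries. Then, since $V_0\ge V_1\ge\dots\ge V_r$, find $I_T$ by binary search with $O(\log r)$ evaluations of $V_i$ at $O(n)$ queries each. Your ``$O(\log n)$ rechecks because marginals fall geometrically'' has no such monotone-search structure behind it, and is not justified.
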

 We give the proof of Lemma~\ref{lem:generalized_prep} in Appendix~\ref{sec:preprocessing}. Here, we give the proof of Theorem~\ref{thrm:GeneralizedPartition} that is essentially identical to the proof of Theorem~\ref{thrm:partition}, with the only difference is that Corollary~\ref{cor:GenBanditSwap} and Lemma~\ref{lem:generalized_prep} are used instead of Corollary~\ref{lem:BanditSwapPoisson} and Lemma~\ref{lem:simpleprep}.  We give it here for completeness.

\begin{proof}[Proof of Theorem~\ref{thrm:GeneralizedPartition}]

The algorithm we define is the following:
\begin{enumerate}
 \item Execute the prep-processing algorithm  of Lemma~\ref{lem:generalized_prep} on the original instance $I$ with the error parameter $\delta_1=\eps/8$  to find $P$. 
\item Use \gp  together with the swap algorithm from Lemma~\ref{lem:swap_partfprob} on the reduced instance of $I_P$ (Corollary~\ref{cor:GenBanditSwap}), with the error parapmeter $\delta_2 =\nicefrac{\varepsilon}{(16c)} $
to obtain $A$. Here, $c$ is the absolute constant from Lemma~\ref{lem:generalized_prep}.
\item  Return $A\cup P$.
\end{enumerate}

First, by the definition of residual instances,  it is easy to note that $ A\cup P\in \cI$.  
Second, 
following Corollary \ref{cor:GenBanditSwap} and  Lemma \ref{lem:generalized_prep} the number of oracle queries used by the algorithm in expectation is at most $O(n\log^5{(1/\varepsilon)}/\varepsilon^2)$.

Third and finally, let us lower bound the expected value of the output: $ \mathbb{E}[f(A\cup P)]$.
For every $R\in \cI$ let $O_R$ be an optimal solution for $I_R$,  the residual instance of $I$ and $R$. 
Also, let $\cP$ be the set of all $P\in \cI$ such that $\marsum(I_P) \leq c\cdot \frac{f(O)}{f(O_P)}$. 
By Corollary~\ref{cor:GenBanditSwap} , for every $R\in \cP$ it holds that,
$$
\begin{aligned}
 \E[ f_P(A) \,|\, P=R] \,&\geq \left(1-\delta_2\right) \cdot \left(1-\frac{\delta_2\cdot \marsum(I_R)}{1-\delta_2}\right)\cdot (1-\nicefrac{1}{e}) \cdot f_R(O_R)\\
&\geq \, \left(1-\frac{\eps}{16\cdot c}\right) \cdot  \left(1-2\cdot \frac{\eps }{16\cdot c} \cdot \frac{c\cdot f(O)}{f_R(O_R)}\right)\cdot (1-\nicefrac{1}{e}) \cdot f_R(O_R)\\
&\geq \left(1-\frac{\eps}{8}\right)\cdot \left(1-\nicefrac{1}{e}\right)\cdot f_R(O_R)-\frac{\eps}{8}\cdot f(O). 
\end{aligned}
$$
The second inequality holds as $R\in \cP$ and $\delta_2= \frac{\eps}{16c}$, we further assume that $\delta_2\leq \frac{1}{2}$. Therefore,
\begin{equation}
\label{eq:generalized_exp}
\begin{aligned}
\E[f_P(A)]  \, &=\, \sum_{R\in \cI} \Pr[P=R]\cdot  \E[ f_P(A) \,|\, P=R]\\
&\geq \, \sum_{R\in \cP} \Pr[P=R]\cdot  \E[ f_P(A) \,|\, P=R] \\
&\geq\, \sum_{R\in \cP} \Pr[P=R]\cdot\left( \left(1-\frac{\eps}{8}\right)\cdot \left(1-\nicefrac{1}{e}\right)\cdot f_R(O_R)-\frac{\eps}{8}\cdot f(O)\right)\\
&=\, \left( 1-\frac{\eps}{8}\right) \cdot \E\left[ (1-\nicefrac{1}{e}) \cdot f_P(O_P) \right]  - \left(1-\frac{\eps}{8}\right) \left(1-\frac{1}{e}\right)\cdot \sum_{R\in \cI\setminus \cP} \Pr[P=R] \cdot f_R(O_R)-\frac{\eps}{8}\cdot f(O).
\end{aligned}
\end{equation}
Furthermore, 
\begin{equation}
\label{eq:generalized_bad}
\sum_{R\in \cI\setminus \cP} \Pr[P=R] \cdot f_R(O_R) \leq f(O) \cdot \Pr[P\notin \cP] \leq f(O)\cdot \delta_1 = \frac{\eps}{8}\cdot f(O), 
\end{equation}
where the last inequality follows from Lemma~\ref{lem:generalized_prep}. 
Plugging \eqref{eq:generalized_bad} into \eqref{eq:generalized_exp} we attain 
\begin{equation}
\label{eq:generalized_expA}
\E[f_P(A)] \geq  \left( 1-\frac{\eps}{8}\right) \cdot \E\left[ (1-\nicefrac{1}{e}) \cdot f_P(O_P) \right]  -   \frac{\eps}{ 8} \cdot f(O) -   \frac{\eps}{ 8} \cdot f(O).
\end{equation}

Using \eqref{eq:generalized_expA} and Lemma~\ref{lem:generalized_prep}
 we have
 $$
 \begin{aligned}
 \E[f(A\cup P)] \,&= \,\E[f(P)+f_P(A)]\\
 &=\,\E\left[ f(P) + \left(1-\frac{\eps}{8}\right) (1-\nicefrac{1}{e})\cdot f_P(O_P)\right] -\frac{\eps}{4}\cdot f(O)\\
 &\geq\left( 1-\frac{\eps}{8}\right)\cdot  \E\left[ \frac{1}{e}\cdot f(P) + \left(1-\frac{1}{e}\right)\cdot f(P\cup O_P) \right] -\frac{\eps}{4}\cdot f(O)\\
 &\geq \left( 1-\frac{\eps}{4}\right) \cdot \left(1-\frac{1}{e} -\delta_1\right) \cdot f(O) -\frac{\eps}{4}\cdot f(O) \\
 &\geq (1-\nicefrac{1}{e} -\eps)\cdot f(O),
 \end{aligned}
 $$
 which concludes the proof. 
\end{proof}

\subsubsection*{Proof of Lemma~\ref{lem:swap_partfprob}}

Our swap algorithm for \partfprob is given in Algorithm~\ref{alg:genf}.The algorithm first selects a copy of each item from $U$ into a set $C$ of candidates, such that the selected copy is not in $A$. It then selects a part $j$ at random, and samples a set $X$ of  $\approx \frac{\abs{U_j}}{\delta} \cdot \ln\left(\frac{1}{\delta}\right)$ items from $C\cap \tU_j$. Then, the \mab algorithm is used to find an item  $(i^*,s^*)\in X$ which (approximately) maximizes $\nabla_{(i^*,s^*)}G(t\cdot \indicator_A)$. Finally, the algorithm returns a random item from $A\cap \tU_j$ together with $(i^*,s^*)$. 

We begin by showing the algorithm is indeed a swap algorithm an analyze its running time. A later lemma would give its approximation guarantees.
\begin{lemma}
\label{lem:genf_is_swap}
Algorithm~\ref{alg:genf} is a swap algorithm for the reduced instance $I'$ of a \partfprob instance $I$, which uses $O\left(\frac{n}{r}\cdot \frac{1}{\delta^2}\cdot \ln^4\left(\frac{1}{\delta}\right)\right)$ value oracle queries for $f$. 
\end{lemma}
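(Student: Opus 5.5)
The plan splits the statement into two parts: validity of the swap, and an expected query count. Algorithm~\ref{alg:genf} is not displayed, so I assume it follows the description preceding the lemma. It builds $C$ as in Line~\ref{genF:candidates} of Algorithm~\ref{alg:paritionF}, picks $j$ with $\Pr[j=q]=\ell_q/r$, samples $X\subseteq C\cap\tU_j$ of size $\min\{\frac{\abs{U_j}}{\ell_j}\cdot\frac{1}{\delta}\ln(\frac{1}{\delta}),\abs{U_j}\}$ (up to polylogarithmic factors), runs the \mab algorithm of Theorem~\ref{thrm:mab} on $X$, and returns a uniform element of $A\cap\tU_j$ together with the \mab output $(i^*,s^*)$.

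\textbf{Validity.} The argument copies the proof of Lemma~\ref{lem:genF_basics}. Both returned items lie in the same part $\tU_j$, and $(i^*,s^*)\in C$ is not in $A$ by construction of $C$. Exchanging one element of $A\cap\tU_j$ for a new element of $\tU_j$ therefore keeps exactly $\ell_j$ elements in part $j$. Hence $A-(a_i,a_s)+(i^*,s^*)$ is again a base of the matroid of $I'$.

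\textbf{Query count.} First condition on $j=q$.
\begin{enumerate}
\item \emph{Sampling a distribution.} Each \mab sample for an arm $(i,s)$ is the scaled value $(g(R\cup\{(i,s)\})-g(R\setminus\{(i,s)\}))/\alpha_q$ with $R\sim t\indicator_A$. By definition of $g$, each $g$-value is a single $f$-value on the set of items having a copy in $R$. So one sample costs $2$ queries to $f$, plus sampling $R$, which needs no queries. The scaling constant $\alpha_q=\max_{i\in U_q}(f(\{i\})-f(\emptyset))$ costs $O(\abs{U_q})$ queries, which is dominated.
\item \emph{Cost of the \mab call.} Theorem~\ref{thrm:mab} with $\abs{X}$ arms and accuracy parameter $\delta$ uses $O(\abs{X}\log(1/\delta)/\delta^2)$ samples. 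I expect the accuracy passed to \mab to be $\delta$ or $\delta/\ln(1/\delta)$. With $\abs{X}=O(\frac{\abs{U_q}}{\ell_q}\,\mathrm{polylog}(1/\delta))$, the conditional cost is $O\bigl(\frac{\abs{U_q}}{\ell_q}\cdot\frac{1}{\delta^2}\ln^4(\frac{1}{\delta})\bigr)$.
\end{enumerate}
The exponent $4$ is what the stated bound requires. It should come from the product of the $\ln(1/\delta)$ in $\abs{X}$, the $\ln(1/\delta)$ in Theorem~\ref{thrm:mab}, and the logarithmic factors introduced by rescaling the accuracy.

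\textbf{Averaging over $j$.} This is the same computation as in Lemma~\ref{lem:genF_basics}:
\[
\E\Bigl[\tfrac{\abs{U_j}}{\ell_j}\Bigr]=\sum_{q=1}^{k}\tfrac{\ell_q}{r}\cdot\tfrac{\abs{U_q}}{\ell_q}=\tfrac{n}{r},
\]
which gives the claimed $O\bigl(\frac{n}{r}\cdot\frac{1}{\delta^2}\ln^4(\frac{1}{\delta})\bigr)$.

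\textbf{Main obstacle.} The difficulty is bookkeeping, not a conceptual step. I must match the exact $\abs{X}$ and \mab accuracy parameter used in Algorithm~\ref{alg:genf} so that the logarithmic factors multiply to exactly $\ln^4(1/\delta)$. I also need the $O(\abs{U_q})$ cost of computing $\alpha_q$ and building $C\cap\tU_q$ not to dominate. If $C$ is built over all of $U$, that step alone would cost $n$ per call. So I would build $C$ lazily, only for the chosen part $q$, and note that choosing a free copy needs no oracle queries: among the $r+1$ copies of any item, at most $r=\abs{A}$ lie in $A$. The remaining $\alpha_q$ cost is $O(\abs{U_q})$, which is dominated once $\ell_q\le \ln^4(1/\delta)/\delta^2$. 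Otherwise it is still fine in expectation, since $\E[\abs{U_j}]=\sum_q \ell_q\abs{U_q}/r$ may exceed $n/r$. In that case I would precompute all $\alpha_q$ once at the start for $O(n)$ total queries, amortized over the whole run rather than charged per call.
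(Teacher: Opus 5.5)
Your argument is essentially the paper's: validity holds because both returned items lie in $\tU_j$ and $(i^*,s^*)\in X\subseteq C$ with $C\cap A=\emptyset$, and the query bound comes from Theorem~\ref{thrm:mab} with $O(1)$ queries to $f$ per sample, averaged via $\E[\abs{U_j}/\ell_j]=n/r$. For the record, Algorithm~\ref{alg:genf} fixes the parameters you had to guess.
\begin{itemize}
\item It takes $\abs{X}=\min\{\frac{\abs{U_j}}{\ell_j}\ln(\frac{1}{\delta}),\abs{U_j}\}$. The extra $\frac{1}{\delta}$ in your guessed size would break the bound, though your computation correctly ignores it.
\item It uses accuracy $\delta'=\delta/\ln(\frac{1}{\delta})$, so each arm costs $O(\ln^3(\frac{1}{\delta})/\delta^2)$ samples.
\item It normalizes by $\alpha_X=\max_{(i,s)\in X}(f(\{i\})-f(\emptyset))$, which costs only $O(\abs{X})$ queries, so your amortization detour for $\alpha_q$ is unnecessary.
\end{itemize}
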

\begin{proof}
We first observe the algorithm returns two items from part $\tU_j$. Furthermore, as $(i^*,s^*)\in X\subseteq C$ and $C\cap A=\emptyset$, it follows that $(i^*,s^*)\notin A$ always hold. Therefore $A-(a_i,a_s)+(i^*,s^*)$ is a base of the matroid of the reduced instance $I'$. 

As for the running time,  considition on $X$, by Theorem~\ref{thrm:mab} the \mab\ algorithm requires $O\left( \abs{X}\cdot \frac{\log\left(\frac{1}{\delta'} \right)}{\delta'^2}\right)=  O\left( \abs{X}\cdot \frac{\log^3\left(\frac{1}{\delta} \right)}{\delta^2}\right)$ samples from the distribution, were each sample requires $O(1)$ oracle queries for $f$ (which can be easily used to attain value queries for $g$ by definition). Also, $$\E\left[X\right] = \sum_{s=1}^{k} \Pr[j=s] \cdot \frac{\abs{U_j}}{\ell_s} \cdot \ln\left(\frac{1}{\delta}\right)
= \sum_{s=1}^{k} \frac{\ell_s}{r}\ \cdot \frac{\abs{U_j}}{\ell_s} \cdot \ln\left(\frac{1}{\delta}\right)=\frac{n}{r}\cdot \ln\left(\frac{1}{\delta}\right).$$
Therefore, in expectation, the number of queries to $f$ use by Algorithm~\ref{alg:genf} is
$$
O\left(\frac{n}{r}\cdot \frac{\log^4\left(\frac{1}{\delta}\right)}{\delta^4}\right)
$$
\end{proof}

The next lemma deals with the quality of Algorithm~\ref{alg:genf} as a swap algorithm.
\begin{lemma}
\label{lem:genf_approximate}

Algorithm~\ref{alg:genf} is a $(1-\delta,\eta)$-approximate swap algorithm, where $\eta = \delta\cdot \marsum(I)$.
\end{lemma}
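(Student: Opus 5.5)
We combine the two proofs already given: the sampling argument showing that Algorithm~\ref{alg:paritionF} is $(1-\delta,0)$-approximate, and the bandit argument from the proof of Lemma~\ref{lem:bandit_swap_general}. Algorithm~\ref{alg:genf} is not reproduced above. We therefore assume it runs the \mab algorithm of Theorem~\ref{thrm:mab} on $X$ with accuracy $\delta'=\Theta(\delta/\ln(1/\delta))$, which matches the $\log^3/\delta^2$ query count in Lemma~\ref{lem:genf_is_swap}. We also assume it rescales the samples of $g(R\cup\{(i,s)\})-g(R\setminus\{(i,s)\})$, $R\sim t\indicator_A$, by $1/\alpha_j$, where $\alpha_j=\max_{i\in U_j}(f(\{i\})-f(\emptyset))$. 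By submodularity and monotonicity each sample lies in $[0,\alpha_j]$, so after rescaling it lies in $[0,1]$ and its mean is $w_{(i,s)}/\alpha_j$ with $w_{(i,s)}=\nabla_{(i,s)}G(t\indicator_A)$.

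Fix the part $j=q$ and the sampled set $X$. By Theorem~\ref{thrm:mab},
\[
\E\left[w_{(i^*,s^*)}\mid j=q,X\right]\ \ge\ \max_{e\in X}w_e-\delta'\alpha_q .
\]
Taking expectation over $X$ and applying Lemma~\ref{lem:sampling} with $M=O'_q$ (where $O'\subseteq C$ is an optimal base of $I'$ and $|O'_q|=\ell_q$) gives
\[
\E\left[w_{(i^*,s^*)}\mid j=q\right]\ \ge\ (1-\delta)\,\frac{\sum_{o\in O'_q}w_o}{\ell_q}\;-\;\delta'\alpha_q .
\]
We now average over $q$ with $\Pr[j=q]=\ell_q/r$. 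The first term becomes $\frac{1-\delta}{r}\sum_{o\in O'}\nabla_oG(t\indicator_A)$, exactly as in \eqref{eq:gen_samp_gain}. The error term becomes $\frac{\delta'}{r}\sum_q\ell_q\alpha_q$.

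The drop element is chosen uniformly from $A\cap\tU_j$, so its expected gradient is $\frac1r\sum_{a\in A}\nabla_aG(t\indicator_A)$, as in \eqref{eq:gen_drop}. Putting the two together yields the $(1-\delta,\eta)$ inequality with $\eta f(O)=\delta'\sum_q\ell_q\alpha_q$.

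It remains to bound this error by $\delta\cdot\marsum(I)\,f(O)$. Consider a set of the original instance that picks the $\ell_q$ elements of $U_q$ with the largest singleton marginals. It is independent, and its marginal sum is at least $\sum_q\ell_q\alpha_q$ only when $\ell_q$ copies of the maximum are available, which is false in general, so this comparison is the main obstacle.

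The fix is to refine the scaling per element: normalize the samples for arm $(i,s)$ by $f(\{i\})-f(\emptyset)$, or, more simply, bound the bandit error through the $O'_q$ items. I would first try the following. The bandit competes against $\max_{e\in X}w_e$, and we only need to compete against a uniformly random element of $O'_q$. So I would split $X$ into geometric classes by singleton marginal, with $O(\ln(1/\delta))$ classes after discarding elements whose marginal is below $\delta\alpha_q/\ell_q$. I would then run the bandit within each class, scaling by that class's maximum, and finally take the best of the class winners. This makes the error for the winner of a class proportional to the marginals of the $O'_q$ elements in that class. Summed, this gives at most $\delta\sum_{o\in O'}(f(\{o\})-f(\emptyset))\le\delta\,\marsum(I)f(O)$, plus the discarded mass, which is $\le\delta\alpha_q$ per part. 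The extra $\ln(1/\delta)$ factors in the query bound leave room for this refinement.

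Comparing class winners requires one more bandit call, whose error must be charged the same way. This charging is the delicate step. If it fails, I would settle for $\eta=O(\delta)\marsum(I)$ and rescale $\delta$ accordingly.
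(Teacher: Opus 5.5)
Your outer structure matches the paper's: the bandit bound conditioned on $j=q$ and $X$, then Lemma~\ref{lem:sampling} with $M=O'_q$, then the uniformly random drop element. The proof breaks at the error term, so there is a genuine gap.

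\textbf{Where it fails.} You assumed normalization by the per-part maximum $\alpha_q=\max_{i\in U_q}(f(\{i\})-f(\emptyset))$. This gives an error of $\frac{\delta'}{r}\sum_q\ell_q\alpha_q$, which, as you noticed, is not bounded by $\delta\,\marsum(I)f(O)$. Suppose $U_q$ has one item with a large singleton marginal and all others near zero. Then $\ell_q\alpha_q$ is $\ell_q$ times the top-$\ell_q$ sum of $U_q$. The loss is therefore up to a factor $\max_q\ell_q$, not a constant, so your fallback of accepting $\eta=O(\delta)\marsum(I)$ and rescaling $\delta$ does not rescue it. Your geometric-class repair has two problems. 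It analyzes a different algorithm from Algorithm~\ref{alg:genf}. It also leaves its decisive step open: the final bandit call comparing class winners has error proportional to the largest singleton marginal among the sampled items. That is exactly the quantity that still has to be controlled.

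\textbf{The missing idea.} Algorithm~\ref{alg:genf} normalizes by $\alpha_X=\max_{(i,s)\in X}(f(\{i\})-f(\emptyset))$, the largest singleton marginal in the \emph{sampled} set, not in the whole part. Samples still lie in $[0,1]$, and Theorem~\ref{thrm:mab} gives $\E[w_{(i^*,s^*)}]\ge\E[\max_{e\in X}w_e]-\delta'\,\E[\alpha_X]$. So the only new ingredient is a bound on $\E[\alpha_X]$:
\begin{enumerate}
\item Condition on $j=q$ and identify $X$ with a uniform subset of $U_q$ via the copies in $C$. Sort the singleton marginals of $U_q$ as $M_1\ge M_2\ge\cdots$.
\item Then $\max_{i\in X}M_i\le M_{\ell_q}+\sum_{i\in X\cap[\ell_q]}(M_i-M_{\ell_q})$.
\item Each of the top $\ell_q$ items lies in $X$ with probability at most $\ln(1/\delta)/\ell_q$. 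Hence, for $\ln(1/\delta)\ge 1$, $\E[\alpha_X\mid j=q]\le\frac{\ln(1/\delta)}{\ell_q}\sum_{i=1}^{\ell_q}M_i$.
\item Averaging with weights $\ell_q/r$ gives $\E[\alpha_X]\le\frac{\ln(1/\delta)}{r}\,\marsum(I)f(O)$.
\end{enumerate}
The choice $\delta'=\delta/\ln(1/\delta)$ exists to cancel this $\ln(1/\delta)$; it is not spare slack for a refinement. The resulting error is exactly $\frac{\eta}{r}f(O)$. Sampling $X$ at rate $\ln(1/\delta)/\ell_q$ already dilutes a single dominant item, so no classes or extra bandit calls are needed.
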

\begin{proof}
Consider an execution of Algorithm~\ref{alg:genf} and let $O'\subseteq C$ be an optimal solution for $I'$ which is fully contained in the set $C$ computed in Line~\ref{genf:candidates}.
Also, for every $(i,s)\in C$ define $w_{(i,s)}= \nabla_{(i,s)} G(t\cdot \indicator_A)$. Also,  let  $\cD'_{(i,s)}$ be the distribution  of $g(R\cup \{i\}) -g(R\setminus \{i\})$. It can be easily verified that the mean of $\cD'_{(i,s)}$ is $w_{i,s}$. 
Let $sqin [k]$ and $Y\subseteq C\cap \tU_q$, by \Cref{thrm:mab}
conditioned on $j=q$ and $X=Y$ we can lower bound the expected value of $w_(i^*,s^*)$:
$$
\E[w_{(i^*,s^*)}\,|\,j=q, X=Y] \geq \max_{(i,s)\in Y} w_{(i,s)} -\delta'\cdot \alpha_Y,
$$
where $\alpha_Y= \max_{(i,s)\in Y} \left(f(\{i\}) -f(\emptyset)\right)$. 

Therefore, 
\begin{equation}
\label{eq:genf_basics}
\E[w_{(i^*,s^*)}] \geq \E\left[\max_{(i,s)\in X} w_{(i,s)} \right] -\frac{\delta}{\ln\left(\frac{1}{\delta}\right)} \cdot \E[\alpha_X]. 
\end{equation}
We bound each of the terms in the last expression separately
\begin{claim}
\label{claim:EMAX}
$\E\left[\max_{(i,s)\in X} w_{(i,s)} \right]\geq \frac{1-\delta}{r} \sum_{o\in O'} \nabla_o G(t\cdot \indicator_A).$
\end{claim}
\begin{claimproof}
For every part  $q\in [k]$  let $O'_q= O'\cap \tU_q$ be the items of $O'$ from part $q$. Since $O'$ is a basis it follows that $\abs{O'_q} = \ell_q$. 
We first condition on $j=q$. 
By Lemma~\ref{lem:sampling} we have
\begin{equation*}
\begin{aligned}
 \E\left[\max_{(i,s)\in X} w_{(i,s)}\,|\,j=q\right] 
&\geq \, (1-\delta)\cdot \frac{\sum_{o\in O'_q} w_{o}}{\ell_q}\,
&=\, (1-\delta)\cdot \frac{\sum_{o\in O'_q} \nabla_{o} G(t\cdot \indicator_A)}{\ell_q}.
\end{aligned}
\end{equation*}
Therefore,
\begin{equation*}
\begin{aligned}
\E\left[\max_{(i,s)\in X} w_{(i,s)}\right]  \,&=\, \sum_{q=1}^{k}  \Pr[j=q ]\cdot  \E\left[\max_{(i,s)\in X} w_{(i,s)}\,|\,j=q\right]\,\\
&\geq\, \sum_{q=1}^{k}  \frac{\ell_q}{r}\cdot (1-\delta)\cdot \frac{\sum_{o\in O'_q} \nabla_{o} G(t\cdot \indicator_A)}{\ell_q}\\
&\geq \frac{1-\delta}{r} \sum_{o\in O'}  \nabla_{o} G(t\cdot \indicator_A)
\end{aligned}
\end{equation*}
\end{claimproof}

\begin{claim}
\label{claim:alpha}
$\E[\alpha_X]\leq \frac{\ln\left(\frac{1}{\delta}\right)}{r}\cdot \marsum(I)$
\end{claim}
\begin{claimproof}
Let $q\in [k]$. We first condition on $q=j$. Assume $\abs{\tU_q} = \ell$, and assume w.l.o.g that $U_q=\{1,\ldots, \ell\}$. Furthermore, for ever y$i\in \tU_q$ define $M_i = f(\{i\})-f(\emptyset)\geq 0$, and assume that  $M_1\geq M_2\geq \ldots \geq M_{\ell}$.  Now, let $Z\subseteq U_q$ be a random subset of $U_q$ of size $\frac{\abs{U_j}}{\ell_j}\cdot \ln\left(\frac{1}{\delta}\right)$. It can be easily observed  that the maximum value of $M_i$ for $i\in Z$ has the same distibution of $\alpha_X$ condition on $j=q$. Therefore,
\begin{equation}
\label{eq:alpha_cond}
\E[\alpha_X\,|\,j=q] = \E\left[ \max_{i\in Z} M_i\right].
\end{equation}

Now,
$$
\max_{i\in Z} M_i \leq M_{\ell_q} + \max_{i\in Z\cap[\ell_q] } (M_i-M_{\ell_q})
\leq M_{\ell_q} + \sum_{i\in Z\cap \ell_q} (M_i-M_{\ell_q}),
$$
where the first inequality holds as the items are sorted and the last inequality  replaces maximization by summation.

Therefore,
\begin{equation}
\label{eq:EmaxM}
\begin{aligned}
\E[\max_{i\in Z}M_i] &\leq M_{\ell_q} +\sum_{i\in [\ell_q]} \Pr(i\in Z) (M_i-M_{\ell_q}) \\
&=  M_{\ell_j} +\sum_{i\in [\ell_q]} \frac{1}{\ell_q} \cdot \ln\left(\frac{1}{\delta}\right) (M_i-M_{\ell_q}) \\
&\leq \frac{\ln\left(\frac{1}{\delta}\right)}{\ell_q} \sum_{i\in[\ell_j]} M_i\\
&= \frac{\ln\left(\frac{1}{\delta}\right)}{\ell_j} \max_{W\subseteq U_q:~|W|\leq \ell_q} \sum_{i\in U_q} \left(f(\{i\})-f(\emptyset)\right). 
\end{aligned}
\end{equation}

By \eqref{eq:alpha_cond} and \eqref{eq:EmaxM} we hae
$$
\begin{aligned}
\E[\alpha_X] &=\sum_{q=1}^{k} \Pr(j=q) \E[\alpha_X\,|\,j=q] \,\\
&\leq\sum_{q=1}^{k} \frac{\ell_j}{r} \cdot  \frac{\ln\left(\frac{1}{\delta}\right)}{\ell_j} \max_{W\subseteq U_j:~|W|\leq \ell_j} \sum_{i\in U_j} \left(f(\{i\})-f(\emptyset)\right)\\
&\leq \frac{1}{r}\cdot \ln\left(\frac{1}{\delta}\right) \marsum(I). 
\end{aligned}
$$
\end{claimproof}

By \eqref{eq:genf_basics} and Claims~\ref{claim:alpha} and \ref{claim:EMAX} we have
\begin{equation}
\label{eq:swapin}
\begin{aligned}
\E\left[\nabla_{(i^*,s^*)} G(t\cdot \indicator_A)\right]& =
\E[w_{(i^*,s^*)}] \\
&\geq \frac{1-\delta}{r}\sum_{o\in O'} \nabla_o G(t\cdot \indicator_{A}) -\frac{\delta}{\ln\left(\frac{1}{\delta}\right)} \cdot \frac{\ln\left(\frac{1}{\delta}\right)}{r} \cdot \marsum(I)
\\
&=  \frac{1-\delta}{r}\sum_{o\in O'} \nabla_o G(t\cdot \indicator_{A}) -\frac{\eta}{r}
\end{aligned}
\end{equation}

Additionally,
\begin{equation}
\label{eq:genf_drop}
\begin{aligned}
\E\left[\nabla_{(a_i,a_s) } G(t\cdot \indicator_{A})\right] \,&=\,
\sum_{q=1}^{k} \Pr[ j=q] \cdot \E\left[\nabla_{(a_i,a_s) } G(t\cdot \indicator_{A}) \,|\,j=q \right] \,\\
&=\, \sum_{q=1}^{k} \frac{\ell_j}{r}\cdot\frac{1}{\ell_j} \cdot \sum_{(i,s)\in A\cap \tU_q} \nabla_{(i,s) } G(t\cdot \indicator_{A})\\
&=\, \frac{1}{r}\sum_{(i,s)\in A}\nabla_{(i,s) } G(t\cdot \indicator_{A}).
\end{aligned}
\end{equation}

By \eqref{eq:swapin} and \eqref{eq:genf_drop} we have
$$
\begin{aligned}
&\E\left[\nabla_{(i^*,s^*)} G(t\cdot \indicator_A)-\nabla_{(a_i,a_s) } G(t\cdot \indicator_A)\right] \, \\
&\geq \frac{1-\delta}{r}\sum_{o\in O'} \nabla_o G(t\cdot \indicator_{A}) -\frac{\eta}{r} - \frac{1}{r}\sum_{(i,s)\in A}\nabla_{(i,s) } G(t\cdot \indicator_{A}).
\end{aligned}
$$
Therefore, Algorithm~\ref{alg:paritionF} is a $(1-\delta,\eta)$-approximate swap algorithm. 

\end{proof}

Lemma~\ref{lem:swap_partfprob} is an immediate consequence of Lemmas~\ref{lem:genf_is_swap} and~\ref{lem:genf_approximate}. 

\begin{algorithm}[t]
\caption{\genf $(t, A)$}
\SetKwInOut{Input}{input}
\SetKwInOut{Output}{output}
\SetAlgoNlRelativeSize{0}
\label{alg:genf}
\SetKwInOut{Configuration}{Config}
\Configuration{Accuracy parameter $\delta>0$ }

      For every $i\in U$ select a copy $(i,s) \in (\{i\}\times [r+1])\setminus A$ and add $(i,s)$ to a set $C$. \label{genf:candidates}

      Select a random part $j\in [k]$ such that $\Pr(j=q)\,=\, \frac{\ell_q}{r}$ for all $q\in [k]$.

      Sample a subset $X\subseteq C\cap \tU_j$ of size $\min\left\{\frac{\abs{U_j}}{\ell_j}\cdot \ln \left(\frac{1}{\delta}\right) , \abs{U_j} \right\}$ uniformly at random. \label{genf:X}

      Let $\cD_{(i,s)}$ be the distribution of $\frac{g(R\cup \{i\}) -g(R\setminus \{i\})}{\alpha_X}$ where $\alpha_X=\max_{(i,s)\in X} (f(\{i\})-f(\emptyset))$.
      \label{genf:dist}

      Let $(i^*,s^*)$ be the output of the algorithm of Theorem~\ref{thrm:mab} on $\left(\cD_{(i,s)}\right)_{(i,s)\in X}$ and error parameter $\delta'=\frac{\delta}{\ln\left(\frac{1}{\delta}\right)}$.\label{genf:mab}

       Sample an item $(a_i,a_s)$ from $A\cap \tU_j$ uniformly at random.

       Return  $(a_i,a_s)$ and $(i^*,s^*)$.

\end{algorithm}


\section{Frank-Wolfe Interpretation of Continuous Greedy}\label{app:FW}
An alternative intuitive way to understand the origin of \gp is by relating the continuous greedy algorithm to a Frank-Wolfe style continuous local search algorithm.

Consider the following continuous process defined for an arbitrarily small $ \varepsilon >0$: $(1)$ $\by(\varepsilon)$ is initialized to be an arbitrary base; and
$(2)$ $\frac{\partial \by(t)}{\partial t}\leftarrow (\indicator _{Z(t)}-\by(t))/t$ where: 
$$ Z(t)\triangleq \argmax \left\{ \sum _{i\in Z}\nabla _i F(t\by(t))\colon Z\in \cB\right\},$$
for every $ \varepsilon \leq t\leq 1$.
Note that $\by(t)\in \cP _{\cB}$ for all $\varepsilon \leq t\leq 1$.
Intuitively, the above algorithm can be seen as a Frank-Wolfe style continuous local search algorithm.
The reason is that up to scaling, it takes a convex combination of the current position of the algorithm, i.e., $\by(t)$, and another feasible point in the polytope, i.e., $\indicator_{Z(t)}$. 
The only difference between $ \{ t\by(t)\}_{\varepsilon \leq t \leq 1}$ and Frank-Wolfe is that $Z(t)$ is defined with respect to  $\nabla F(t\by (t))$ and not $\nabla F(\by(t))$, as is typically defined in Frank-Wolfe.

We observe that both continuous trajectories, $\{ \bx(t)\} _{0\leq t\leq 1}$ for continuous greedy and $ \{ t\by(t)\}_{\varepsilon \leq t \leq 1}$ of Frank-Wolfe, are in fact the same.
The obvious reason is that one can define $ \by(t)\triangleq \bx(t)/t$ and observe that:
\begin{align}
   \frac{\partial \by (t)}{\partial t}  = \frac{\partial \left( \bx(t)/t\right)}{\partial t} = \frac{1}{t}\frac{\partial \bx(t)}{\partial t} - \frac{\bx(t)}{t^2} = \frac{\indicator _{Z(t)} - \by(t)}{t}.\label{our-approach3}
\end{align}
Thus, we can conclude that up to time scaling, continuous greedy ($\bx(t)$) and Frank-Wolfe ($\by(t)$) are, in fact, the same.

Hence, as before, \gp can be obtained by rounding on the fly of the discretized trajectory $\{ \by(t)\}_{\varepsilon \leq t \leq 1}$ as the step size $\delta$ approaches $0$.
It should be noted that here $\by(t)$ is rounded on the fly and no scaling is required since $ \by(t)\in \cP_{\cB}$.
This establishes an alternative way to understand the origin of \gp.

\section*{Acknowledgments}

Mohit Singh is supported by NSF AF:2504994, 2106444.
Amit Ganz Rozenman, Ariel Kulik and Roy Schwartz receive funding from the European Union’s Horizon 2020 research and innovation program under
grant agreement no. 852870-ERC-SUBMODULAR.
The authors also thank Jan Vondr\'{a}k for discussions on the continuous interpretation of the algorithm.


\bibliographystyle{alpha}
\bibliography{refs}

\appendix

\section{Simulating the Poisson Process}\label{app:PoissonSimulate}
We note that for every time $ \varepsilon \leq t$, the density of the time of the next event after time $t$, i.e., $ \tau(t)$, is known.
This is summarized in the following observation.

\begin{observation}\label{obs:density}
For every $ t \geq \varepsilon$ the density $ f_{\tau(t)}$ of $\tau(t)$, the time of the next event after $t$ equals:
\[ f_{\tau(t)}(r) = \begin{cases} \frac{k\cdot t^k}{r^{k+1}} & r\geq t \\
 0 & r < t\end{cases}.\]
\end{observation}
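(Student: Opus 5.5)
The plan is to compute the distribution of $\tau(t)$ directly from the rate function $\lambda(r)=\nicefrac{k}{r}$ and then differentiate. For a non-homogeneous Poisson process, the number of events in an interval $(t,r]$ is Poisson distributed with mean $\int_t^r \lambda(s)\,ds$. This is the standard fact from \cite{ross2014introduction}, Chapter 5. It also follows from properties \ref{Poisson-independ} and \ref{Poisson-events}, by writing a differential equation for $\Pr[N_{(t,r]}=0]$ in $r$. The event $\tau(t)>r$ is exactly the event $N_{(t,r]}=0$.

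The steps are as follows.

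\begin{enumerate}
\item Compute $\Lambda(t,r)\triangleq\int_t^r \frac{k}{s}\,ds=k\ln(r/t)$ for $r\geq t$.

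\item Conclude the survival function
\[
\Pr[\tau(t)>r]=\Pr[N_{(t,r]}=0]=e^{-k\ln(r/t)}=\left(\frac{t}{r}\right)^k \quad\text{for } r\geq t,
\]
and $\Pr[\tau(t)>r]=1$ for $r<t$.

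\item Differentiate the CDF $1-(t/r)^k$ with respect to $r$. This gives the density $k t^k/r^{k+1}$ for $r\geq t$ and $0$ for $r<t$.

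\item As a sanity check, verify that the density integrates to $1$ over $[t,\infty)$, so no mass sits at infinity.
\end{enumerate}

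The only point needing care is justifying $\Pr[N_{(t,r]}=0]=e^{-\Lambda(t,r)}$ from the two stated properties rather than citing it. To do this, let $P_0(r)=\Pr[N_{(t,r]}=0]$. Independence of disjoint intervals gives
\[
P_0(r+\delta)=P_0(r)\,\Pr[N_{(r,r+\delta]}=0].
\]
Property \ref{Poisson-events} gives $\Pr[N_{(r,r+\delta]}=0]=1-\lambda(r)\delta+o(\delta)$. Together these yield the ODE $P_0'(r)=-\lambda(r)P_0(r)$ with $P_0(t)=1$, whose solution is $e^{-\Lambda(t,r)}$.

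A consequence worth recording is that $\tau(t)$ can be sampled by inversion: draw $u$ uniform on $(0,1)$ and set $\tau(t)=t\,u^{-1/k}$. This is what makes line 2 of Algorithm~\ref{alg:Poisson} implementable exactly, without discretization.
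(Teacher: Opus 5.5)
Your derivation is correct. The paper states this observation without proof, treating it as a standard fact about non-homogeneous Poisson processes. Your computation $\Pr[\tau(t)>r]=\Pr[N_{(t,r]}=0]=e^{-k\ln(r/t)}=(t/r)^k$ for $r\ge t$, followed by differentiation, is exactly the standard justification. Your inversion formula $\tau(t)=t\,u^{-1/k}$ is also right.

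One caveat concerns your side claim that $\Pr[N_{(t,r]}=0]=e^{-\Lambda(t,r)}$ \emph{follows from} properties~\ref{Poisson-independ} and~\ref{Poisson-events} as the paper states them.

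\begin{itemize}
\item Independence plus property~\ref{Poisson-events} give only the \emph{right} derivative, $\partial_+P_0(r)=-\lambda(r)P_0(r)$. A right-derivative ODE does not determine $P_0$ unless you also know $P_0$ is continuous.
\item The two properties, which are pointwise right-sided limits, do not rule out fixed atoms. Take a rate-$k/s$ Poisson process and add an independent extra event at a fixed time $r_0>t$ with probability $\nicefrac{1}{2}$.
\item This process satisfies both properties. At every fixed point, $(t',t'+\delta]$ excludes $r_0$ for small $\delta$, and disjoint intervals stay independent.
\item Yet it gives $\Pr[N_{(t,r]}=0]=\tfrac12 (t/r)^k$ for $r\ge r_0$. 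So $\tau(t)$ has an atom at $r_0$ and no density.
\end{itemize}

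To make the self-contained route work, you need the extra input that $P_0$ is continuous, i.e.\ no fixed atoms. This is part of what it means to be a Poisson process with intensity $k/s$. Then $h(r)=P_0(r)(r/t)^k$ is continuous with zero right derivative on $[t,\infty)$, hence constant. The same kind of supremum argument the paper uses for its differential-inequality claims proves this.

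Your main line avoids the issue altogether: it simply cites that $N_{(t,r]}$ is Poisson with mean $\int_t^r\lambda(s)\,ds$.
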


One can directly simulate the Poisson process as follows.
Set the time of the zero event: $ \tau_0=\varepsilon$.
Then, for every $ i\geq 1$ in an increasing order of $i$: directly sample $ \tau_i$ by using the density of $ \tau(\tau_{i-1})$ as given by Observation \ref{obs:density}.

\section{Proof from Section~\ref{sec:poisson}}

\submodularstandard*
\label{app:submodular-standard}
\begin{proof}
\begin{align}
\sum_{i \in T} \nabla_i F(t \indicator_S) & = \sum _{i\in T\setminus S}\left( F(t\indicator _S \vee \indicator _{\{ i\} }) - F(t\indicator _S)\right) + \frac{1}{1-t}\sum _{i\in T\cap S }\left( F(t\indicator _S \vee \indicator _{\{ i\} }) - F(t\indicator _S) \right) \label{eq16b} \\
& \geq \sum _{i\in T\setminus S}\left( F(t\indicator _S \vee \indicator _{\{ i\} }) - F(t\indicator _S)\right) + \sum _{i\in T\cap S }\left( F(t\indicator _S \vee \indicator _{\{ i\} }) - F(t\indicator _S)\right)\label{eq16c} \\
& \geq F(t\indicator _S \vee \indicator _{T})  - F(t\indicator _S) \label{eq17-submodularity} \\
& \geq f(T) - F(t\indicator _S). \label{eq18-monotonicity}    
\end{align}
We note that equality \eqref{eq16b} follows from the multilinearity of $F$, whereas inequality \eqref{eq16c} holds due to the monotonicity of $f$ and property \eqref{multilinear-monotone} of the multilinear extension
implying that  $F(t\indicator _S \vee \indicator _{\{ i\} }) \geq F(t\indicator _S) $.
Moreover, inequality \eqref{eq17-submodularity} follows from the submodularity of $f$ and property \eqref{multilinear-submodular} of the multilinear extension:
\[ \sum _{i\in T} \left( F(t\indicator _S \vee \indicator _{\{ i\} })-F(t\indicator _S)\right) \geq F(t\indicator _S \vee \indicator _T).\]
Finally, inequality \eqref{eq18-monotonicity} holds due to the monotonicity of $f$ and property \eqref{multilinear-monotone} of the multilinear extension.
\end{proof}

\label{app:differentialEQ}

\differntialEq*
\begin{proof}
First we prove a slightly weaker version of Claim \ref{claim:differentialEQ}, where the inequality in requirement $(5)$ is a strong inequality.
This is summarized in the following claim.
\begin{claim}\label{claim:differentialEQ-weak}
Let $ q\colon [\varepsilon,1]\rightarrow \mathbb{R}$ be a function satisfying:
$(1)$ $q(t)$ is continuous for every $t\in (\varepsilon,1)$;
$(2)$ $q(t)$ is left continuous for $t=1$;
$(3)$ $q(t)$ is right continuous for $ t=\varepsilon$;
$(4)$ $q(\varepsilon)\geq 0$; and
$(5)$ for every $ t\in [\varepsilon,1)$ the right derivative of $q(t)$ is defined and satisfies: $ \frac{\partial _+ q(r)}{\partial r}\big|_{r=t} > f(O)-q(t)$.
Then: $ q(1)\geq \left( 1-e^{-(1-\varepsilon)}\right)f(O)$.
\end{claim}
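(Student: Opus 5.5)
The plan is to compare $q$ with the exact solution of the ODE with equality. Since the claimed bound decreases as the initial value decreases, it is enough to compare against the solution starting from $0$. Define
\[ g(t)\triangleq e^{t}\,\bigl(q(t)-f(O)\bigr)+f(O)\,e^{\varepsilon}, \qquad t\in[\varepsilon,1]. \]
This choice works as follows. At $t=\varepsilon$, requirement $(4)$ gives $g(\varepsilon)=e^{\varepsilon}q(\varepsilon)\geq 0$. Moreover $g(1)\geq 0$ is equivalent to $q(1)\geq f(O)-f(O)e^{-(1-\varepsilon)}=\bigl(1-e^{-(1-\varepsilon)}\bigr)f(O)$, which is the claim. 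For every $t\in[\varepsilon,1)$ the product rule for right derivatives gives
\[ \frac{\partial_+ g(r)}{\partial r}\Big|_{r=t}=e^{t}\Bigl(\frac{\partial_+ q(r)}{\partial r}\Big|_{r=t}+q(t)-f(O)\Bigr)>0, \]
using the strict inequality in requirement $(5)$. Also, $g$ inherits from $q$ continuity on $(\varepsilon,1)$, right continuity at $\varepsilon$ and left continuity at $1$.

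It therefore remains to show that a function $g$ with these continuity properties and a strictly positive right derivative at every point of $[\varepsilon,1)$ satisfies $g(1)\geq g(\varepsilon)$. I would argue by a sup argument; in fact $g$ turns out to be nondecreasing. Let
\[ s\triangleq\sup\{t\in[\varepsilon,1]: g(r)\geq g(\varepsilon)\ \forall r\in[\varepsilon,t]\}. \]
The set is nonempty because it contains $\varepsilon$.

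Suppose $s<1$. First, $g(s)\geq g(\varepsilon)$. If $s=\varepsilon$ this is trivial. If $s>\varepsilon$, it follows from continuity of $g$ at $s\in(\varepsilon,1)$ and the fact that $g(r)\geq g(\varepsilon)$ on $[\varepsilon,s)$. Next, the right derivative of $g$ at $s$ is strictly positive. Hence there is $\eta>0$ with $g(r)>g(s)\geq g(\varepsilon)$ for all $r\in(s,s+\eta)$. This contradicts the maximality of $s$, so $s=1$. Then left continuity of $g$ at $1$ gives $g(1)\geq g(\varepsilon)\geq 0$, which finishes the proof of Claim~\ref{claim:differentialEQ-weak}.

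The delicate point is the sup argument. Only right derivatives are available, so the mean value theorem cannot be used, and strictness of $(5)$ is exactly what lets a single positive right derivative push $g$ strictly above $g(s)$ just to the right of $s$. I also need to be careful that continuity at the supremum is available: interior continuity covers $s\in(\varepsilon,1)$, right continuity at $\varepsilon$ covers $s=\varepsilon$, and left continuity at $1$ covers the endpoint.

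To recover Claim~\ref{claim:differentialEQ} with the non-strict inequality, I would perturb. For $\gamma>0$, apply the weak claim to $q$ with $f(O)$ replaced by $f(O)-\gamma$. The weak claim uses $f(O)$ only as a constant, so this is legitimate. Requirement $(5)$ then holds strictly, since $\partial_+q\geq f(O)-q>(f(O)-\gamma)-q$. This yields $q(1)\geq\bigl(1-e^{-(1-\varepsilon)}\bigr)(f(O)-\gamma)$, and letting $\gamma\to 0$ gives the claim.
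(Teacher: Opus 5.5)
Your proof is correct. It rests on the same mechanism as the paper's: a supremum argument in which the strict inequality in requirement $(5)$ produces a contradiction just to the right of the supremum. You organize it differently, though.

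The paper works directly with $q$ and the target curve $b(t)=(1-e^{-(t-\varepsilon)})f(O)$. It sets $s=\sup\{t: q(t)\geq b(t)\}$ and uses continuity at $s$ to get $q(s)=b(s)$. It then bounds $\frac{\partial_+ q(r)}{\partial r}\big|_{r=s}\leq b'(s)=e^{-(s-\varepsilon)}f(O)=f(O)-q(s)$, which contradicts $(5)$.

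Your integrating factor turns the comparison curve into a constant. Indeed, $q(t)\geq b(t)$ if and only if $g(t)\geq 0$, so the paper's set is exactly $\{t: g(t)\geq 0\}$. This reduces the claim to the self-contained fact that a function with strictly positive right derivative and the stated continuity cannot decrease.

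What this buys you:
\begin{itemize}
\item At the supremum you only need the one-sided bound $g(s)\geq g(\varepsilon)$, not the equality $q(s)=b(s)$.
\item Right continuity at $\varepsilon$ is in fact never used in your argument; the case $s=\varepsilon$ is trivial.
\item The same lemma with factor $e^{\beta t}$, applied to $q-(1-\nicefrac{\eta}{\beta})f(O)$, gives the $(\beta,\eta)$ variant with no extra work.
\end{itemize}
The price is the (easy) product rule for one-sided derivatives, which the paper avoids.

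Your reduction of the non-strict claim, lowering the constant $f(O)$ to $f(O)-\gamma$, is also valid, since the weak claim treats $f(O)$ as an arbitrary real constant. It is slightly simpler than the paper's perturbation $q(t)+\eta(t-\varepsilon)$.
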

Let us now show that the above weaker claim proves Claim \ref{claim:differentialEQ}.
For every $\eta >0$ define $ q_{\eta}\colon [\varepsilon,1]\rightarrow \mathbb{R}$ as follows: $q_{\eta}(t)\triangleq q(t)+\eta (t-\varepsilon)$.
Note that $ q_{\eta}$ satisfies conditions $(1)$, $(2)$, $(3)$, and $(4)$ of Claim \ref{claim:differentialEQ-weak} (since $q$ satisfies conditions $(1)$, $(2)$, $(3)$, and $(4)$ of Claim \ref{claim:differentialEQ}).
Focusing on requirement $(5)$, for every $ t\in [\varepsilon,1)$:
\[ \frac{\partial _+ q_{\eta}(r)}{\partial r}\bigg|_{r=t} = \frac{\partial _+ q(r)}{\partial r}\bigg|_{r=t}+\eta \geq f(O)-q(t)+\eta=f(O)-q_{\eta}(t)+\eta (1+t-\varepsilon) > f(O)-q_{\eta}(t),\]
where the first (weak) inequality follows since $q$ satisfies condition $(5)$ of Claim \ref{claim:differentialEQ} and the last (strong) inequality follows since $1+t-\varepsilon >0$.

Thus, for every $ \eta >0$, applying Claim \ref{claim:differentialEQ-weak} with $q_{\eta} $ implies that: $ q_{\eta}(1)\geq \left( 1-e^{-(1-\varepsilon)}\right)f(O)$.
Hence, for every $\eta >0$: $ q(1)\geq \left( 1-e^{-(1-\varepsilon)}\right)f(O)-\eta (1-\varepsilon)$.
Since the above holds for every $\eta >0$, we can conclude that $  q(1)\geq \left( 1-e^{-(1-\varepsilon)}\right)f(O)$.

All that remains is to prove Claim \ref{claim:differentialEQ-weak}.
\begin{proof}[Proof of Claim \ref{claim:differentialEQ-weak}]

Let \[X\triangleq \left\{ t:\varepsilon \leq t \leq 1 \text{ and } q(t)\geq \left( 1-e^{-(t-\varepsilon)}\right)f(O) \right\} .\]
We note that $X\neq \emptyset$ since condition $(4)$ implies $q(\varepsilon)\geq 0$ and $\left( 1-e^{-(\varepsilon-\varepsilon)}\right)f(O)=0$ and thus $ \varepsilon \in X$.
Thus, since $ X\neq \emptyset$ we can define $ s\triangleq \sup \{ x:x\in X\}$.
If $s=1$ we are done since condition $(2)$ implies that:
\[ q(1)=\lim _{\delta \rightarrow 0^+}q(1-\delta) \geq \left( 1-e^{-(1-\varepsilon)}\right)f(O),\]
where the inequality follows from the assumption that $s=1$.

Thus, let us assume that $ s<1$. Therefore,
\begin{align}
\frac{\partial _+ q(r)}{\partial r}\bigg|_{r=s} & =\lim _{\delta \rightarrow 0^+} \frac{1}{\delta}\cdot \left( q(s+\delta) - q(s)\right)  \nonumber \\
& \leq \lim _{\delta \rightarrow 0^+}\frac{1}{\delta}\cdot \left( 1-e^{-(s+\delta-\varepsilon)} - 1+e^{-(s-\varepsilon)}\right) f(O) \label{diffEQ1} \\
& = e^{-(s-\varepsilon)}f(O)\cdot \lim _{\delta \rightarrow 0^+}\frac{1}{\delta}\left( 1-e^{-\delta}\right) \nonumber\\
& = e^{-(s-\varepsilon)}f(O).\nonumber
\end{align}
Inequality \eqref{diffEQ1} follows from the following two:
$(1)$ requirements $(1)$ and $(3)$ of $q$, together with the definition of $s$, imply that $ q(s)=\left( 1-e^{-(s-\varepsilon)}\right) f(O)$; and
$(2)$ the definition of $s$ implies that for every $t\in (s,1]$ we have: $ q(t)<\left( 1-e^{-(t-\varepsilon)}\right) f(O)$.

Requirement $(5)$ of $q$, together with the above observation that $ q(s)=\left( 1-e^{-(s-\varepsilon)}\right) f(O)$, yield:
\[ \frac{\partial _+ q(r)}{\partial r}\bigg|_{r=s} > f(O)-q(s) = f(O)-\left( 1-e^{-(s-\varepsilon)}\right) f(O) = e^{-(s-\varepsilon)}f(O).\]
This is a contradiction.
Hence, it cannot be the case that $s<1$.
This completes the proof of Claim \ref{claim:differentialEQ-weak}.
%
%
%
\end{proof}
This concludes the proof of Claim~\ref{claim:differentialEQ}.
\end{proof}

\differentEQapp*
\label{app:differentialEQ-approximate}
\begin{proof}
First we prove a slightly weaker version of Claim \ref{claim:differentialEQ-approximate}, where the inequality in requirement $(5)$ is a strong inequality.
This is summarized in the following claim.
\begin{claim}\label{claim:differentialEQ-weak-approximate}
Let $ q\colon [\varepsilon,1]\rightarrow \mathbb{R}$ be a function satisfying:
$(1)$ $q(t)$ is continuous for every $t\in (\varepsilon,1)$;
$(2)$ $q(t)$ is left continuous for $t=1$;
$(3)$ $q(t)$ is right continuous for $ t=\varepsilon$;
$(4)$ $q(\varepsilon)\geq 0$; and
$(5)$ there exist $ \beta\in (0,1]$ and $\eta \geq 0$ such that for every $ t\in [\varepsilon,1)$ the right derivative of $q(t)$ is defined and satisfies: $ \frac{\partial _+ q(r)}{\partial r}\big|_{r=t} > (\beta-\eta)f(O)-\beta q(t)$.
Then: $ q(1)\geq (1-\nicefrac{\eta}{\beta})\left( 1-e^{-\beta (1-\varepsilon)}\right)f(O)$.
\end{claim}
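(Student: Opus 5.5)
The plan mirrors the proof of Claim~\ref{claim:differentialEQ-weak}, with the comparison curve $\phi(t)\triangleq (1-\nicefrac{\eta}{\beta})\left(1-e^{-\beta(t-\varepsilon)}\right)f(O)$ in place of $(1-e^{-(t-\varepsilon)})f(O)$. This $\phi$ solves the ODE $\phi'(t)=(\beta-\eta)f(O)-\beta\phi(t)$ with $\phi(\varepsilon)=0$. Indeed, $\phi'(t)=(\beta-\eta)e^{-\beta(t-\varepsilon)}f(O)$, while
\[(\beta-\eta)f(O)-\beta\phi(t)=(\beta-\eta)f(O)-(\beta-\eta)\left(1-e^{-\beta(t-\varepsilon)}\right)f(O)=(\beta-\eta)e^{-\beta(t-\varepsilon)}f(O).\]
If $\eta\geq\beta$ then $\phi\le 0$ on the whole interval. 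In that case I would still run the same argument, which only needs $q(\varepsilon)\ge 0=\phi(\varepsilon)$, so no case split is needed.

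Define $X\triangleq\{t\in[\varepsilon,1]: q(t)\ge \phi(t)\}$. Condition $(4)$ gives $\varepsilon\in X$, so we may set $s\triangleq\sup X$. If $s=1$, then left continuity at $1$ (condition $(2)$) and continuity of $\phi$ give $q(1)\ge\phi(1)$. One detail needs care: if $1\notin X$, then $s=1$ is a limit of points of $X$ from the left, so $q(1)=\lim q(t_n)\ge\lim\phi(t_n)=\phi(1)$.

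Now suppose $s<1$. The first step is to show $q(s)=\phi(s)$. For $s>\varepsilon$, continuity (condition $(1)$) gives $q(s)\ge\phi(s)$ from points of $X$ approaching $s$ from the left (or $s\in X$ itself). Continuity also gives $q(s)\le\phi(s)$, since $q<\phi$ on $(s,1]$. For $s=\varepsilon$, right continuity (condition $(3)$) gives the upper bound and $\varepsilon\in X$ gives the lower bound. Since $q(t)<\phi(t)$ for $t\in(s,1]$, we then get
\[\frac{\partial_+ q(r)}{\partial r}\Big|_{r=s}=\lim_{\delta\to0^+}\frac{q(s+\delta)-q(s)}{\delta}\le \lim_{\delta\to0^+}\frac{\phi(s+\delta)-\phi(s)}{\delta}=\phi'(s)=(\beta-\eta)f(O)-\beta\phi(s).\]
On the other hand, the strict hypothesis $(5)$ together with $q(s)=\phi(s)$ gives $\frac{\partial_+ q}{\partial r}|_{r=s}>(\beta-\eta)f(O)-\beta q(s)=(\beta-\eta)f(O)-\beta\phi(s)$. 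This is a contradiction, so $s=1$ and $q(1)\ge\phi(1)$, which is exactly the claimed bound.

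The main obstacle is establishing $q(s)=\phi(s)$ at the supremum using only one-sided continuity at the endpoints, which is handled by the case distinction above. The remaining steps are routine. The derivative computation for $\phi$ is checked above; it just needs to be laid out in the same style as inequality \eqref{diffEQ1}.
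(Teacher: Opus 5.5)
Your proposal is correct and follows essentially the same route as the paper's proof. Both use the comparison curve $(1-\nicefrac{\eta}{\beta})(1-e^{-\beta(t-\varepsilon)})f(O)$, the supremum $s$ of the set where $q$ dominates it, and the contradiction at $s<1$ between the right-derivative upper bound (from $q<\phi$ on $(s,1]$) and the strict lower bound in requirement $(5)$. Your handling of the case $s=1$, $1\notin X$, via a sequence in $X$ is slightly more careful than the paper's one-line limit, as is your justification of $q(s)=\phi(s)$ through one-sided continuity.
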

Let us now show that the above weaker claim proves Claim \ref{claim:differentialEQ-approximate}.
For every $\eta >0$ define $ q_{\eta}\colon [\varepsilon,1]\rightarrow \mathbb{R}$ as follows: $q_{\eta}(t)\triangleq q(t)+\eta (t-\varepsilon)$.
Note that $ q_{\eta}$ satisfies conditions $(1)$, $(2)$, $(3)$, and $(4)$ of Claim \ref{claim:differentialEQ-weak-approximate} (since $q$ satisfies conditions $(1)$, $(2)$, $(3)$, and $(4)$ of Claim \ref{claim:differentialEQ-approximate}).
Focusing on requirement $(5)$, for every $ t\in [\varepsilon,1)$:
\[ \frac{\partial _+ q_{\eta}(r)}{\partial r}\bigg|_{r=t} = \frac{\partial _+ q(r)}{\partial r}\bigg|_{r=t}+\eta \geq f(O)-q(t)+\eta=f(O)-q_{\eta}(t)+\eta (1+t-\varepsilon) > f(O)-q_{\eta}(t),\]
where the first (weak) inequality follows since $q$ satisfies condition $(5)$ of Claim \ref{claim:differentialEQ-approximate} and the last (strong) inequality follows since $1+t-\varepsilon >0$.

Thus, for every $ \eta >0$, applying Claim \ref{claim:differentialEQ-weak-approximate} with $q_{\eta} $ implies that: $ q_{\eta}(1)\geq (1-\eta/\beta)\left( 1-e^{-\beta (1-\varepsilon)}\right)f(O)$.
Hence, for every $\eta >0$: $ q(1)\geq (1-\eta/\beta)\left( 1-e^{-\beta (1-\varepsilon)}\right)f(O)-\eta (1-\varepsilon)$.
Since the above holds for every $\eta >0$, we can conclude that $  q(1)\geq (1-\eta/\beta)\left( 1-e^{-\beta (1-\varepsilon)}\right)f(O)$.

All that remains is to prove Claim \ref{claim:differentialEQ-weak-approximate}.
\begin{proof}[Proof of Claim \ref{claim:differentialEQ-weak-approximate}]
Let \[X\triangleq \left\{ t:\varepsilon \leq t \leq 1 \text{ and } q(t)\geq (1-\nicefrac{\eta}{\beta})\left( 1-e^{-\beta (t-\varepsilon)}\right)f(O) \right\} .\]
We note that $X\neq \emptyset$ since condition $(4)$ implies $q(\varepsilon)\geq 0$ and $(1-\eta/\beta)\left( 1-e^{-\beta (\varepsilon-\varepsilon)}\right)f(O)=0$ and thus $ \varepsilon \in X$.
Thus, since $ X\neq \emptyset$ we can define $ s\triangleq \sup \{ x:x\in X\}$.
If $s=1$ we are done since condition $(2)$ implies that:
\[ q(1)=\lim _{\delta \rightarrow 0^+}q(1-\delta) \geq (1-\nicefrac{\eta}{\beta})\left( 1-e^{-\beta (1-\varepsilon)}\right)f(O),\]
where the inequality follows from the assumption that $s=1$.

Thus, let us assume that $ s<1$. Therefore,
\begin{align}
\frac{\partial _+ q(r)}{\partial r}\bigg|_{r=s} & =\lim _{\delta \rightarrow 0^+} \frac{1}{\delta}\cdot \left( q(s+\delta) - q(s)\right)  \nonumber \\
& \leq \lim _{\delta \rightarrow 0^+}\frac{1}{\delta}\cdot \left( (1-\nicefrac{\eta}{\beta})\left(1-e^{-\beta (s+\delta-\varepsilon)}\right) - (1-\nicefrac{\eta}{\beta}) \left(1-e^{-\beta (s-\varepsilon)}\right)\right) f(O) \label{diffEQ2987} \\
& = (1-\nicefrac{\eta}{\beta})\cdot e^{-\beta (s-\varepsilon)}\cdot f(O)\cdot \lim _{\delta \rightarrow 0^+}\frac{1}{\delta}\left( 1-e^{-\beta\delta}\right) \nonumber\\
& = (\beta-\eta)\cdot e^{-\beta (s-\varepsilon)}\cdot f(O).\nonumber
\end{align}
Inequality \eqref{diffEQ2987} follows from the following two:
$(1)$ requirements $(1)$ and $(3)$ of $q$, together with the definition of $s$, imply that $ q(s)=(1-\nicefrac{\eta}{\beta})\left( 1-e^{-\beta (s-\varepsilon)}\right) f(O)$; and
$(2)$ the definition of $s$ implies that for every $t\in (s,1]$ we have: $ q(t)<(1-\nicefrac{\eta}{\beta})\left( 1-e^{-\beta (t-\varepsilon)}\right) f(O)$.

Requirement $(5)$ of $q$, together with the above observation that $ q(s)=(1-\nicefrac{\eta}{\beta})\left( 1-e^{-\beta (s-\varepsilon)}\right) f(O)$, yield:
\[ \frac{\partial _+ q(r)}{\partial r}\bigg|_{r=s} > (\beta-\eta)f(O)-\beta q(s) = (\beta-\eta)f(O)-(\beta-\eta)\left( 1-e^{-\beta (s-\varepsilon)}\right) f(O) = (\beta-\eta) \cdot e^{-\beta (s-\varepsilon)}\cdot f(O).\]
This is a contradiction.
Hence, it cannot be the case that $s<1$.
This completes the proof of Claim \ref{claim:differentialEQ-weak-approximate}.
%
%
%
\end{proof}
This concludes the proof of Claim~\ref{claim:differentialEQ-approximate}.  
\end{proof}

\section{Preprocessing}\label{sec:preprocessing}

In this section we provide the proof for Lemmas~\ref{lem:simpleprep} and~\ref{lem:generalized_prep}.
In this section we consider the problem of maximizing a monotone submodular function $f:2^{U}\rightarrow \mathbb{R}_{\geq 0}$ with a generalized partition matroid, where $U$ is partitioned into $U_1,\ldots,U_k$ and up to $\ell_j\in \mathbb{N}$ items can be selected form $U_j$. When we deal with the case of a regular partition  matroid, we view it as a special case of the generalized one in which $\ell_j=1$ for all $j\in [k]$. We also define $r=\sum_{j=1}^{k}\ell_j$ as the rank of the matroid.

Our pre-processing algorithms utilize a variant of the residual random greedy \cite{BFNS14}.  
The pseudo-code  of the algorithm  is given in Algorithm \ref{alg:residual}. 
In particular, our  pre-processing algorithms always return one of the sets $P_0, \ldots, P_r$ returned by Algorithm \ref{alg:residual}. A central challenge of the pre-processing algorithms is to pick the right set to be returned, without causing a significant  overhead to the running time.

\begin{algorithm}
	\caption{Residual Random Greedy$(f,(U_j)_{j\in [k]}, (\ell_j)_{j\in [k]}, \delta )$}
	\SetKwInOut{Input}{input}
	\SetAlgoNlRelativeSize{0}
	\label{alg:residual}
	
	\Input{A monotone submodular function $f:2^{U}\rightarrow \mathbb{R}_+$, partition $U_1,\ldots, U_k$ of $U$} 
	
	Initialize $P_0 \leftarrow \emptyset$, and $i\leftarrow 0$. 
	
	\For{$i = 1$ \textnormal{\textbf{to}} $r$}
		{

			Sample $j(i)\in [k]$ such that $\Pr(j(i)=q) = \frac{\ell_q - \abs{P_{i-1}\cap U_j}}{r-(i-1)} $. \label{residual:select}

            Sample a set $X_i \subseteq U_{j(i)} \setminus P_{i-1}$ of size $\min\left\{ \abs{U_j\setminus P_{i-1}}, \frac{\abs{U_{j(i)}\setminus P_{i-1}}}{\ell_{j(i)}-\abs{P_{i-1}\cap U_{j(i)}}}\cdot \log\left(\frac{1}{\delta}\right) \right\}$. 
            
			Define $e_i = \argmax_{e\in X_i} \left(f(P_{i-1}+e) - f(P_{i-1}) \right)$ \label{residual:argmax}
			
			Update $P_{i} \leftarrow P_{i-1} + e_i$.
	}
	\Return{$P_0,\ldots, P_r$}
	
\end{algorithm}

We note that the number of values queries used by  Algorithm~\ref{alg:residual} is nearly linear. 
\begin{lemma}
\label{lem:residual_runtime}
	Algorithm~\ref{alg:residual} uses $O\left(n\cdot \log(n)\cdot \log\left(\frac{1}{\delta}\right)\right)$ values queries for a general partition matroid and $O(n)$ queries for a regular partition matroid. 
\end{lemma}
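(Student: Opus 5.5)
The plan is to count queries iteration by iteration: bound the cost of iteration $i$ by $|X_i|+1$ value queries (one query of $f(P_{i-1})$ plus one query of $f(P_{i-1}+e)$ per $e\in X_i$), then bound $\sum_i \E[|X_i|]$. For a part $q$ and an iteration $i$, write $m_q^{(i)} = |U_q\setminus P_{i-1}|$ and $\rho_q^{(i)} = \ell_q-|P_{i-1}\cap U_q|$ for the remaining size and residual capacity. Line~\ref{residual:select} picks $q$ with probability $\rho_q^{(i)}/(r-i+1)$. Given that choice, $|X_i|\le \frac{m_q^{(i)}}{\rho_q^{(i)}}\log(1/\delta)$. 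Hence, conditioned on $P_{i-1}$,
\[
\E\big[|X_i|\,\big|\,P_{i-1}\big]\le \sum_q \frac{\rho_q^{(i)}}{r-i+1}\cdot\frac{m_q^{(i)}}{\rho_q^{(i)}}\log(1/\delta)\le \frac{n}{r-i+1}\log(1/\delta).
\]
The capacities cancel exactly, as in Lemma~\ref{lem:genF_basics}.

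Summing over $i=1,\dots,r$ gives $n\log(1/\delta)\sum_{s=1}^{r}\frac1s = O(n\log r\log(1/\delta))$. Since $r\le n$, this is $O(n\log n\log(1/\delta))$, and adding the $O(r)=O(n)$ extra queries proves the general claim.

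The regular partition matroid needs a sharper count, because the harmonic sum is not good enough. Here every $\ell_q=1$. Once a part is hit it is removed, so the parts chosen in iterations $1,\dots,k$ form a random ordering of the $k$ parts, and in the step where $q$ is chosen the sample has size at most $|U_q|$. The total cost is therefore at most $\sum_q |U_q| + O(k) = O(n)$ deterministically, with no logarithmic factor at all. In this regime the $\min$ with $|U_j\setminus P_{i-1}|$ in the sample size is what does the work, so I would state the bound through that $\min$ rather than through the $\log(1/\delta)$ term.

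The main obstacle is making sure the conditional expectation bound holds uniformly over the random history $P_{i-1}$. It does, since $\sum_q m_q^{(i)}\le n$ always holds, so the tower rule finishes the argument. I also need to check that $\log(1/\delta)$ does not appear in the regular case: the algorithm takes the $\min$ with the full residual part, so $|X_i|\le|U_{j(i)}|$ and this is fine.
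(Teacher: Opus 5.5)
Your treatment of the regular partition matroid matches the paper's argument. Each part is chosen exactly once, the $\min$ caps the sample at $|U_{j(i)}|$, and the total is at most $2n$ on every run.

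For the general case you take a different route. You bound each iteration by a conditional expectation, where the selection probability $\rho_q^{(i)}/(r-i+1)$ cancels the capacity in the sample size, and then pay a harmonic sum over the iterations. The paper instead counts per part, with no averaging. Each iteration lowers the total residual capacity by one, and a part with residual capacity zero is never selected. So part $q$ is selected in exactly $\ell_q$ iterations, with residual capacities $\ell_q,\ell_q-1,\ldots,1$ at those times. Its samples therefore have total size at most $|U_q|\log(1/\delta)\sum_{s=1}^{\ell_q}\frac{1}{s}$, and summing over $q$ gives $O(n\log n\log(1/\delta))$ for every execution.

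Both routes give the same order of magnitude, but they prove different things. Yours bounds only the \emph{expected} number of queries, because the cancellation comes precisely from averaging over the choice of part. The lemma, in contrast, is stated and proved in the paper as a worst-case bound; compare Lemma~\ref{lem:cp_time}, which explicitly says ``in expectation''. Your version suffices where the lemma is used, since Lemma~\ref{lem:generalized_prep} only claims an expected query count. To obtain the statement as written, replace the averaging by the observation that part $q$ is chosen exactly $\ell_q$ times. This is just the general form of the fact you already use in the regular case.
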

\begin{proof}
Observe that the algorithm only uses oracle queries in Step~\ref{residual:argmax} which is invoked in every iteration. 
Furthermore, a specific part $q$ is selected exactly $\ell_q$ times ($j(i)=q$ for exactly $\ell_q$ values of $i$), and each time $q$ is selected the value $\ell_q-\abs{P_{i-1}\cap U_q}$ is decreasing by $1$.
The number of queries needed in these iterations is at most 
$$
\sum_{s=1}^{\ell_q} \frac{\abs{U_q}}{s}\cdot \log\left(\frac{1}{\delta}\right) \leq O\left(|U_q| \cdot \log(\abs{U_q}) \cdot \log(1/\delta)\right),
$$
where the last inequality holds as the $1+\frac{1}{2}+\frac{1}{3}+\ldots +\frac{1}{\ell_q} =O(\ln (\ell_q))$.
Therefore, the total number of queries used by Algorithm~\ref{alg:residual}
is 
$$O\left(\sum_{q=1}^{k} |U_q| \cdot \log(\abs{U_q}) \cdot \log(1/\delta)\right)=O\left(n\cdot \log(n)\cdot \log1/\delta\right).$$
In case the matroid is a regular partition matroid, the running time is actually linear. This holds as in this case, each part $U_j$ can only be selected once in Line~\ref{residual:select}, and the size of the set $X_i$ in this case is exactly $U_j$. Therefore, the number of total queries is at most 

$$ \sum_{i=1}^k 2|U_i|\leq 2n$$
as claimed.
\end{proof}

To state the following lemmas, we first establish notation for the probability space induced by the algorithm.
Define 
$$
\cF_i = \sigma(j(1),\ldots, j(i),X_1,\ldots, X_i)
$$
as the $\sigma$-algebra of the random variables $j(1),\ldots, j(i)$ and $X_1,\ldots, X_i$ for all $i\in \{0,\ldots, r\}$.  Intuitively, $\cF_i$ are all the events whose value is determined by the end of the $i$-th  iteration of the algorithm. This implies that a random variable  $Z$ is $\cF_i$-measurable if its value is known by the end of the $i$-th iteration. For example $P_i$  and $j(i)$ are $\cF_i$-measurable, but $S_{i+1}$ and $j(i+1)$ are not $\cF_i$-measurable.   We remind the reader that the conditional expectation $\E[X\,|\,\cF_i]$, of a random variable $X$, is a random variable which is $\cF_i$-measurable. 

For every $i\in \{0,1,\ldots,r\}$ define the value 
\begin{equation}
	\label{eq:V_def}
	V_i=\max_{S\subseteq U\setminus P_i: |S\cap U_j| \leq \ell_j - \abs{P_i\cap U_j} \forall j} \sum_{e\in S} \left(f(P_{i}+e) - f(P_{i}) \right).
\end{equation}
In particular, if we return the set $P_i$ as the output of the pre-processing algorithm, then the events in Lemmas~\ref{lem:simpleprep}  and~\ref{lem:generalized_prep} is  $V_i\leq c\cdot f(O)$.  It can be easily observed that the sequence $V_0,V_1,\ldots, V_r$ is decreasing and $V_r=0$. That is,
$$
V_0\geq V_1\geq \ldots \geq V_r=0.
$$
Furthermore, as a random variable, $V_i$ is $\cF_i$-measurable-  its value is  determined by the end of the $i$-th iteration.

A central lemma in the analysis of Algorithm~\ref{alg:residual} is that the gain in value of $f(P_i)$ in the $i$-th iteration,  $f(P_i)-f(P_{i-1})$, increases in expectation in proportion to $V_i$. 
\begin{lemma}
	\label{lem:residual_P_gain}
For every  $i\in [r]$ it holds that $\E\left[ f(P_i)-f(P_{i-1})\,|\,\cF_{i-1}\right]\geq  (1-\delta)\cdot \frac{V_{i-1}}{r-(i-1)}$.
\end{lemma}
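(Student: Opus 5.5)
Throughout, condition on $\cF_{i-1}$, so that $P_{i-1}$ and $V_{i-1}$ are fixed. Let $S^*$ be a maximizer in the definition \eqref{eq:V_def} of $V_{i-1}$. For each part $q$, write $S^*_q = S^*\cap U_q$ and $m_q = \ell_q - \abs{P_{i-1}\cap U_q}$. Marginals are non-negative by monotonicity, so adding elements to $S^*$ never lowers its value. We may therefore assume $S^*_q$ is padded up to exactly $m_q$ elements of $U_q\setminus P_{i-1}$; when $U_q\setminus P_{i-1}$ has fewer than $m_q$ elements, the sampled set $X_i$ is the whole residual part and the argument below only becomes easier. The total residual capacity is $\sum_q m_q = r-(i-1)$. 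For $e\in U\setminus P_{i-1}$ let $w_e = f(P_{i-1}+e)-f(P_{i-1})\ge 0$. Then $f(P_i)-f(P_{i-1}) = \max_{e\in X_i} w_e$ by the definition of $e_i$ in Line~\ref{residual:argmax}.

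The plan is to argue in two layers, first over the choice of $X_i$ given $j(i)$ and then over the choice of $j(i)$. For the first layer, condition additionally on $j(i)=q$. Then $X_i$ is a uniformly random subset of $E=U_q\setminus P_{i-1}$, and its size is the one prescribed by Lemma~\ref{lem:sampling} with $M=S^*_q$ and $m=m_q$, up to ceilings, which I would handle by reading the sample size in Algorithm~\ref{alg:residual} as rounded up. Lemma~\ref{lem:sampling} then gives
\[
\E\left[\max_{e\in X_i} w_e \,\middle|\, \cF_{i-1}, j(i)=q\right] \ge (1-\delta)\cdot\frac{\sum_{e\in S^*_q} w_e}{m_q}.
\]

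For the second layer, average over $j(i)$. By Line~\ref{residual:select}, $\Pr[j(i)=q\mid\cF_{i-1}] = m_q/(r-(i-1))$. This factor cancels the $1/m_q$ above, which gives
\[
\E\left[f(P_i)-f(P_{i-1})\mid \cF_{i-1}\right] \ge \frac{1-\delta}{r-(i-1)}\sum_q\sum_{e\in S^*_q} w_e = (1-\delta)\frac{V_{i-1}}{r-(i-1)}.
\]
Parts with $m_q=0$ are never selected and contribute nothing to $S^*$, so they can be dropped from the sum. The same computation already appears in Claim~\ref{claim:EMAX}, where a part is chosen with probability proportional to its capacity.

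The main obstacle is bookkeeping in the first layer rather than any conceptual step. One issue is the case where $E$ is smaller than the target sample size, so that $X_i=E$; the maximum over $E$ is at least the average over $S^*_q$, so the bound still holds. The other is matching the paper's $\log$ in the sample size to the $\ln$ in Lemma~\ref{lem:sampling}. I would resolve this by taking $\log$ to mean $\ln$, or by noting that a larger sample only increases the expected maximum, since the maximum over a random superset is larger.
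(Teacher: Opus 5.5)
Your proof is correct and follows the paper's argument. You condition on the selected part $q$ and bound the expected greedy gain from the random sample $X_i$ by $(1-\delta)\sum_{e\in S^*_q}w_e/m_q$; you then average over $j(i)$ with probability $m_q/(r-(i-1))$, so the capacities cancel and the sums over the $S^*_q$ reassemble into $V_{i-1}$. The only difference is that the paper proves the inner step directly (``$\Pr[X_i\cap S^*_q=\emptyset]\le\delta$ plus symmetry''), which is exactly the content of Lemma~\ref{lem:sampling} that you cite; also, the degenerate case $\abs{U_q\setminus P_{i-1}}<m_q$ you set aside cannot arise, since $r=\sum_j\ell_j$ being the rank forces $\ell_q\le\abs{U_q}$.
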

\begin{proof}

Let $S^*$ denote the set that is the maximizer in \eqref{eq:V_def} defining $V_{i-1}$. Let $j$ denote the part picked by Algorithm~\ref{alg:residual} in $i^{th}$ iteration and $S^*_{j}=S^*\cap U_{j}$. By a simple concentration inequality, $Pr[X_{j}\cap S^*_{j}=\emptyset]\leq \delta$ and by symmetry for each $e\in S_j$ occurs in $X_j$ with probability at least $\frac{1-\delta}{|S^*_j|}$ where we have $|S^*_j|=l_j-|P_{i-1}\cap U_j|$. Since, the element $e$ picked by the algorithm is greedy among all the elements of  sampled set, we have 

    $$
    \E[ f(P_{i})-f(P_i) \,|\,\cF_{i-1}, j] \geq (1-\delta)\cdot \frac{ \sum_{e\in S_j^*} \left(P_{i-1}+e\right)-f(P_{i-1})  }{\ell_{j} - \abs{P_{i-1}\cap U_{j}}},
    $$

Now taking expectation of picking part $j$ in iteration $i$, we obtain 

	\begin{equation*}
		\begin{aligned}
			\E\left[ f(P_i)-f(P_{i-1})\,|\,\cF_{i-1}\right]   
			 &\geq \sum_{j=1}^k \frac{\ell_{j} - \abs{P_{i-1}\cap U_{j}}}{r-(i-1)} \cdot (1-\delta)\cdot \frac{ \sum_{e\in S_j^*} \left(P_{i-1}+e\right)-f(P_{i-1})  }{\ell_{j} - \abs{P_{i-1}\cap U_{j}}}\\
             &= 
     \frac{1}{r-(i-1)} \cdot \sum_{e\in S}  \left(  f(P_{i-1}+e)-f(P_{i-1}) \right) \\
			& = \frac{V_{i-1}}{r-(i-1)}.\\
		\end{aligned}
	\end{equation*}
\end{proof}

Let $\OPT= \bigcup_{j=1}^{k} \bigcup_{s=1}^{\ell_j} \{o_{j,s}\}$ be an optimal solution for the input instance $f$ 
where $o_{j,s}\in U_j$ for every $j\in \{1,\ldots, k\}$.  
We require that the order between $o_{j,1},\ldots, o_{j,\ell_{j}}$ for all $j\in[k]$ will be selected at random
We define a residual solution $\OPT_i = \OPT/P_i$ for every iteration $i\in \{0,1,\ldots, k\}$.  That is,  
$$\OPT_i = \bigcup_{j=1}^{k} \bigcup_{s=1}^{\ell_j-\abs{U_j\cap P_i}}\{o_{j,s}\}.$$

Assuming we fix the order of items with $\OPT$, the residual solution $\OPT_i$ is a random variable whose value is determined by the end of the $i$-th iteration.
By definition it holds that, 
$$
\OPT=\OPT_0\supseteq \OPT_1\supseteq \OPT_2\supseteq \ldots \supseteq \OPT_{r}
$$
and $\OPT_i \cup P_i\in \cI$ for all $i\in [r]$. 
	\begin{lemma}
	\label{lem:residual_opt_loss}
	For all $i\in [r]$ it holds that 
	$$
	\E[ f(\OPT_{i-1}) - f(\OPT_i) \,|\,\cF_{i-1}] \leq  \frac{f(\OPT)}{r-(i-1)}.
	$$
\end{lemma}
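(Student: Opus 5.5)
Conditioned on $\cF_{i-1}$, the only randomness is the part $j=j(i)$ drawn in Line~\ref{residual:select}; $X_i$ affects $P_i$ but not $\OPT_i$. When part $q$ is selected, the residual budget $\ell_q-\abs{P_{i-1}\cap U_q}$ drops by one. Consequently $\OPT_i=\OPT_{i-1}-o_{q,s_q}$, where $s_q=\ell_q-\abs{P_{i-1}\cap U_q}$ is the last remaining index in part $q$. Every other part keeps its residual optimal items. Hence
\[
\E[f(\OPT_{i-1})-f(\OPT_i)\,|\,\cF_{i-1}] = \sum_{q=1}^{k}\frac{s_q}{r-(i-1)}\Big(f(\OPT_{i-1})-f(\OPT_{i-1}-o_{q,s_q})\Big).
\]

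The difficulty is that the weight on part $q$ is $s_q$, while only one element of part $q$ is removed. I will exploit the random ordering of $o_{q,1},\ldots,o_{q,\ell_q}$ to replace the single term by an average over all $s_q$ residual elements of part $q$. Then the right-hand side becomes $\frac{1}{r-(i-1)}\sum_{o\in \OPT_{i-1}}\big(f(\OPT_{i-1})-f(\OPT_{i-1}-o)\big)$.

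Submodularity then finishes the argument. Order $\OPT_{i-1}$ arbitrarily and telescope: $\sum_{o\in\OPT_{i-1}}\big(f(\OPT_{i-1})-f(\OPT_{i-1}-o)\big)\le f(\OPT_{i-1})-f(\emptyset)\le f(\OPT_{i-1})\le f(\OPT)$, using nonnegativity and monotonicity. This gives the claimed bound $f(\OPT)/(r-(i-1))$.

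The main obstacle is the symmetrization step. $\OPT_{i-1}$ depends on the ordering, and $\cF_{i-1}$, generated only by $j(\cdot)$ and $X_\cdot$, is independent of the ordering. So, given $\cF_{i-1}$, the residual multiset $\{o_{q,1},\ldots,o_{q,s_q}\}$ is a uniformly random $s_q$-subset of part $q$'s optimal items, and $o_{q,s_q}$ is uniform within it. Exchangeability then gives the averaging in expectation over the order. I would state the lemma with the expectation also taken over the ordering, which is all the later telescoping needs.

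If a pointwise statement is required, the fallback is to let the algorithm itself choose the removed element uniformly at random among part $q$'s remaining optimal items when part $q$ is picked. This is an analysis-only coupling that does not change the algorithm. It makes the conditional identity exact and yields the same bound verbatim.
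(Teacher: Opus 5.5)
Your proposal is correct and follows the paper's own proof. Part $q$ is selected with probability $s_q/(r-(i-1))$, and the random ordering makes the removed $o_{q,s_q}$ uniform among part $q$'s $s_q$ residual optimal items, so the removed element is uniform over $\OPT_{i-1}$. Submodularity then bounds $\sum_{o\in\OPT_{i-1}}(f(\OPT_{i-1})-f(\OPT_{i-1}-o))\le f(\OPT_{i-1})\le f(\OPT)$. The paper conditions on $\cF_{i-1}$ and $\OPT_{i-1}$ rather than on the full ordering; since $\cF_{i-1}$ excludes the ordering, your exchangeability argument yields exactly the stated bound, and no restatement of the lemma is needed.
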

\begin{proof}
Fix an iteration $i\in [r]$ and define $s_j = \ell_k -
\abs{U_j\cap P_{i-1}}$ for all $j\in [k]$. Observe those are random variables, and that $j(i)=q$ is $\frac{s_j}{r-(i-1)}$. Therefore,
$$
	\begin{aligned}
		\E&[ f(\OPT_{i-1}) - f(\OPT_i) \,|\,\cF_{i-1}, \OPT_{i-1}] \\
		&= \E\left[ \sum_{q\in [k]} \indicator_{j(i)=q} \cdot \left( f(\OPT_{i-1} ) - f(\OPT_{i-1} -o_{j',s_{j'}})\right)\,|\,\cF_{i-1}, O_{i-1}\right]\\
		&=  \frac{1}{r-(i-1)}\cdot \sum_{o\in O_{i-1} }  \left( f(\OPT_{i-1} ) - f(\OPT_{i-1} -o)\right)\\
		&\leq \frac{1}{r-(i-1)} \cdot f(\OPT_{i-1}) \\
		&\leq    \frac{f(\OPT)}{r-(i-1)}. \\
	\end{aligned}
	$$
    The second inequality is correct as the second expression $o_{j(i),s_{j(i)}}$ is a random item from $O_{i-1}$.
	The first inequality follows from the submodularity of $f$, and  the second  inequality from monotonicity. 
\end{proof}

Given an arbitrary  threshold $T>0$ we can define a (random) index  $I_T$ by
\begin{equation}
	\label{eq:I_T_def}
	I_T = \min\{i\in \{0,1,\ldots, n\} ~|~ V_i\leq T\}.\end{equation}
That is, $I_T$ is  the first index for which $V_i$ is below (or equals to) the threshold $T$. If $T$ is known by context we simply use $I$.  We note that the events $\{ I_T\leq i\}=\{V_i \leq T \}$ and  $\{i<I_T\} = \{V_i > T\}$
are $\cF_i$-measurable. 
Using 
Lemmas \ref{lem:residual_P_gain} and \ref{lem:residual_opt_loss} we can show  that the set $P_{I_T}$ satisfies the conditions in Lemmas~\ref{lem:simpleprep} and~\ref{lem:generalized_prep}, subject to the requirement that $ (e-1)\cdot f(\OPT)\leq T\leq c\cdot f(\OPT)$.
\begin{lemma}
	\label{lem:good_P}
	Let  $\frac{e-1}{1-\delta}\cdot f(\OPT)\leq T\leq c\cdot f(\OPT)$  be a threshold, and let $P=P_{I_T}$. Then,
	\begin{itemize}
		\item $ \max_{S\subseteq U\setminus P_i: |S\cap U_j| \leq \ell_j - \abs{P_i\cap U_j} \forall j} \sum_{e\in S} \left(f(P_{i}+e) - f(P_{i}) \right)\leq c\cdot f(\OPT)$ with probability of $1$. 
		\item $\mathbb{E}\left[(1-\nicefrac{1}{e})\cdot f(\OPT_{I_T})) + \frac{1}{e}\cdot f(P)\right]\geq (1-\nicefrac{1}{e})\cdot f(\OPT)$. 
	\end{itemize}
\end{lemma}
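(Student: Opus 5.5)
The first bullet is immediate from the definition of the stopping index. By \eqref{eq:I_T_def}, $V_{I_T}\leq T\leq c\cdot f(\OPT)$, and the expression in the first bullet (with $i=I_T$) is exactly $V_{I_T}$ from \eqref{eq:V_def}. The index $I_T$ is always well defined because $V_r=0\leq T$. So the bound holds with probability $1$. All the work is in the second bullet, which I will handle with a potential-function/optional-stopping argument over the iterations of Algorithm~\ref{alg:residual}.

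Define the potential
\[
\Phi_i \triangleq \left(1-\tfrac{1}{e}\right) f(\OPT_i)+\tfrac{1}{e}\, f(P_i).
\]
Initially $\Phi_0=(1-\nicefrac{1}{e})f(\OPT)+\tfrac{1}{e}f(\emptyset)\geq(1-\nicefrac1e)f(\OPT)$. The target is $\E[\Phi_{I_T}]\geq(1-\nicefrac1e)f(\OPT)$, so it suffices to show that $\Phi_i$ is a submartingale as long as the process has not stopped. Concretely, I will show that on the $\cF_{i-1}$-measurable event $\{i-1<I_T\}=\{V_{i-1}>T\}$ we have $\E[\Phi_i-\Phi_{i-1}\mid\cF_{i-1}]\geq 0$.

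On that event, Lemma~\ref{lem:residual_P_gain} gives
\[
\E[f(P_i)-f(P_{i-1})\mid\cF_{i-1}]\geq (1-\delta)\frac{V_{i-1}}{r-(i-1)}>(1-\delta)\frac{T}{r-(i-1)}\geq\frac{(e-1)f(\OPT)}{r-(i-1)},
\]
where the last step uses the lower bound on $T$. Lemma~\ref{lem:residual_opt_loss} bounds the loss from $\OPT_i$ by $\frac{f(\OPT)}{r-(i-1)}$. Combining the two,
\[
\E[\Phi_i-\Phi_{i-1}\mid\cF_{i-1}]\geq -\left(1-\tfrac1e\right)\frac{f(\OPT)}{r-(i-1)}+\tfrac1e\cdot\frac{(e-1)f(\OPT)}{r-(i-1)}=0.
\]
The weights $1-\nicefrac1e$ and $\nicefrac1e$ are chosen precisely so that these two terms cancel, which is why the threshold is set to $\frac{e-1}{1-\delta}f(\OPT)$.

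Then telescope:
\[
\Phi_{I_T}=\Phi_0+\sum_{i=1}^{r}\indicator_{\{i-1<I_T\}}(\Phi_i-\Phi_{i-1}).
\]
Since the indicator is $\cF_{i-1}$-measurable, the tower property gives $\E[\indicator_{\{i-1<I_T\}}(\Phi_i-\Phi_{i-1})]\geq0$ for each term. Because $I_T\leq r$ is bounded, no integrability issues arise in this optional stopping step. Hence $\E[\Phi_{I_T}]\geq\Phi_0\geq(1-\nicefrac1e)f(\OPT)$, which is the second bullet.

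The main obstacle is making sure the hypotheses of Lemmas~\ref{lem:residual_P_gain} and~\ref{lem:residual_opt_loss} hold conditionally on $\cF_{i-1}$. In particular, the random ordering of the $o_{j,s}$ inside each part must be independent of the algorithm, so that the removed element is uniform in $\OPT_{i-1}$. I also need that $V_{i-1}$ is $\cF_{i-1}$-measurable, so that the stopping rule does not peek at future randomness. Both follow from the setup preceding the lemma, so I would state them explicitly before telescoping.
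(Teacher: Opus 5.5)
Your proof is correct and follows the paper's argument. The paper telescopes $f(P_i)-f(P_{i-1})$ over $i\le I_T$ and bounds each conditional increment below by $(e-1)f(\OPT)/(r-(i-1))$, using Lemma~\ref{lem:residual_P_gain} and the threshold; it then bounds this below by $(e-1)$ times the conditional loss $f(\OPT_{i-1})-f(\OPT_i)$ via Lemma~\ref{lem:residual_opt_loss}. After rearranging, that is exactly $\E[e\,\Phi_{I_T}]\geq e\,\Phi_0$ for your potential, so your stopped-submartingale formulation is a cleaner packaging of the same computation.
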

\begin{proof}
The first property trivially holds from the definition of $I_T$. 

The second property of the lemma follows from Lemmas \ref{lem:residual_P_gain} and \ref{lem:residual_opt_loss}. 
\begin{equation}
	\label{eq:resdiaul_proper_stop}
\begin{aligned}
	\E[f(P_{I_T})-f(\emptyset)] &= \E\left[ \sum_{i=1}^{k} \indicator_{i-1 < I_T} \left(f(P_i) -f(P_{i-1})\right) \right]\\
	&= \E\left[ \sum_{i=1}^{k} \indicator_{i-1< I_T} \cdot  \E \left[f(P_i) -f(P_{i-1})\,|\, \cF_{i-1}\right] \right]\\
		&\geq  \E\left[ \sum_{i=1}^{k} \indicator_{i\leq I_T}  \cdot(1-\delta)\cdot  \frac{V_{i-1}}{k-(i-1)} \right]\\
		&\geq  \E\left[ \sum_{i=1}^{k} \indicator_{i\leq I_T} \cdot \frac{(e-1)\cdot f(\OPT)}{k-(i-1)} \right]\\
	&\geq  \E\left[(e-1) \cdot \sum_{i=1}^{k} \indicator_{i\leq I_T} \cdot \E\left[ f(\OPT_{i-1})-f(\OPT_i)\,|\, \cF_{i-1}\right] \right]\\
	&=  (e-1)\cdot \E\left[ \sum_{i=1}^{k} \indicator_{i\leq  I_T} \cdot \left( f(\OPT_{i-1})-f(\OPT_i) \right) \right] \\
	&=(e-1)\cdot \E\left[ f(\OPT) - f(\OPT_{I_T}) \right].
\end{aligned}
\end{equation}
The first inequality follows from Lemma~\ref{lem:residual_P_gain}. The second inequality holds as for every $i\leq I_T$ we have $V_{i -1}> T \geq \frac{e-1}{1-\delta}\cdot f(\OPT)$ by \eqref{eq:I_T_def}. The third inequality follows from Lemma~\ref{lem:residual_opt_loss}. 

Rearranging \eqref{eq:resdiaul_proper_stop} we get,
$$
\E\left[ (e-1)\cdot f(\OPT_{I_T})  +f(P_{I_T})\right]  \geq (e-1)\cdot f(\OPT) +f(\emptyset).
$$
Dividing both sides by $e$ we obtain,
$$
\E\left[ (1-\nicefrac{1}{e})\cdot f(\OPT_{I_T})  +\frac{1}{e}\cdot f(P)\right]   
\geq (1-\nicefrac{1}{e})\cdot f(\OPT) + \frac{1}{e}\cdot f(\emptyset) \geq (1-\nicefrac{1}{e})\cdot f(\OPT).
$$
Therefore the second property of the lemma holds as well.
\end{proof}

\begin{proof}[Proof of Lemma~\ref{lem:generalized_prep}]
Lemma~\ref{lem:good_P} suggests a method by which we can return a set which satisfies the conditions of Lemma~\ref{lem:generalized_prep}. We find a  $\gamma$-approximation for $f(\OPT)$ and use it to set a threshold $T$ which satisfies the condition of Lemma~\ref{lem:good_P}.  
This can be efficiently done in time $O(n\log r)$ for $\gamma=\nicefrac{1}{3}$ using an algorithm from \cite[Appendix B]{BFS17}. 
Then, we use a binary search over the array $(V_0,V_1,\ldots, V_r)$  to find $I_T$, the first entry in the array whose value is at most $T$. This requires looking up for $O(\log r)$ values in the array, and hence only $O(\log r)$ entries of the array need to be evaluated. As an explicit evaluation of  each entry using the formula in \eqref{eq:V_def} takes  up to $2n$ value queries, the total number of value queries required for the the binary search is $O(n\log k)$.  We give the pseudo-code of the procedure in Algorithm~\ref{alg:binary_prep}.

\begin{algorithm}
	\caption{Binary Search Pre-processing $(f,(U_j)_{j\in [k]}, (\ell_j)_{j\in [k]},\delta)$}
	\SetKwInOut{Input}{input}
	\SetAlgoNlRelativeSize{0}
	\label{alg:binary_prep}
	
	\Input{A monotone submodular function $f:2^{U}\rightarrow \mathbb{R}_+$, partition $U_1,\ldots, U_k$ of $U$ and bounds $\ell_1,\ldots, \ell_k$ }

    Find a value  $\app$ such that $\gamma \cdot f(\OPT) \leq \app\leq  f(\OPT)$  for $\gamma=\nicefrac{1}{3}$
	
	Let $P_0,P_1,\ldots,  P_k$ be the output of the residual Random Greedy  on $(f,(U_j)_{j\in [k]}, (\ell_j)_{j\in [k]})$ with $\delta$
	
	Define $T = \frac{e-1}{(1-\delta)\gamma} \cdot \app $ 
	
	Find $I_T$ using binary search on the array $(V_0,V_1,\ldots, V_r)$ as defined in \eqref{eq:V_def} 
	
	Return $P_{I_T}$
\end{algorithm}

The value of $T$ used by Algorithm~\ref{alg:binary_prep} satisfies the condition $\frac{e-1}{1-\delta}\cdot f(\OPT)\leq T \leq \frac{e-1}{\gamma}  \cdot f(\OPT)$. Hence, by Lemma~\ref{lem:good_P} and the above discussion this completes the proof of the Lemma~\ref{lem:generalized_prep}.
\end{proof}

Now, we give the proof  Lemma~\ref{lem:simpleprep} for partition matroids. 
Algorithm~\ref{alg:binary_prep} nearly meets the requirements of Lemma~\ref{lem:simpleprep}. However, its running time is not linear in $n$. Notably,  the running time of the binary search operation which finds $I_T$ serves as a bottleneck for the algorithm's running time. Recall that for the partition matroid, we have $r=k$ since we pick exactly one element from each part.  
We speed up the processing by searching for the minimal index $\tI\in \checkpoints(k,\eps)$  such that $V_{\tI} \leq T$, where $\checkpoints(k,\eps)$ is a carefully selected set of {\em check points}.
The set $\checkpoints(k,\eps)$ is sufficiently sparse so  calculating $V_i$ for every $i\in \checkpoints(k,\eps)$ takes a linear number of queries (in expectation) and is also sufficiently dense so  restricting our attention to sets $P_i$ with  $i\in\checkpoints(k,\eps)$ still allows us to satisfy the conditions of a pre-processing algorithm.  

The following lemma summerizes the properties of the set $\checkpoints$. 
\begin{lemma}
\label{lem:checkpoints}
For every $k\in \mathbb{N}$ and $0<\eps<\nicefrac{1}{2}$ there is a set $\checkpoints(k,\eps) \subseteq \{0,1,\ldots, k\}$  which contains $k$ and satisfies the following:
\begin{enumerate}
\item 
\label{check:runtime}
$\sum_{i\in \checkpoints(k,\eps)}\left( k-i\right) \leq d\cdot \frac{k}{\eps}$, and 
\item 
\label{check:quality}
For all $i\in \{0,1,\ldots,k\}$ it holds that $\sum_{i'=i+1}^{s(i)}\frac{1}{k-(i'-1)} \leq \eps $, where $s(i)=\min\{i'\in \checkpoints(k,\eps) ~|~i'\geq i\}$.
\end{enumerate}
Here $d>0$ is a global constant which does not depend on $k$ and $\eps$. 
\end{lemma}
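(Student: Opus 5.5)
We construct $\checkpoints(k,\eps)$ explicitly with geometrically shrinking gaps measured in the number of remaining iterations $m=k-i$. The reason is that the quality condition depends only on $\sum_{i'=i+1}^{s(i)} \frac{1}{k-(i'-1)}$, which is a harmonic-type sum over the remaining counts $m'=k-i'+1$ running from $k-i$ down to $k-s(i)+1$. This sum is roughly $\ln\frac{k-i}{k-s(i)}$, so the condition asks that consecutive checkpoints $a<b$ satisfy $\sum_{m=k-b+1}^{k-a} \frac{1}{m}\le \eps$. The runtime condition asks that $\sum_{i\in\checkpoints}(k-i)$ be $O(k/\eps)$.

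\textbf{Construction.} Include $k$ and every index $i$ with $k-i\le \lceil 1/\eps\rceil$; that is, all of the last $\approx 1/\eps$ indices. For larger remaining counts, take the indices $i$ with $k-i=\lceil(1+\eps)^{-p} k\rceil$ for $p=0,1,2,\ldots$, as long as this value exceeds $1/\eps$. Equivalently, walk down from $m=k$ by $m\mapsto \lfloor m/(1+\eps)\rfloor$ until $m\le 1/\eps$. Index $0$ (with $m=k$) is therefore included.

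\textbf{Runtime.} In the dense tail, the contribution is $\sum_{m\le 1/\eps+1} m=O(1/\eps^2)$. The geometric part contributes $\sum_p k(1+\eps)^{-p}+O(\text{count})=O(k/\eps)+O(\eps^{-1}\log k)$. The dense tail is only a problem when $1/\eps^2\gg k/\eps$, i.e.\ when $k<1/\eps$; but then the tail is just $\{0,\ldots,k\}$, with total $\sum_{m\le k} m\le k^2\le k/\eps$. Otherwise $1/\eps^2\le k/\eps$. Similarly, $\eps^{-1}\log k$ is at most $k/\eps$. So the total is at most $d\cdot k/\eps$.

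\textbf{Quality.} Fix $i$ and let $s(i)$ be the next checkpoint. If $s(i)=i$ the sum is empty. If $i$ is in the dense tail, then $s(i)=i$. Otherwise $i$ lies strictly between checkpoints $a<i<b=s(i)$ with $k-a=M$ and $k-b=M'\ge \lfloor M/(1+\eps)\rfloor$, where $M'\ge 1/\eps$ roughly, or $b$ is the start of the dense tail. The sum is bounded by
\[
\sum_{m=M'+1}^{M}\frac1m\le \ln\frac{M}{M'}.
\]
Rounding gives $M/M'\le (1+\eps)(1+O(\eps))$ because $M'\gtrsim 1/\eps$, so the sum is at most about $2\eps$.

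\textbf{Main obstacle.} The main obstacle is exactly this rounding slack at the boundary of the dense region. I would handle it by using ratio $1+\eps/3$ instead of $1+\eps$, and by making the dense region cover all $m\le \lceil 3/\eps\rceil$. Then $M'\ge 3/\eps$ gives $M/M'\le (1+\eps/3)(1+\eps/3)$, so $\ln(M/M')\le 2\eps/3\le\eps$. This changes the runtime bound only by constants absorbed into $d$.
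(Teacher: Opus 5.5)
Your proof is correct. Its core idea is the paper's: space the checkpoints geometrically in the remaining count $k-i$. The two routes differ only in how the ceiling rounding is handled.

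The paper uses the single sequence $\checkpoints(k,\eps)=\{\lceil k(1-\beta^\ell)\rceil:\ell\ge 0\}$ with $\beta=1-\eps/4$, run all the way to $k$.

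\textbf{Runtime.} The remaining count is $\lfloor k\beta^\ell\rfloor\le k\beta^\ell$, so property 1 follows directly from $\sum_\ell k\beta^\ell=4k/\eps$. There is no extra count term and no case split on $k$ versus $1/\eps$.

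\textbf{Quality.} The paper uses no logarithm. It bounds the sum by (number of terms)$\cdot$(largest term), namely $\frac{s(i)-i}{k-s(i)+1}$. It then uses integrality: $i\ge p(i)+1$ for the preceding checkpoint $p(i)$, and $k-s(i)+1\ge k\beta^\ell$. Together these give $\frac{k-p(i)}{k-s(i)+1}-1\le\frac1\beta-1\le\eps$. The $-1$ in the numerator and the $+1$ in the denominator cancel the rounding error exactly, so no lower bound on the gap endpoints is needed. Your dense tail is in effect already built into this sequence: once $k\beta^{\ell}$ drops below about $4/\eps$, consecutive rounded values differ by at most one, so every later index is a checkpoint.

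\textbf{Your version.} Your harmonic/logarithmic estimate is sharper per gap. As written, though, it leaves an additive $1/M'$ from rounding, and that is what forces the explicit tail of size $\lceil 3/\eps\rceil$ and the two-case runtime argument.

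You could drop the tail with the same integrality trick. Run the geometric sequence down to remaining count $1$ and add index $k$ explicitly. Since $i$ lies strictly between consecutive checkpoints, $k-i\le M-1<(1+\eps/3)M'$, so every gap contributes less than $\ln(1+\eps/3)$.

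If you keep the tail, state the junction case explicitly. The first omitted geometric value is at most $3/\eps$, so the last kept one satisfies $M\le(1+\eps/3)\lceil 3/\eps\rceil+1$. The same bound $M/M'\le 1+2\eps/3$ then applies there too; your text only gestures at this case.
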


Broadly speaking, the set $\checkpoints(k,\eps)$ contains the the values  $k\cdot \left(1- (1-\nicefrac{\eps}{4})^\ell\right)$   for every $\ell\in \mathbb{N}$, up to the ceiling operation. It can be easily verified that this construction satisfies the conditions of the lemma. 
We give the full details  in Appendix~\ref{sec:checkpoints}. 

Fix an arbitrary threshold $T>0$ and error parameter $\eps>0$. Define $\tI=\min\{i\in \checkpoints(k,\eps)~|~V_i\leq T\}$ as the first index in $\checkpoints(k,\eps)$ for which $V_i \leq T$. It trivially holds that $I_T=I\leq \tI$ and $\tI= s(I)$, where $s(I)$ is as defined in Lemma~\ref{lem:checkpoints}.  We use the following lemma, which is a consequence of Lemma~\ref{lem:residual_opt_loss} and Property~\ref{check:quality} of  Lemma~\ref{lem:checkpoints},  to show a variant of Lemma~\ref{lem:good_P} which considers  $P=P_{\tI}$ instead of $P_I$.
\begin{lemma}
\label{lem:correct_to_cp_diff}
$\E\left[ f(\OPT_{\tI}) -f(\OPT_I) \right] \geq - \eps \cdot f(\OPT).$
\end{lemma}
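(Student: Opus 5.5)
We will prove that $\E[f(\OPT_I)-f(\OPT_{\tI})]\leq \eps\cdot f(\OPT)$ by writing the difference as a telescoping sum over the iterations between the stopping time $I=I_T$ and the checkpoint index $\tI=s(I)$. Since $\OPT_0\supseteq \OPT_1\supseteq\cdots$, we have
\begin{equation*}
f(\OPT_I)-f(\OPT_{\tI}) \;=\; \sum_{i=1}^{k} \indicator_{I< i\leq s(I)}\cdot\left(f(\OPT_{i-1})-f(\OPT_i)\right).
\end{equation*}
Each summand is non-negative by monotonicity, so we may freely take expectations term by term.

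The key structural point is that the event $\{I< i\leq s(I)\}$ is $\cF_{i-1}$-measurable. Indeed, $\{I\leq i-1\}=\{V_{i-1}\leq T\}$ is $\cF_{i-1}$-measurable, as noted after \eqref{eq:I_T_def}. Given that $I\leq i-1$, the value $I$ itself is determined by $V_0,\dots,V_{i-1}$, and $s(I)$ is a deterministic function of $I$. Hence $\{i\leq s(I)\}$ is also determined by $\cF_{i-1}$ on this event.

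Conditioning on $\cF_{i-1}$ and applying Lemma~\ref{lem:residual_opt_loss} then gives
\begin{equation*}
\E\left[f(\OPT_I)-f(\OPT_{\tI})\right]\;\leq\; \E\left[\sum_{i=I+1}^{s(I)} \frac{f(\OPT)}{k-(i-1)}\right].
\end{equation*}
Property~\ref{check:quality} of Lemma~\ref{lem:checkpoints} bounds the inner sum by $\eps$ pointwise, that is, for every value of $I$. This yields $\E[f(\OPT_{\tI})-f(\OPT_I)]\geq -\eps\cdot f(\OPT)$, as claimed.

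The main obstacle we anticipate is the measurability bookkeeping in the middle step. We must check that the indicator depends only on information available before iteration $i$, so that it can be pulled out of the conditional expectation in Lemma~\ref{lem:residual_opt_loss}. This is exactly the same argument already used in \eqref{eq:resdiaul_proper_stop} for the indicator $\indicator_{i\leq I_T}$, with the extra observation that $s(\cdot)$ is deterministic. A secondary point is that Lemma~\ref{lem:residual_opt_loss} assumes the random ordering of $\OPT$ within each part. That lemma is stated conditionally on $\cF_{i-1}$ and includes that randomness, so its bound applies verbatim here.
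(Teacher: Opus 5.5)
Your proof is correct and follows the paper's argument. You telescope $f(\OPT_I)-f(\OPT_{\tI})$ over the iterations between $I$ and $\tI=s(I)$, pull the $\cF_{i-1}$-measurable indicator out of the conditional expectation, apply Lemma~\ref{lem:residual_opt_loss}, and finish with Property~\ref{check:quality} of Lemma~\ref{lem:checkpoints}; the only difference is that the paper first splits on the events $\{I=i\}$ and then telescopes, whereas you swap the order of this finite double sum and use the single indicator $\indicator_{I<i\leq s(I)}$.
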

\begin{proof}
As in Lemma~\ref{lem:checkpoints} we use $s(i)=\min\{i'\in \checkpoints(k,\eps) ~|~i'\geq i\}$. 
We first split the expectation into a sum of expections, and bound each separately. 
\begin{equation}
\label{eq:opt_CP_first}
\begin{aligned}
\E&\left[ f(\OPT_{\tI}) -f(\OPT_I) \right] = \E\left[\sum_{i=0}^k \indicator_{I=i} \cdot \left(f(\OPT_{\tI}) -f(\OPT_I)\right) \right] \\
&=\E\left[\sum_{i=0}^k \indicator_{I=i} \cdot \left(f(\OPT_{s(i)}) -f(\OPT_i)\right) \right] \\
&=\sum_{i=0}^k \E\left[ \indicator_{I=i} \cdot \left(f(\OPT_{s(i)}) -f(\OPT_i)\right) \right].
\end{aligned}
\end{equation}
The  second equality holds as $\tI = s(I)$.  
Furthermore, for every $i\in \{0,1,\ldots, k\}$ we have,
\begin{equation}
\label{eq:opt_CP_second}
\begin{aligned}
\E&\left[ \indicator_{I=i} \cdot \left(f(\OPT_{s(i)}) -f(\OPT_i)\right) \right] = 
\E\left[  \sum_{i'=i+1}^{s(i)} \indicator_{I=i} \cdot \left(f(\OPT_{i'}) -f(\OPT_{i'-1})\right) \right] \\
& =\E\left[  \sum_{i'=i+1}^{s(i)} \E\left[\indicator_{I=i} \cdot \left(f(\OPT_{i'}) -f(\OPT_{i'-1})\right) |\,\cF_{i'-1}~\right]\,\right]\\
& =\E\left[  \sum_{i'=i+1}^{s(i)} \indicator_{I=i} \cdot \E\left[ f(\OPT_{i'}) -f(\OPT_{i'-1})\ |\,\cF_{i'-1}~\right]\,\right]\\
&\geq \E\left[  \sum_{i'=i+1}^{s(i)} \indicator_{I=i} \cdot (-1)\cdot  \frac{ f(\OPT)}{k-(i'-1)} \,\right] \\
&= -\E[\indicator_{I=i}]\cdot \sum_{i'=i+1}^{s(i)} \frac{1}{k-(i'-1)}\\
&\geq  -\Pr[{I=i}]\cdot \eps \cdot f(\OPT),
\end{aligned}
\end{equation}
The second equality is the tower property, and the third equality holds as $\indicator_{I=i}$ is $\cF_{i'-1}$-measurable  as $i'\geq i+1$.  The first inequality follows from Lemma~\ref{lem:residual_opt_loss}, and the last inequality holds due to Property~\ref{check:quality} of Lemma~\ref{lem:checkpoints}. 

Incorporating \eqref{eq:opt_CP_second} into \eqref{eq:opt_CP_first} we get, 
\begin{equation*}
\begin{aligned}
\E&\left[ f(\OPT_{\tI}) -f(\OPT_I) \right]\\ 
&=\sum_{i=0}^k \E\left[ \indicator_{I=i} \cdot \left(f(\OPT_{s(i)}) -f(\OPT_i)\right) \right] \\
&\geq -\sum_{i=0}^k \Pr[I=i] \cdot \eps \cdot f(\OPT) \\
&= - \eps \cdot f(\OPT).
\end{aligned}
\end{equation*}
\end{proof}

We can proceed to show a variant of Lemma~\ref{lem:good_P} which considers $P=P_{\tI}$.
\begin{lemma}
	\label{lem:good_tP}
	Let  $(e-1)\cdot f(\OPT)\leq T\leq c\cdot f(\OPT)$  be a threshold, let $\eps>0$ be an accuracy parameter, and let $P=P_{\tI}$. Then,
	\begin{itemize}
		\item $ \sum _{j:U_j\cap P=\emptyset} \max_{e\in U_j}\left( f(P+e) -f(P)\right)\leq c\cdot f(\OPT)$ with probability of $1$. 
		\item $\mathbb{E}\left[(1-\nicefrac{1}{e})f(\OPT/P) + \frac{1}{e}\cdot f(P)\right]\geq (1-\nicefrac{1}{e} -\eps)\cdot f(\OPT)$. 
	\end{itemize}
\end{lemma}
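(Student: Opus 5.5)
The plan is to mimic the proof of Lemma~\ref{lem:good_P}, replacing $I=I_T$ by $\tI=s(I)$ and paying the extra loss via Lemma~\ref{lem:correct_to_cp_diff}.

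\emph{First property.} By definition $\tI\in\checkpoints(k,\eps)$ and $V_{\tI}\leq T\leq c\cdot f(\OPT)$. Such an index exists because $k\in\checkpoints(k,\eps)$ and $V_k=0$. For a partition matroid ($\ell_j=1$), the maximizer in \eqref{eq:V_def} picks one element from each part $U_j$ with $U_j\cap P=\emptyset$. Hence $V_{\tI}$ is exactly $\sum_{j:U_j\cap P=\emptyset}\max_{e\in U_j}(f(P+e)-f(P))$, and the bound holds with probability $1$.

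\emph{Second property.} I would start from the chain of inequalities \eqref{eq:resdiaul_proper_stop}, which gives
\[
\E[f(P_I)]\geq f(\emptyset)+(e-1)\,\E[f(\OPT)-f(\OPT_I)].
\]
That derivation only used $V_{i-1}>T\geq\frac{e-1}{1-\delta}f(\OPT)$ for $i\leq I$. For a partition matroid Algorithm~\ref{alg:residual} scans all of $U_{j(i)}$, so effectively $\delta=0$ and the hypothesis $T\geq(e-1)f(\OPT)$ suffices.

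Next, since $\tI\geq I$ and $P_i$ is increasing, monotonicity gives $f(P_{\tI})\geq f(P_I)$. By Lemma~\ref{lem:correct_to_cp_diff}, $\E[f(\OPT_{\tI})]\geq\E[f(\OPT_I)]-\eps f(\OPT)$. Combining the two,
\[
\E\big[(e-1)f(\OPT_{\tI})+f(P_{\tI})\big]\geq(e-1)f(\OPT)-(e-1)\eps f(\OPT).
\]
Dividing by $e$ yields $(1-\nicefrac1e)(1-\eps)f(\OPT)\geq(1-\nicefrac1e-\eps)f(\OPT)$. Finally, I would identify $\OPT_{\tI}$ with $\OPT/P$: the residual optimal elements are exactly those $o_j$ whose part is untouched by $P=P_{\tI}$.

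\emph{Main obstacle.} The delicate step is the optional-stopping style argument in \eqref{eq:resdiaul_proper_stop} with the random index $I$. I would rely on the facts that $\{i\leq I\}=\{V_{i-1}>T\}$ is $\cF_{i-1}$-measurable and that the sums run to $k=r$. I also need to check that Lemma~\ref{lem:correct_to_cp_diff} applies with the same $T$ and $\eps$ and the checkpoint set, which it does by construction.
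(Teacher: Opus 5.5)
Your proposal is correct and follows the paper's own proof: the first property is immediate from $V_{\tI}\leq T\leq c\cdot f(\OPT)$, and the second combines the bound of Lemma~\ref{lem:good_P} (equivalently \eqref{eq:resdiaul_proper_stop}) with Lemma~\ref{lem:correct_to_cp_diff}, monotonicity $f(P_{\tI})\geq f(P_I)$, and the identification $\OPT/P=\OPT_{\tI}$. You also note that for a partition matroid Algorithm~\ref{alg:residual} scans entire parts, so effectively $\delta=0$ and the hypothesis $T\geq(e-1)f(\OPT)$ suffices; the paper glosses over this point, and you handle it correctly.
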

\begin{proof}
As in the proof of Lemma~\ref{lem:good_P}, the first property trivially follows from the definition of  $\tI$. 
$$
 \sum _{j:U_j\cap P=\emptyset} \max_{e\in U_j}\left( f(P+e) -f(P)\right) = 
  \sum _{j\in[k]:~U_j\cap P_{\tI} =\emptyset } \max_{e\in U_j}\left( f(P_{\tI}+e) -f(P_{\tI})\right) = V_{\tI} \leq T\leq c\cdot f(\OPT).
$$

Recall the shorthand $I=I_T$.
To show the second property we use the bound in Lemma~\ref{lem:correct_to_cp_diff}. Specifically, we have,
$$ 
\begin{aligned}
 \mathbb{E}\left[(1-\nicefrac{1}{e})f(\OPT/P) + \frac{1}{e}\cdot f(P)\right] 
 &= \mathbb{E}\left[(1-\nicefrac{1}{e})\cdot f(\OPT_{I}) + \frac{1}{e}\cdot f(P_I)\right] \\
 &~~~+(1-\nicefrac{1}{e}) \cdot \E[f(\OPT_{\tI})-f(\OPT_I)] +\frac{1}{e}\cdot \E[ f(P_{\tI}) -f(P_{I})] \\
 &\geq (1-\nicefrac{1}{e}) \cdot f(\OPT) -\eps \cdot f(\OPT)  \\
 &= (1-\nicefrac{1}{e} -\eps)\cdot f(\OPT).
\end{aligned} 
$$
The first equality uses the fact that $\OPT/P= \OPT/P_{\tI} = \OPT_{\tI}$. The  inequality follows from  the Lemmas~\ref{lem:good_P} and \ref{lem:correct_to_cp_diff} and 
 as well as $f(P_{\tI})\geq f(P_I)$ since $\tI\geq I$. 
\end{proof}

In particular, Lemma~\ref{lem:good_tP} indicates that an algorithm that returns $P_{\tI}$ may serve as a good pre-processing algorithm.  Algorithm~\ref{alg:cp_prep} implements this approach. Since we aim for a linear time, we cannot use the $O(n\log k)$ constant-approximation algorithm of \cite{BFS17} to approximate the value of $f(\OPT)$ as in  Algorithm~\ref{alg:binary_prep}.  Instead, we assume the algorithm receives an approximation $\app$ for the value of $f(\OPT)$. Later, in Algorithm~\ref{alg:fast_prep}, we overcome  this issue. 

\begin{algorithm}
	\caption{Check Point  Pre-processing $(f,U_1,\ldots, U_k,\app,\eps)$}
	\SetKwInOut{Input}{input}
	\SetAlgoNlRelativeSize{0}
	\label{alg:cp_prep}
	
	\Input{A monotone submodular function $f:2^{U}\rightarrow \mathbb{R}_+$, partition $U_1,\ldots, U_k$ of $U$, a value $\app$ such that $\gamma\cdot f(\OPT)\leq  \app \leq \OPT$ and $0<\eps<\nicefrac{1}{2}$} 
	
	Let $P_0,P_1,\ldots,  P_k$ be the output of the residual Random Greedy  on $f$ and $U_1,\ldots, U_k$
	
    Compute $V_i$ (as in \eqref{eq:V_def}) for every $i\in \checkpoints(k,\eps)$ and find $\tI$ for $T=\frac{e-1}{\gamma}$.  
	
	Return $P_{\tI}$

\end{algorithm}

Property~\ref{check:runtime} of Lemma~\ref{lem:checkpoints} ensures the expected number of value queries used by Algorithm~\ref{alg:cp_prep} is linear in $n$. 
\begin{lemma}
\label{lem:cp_time}
Algorithm~\ref{alg:cp_prep} uses $O\left(\nicefrac{n}{\eps}\right)$ oracle  value queries in expectation.
\end{lemma}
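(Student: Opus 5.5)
The plan is to split the query count of Algorithm~\ref{alg:cp_prep} into its two stages: running the residual random greedy (Algorithm~\ref{alg:residual}), and evaluating $V_i$ for each checkpoint $i\in\checkpoints(k,\eps)$. For the first stage, Lemma~\ref{lem:residual_runtime} already says that for a regular partition matroid the residual random greedy uses at most $2n=O(n)$ queries. This holds deterministically, since each part $U_j$ is selected exactly once and $X_i=U_{j(i)}$ in that case. All remaining work goes into the checkpoint evaluations.

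For a fixed $i$, computing $V_i$ via \eqref{eq:V_def} with $\ell_j=1$ requires $f(P_i)$ and $f(P_i+e)$ for every $e$ in parts not yet hit by $P_i$. That is $1+\sum_{j:U_j\cap P_i=\emptyset}|U_j|$ queries.

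The key observation concerns the order in which parts are chosen. In Line~\ref{residual:select} with $\ell_q=1$, the next part is uniform among the remaining parts, so the order $j(1),\dots,j(k)$ is a uniformly random permutation of $[k]$. Hence the set of the $k-i$ parts not hit by $P_i$ is a uniformly random $(k-i)$-subset of $[k]$. Each fixed part $U_j$ survives with probability $(k-i)/k$, which gives
\[
\E\Bigl[\sum_{j:U_j\cap P_i=\emptyset}|U_j|\Bigr]=\frac{k-i}{k}\cdot n .
\]

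Summing over checkpoints and applying Property~\ref{check:runtime} of Lemma~\ref{lem:checkpoints}, the expected number of queries for the evaluations is
\[
\sum_{i\in\checkpoints(k,\eps)}\Bigl(1+\frac{(k-i)n}{k}\Bigr)\le |\checkpoints(k,\eps)|+\frac{n}{k}\cdot\frac{dk}{\eps}.
\]
To bound the additive term, note that Property~\ref{check:runtime} also gives $|\checkpoints(k,\eps)|\le 1+\sum_i (k-i)\le 1+dk/\eps$, since every $i\neq k$ contributes at least $1$. Using $k\le n$, this is $O(n/\eps)$. Finding $\tI$ afterwards is just a scan of the computed values and needs no further queries.

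The only delicate point is the uniform-permutation claim. Only the distribution of the part order matters here, not which elements are chosen within each part, so the element choices do not need to be tracked. Adding the $O(n)$ from the first stage then yields $O(n/\eps)$ in expectation.
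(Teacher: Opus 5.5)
Your proposal is correct and follows the same route as the paper. The expected cost of evaluating $V_i$ is proportional to $\frac{n}{k}(k-i)$ because each part avoids $P_i$ with probability $1-\frac{i}{k}$, and Property~\ref{check:runtime} of Lemma~\ref{lem:checkpoints} then bounds the sum over checkpoints by $O(n/\eps)$. Your version is a bit more careful than the paper's: you justify that survival probability through the uniformly random order of parts, and you explicitly count both the $O(n)$ queries of the residual random greedy run and the additive $\abs{\checkpoints(k,\eps)}$ term.
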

\begin{proof}
For every index $i\in \checkpoints(k,\eps)$ let $C_i$ be the expected number of queries needed to compute $V_i$. Then,
$$
C_i =  2\cdot \sum_{j=1}^{k} \Pr[U_j\cap P_i =\emptyset] \cdot |U_j| =2\cdot \sum_{j=1}^{k}  \left(1-\frac{i}{k}\right) \cdot |U_j| = 2\cdot n\cdot   \left(1-\frac{i}{k}\right)= 2\cdot \frac{n}{k}\cdot (k-i). 
$$
Therefore, the expected number of value queries used by Algorithm~\ref{alg:cp_prep} is
$$
\sum_{i\in \checkpoints(k,\eps)} C_i =\sum_{i\in \checkpoints(k,\eps)} 2\cdot \frac{n}{k}\cdot   \left(k-i\right)  \leq 2\cdot \frac{n}{k}\cdot \frac{d\cdot k}{\eps} = O\left(\nicefrac{n}{\eps} \right),
$$
where the inequality follows from Property~\ref{check:runtime} of Lemma~\ref{lem:checkpoints}, the $d$ is the constant define in the lemma. 
\end{proof}

The following lemma is immediate consequence of Lemmas~\ref{lem:good_tP} and~\ref{lem:cp_time}. 
\begin{lemma}\label{lem:cp_prep}
For every partition matroid $ \cM=(U,\cI)$ with partition $ U=U_1\cup \ldots\cup U_k$, a monotone submodular function $ f\colon 2^U\rightarrow \mathbb{R}_+$, an error parameter $0<\eps<\nicefrac{1}{2}$ and value $\app$ such that $ \gamma\cdot f(\OPT) \leq \app\leq f(\OPT)$, Algorithm~\ref{alg:cp_prep} performs in expectation $ O(\nicefrac{n}{\eps})$ value queries, returns $ P\in \cI$ satisfying:
\begin{enumerate}
 
    \item $  \sum _{j:U_j\cap P=\emptyset}\alpha _{P,j} \leq c\cdot f(O)$ with probability $1$,  
    \item $\mathbb{E}\left[(1-\nicefrac{1}{e})f((O/P)\cup P) + \frac{1}{e}\cdot f(P) \right]\geq (1-\nicefrac{1}{e}-\eps)f(O)$.
\end{enumerate}
Here $ O=\argmax \{ f(S):S\in \cI\}$ is an optimal base, $ \alpha _{P,j}=\max _{i\in U_j}f_P(\{ i\})$ for every $ j$ such that $ P\cap U_j=\emptyset$, and $c$ and $\gamma$ are absolute constants.
\end{lemma}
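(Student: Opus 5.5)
The lemma combines Lemmas~\ref{lem:good_tP} and~\ref{lem:cp_time}, so I would establish the two claimed properties and the query bound separately. The query count comes from Lemma~\ref{lem:cp_time}, which bounds the evaluations of $V_i$ on the checkpoints by $O(n/\eps)$ in expectation. Lemma~\ref{lem:residual_runtime} adds the cost of running Algorithm~\ref{alg:residual} on a partition matroid, which is $O(n)$. Feasibility $P\in\cI$ is immediate, since every $P_i$ produced by the residual random greedy takes at most one element per part.

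The threshold used in Algorithm~\ref{alg:cp_prep} should be read as $T=\frac{e-1}{\gamma}\cdot\app$, exactly as in Algorithm~\ref{alg:binary_prep}. With this reading, $\gamma f(O)\le\app\le f(O)$ gives
\[
(e-1)f(O)\le T\le \tfrac{e-1}{\gamma}f(O).
\]
We therefore take $c=\frac{e-1}{\gamma}$. There is a slack issue: Lemma~\ref{lem:good_P} needs $T\ge\frac{e-1}{1-\delta}f(O)$, where $\delta$ is the sampling parameter of the residual greedy. For a regular partition matroid, $X_i$ is the whole part, so the greedy step is exact and $\delta=0$. The lower bound $(e-1)f(O)$ then suffices, which is the hypothesis of Lemma~\ref{lem:good_tP}.

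For property 1, note that for a partition matroid
\[
V_{\tI}=\sum_{j:U_j\cap P_{\tI}=\emptyset}\max_{e\in U_j}\bigl(f(P_{\tI}+e)-f(P_{\tI})\bigr)=\sum_{j:U_j\cap P=\emptyset}\alpha_{P,j}.
\]
The definition of $\tI$ gives $V_{\tI}\le T\le c\,f(O)$ deterministically. There is one edge case: $\tI$ must exist. It does, because $k\in\checkpoints(k,\eps)$ by Lemma~\ref{lem:checkpoints} and $V_k=0\le T$.

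Property 2 is where I expect the only real care, namely matching $f((O/P)\cup P)$ against the $f(O/P)$ appearing in Lemma~\ref{lem:good_tP}. Monotonicity gives $f((O/P)\cup P)\ge f(O/P)=f(\OPT_{\tI})$. Hence
\[
(1-\nicefrac1e)f((O/P)\cup P)+\tfrac1e f(P)\ge(1-\nicefrac1e)f(\OPT_{\tI})+\tfrac1e f(P_{\tI}).
\]
Taking expectations and applying Lemma~\ref{lem:good_tP}, whose proof combines Lemma~\ref{lem:good_P} with the checkpoint-loss bound of Lemma~\ref{lem:correct_to_cp_diff}, gives at least $(1-\nicefrac1e-\eps)f(O)$.
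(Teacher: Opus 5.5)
Your proposal is correct and follows the paper, which obtains this lemma directly from Lemmas~\ref{lem:good_tP} and~\ref{lem:cp_time}. The details you add are all correct and fill in points the paper leaves implicit: reading the threshold as $T=\frac{e-1}{\gamma}\cdot\app$ so that $c=\frac{e-1}{\gamma}$; noting that the greedy step is exact ($\delta=0$) for a regular partition matroid, so $T\ge (e-1)f(O)$ suffices for Lemma~\ref{lem:good_tP}; the existence of $\tI$ via $k\in\checkpoints(k,\eps)$ and $V_k=0$; and the monotonicity step $f((O/P)\cup P)\ge f(O/P)$.
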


We are left to overcome the fact that Algorithm~\ref{alg:cp_prep} needs to receive $\app$ as part of its input.
We use the Random Residual Greedy (Algorithm~\ref{alg:residual}) to attain this value in linear time. 
Specifically, we utilize the following results from \cite{BFNS14}.
\begin{lemma}[\cite{BFNS14}]
\label{lem:residual_approximation}
The set $P_k$ returned by Algorithm~\ref{alg:residual} satisfies $\E[f(P_k)] \geq \frac{1}{4} \cdot f(\OPT)$. 
\end{lemma}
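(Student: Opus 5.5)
The statement is Lemma~\ref{lem:residual_approximation}, the result of~\cite{BFNS14}: the final set $P_k$ of Algorithm~\ref{alg:residual} satisfies $\E[f(P_k)]\geq \frac14 f(\OPT)$. I would derive it from the two per-iteration lemmas already proved for Algorithm~\ref{alg:residual}, rather than reproducing the original argument of~\cite{BFNS14}.

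\textbf{Step 1 (gain lower-bounds loss).} Lemma~\ref{lem:residual_P_gain} gives
\[
\E[f(P_i)-f(P_{i-1})\mid\cF_{i-1}]\geq (1-\delta)\frac{V_{i-1}}{r-(i-1)}.
\]
Here $V_{i-1}$ is the best sum of marginals over a feasible residual set. Since $\OPT_{i-1}$ is feasible for the residual instance (because $\OPT_{i-1}\cup P_{i-1}\in\cI$, after discarding any elements already in $P_{i-1}$), submodularity and monotonicity give
\[
V_{i-1}\geq \sum_{o\in \OPT_{i-1}}\bigl(f(P_{i-1}+o)-f(P_{i-1})\bigr)\geq f(P_{i-1}\cup \OPT_{i-1})-f(P_{i-1}).
\]
The proof of Lemma~\ref{lem:residual_opt_loss} shows that the element removed from $\OPT_{i-1}$ is uniformly random, so
\[
\E[f(\OPT_{i-1})-f(\OPT_i)\mid\cF_{i-1}]\leq \frac{1}{r-(i-1)}\sum_{o\in\OPT_{i-1}}\bigl(f(\OPT_{i-1})-f(\OPT_{i-1}-o)\bigr).
\]
I would apply the same computation to the function $f(P_{i-1}\cup\cdot)$, which is legitimate because the removed element is independent of the greedy choice given $\cF_{i-1}$. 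This yields the conditional loss bound
\[
\E[f(P_{i-1}\cup\OPT_{i-1})-f(P_{i-1}\cup\OPT_i)\mid\cF_{i-1}]\leq \frac{1}{r-(i-1)}\sum_{o\in\OPT_{i-1}}\bigl(f(P_{i-1}+o)-f(P_{i-1})\bigr)\leq \frac{V_{i-1}}{r-(i-1)},
\]
where the middle inequality uses submodularity with $P_{i-1}\subseteq P_{i-1}\cup\OPT_{i-1}-o$.

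\textbf{Step 2 (potential).} Track $\Phi_i=f(P_i\cup\OPT_i)$. Going from step $i-1$ to step $i$ we add $e_i$ and drop one element of the optimum. By monotonicity, $\Phi_i\geq f(P_{i-1}\cup\OPT_i)$, so the expected decrease of $\Phi$ is at most $V_{i-1}/(r-i+1)$. By Step~1 this is at most $\frac{1}{1-\delta}\E[f(P_i)-f(P_{i-1})\mid\cF_{i-1}]$. Summing over $i$ and taking expectations, using the tower property as in \eqref{eq:resdiaul_proper_stop}, gives
\[
f(\OPT)\leq \Phi_0 \quad\text{and}\quad \E[\Phi_r]=\E[f(P_r)],
\]
since $\OPT_r=\emptyset$. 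Hence
\[
f(\OPT)-\E[f(P_r)]\leq \frac{1}{1-\delta}\,\E[f(P_r)],
\]
that is, $\E[f(P_r)]\geq \frac{1-\delta}{2-\delta}f(\OPT)$. This is at least $\frac14 f(\OPT)$ whenever $\delta\leq\frac23$, and in particular for the admissible range $\delta<\nicefrac12$. For a regular partition matroid $r=k$, and there $X_i$ is the entire part, so the $(1-\delta)$ factor is actually $1$.

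\textbf{Main obstacle.} The delicate point is the conditional loss bound in Step~1. It requires that the dropped optimal element, namely $o_{j(i),s_{j(i)}}$ under the random ordering, be uniform in $\OPT_{i-1}$ given $\cF_{i-1}$, and that this uniformity persist after $\OPT$ is shifted by $P_{i-1}$. It also requires handling elements of $\OPT$ already picked into $P$, where $f(P+o)-f(P)=0$ causes no harm. I would settle both by conditioning on $\cF_{i-1}$ and on the random ordering of $\OPT$ exactly as in Lemma~\ref{lem:residual_opt_loss}. If this coupling turns out to be too fragile, the fallback is to cite~\cite{BFNS14} directly, since the lemma is attributed to that work.
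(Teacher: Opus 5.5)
Your argument is correct, but it takes a different route from the paper. The paper gives no proof of this lemma: it imports the bound from \cite{BFNS14}, where the residual random greedy is shown to achieve $\frac14$ even for non-monotone $f$. That citation applies because, for a regular partition matroid ($\ell_j=1$, $X_i=U_{j(i)}$), Algorithm~\ref{alg:residual} picks a uniformly random element of the maximum-marginal residual base, which is exactly the algorithm of \cite{BFNS14}.

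You instead derive the bound from Lemma~\ref{lem:residual_P_gain} and the uniform-drop argument of Lemma~\ref{lem:residual_opt_loss}, via the potential $f(P_i\cup\OPT_i)$. Because you use monotonicity (in $\Phi_i\ge f(P_{i-1}\cup\OPT_i)$), you obtain $\frac{1-\delta}{2-\delta}f(\OPT)\ge\frac13 f(\OPT)$, essentially the classical $\frac12$ for monotone functions. The citation buys generality (non-monotone $f$) at no cost. Your derivation buys self-containment and a better constant. It also covers the sampled variant on generalized partition matroids, to which \cite{BFNS14} does not literally apply, given Lemma~\ref{lem:residual_P_gain}.

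One point should be stated precisely: the conditioning in the loss bound. Condition on $\sigma(\cF_{i-1},\OPT_{i-1})$, that is, on the algorithm's history together with the \emph{set} $\OPT_{i-1}$, and not on the full random ordering of $\OPT$.

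\begin{itemize}
\item If you condition on the full ordering, then once $j(i)=j$ the dropped element is the fixed element $o_{j,s_j}$. It is therefore dropped with probability $s_j/(r-i+1)$ rather than $1/(r-i+1)$, and the bound by $V_{i-1}/(r-i+1)$ can fail.
\item Under the coarser conditioning, the order within $\OPT_{i-1}\cap U_j$ is still uniform, because the ordering is independent of the algorithm. This gives uniformity over $\OPT_{i-1}$, as you need.
\item The gain bound of Lemma~\ref{lem:residual_P_gain} still holds under this larger $\sigma$-algebra, since step $i$ of the algorithm is independent of the ordering given $\cF_{i-1}$.
\end{itemize}

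With that conditioning, the telescoping goes through exactly as you describe.
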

In fact, it was shown in \cite{BFNS14} that 
Lemma~\ref{lem:residual_approximation} holds even if $f$ is non-monotone.  
We apply the above lemma in Algorithm~\ref{alg:fast_prep}, which fulfills the existence guarantee stated in Lemma~\ref{lem:simpleprep}.
\begin{algorithm}
	\caption{Fast Pre-processing $(f,U_1,\ldots, U_k,\eps)$}
	\SetKwInOut{Input}{input}
	\SetAlgoNlRelativeSize{0}
	\label{alg:fast_prep}
	
	\Input{A monotone submodular function $f:2^{U}\rightarrow \mathbb{R}_+$, partition $U_1,\ldots, U_k$ of $U$, and $0<\eps<\nicefrac{1}{2}$} 
	
    Run Algorithm~\ref{alg:residual} for $\ell = \log_{7/6}\left(\frac{8}{ \eps}\right) $ times  on the input instance\label{fast:iterations}
    
    Let $\app$ be the maximal value of $f(P_k)$ among all executions of Algorithm~\ref{alg:residual}
	
    Run Algorihtm~\ref{alg:cp_prep} with the input instance, error parameter $\nicefrac{\eps}{2}$ and $\app$. Return its output

\end{algorithm}

 \begin{proof}[Proof of Lemma~\ref{lem:simpleprep}] 
Since Algorithm~\ref{alg:residual} uses $O(n)$  value queries (Lemma~\ref{lem:residual_runtime}), and Algorithm~\ref{alg:cp_prep} uses in expectation $O(\nicefrac{n}{\eps})$ value queries (Lemma~\ref{lem:cp_prep}), it follows that Algorithm~\ref{alg:fast_prep} uses in expectation 
 $O(\nicefrac{n}{\eps})$ value queries.  Additionally,  the algorithm always returns a set $P\in \cI$ since Algorithm~\ref{alg:cp_prep} always returns such a set. 

 We are left to show the algorithm satisfies Properties (1) and (2) of the lemma. Let $\Psi$ be the event in which $\frac{1}{8}\cdot f(\OPT) \leq \app\leq f(\OPT)$.

 Each time the Residual Random Greedy is executed in Step~\ref{fast:iterations}, it returns a solution $S$ whose value satisfies $0\leq f(S)\leq f(\OPT)$ and $\E[f(S)]\geq\nicefrac{1}{4}\cdot f(\OPT)$. Therefore, by Markov inequality 
$$
\Pr\left[f(S)< \frac{1}{8}\cdot f(\OPT)\right] 
\leq \frac{6}{7}.
$$
This implies that 
$$
\Pr[\textnormal{not } \Psi] =\Pr\left[ \app < \frac{1}{8}\cdot  f(\OPT) \right] \leq \left(\frac{6}{7}\right)^{\log_{\nicefrac{7}{6}} (\nicefrac{8}{\eps} )} = \nicefrac{\eps}{8}.
$$
Thus, $\Pr[ \Psi] \geq 1-\nicefrac{\eps}{8}$, which establishes  the first property of the lemma. 

By Lemma~\ref{lem:cp_prep} it holds that 
$$
\E\left[(1-\nicefrac{1}{e})f(\OPT/P) + \frac{1}{e}\cdot f(P) ~|~\Psi \right] \geq (1-\nicefrac{1}{e} - \nicefrac{\eps}{2} ) \cdot f(\OPT),
$$
therefore, 
$$
\begin{aligned}
\E\left[(1-\nicefrac{1}{e})f(\OPT/P) + \frac{1}{e}\cdot f(P)\right]  &\geq \Pr [\Psi] \cdot \E\left[(1-\nicefrac{1}{e})f(\OPT/P) + \frac{1}{e}\cdot f(P)~|~\Psi \right] \\
&\geq (1-\nicefrac{\eps}{8} )\cdot   (1-\nicefrac{1}{e} - \nicefrac{\eps}{2} ) \cdot f(\OPT) \\
&\geq  (1- \nicefrac{1}{e} - \eps) \cdot f(\OPT),
\end{aligned}
$$
which show the second property of the lemma.

\end{proof}

\subsection{The Construction of the Check Points }
\label{sec:checkpoints}

In this section we provide the construction of the set of checkpoints $\checkpoints(k,\eps)$ used by Algorithm~\ref{alg:cp_prep}. That is, we prove Lemma~\ref{lem:checkpoints}.
\begin{proof}[Proof of Lemma~\ref{lem:checkpoints}]
For every $\eps>0$ and $k\in \mathbb{N}$ define 
$$
\checkpoints(k,\eps) = \left\{ \ceil{k\cdot (1-\beta^\ell) }~|~\ell\in \mathbb{N}_{\geq 0 } \right\},
$$
where $\beta= 1-\nicefrac{\eps}{4}$.  We  show the set satisfies the two properties in the lemma.

We use a simple arithmetic to show the first property,
$$
\sum_{i\in \checkpoints(k,\eps)} (k-i)  \leq \sum_{\ell \in \mathbb{N}} \left(k-\ceil{k\cdot (1-\beta^\ell) }\right )  \leq \sum_{\ell \in \mathbb{N}} \left(k-k\cdot (1-\beta^\ell) \right ) \leq k \sum_{\ell =0}^{\infty} \beta^{\ell} = k\cdot \frac{1}{1-\beta }= k\cdot \frac{4}{\eps}.
$$
Thus, the set $\checkpoints(k,\eps)$ satisfies the first property in  the lemma. 

As for the second property, let $i\in \{0,1,\ldots, k\}$. If $i\in \checkpoints(k,\eps)$ then $s(i) = i$, and therefore $\sum_{i'=i+1}^{s(i)} \frac{1}{k-(i'-1) } =0\leq \eps$ and the property holds. 
Hence, we are left to handle the case that $i\notin \checkpoints(k,\eps)$. In this case, there is $\ell\in\mathbb{N}_{\geq 0}$  such that 
\begin{equation}
\label{eq:cp_strict}
\ceil{ k \cdot (1-\beta^{\ell -1} )} < i < \ceil{k \cdot (1-\beta^{\ell }) }.
\end{equation}
Denote $p(i)= \ceil{ k \cdot (1-\beta^{\ell -1} )}$ and observe that $s(i)=\ceil{k \cdot (1-\beta^{\ell })}$.
Since all the numbers in \eqref{eq:cp_strict} are integers we also have,
\begin{equation}
\label{eq:cp_weak}
p(i)+1\leq i \leq s(i)-1.
\end{equation}
Therefore,
\begin{equation}
\label{eq:cp_sum_first}
\begin{aligned}
\sum_{i'=i+1}^{s(i)}  \frac{1}{k-(i'-1)} &\leq \sum_{i'=i+1}^{s(i)}  \frac{1}{k-(s(i)-1)} \\
&= \frac{s(i)- i}{k-s(i)+1} \\
&\leq \frac{s(i) -p(i) - 1 }{k-s(i)+1}\\
&= \frac{-k+s(i)  - 1 +k-p(i)}{k-s(i)+1} \\ 
& = -1 + \frac{k-p(i)}{k-s(i)+1 },\\
\end{aligned}
\end{equation}
where the second inequality holds due to \eqref{eq:cp_weak}. 

Furthermore, it holds that 
\begin{equation}
\label{eq:cp_prev}
k-p(i) = k- \ceil{ k \cdot (1-\beta^{\ell -1} )} \leq  k- \ k \cdot (1-\beta^{\ell -1} )= k\cdot \beta^{\ell-1}
\end{equation}
and 
\begin{equation}
\label{eq:cp_succ}
k-s(i)+1  = k -\left( \ceil{k \cdot (1-\beta^{\ell }) }-1 \right)  \geq k -k \cdot (1-\beta^{\ell }) = k\cdot \beta^{\ell }. 
\end{equation}

Incorporating \eqref{eq:cp_prev} and \eqref{eq:cp_succ} into \eqref{eq:cp_sum_first} we get
$$
\sum_{i'=i+1}^{s(i)}  \frac{1}{k-(i'-1)}  \leq 
-1 + \frac{k-p(i)}{k-s(i)+1 } \leq -1 +\frac{ k\cdot \beta^{\ell-1}}{k\cdot \beta^{\ell }} = -1+ \frac{1}{\beta} =  \frac{ 1-\beta}{\beta} \leq \eps,
$$ 
where the last inequality holds as $\beta \geq \nicefrac{1}{2}$ and $1-\beta =\nicefrac{\eps}{4}$. This shows the set $\checkpoints(k,\eps)$ also satisfies the second property of the lemma, and completes the proof.

\end{proof}
\end{document}